\newcommand{\safemath}[2]{\newcommand{#1}{\ensuremath{#2}\xspace}}
\renewcommand{\safemath}[2]{\newcommand{#1}{\ensuremath{#2}\xspace}}
\newcommand{\ssa}{\mathsf{a}}
\newcommand{\ssb}{\mathsf{b}}
\newcommand{\ssc}{\mathsf{c}}
\newcommand{\ssd}{\mathsf{d}}
\newcommand{\sse}{\mathsf{e}}
\newcommand{\ssf}{\mathsf{f}}
\newcommand{\ssg}{\mathsf{g}}
\newcommand{\ssh}{\mathsf{h}}
\newcommand{\ssi}{\mathsf{i}}
\newcommand{\ssj}{\mathsf{j}}
\newcommand{\ssk}{\mathsf{k}}
\newcommand{\ssl}{\mathsf{l}}
\newcommand{\ssm}{\mathsf{m}}
\newcommand{\ssn}{\mathsf{n}}
\newcommand{\sso}{\mathsf{o}}
\newcommand{\ssp}{\mathsf{p}}
\newcommand{\ssq}{\mathsf{q}}
\newcommand{\ssr}{\mathsf{r}}
\newcommand{\sss}{\mathsf{s}}
\newcommand{\sst}{\mathsf{t}}
\newcommand{\ssu}{\mathsf{u}}
\newcommand{\ssv}{\mathsf{v}}
\newcommand{\ssw}{\mathsf{w}}
\newcommand{\ssx}{\mathsf{x}}
\newcommand{\ssy}{\mathsf{y}}
\newcommand{\ssz}{\mathsf{z}}
\safemath{\bmsa}{\bm{\ssa}}
\safemath{\bmsb}{\bm{\ssb}}
\safemath{\bmsc}{\bm{\ssc}}
\safemath{\bmsd}{\bm{\ssd}}
\safemath{\bmse}{\bm{\sse}}
\safemath{\bmsf}{\bm{\ssf}}
\safemath{\bmsg}{\bm{\ssg}}
\safemath{\bmsh}{\bm{\ssh}}
\safemath{\bmsi}{\bm{\ssi}}
\safemath{\bmsj}{\bm{\ssj}}
\safemath{\bmsk}{\bm{\ssk}}
\safemath{\bmsl}{\bm{\ssl}}
\safemath{\bmsm}{\bm{\ssm}}
\safemath{\bmsn}{\bm{\ssn}}
\safemath{\bmso}{\bm{\sso}}
\safemath{\bmsp}{\bm{\ssp}}
\safemath{\bmsq}{\bm{\ssq}}
\safemath{\bmsr}{\bm{\ssr}}
\safemath{\bmss}{\bm{\sss}}
\safemath{\bmst}{\bm{\sst}}
\safemath{\bmsu}{\bm{\ssu}}
\safemath{\bmsv}{\bm{\ssv}}
\safemath{\bmsw}{\bm{\ssw}}
\safemath{\bmsx}{\bm{\ssx}}
\safemath{\bmsy}{\bm{\ssy}}
\safemath{\bmsz}{\bm{\ssz}}
\bmdefine{\bmualphad}{\upalpha}
\bmdefine{\bmubetad}{\upbeta}
\bmdefine{\bmuchid}{\upchi}
\bmdefine{\bmudeltad}{\updelta}
\bmdefine{\bmuepsilond}{\upepsilon}
\bmdefine{\bmuvarepsilond}{\upvarepsilon}
\bmdefine{\bmuetad}{\upeta}
\bmdefine{\bmugammad}{\upgamma}
\bmdefine{\bmuiotad}{\upiota}
\bmdefine{\bmukappad}{\upkappa}
\bmdefine{\bmulambdad}{\uplambda}
\bmdefine{\bmumud}{\upmu}
\bmdefine{\bmunud}{\upnu}
\bmdefine{\bmuomegad}{\upomega}
\bmdefine{\bmuphid}{\upphi}
\bmdefine{\bmuvarphid}{\upvarphi}
\bmdefine{\bmupid}{\uppi}
\bmdefine{\bmuvarpid}{\upvarpi}
\bmdefine{\bmupsid}{\uppsi}
\bmdefine{\bmurhod}{\uprho}
\bmdefine{\bmuvarrhod}{\upvarrho}
\bmdefine{\bmusigmad}{\upsigma}
\bmdefine{\bmuvarsigmad}{\upvarsigma}
\bmdefine{\bmutaud}{\uptau}
\bmdefine{\bmuthetad}{\uptheta}
\bmdefine{\bmuvarthetad}{\upvartheta}
\bmdefine{\bmuupsilond}{\upupsilon}
\bmdefine{\bmuxid}{\upxi}
\bmdefine{\bmuzetad}{\upzeta}
\safemath{\bmua}{\mathbf{a}}
\safemath{\bmub}{\mathbf{b}}
\safemath{\bmuc}{\mathbf{c}}
\safemath{\bmud}{\mathbf{d}}
\safemath{\bmue}{\mathbf{e}}
\safemath{\bmuf}{\mathbf{f}}
\safemath{\bmug}{\mathbf{g}}
\safemath{\bmuh}{\mathbf{h}}
\safemath{\bmui}{\mathbf{i}}
\safemath{\bmuj}{\mathbf{j}}
\safemath{\bmuk}{\mathbf{k}}
\safemath{\bmul}{\mathbf{l}}
\safemath{\bmum}{\mathbf{m}}
\safemath{\bmun}{\mathbf{n}}
\safemath{\bmuo}{\mathbf{o}}
\safemath{\bmup}{\mathbf{p}}
\safemath{\bmuq}{\mathbf{q}}
\safemath{\bmur}{\mathbf{r}}
\safemath{\bmus}{\mathbf{s}}
\safemath{\bmut}{\mathbf{t}}
\safemath{\bmuu}{\mathbf{u}}
\safemath{\bmuv}{\mathbf{v}}
\safemath{\bmuw}{\mathbf{w}}
\safemath{\bmux}{\mathbf{x}}
\safemath{\bmuy}{\mathbf{y}}
\safemath{\bmuz}{\mathbf{z}}
\safemath{\bmualpha}{\bmualphad}
\safemath{\bmubeta}{\bmubetad}
\safemath{\bmuchi}{\bumchid}
\safemath{\bmudelta}{\bmudeltad}
\safemath{\bmuepsilon}{\bmuepsilond}
\safemath{\bmuvarepsilon}{\bmuvarepsilond}
\safemath{\bmueta}{\bmuetad}
\safemath{\bmugamma}{\bmugammad}
\safemath{\bmuiota}{\bmuiotad}
\safemath{\bmukappa}{\bmukappad}
\safemath{\bmulambda}{\bmulambdad}
\safemath{\bmumu}{\bmumud}
\safemath{\bmunu}{\bmunud}
\safemath{\bmuomega}{\bmuomegad}
\safemath{\bmuphi}{\bmuphid}
\safemath{\bmuvarphi}{\bmuvarphid}
\safemath{\bmupi}{\bmupid}
\safemath{\bmuvarpi}{\bmuvarpid}
\safemath{\bmupsi}{\bmupsid}
\safemath{\bmurho}{\bmurhod}
\safemath{\bmuvarrho}{\bmuvarrhod}
\safemath{\bmusigma}{\bmusigmad}
\safemath{\bmuvarsigma}{\bmuvarsigmad}
\safemath{\bmutau}{\bmutaud}
\safemath{\bmutheta}{\bmuthetad}
\safemath{\bmuvartheta}{\bmuvarthetad}
\safemath{\bmuupsilon}{\bmuupsilond}
\safemath{\bmuxi}{\bmuxid}
\safemath{\bmuzeta}{\bmuzetad}
\bmdefine{\bmiad}{a}
\bmdefine{\bmibd}{b}
\bmdefine{\bmicd}{c}
\bmdefine{\bmidd}{d}
\bmdefine{\bmied}{e}
\bmdefine{\bmifd}{f}
\bmdefine{\bmigd}{g}
\bmdefine{\bmihd}{h}
\bmdefine{\bmiid}{i}
\bmdefine{\bmijd}{j}
\bmdefine{\bmikd}{k}
\bmdefine{\bmild}{l}
\bmdefine{\bmimd}{m}
\bmdefine{\bmind}{n}
\bmdefine{\bmiod}{o}
\bmdefine{\bmipd}{p}
\bmdefine{\bmiqd}{q}
\bmdefine{\bmird}{r}
\bmdefine{\bmisd}{s}
\bmdefine{\bmitd}{t}
\bmdefine{\bmiud}{u}
\bmdefine{\bmivd}{v}
\bmdefine{\bmiwd}{w}
\bmdefine{\bmixd}{x}
\bmdefine{\bmiyd}{y}
\bmdefine{\bmizd}{z}
\bmdefine{\bmialphad}{\alpha}
\bmdefine{\bmibetad}{\beta}
\bmdefine{\bmichid}{\chi}
\bmdefine{\bmideltad}{\delta}
\bmdefine{\bmiepsilond}{\epsilon}
\bmdefine{\bmivarepsilond}{\varepsilon}
\bmdefine{\bmietad}{\eta}
\bmdefine{\bmigammad}{\gamma}
\bmdefine{\bmiiotad}{\iota}
\bmdefine{\bmikappad}{\kappa}
\bmdefine{\bmivarkappad}{\varkappa}
\bmdefine{\bmilambdad}{\lambda}
\bmdefine{\bmimud}{\mu}
\bmdefine{\bminud}{\nu}
\bmdefine{\bmiomegad}{\omega}
\bmdefine{\bmiphid}{\phi}
\bmdefine{\bmivarphid}{\varphi}
\bmdefine{\bmipid}{\pi}
\bmdefine{\bmivarpid}{\varpi}
\bmdefine{\bmipsid}{\psi}
\bmdefine{\bmirhod}{\rho}
\bmdefine{\bmivarrhod}{\varrho}
\bmdefine{\bmisigmad}{\sigma}
\bmdefine{\bmivarsigmad}{\varsigma}
\bmdefine{\bmitaud}{\tau}
\bmdefine{\bmithetad}{\theta}
\bmdefine{\bmivarthetad}{\vartheta}
\bmdefine{\bmiupsilond}{\upsilon}
\bmdefine{\bmixid}{\xi}
\bmdefine{\bmizetad}{\zeta}
\safemath{\bmia}{\bmiad}
\safemath{\bmib}{\bmibd}
\safemath{\bmic}{\bmicd}
\safemath{\bmid}{\bmidd}
\safemath{\bmie}{\bmied}
\safemath{\bmif}{\bmifd}
\safemath{\bmig}{\bmigd}
\safemath{\bmih}{\bmihd}
\safemath{\bmii}{\bmiid}
\safemath{\bmij}{\bmijd}
\safemath{\bmik}{\bmikd}
\safemath{\bmil}{\bmild}
\safemath{\bmim}{\bmimd}
\safemath{\bmin}{\bmind}
\safemath{\bmio}{\bmiod}
\safemath{\bmip}{\bmipd}
\safemath{\bmiq}{\bmiqd}
\safemath{\bmir}{\bmird}
\safemath{\bmis}{\bmisd}
\safemath{\bmit}{\bmitd}
\safemath{\bmiu}{\bmiud}
\safemath{\bmiv}{\bmivd}
\safemath{\bmiw}{\bmiwd}
\safemath{\bmix}{\bmixd}
\safemath{\bmiy}{\bmiyd}
\safemath{\bmiz}{\bmizd}
\safemath{\bmialpha}{\bmialphad}
\safemath{\bmibeta}{\bmibetad}
\safemath{\bmichi}{\bmichid}
\safemath{\bmidelta}{\bmideltad}
\safemath{\bmiepsilon}{\bmiepsilond}
\safemath{\bmivarepsilon}{\bmivarepsilond}
\safemath{\bmieta}{\bmietad}
\safemath{\bmigamma}{\bmigammad}
\safemath{\bmiiota}{\bmiiotad}
\safemath{\bmikappa}{\bmikappad}
\safemath{\bmivarkappa}{\bmivarkappad}
\safemath{\bmilambda}{\bmilambdad}
\safemath{\bmimu}{\bmimud}
\safemath{\bminu}{\bminud}
\safemath{\bmiomega}{\bmiomegad}
\safemath{\bmiphi}{\bmiphid}
\safemath{\bmivarphi}{\bmivarphid}
\safemath{\bmipi}{\bmipid}
\safemath{\bmivarpi}{\bmivarpid}
\safemath{\bmipsi}{\bmipsid}
\safemath{\bmirho}{\bmirhod}
\safemath{\bmivarrho}{\bmivarrhod}
\safemath{\bmisigma}{\bmisigmad}
\safemath{\bmivarsigma}{\bmivarsigmad}
\safemath{\bmitau}{\bmitaud}
\safemath{\bmitheta}{\bmithetad}
\safemath{\bmivartheta}{\bmivarthetad}
\safemath{\bmiupsilon}{\bmiupsilond}
\safemath{\bmixi}{\bmixid}
\safemath{\bmizeta}{\bmizetad}
\bmdefine{\bmuDeltad}{\Updelta}
\bmdefine{\bmuGammad}{\Upgamma}
\bmdefine{\bmuLambdad}{\Uplambda}
\bmdefine{\bmuOmegad}{\Upomega}
\bmdefine{\bmuPhid}{\Upphi}
\bmdefine{\bmuPid}{\Uppi}
\bmdefine{\bmuPsid}{\Uppsi}
\bmdefine{\bmuSigmad}{\Upsigma}
\bmdefine{\bmuThetad}{\Uptheta}
\bmdefine{\bmuUpsilond}{\Upupsilon}
\bmdefine{\bmuXid}{\Upxi}
\safemath{\bmuA}{\mathbf{A}}
\safemath{\bmuB}{\mathbf{B}}
\safemath{\bmuC}{\mathbf{C}}
\safemath{\bmuD}{\mathbf{D}}
\safemath{\bmuE}{\mathbf{E}}
\safemath{\bmuF}{\mathbf{F}}
\safemath{\bmuG}{\mathbf{G}}
\safemath{\bmuH}{\mathbf{H}}
\safemath{\bmuI}{\mathbf{I}}
\safemath{\bmuJ}{\mathbf{J}}
\safemath{\bmuK}{\mathbf{K}}
\safemath{\bmuL}{\mathbf{L}}
\safemath{\bmuM}{\mathbf{M}}
\safemath{\bmuN}{\mathbf{N}}
\safemath{\bmuO}{\mathbf{O}}
\safemath{\bmuP}{\mathbf{P}}
\safemath{\bmuQ}{\mathbf{Q}}
\safemath{\bmuR}{\mathbf{R}}
\safemath{\bmuS}{\mathbf{S}}
\safemath{\bmuT}{\mathbf{T}}
\safemath{\bmuU}{\mathbf{U}}
\safemath{\bmuV}{\mathbf{V}}
\safemath{\bmuW}{\mathbf{W}}
\safemath{\bmuX}{\mathbf{X}}
\safemath{\bmuY}{\mathbf{Y}}
\safemath{\bmuZ}{\mathbf{Z}}
\safemath{\bmuZero}{\mathbf{0}}
\safemath{\bmuOne}{\mathbf{1}}
\safemath{\bmuDelta}{\bmuDeltad}
\safemath{\bmuGamma}{\bmuGammad}
\safemath{\bmuLambda}{\bmuLambdad}
\safemath{\bmuOmega}{\bmuOmegad}
\safemath{\bmuPhi}{\bmuPhid}
\safemath{\bmuPi}{\bmuPid}
\safemath{\bmuPsi}{\bmuPsid}
\safemath{\bmuSigma}{\bmuSigmad}
\safemath{\bmuTheta}{\bmuThetad}
\safemath{\bmuUpsilon}{\bmuUpsilond}
\safemath{\bmuXi}{\bmuXid}
\bmdefine{\bmiAd}{A}
\bmdefine{\bmiBd}{B}
\bmdefine{\bmiCd}{C}
\bmdefine{\bmiDd}{D}
\bmdefine{\bmiEd}{E}
\bmdefine{\bmiFd}{F}
\bmdefine{\bmiGd}{G}
\bmdefine{\bmiHd}{H}
\bmdefine{\bmiId}{I}
\bmdefine{\bmiJd}{J}
\bmdefine{\bmiKd}{K}
\bmdefine{\bmiLd}{L}
\bmdefine{\bmiMd}{M}
\bmdefine{\bmiOd}{N}
\bmdefine{\bmiPd}{O}
\bmdefine{\bmiQd}{P}
\bmdefine{\bmiRd}{R}
\bmdefine{\bmiSd}{S}
\bmdefine{\bmiTd}{T}
\bmdefine{\bmiUd}{U}
\bmdefine{\bmiVd}{V}
\bmdefine{\bmiWd}{W}
\bmdefine{\bmiXd}{X}
\bmdefine{\bmiYd}{Y}
\bmdefine{\bmiZd}{Z}
\bmdefine{\bmiDeltad}{\Delta}
\bmdefine{\bmiGammad}{\Gamma}
\bmdefine{\bmiLambdad}{\Lambda}
\bmdefine{\bmiOmegad}{\Omega}
\bmdefine{\bmiPhid}{\Phi}
\bmdefine{\bmiPid}{\Pi}
\bmdefine{\bmiPsid}{\Psi}
\bmdefine{\bmiSigmad}{\Sigma}
\bmdefine{\bmiThetad}{\Theta}
\bmdefine{\bmiUpsilond}{\Upsilon}
\bmdefine{\bmiXid}{\Xi}
\safemath{\bmiA}{\bmiAd}
\safemath{\bmiB}{\bmiBd}
\safemath{\bmiC}{\bmiCd}
\safemath{\bmiD}{\bmiDd}
\safemath{\bmiE}{\bmiEd}
\safemath{\bmiF}{\bmiFd}
\safemath{\bmiG}{\bmiGd}
\safemath{\bmiH}{\bmiHd}
\safemath{\bmiI}{\bmiId}
\safemath{\bmiJ}{\bmiJd}
\safemath{\bmiK}{\bmiKd}
\safemath{\bmiL}{\bmiLd}
\safemath{\bmiM}{\bmiMd}
\safemath{\bmiN}{\bmiNd}
\safemath{\bmiO}{\bmiOd}
\safemath{\bmiP}{\bmiPd}
\safemath{\bmiQ}{\bmiQd}
\safemath{\bmiR}{\bmiRd}
\safemath{\bmiS}{\bmiSd}
\safemath{\bmiT}{\bmiTd}
\safemath{\bmiU}{\bmiUd}
\safemath{\bmiV}{\bmiVd}
\safemath{\bmiW}{\bmiWd}
\safemath{\bmiX}{\bmiXd}
\safemath{\bmiY}{\bmiYd}
\safemath{\bmiZ}{\bmiZd}
\safemath{\bmiDelta}{\bmiDeltad}
\safemath{\bmiGamma}{\bmiGammad}
\safemath{\bmiLambda}{\bmiLambdad}
\safemath{\bmiOmega}{\bmiOmegad}
\safemath{\bmiPhi}{\bmiPhid}
\safemath{\bmiPi}{\bmiPid}
\safemath{\bmiPsi}{\bmiPsid}
\safemath{\bmiSigma}{\bmiSigmad}
\safemath{\bmiTheta}{\bmiThetad}
\safemath{\bmiUpsilon}{\bmiUpsilond}
\safemath{\bmiXi}{\bmiXid}
\safemath{\setA}{\mathcal{A}}
\safemath{\setB}{\mathcal{B}}
\safemath{\setC}{\mathcal{C}}
\safemath{\setD}{\mathcal{D}}
\safemath{\setE}{\mathcal{E}}
\safemath{\setF}{\mathcal{F}}
\safemath{\setG}{\mathcal{G}}
\safemath{\setH}{\mathcal{H}}
\safemath{\setI}{\mathcal{I}}
\safemath{\setJ}{\mathcal{J}}
\safemath{\setK}{\mathcal{K}}
\safemath{\setL}{\mathcal{L}}
\safemath{\setM}{\mathcal{M}}
\safemath{\setN}{\mathcal{N}}
\safemath{\setO}{\mathcal{O}}
\safemath{\setP}{\mathcal{P}}
\safemath{\setQ}{\mathcal{Q}}
\safemath{\setR}{\mathcal{R}}
\safemath{\setS}{\mathcal{S}}
\safemath{\setT}{\mathcal{T}}
\safemath{\setU}{\mathcal{U}}
\safemath{\setV}{\mathcal{V}}
\safemath{\setW}{\mathcal{W}}
\safemath{\setX}{\mathcal{X}}
\safemath{\setY}{\mathcal{Y}}
\safemath{\setZ}{\mathcal{Z}}
\safemath{\emptySet}{\varnothing}
\safemath{\colA}{\mathscr{A}}
\safemath{\colB}{\mathscr{B}}
\safemath{\colC}{\mathscr{C}}
\safemath{\colD}{\mathscr{D}}
\safemath{\colE}{\mathscr{E}}
\safemath{\colF}{\mathscr{F}}
\safemath{\colG}{\mathscr{G}}
\safemath{\colH}{\mathscr{H}}
\safemath{\colI}{\mathscr{I}}
\safemath{\colJ}{\mathscr{J}}
\safemath{\colK}{\mathscr{K}}
\safemath{\colL}{\mathscr{L}}
\safemath{\colM}{\mathscr{M}}
\safemath{\colN}{\mathscr{N}}
\safemath{\colO}{\mathscr{O}}
\safemath{\colP}{\mathscr{P}}
\safemath{\colQ}{\mathscr{Q}}
\safemath{\colR}{\mathscr{R}}
\safemath{\colS}{\mathscr{S}}
\safemath{\colT}{\mathscr{T}}
\safemath{\colU}{\mathscr{U}}
\safemath{\colV}{\mathscr{V}}
\safemath{\colW}{\mathscr{W}}
\safemath{\colX}{\mathscr{X}}
\safemath{\colY}{\mathscr{Y}}
\safemath{\colZ}{\mathscr{Z}}
\safemath{\opA}{\mathbb{A}}
\safemath{\opB}{\mathbb{B}}
\safemath{\opC}{\mathbb{C}}
\safemath{\opD}{\mathbb{D}}
\safemath{\opE}{\mathbb{E}}
\safemath{\opF}{\mathbb{F}}
\safemath{\opG}{\mathbb{G}}
\safemath{\opH}{\mathbb{H}}
\safemath{\opI}{\mathbb{I}}
\safemath{\opJ}{\mathbb{J}}
\safemath{\opK}{\mathbb{K}}
\safemath{\opL}{\mathbb{L}}
\safemath{\opM}{\mathbb{M}}
\safemath{\opN}{\mathbb{N}}
\safemath{\opO}{\mathbb{O}}
\safemath{\opP}{\mathbb{P}}
\safemath{\opQ}{\mathbb{Q}}
\safemath{\opR}{\mathbb{R}}
\safemath{\opS}{\mathbb{S}}
\safemath{\opT}{\mathbb{T}}
\safemath{\opU}{\mathbb{U}}
\safemath{\opV}{\mathbb{V}}
\safemath{\opW}{\mathbb{W}}
\safemath{\opX}{\mathbb{X}}
\safemath{\opY}{\mathbb{Y}}
\safemath{\opZ}{\mathbb{Z}}
\safemath{\opZero}{\mathbb{O}}
\safemath{\identityop}{\opI}
\safemath{\sca}{a}
\safemath{\scb}{b}
\safemath{\scc}{c}
\safemath{\scd}{d}
\safemath{\sce}{e}
\safemath{\scf}{f}
\safemath{\scg}{g}
\safemath{\sch}{h}
\safemath{\sci}{i}
\safemath{\scj}{j}
\safemath{\sck}{k}
\safemath{\scl}{l}
\safemath{\scm}{m}
\safemath{\scn}{n}
\safemath{\sco}{o}
\safemath{\scp}{p}
\safemath{\scq}{q}
\safemath{\scr}{r}
\safemath{\scs}{s}
\safemath{\sct}{t}
\safemath{\scu}{u}
\safemath{\scv}{v}
\safemath{\scw}{w}
\safemath{\scx}{x}
\safemath{\scy}{y}
\safemath{\scz}{z}
\safemath{\scA}{A}
\safemath{\scB}{B}
\safemath{\scC}{C}
\safemath{\scD}{D}
\safemath{\scE}{E}
\safemath{\scF}{F}
\safemath{\scG}{G}
\safemath{\scH}{H}
\safemath{\scI}{I}
\safemath{\scJ}{J}
\safemath{\scK}{K}
\safemath{\scL}{L}
\safemath{\scM}{M}
\safemath{\scN}{N}
\safemath{\scO}{O}
\safemath{\scP}{P}
\safemath{\scQ}{Q}
\safemath{\scR}{R}
\safemath{\scS}{S}
\safemath{\scT}{T}
\safemath{\scU}{U}
\safemath{\scV}{V}
\safemath{\scW}{W}
\safemath{\scX}{X}
\safemath{\scY}{Y}
\safemath{\scZ}{Z}
\safemath{\scalpha}{\alpha}
\safemath{\scbeta}{\beta}
\safemath{\scchi}{\chi}
\safemath{\scdelta}{\delta}
\safemath{\scepsilon}{\epsilon}
\safemath{\scvarepsilon}{\varepsilon}
\safemath{\sceta}{\eta}
\safemath{\scgamma}{\gamma}
\safemath{\sciota}{\iota}
\safemath{\sckappa}{\kappa}
\safemath{\scvarkappa}{\varkappa}
\safemath{\sclambda}{\lambda}
\safemath{\scmu}{\mu}
\safemath{\scnu}{\nu}
\safemath{\scomega}{\omega}
\safemath{\scphi}{\phi}
\safemath{\scvarphi}{\varphi}
\safemath{\scpi}{\pi}
\safemath{\scvarpi}{\varpi}
\safemath{\scpsi}{\psi}
\safemath{\scrho}{\rho}
\safemath{\scvarrho}{\varrho}
\safemath{\scsigma}{\sigma}
\safemath{\scvarsigma}{\varsigma}
\safemath{\sctau}{\tau}
\safemath{\sctheta}{\theta}
\safemath{\scvartheta}{\vartheta}
\safemath{\scupsilon}{\upsilon}
\safemath{\scxi}{\xi}
\safemath{\sczeta}{\zeta}
\safemath{\veca}{\mathbf{a}}
\safemath{\vecb}{\mathbf{b}}
\safemath{\vecc}{\mathbf{c}}
\safemath{\vecd}{\mathbf{d}}
\safemath{\vece}{\mathbf{e}}
\safemath{\vecf}{\mathbf{f}}
\safemath{\vecg}{\mathbf{g}}
\safemath{\vech}{\mathbf{h}}
\safemath{\veci}{\mathbf{i}}
\safemath{\vecj}{\mathbf{j}}
\safemath{\veck}{\mathbf{k}}
\safemath{\vecl}{\mathbf{l}}
\safemath{\vecm}{\mathbf{m}}
\safemath{\vecn}{\mathbf{n}}
\safemath{\veco}{\mathbf{o}}
\safemath{\vecp}{\mathbf{p}}
\safemath{\vecq}{\mathbf{q}}
\safemath{\vecr}{\mathbf{r}}
\safemath{\vecs}{\mathbf{s}}
\safemath{\vect}{\mathbf{t}}
\safemath{\vecu}{\mathbf{u}}
\safemath{\vectU}{\mathbf{U}}
\safemath{\vecv}{\mathbf{v}}
\safemath{\vecw}{\mathbf{w}}
\safemath{\vecx}{\mathbf{x}}
\safemath{\vectX}{\mathbf{X}}
\safemath{\vecy}{\mathbf{y}}
\safemath{\vecz}{\mathbf{z}}
\safemath{\veczero}{\mathbf{0}}
\safemath{\vecone}{\mathbf{1}}
\safemath{\vecalpha}{\upalpha}
\safemath{\vecbeta}{\upbeta}
\safemath{\vecchi}{\upchi}
\safemath{\vecdelta}{\updelta}
\safemath{\vecepsilon}{\upepsilon}
\safemath{\vecvarepsilon}{\upvarepsilon}
\safemath{\veceta}{\upeta}
\safemath{\vecgamma}{\upgamma}
\safemath{\veciota}{\upiota}
\safemath{\veckappa}{\upkappa}
\safemath{\veclambda}{\uplambda}
\safemath{\vecmu}{\text{\textmu}}
\safemath{\vecnu}{\upnu}
\safemath{\vecomega}{\upomega}
\safemath{\vecphi}{\upphi}
\safemath{\vecvarphi}{\upvarphi}
\safemath{\vecpi}{\uppi}
\safemath{\vecvarpi}{\upvarpi}
\safemath{\vecpsi}{\uppsi}
\safemath{\vecrho}{\uprho}
\safemath{\vecvarrho}{\upvarrho}
\safemath{\vecsigma}{\upsigma}
\safemath{\vecvarsigma}{\upvarsigma}
\safemath{\vectau}{\uptau}
\safemath{\vectheta}{\uptheta}
\safemath{\vecvartheta}{\upvartheta}
\safemath{\vecupsilon}{\upupsilon}
\safemath{\vecxi}{\upxi}
\safemath{\veczeta}{\upzeta}
\safemath{\matA}{\mathrm{A}}
\safemath{\matB}{\mathrm{B}}
\safemath{\matC}{\mathrm{C}}
\safemath{\matD}{\mathrm{D}}
\safemath{\matE}{\mathrm{E}}
\safemath{\matF}{\mathrm{F}}
\safemath{\matG}{\mathrm{G}}
\safemath{\matH}{\mathrm{H}}
\safemath{\matI}{\mathrm{I}}
\safemath{\matJ}{\mathrm{J}}
\safemath{\matK}{\mathrm{K}}
\safemath{\matL}{\mathrm{L}}
\safemath{\matM}{\mathrm{M}}
\safemath{\matN}{\mathrm{N}}
\safemath{\matO}{\mathrm{O}}
\safemath{\matP}{\mathrm{P}}
\safemath{\matQ}{\mathrm{Q}}
\safemath{\matR}{\mathrm{R}}
\safemath{\matS}{\mathrm{S}}
\safemath{\matT}{\mathrm{T}}
\safemath{\matU}{\mathrm{U}}
\safemath{\matV}{\mathrm{V}}
\safemath{\matW}{\mathrm{W}}
\safemath{\matX}{\mathrm{X}}
\safemath{\matY}{\mathrm{Y}}
\safemath{\matZ}{\mathrm{Z}}
\safemath{\matzero}{\mathrm{0}}
\safemath{\matDelta}{\Updelta}
\safemath{\matGamma}{\Upgammma}
\safemath{\matLambda}{\Uplambda}
\safemath{\matOmega}{\Upomega}
\safemath{\matPhi}{\Upphi}
\safemath{\matPi}{\Uppi}
\safemath{\matPsi}{\Uppsi}
\safemath{\matSigma}{\Upsigma}
\safemath{\matTheta}{\Uptheta}
\safemath{\matUpsilon}{\Upupsilon}
\safemath{\matXi}{\Upxi}
\safemath{\matidentity}{\matI}
\safemath{\matone}{\matO}
\safemath{\rnda}{\bmia}
\safemath{\rndb}{\bmib}
\safemath{\rndc}{\bmic}
\safemath{\rndd}{\bmid}
\safemath{\rnde}{\bmie}
\safemath{\rndf}{\bmif}
\safemath{\rndg}{\bmig}
\safemath{\rndh}{\bmih}
\safemath{\rndi}{\bmii}
\safemath{\rndj}{\bmij}
\safemath{\rndk}{\bmik}
\safemath{\rndl}{\bmil}
\safemath{\rndm}{\bmim}
\safemath{\rndn}{\bmin}
\safemath{\rndo}{\bmio}
\safemath{\rndp}{\bmip}
\safemath{\rndq}{\bmiq}
\safemath{\rndr}{\bmir}
\safemath{\rnds}{\bmis}
\safemath{\rndt}{\bmit}
\safemath{\rndu}{\bmiu}
\safemath{\rndv}{\bmiv}
\safemath{\rndw}{\bmiw}
\safemath{\rndx}{\bmix}
\safemath{\rndy}{\bmiy}
\safemath{\rndz}{\bmiz}
\safemath{\rndA}{\scA}
\safemath{\rndB}{\scB}
\safemath{\rndC}{\scC}
\safemath{\rndD}{\scD}
\safemath{\rndE}{\scE}
\safemath{\rndF}{\scF}
\safemath{\rndG}{\scG}
\safemath{\rndH}{\scH}
\safemath{\rndI}{\scI}
\safemath{\rndJ}{\scJ}
\safemath{\rndK}{\scK}
\safemath{\rndL}{\scL}
\safemath{\rndM}{\scM}
\safemath{\rndN}{\scN}
\safemath{\rndO}{\scO}
\safemath{\rndP}{\scP}
\safemath{\rndQ}{\scQ}
\safemath{\rndR}{\scR}
\safemath{\rndS}{\scS}
\safemath{\rndT}{\scT}
\safemath{\rndU}{\scU}
\safemath{\rndV}{\scV}
\safemath{\rndW}{\scW}
\safemath{\rndX}{\scX}
\safemath{\rndY}{\scY}
\safemath{\rndZ}{\scZ}
\safemath{\rndalpha}{\bmialpha}
\safemath{\rndbeta}{\bmibeta}
\safemath{\rndchi}{\bmichi}
\safemath{\rnddelta}{\bmidelta}
\safemath{\rndepsilon}{\bmiepsilon}
\safemath{\rndvarepsilon}{\bmivarepsilon}
\safemath{\rndeta}{\bmieta}
\safemath{\rndgamma}{\bmigamma}
\safemath{\rndiota}{\bmiiota}
\safemath{\rndkappa}{\bmikappa}
\safemath{\rndlambda}{\bmilambda}
\safemath{\rndmu}{\bmimu}
\safemath{\rndnu}{\bminu}
\safemath{\rndomega}{\bmiomega}
\safemath{\rndphi}{\bmiphi}
\safemath{\rndvarphi}{\bmivarphi}
\safemath{\rndpi}{\bmipi}
\safemath{\rndvarpi}{\bmivarpi}
\safemath{\rndpsi}{\bmipsi}
\safemath{\rndrho}{\bmirho}
\safemath{\rndvarrho}{\bmivarrho}
\safemath{\rndsigma}{\bmisigma}
\safemath{\rndvarsigma}{\bmivarsigma}
\safemath{\rndtau}{\bmitau}
\safemath{\rndtheta}{\bmitheta}
\safemath{\rndvartheta}{\bmivartheta}
\safemath{\rndupsilon}{\bmiupsilon}
\safemath{\rndxi}{\bmixi}
\safemath{\rndzeta}{\bmizeta}
\safemath{\rveca}{\bmua}
\safemath{\rvecb}{\bmub}
\safemath{\rvecc}{\bmuc}
\safemath{\rvecd}{\bmud}
\safemath{\rvece}{\bmue}
\safemath{\rvecf}{\bmuf}
\safemath{\rvecg}{\bmug}
\safemath{\rvech}{\bmuh}
\safemath{\rveci}{\bmui}
\safemath{\rvecj}{\bmuj}
\safemath{\rveck}{\bmuk}
\safemath{\rvecl}{\bmul}
\safemath{\rvecm}{\bmum}
\safemath{\rvecn}{\bmun}
\safemath{\rveco}{\bmuo}
\safemath{\rvecp}{\bmup}
\safemath{\rvecq}{\bmuq}
\safemath{\rvecr}{\bmur}
\safemath{\rvecs}{\bmus}
\safemath{\rvect}{\bmut}
\safemath{\rvecu}{\bmuu}
\safemath{\rvecv}{\bmuv}
\safemath{\rvecw}{\bmuw}
\safemath{\rvecx}{\bmux}
\safemath{\rvecy}{\bmuy}
\safemath{\rvecz}{\bmuz}
\safemath{\rvecalpha}{\bmualpha}
\safemath{\rvecbeta}{\bmubeta}
\safemath{\rvecchi}{\bmuchi}
\safemath{\rvecdelta}{\bmudelta}
\safemath{\rvecepsilon}{\bmuepsilon}
\safemath{\rvecvarepsilon}{\bmuvarepsilon}
\safemath{\rveceta}{\bmueta}
\safemath{\rvecgamma}{\bmugamma}
\safemath{\rveciota}{\bmuiota}
\safemath{\rveckappa}{\bmukappa}
\safemath{\rveclambda}{\bmulambda}
\safemath{\rvecmu}{\bmumu}
\safemath{\rvecnu}{\bmunu}
\safemath{\rvecomega}{\bmuomega}
\safemath{\rvecphi}{\bmuphi}
\safemath{\rvecvarphi}{\bmuvarphi}
\safemath{\rvecpi}{\bmupi}
\safemath{\rvecvarpi}{\bmuvarpi}
\safemath{\rvecpsi}{\bmupsi}
\safemath{\rvecrho}{\bmurho}
\safemath{\rvecvarrho}{\bmuvarrho}
\safemath{\rvecsigma}{\bmusigma}
\safemath{\rvecvarsigma}{\bmuvarsigma}
\safemath{\rvectau}{\bmutau}
\safemath{\rvectheta}{\bmutheta}
\safemath{\rvecvartheta}{\bmuvartheta}
\safemath{\rvecupsilon}{\bmuupsilon}
\safemath{\rvecxi}{\bmuxi}
\safemath{\rveczeta}{\bmuzeta}
\safemath{\rmatA}{\bmuA}
\safemath{\rmatB}{\bmuB}
\safemath{\rmatC}{\bmuC}
\safemath{\rmatD}{\bmuD}
\safemath{\rmatE}{\bmuE}
\safemath{\rmatF}{\bmuF}
\safemath{\rmatG}{\bmuG}
\safemath{\rmatH}{\bmuH}
\safemath{\rmatI}{\bmuI}
\safemath{\rmatJ}{\bmuJ}
\safemath{\rmatK}{\bmuK}
\safemath{\rmatL}{\bmuL}
\safemath{\rmatM}{\bmuM}
\safemath{\rmatN}{\bmuN}
\safemath{\rmatO}{\bmuO}
\safemath{\rmatP}{\bmuP}
\safemath{\rmatQ}{\bmuQ}
\safemath{\rmatR}{\bmuR}
\safemath{\rmatS}{\bmuS}
\safemath{\rmatT}{\bmuT}
\safemath{\rmatU}{\bmuU}
\safemath{\rmatV}{\bmuV}
\safemath{\rmatW}{\bmuW}
\safemath{\rmatX}{\bmuX}
\safemath{\rmatY}{\bmuY}
\safemath{\rmatZ}{\bmuZ}
\safemath{\rmatDelta}{\bmuDelta}
\safemath{\rmatGamma}{\bmuGamma}
\safemath{\rmatLambda}{\bmuLambda}
\safemath{\rmatOmega}{\bmuOmega}
\safemath{\rmatPhi}{\bmuPhi}
\safemath{\rmatPi}{\bmuPi}
\safemath{\rmatPsi}{\bmuPsi}
\safemath{\rmatSigma}{\bmuSigma}
\safemath{\rmatTheta}{\bmuTheta}
\safemath{\rmatUpsilon}{\bmuUpsilon}
\safemath{\rmatXi}{\bmuXi}
\safemath{\rndvecV}{\bmiV} 
\safemath{\rndvecU}{\bmiU} 
\safemath{\rndvecW}{\bmiW} 
\safemath{\rndvecX}{\bmiX} 
\safemath{\rndvecY}{\bmiY} 
\safemath{\rndvecZ}{\bmiZ} 
\safemath{\rndvecC}{\bmiC} 
\safemath{\rndvecS}{\bmiS} 
\newenvironment{textbmatrix}{	\setlength{\arraycolsep}{2.5pt}%
								\big[\begin{matrix}}{\end{matrix}\big]%
								\raisebox{0.08ex}{\vphantom{M}}}
\def\be{\begin{equation}}
\def\ee{\end{equation}}
\def\een{\nonumber \end{equation}}
\def\mat{\begin{bmatrix}}
\def\emat{\end{bmatrix}}
\def\btm{\begin{textbmatrix}}
\def\etm{\end{textbmatrix}}
\def\ba#1\ea{\begin{align}#1\end{align}}
\def\bas#1\eas{\begin{align*}#1\end{align*}}
\def\bs#1\es{\begin{split}#1\end{split}} 
\def\bg#1\eg{\begin{gather}#1\end{gather}}
\def\bml#1\eml{\begin{multline}#1\end{multline}}
\def\bi#1\ei{\begin{itemize}#1\end{itemize}} 
\def\bipi#1\eipi{\begin{inparaitem}#1\end{inparaitem}}
\newcommand{\lefto}{\mathopen{}\left}
\DeclareMathOperator{\adj}{adj}				
\DeclareMathOperator{\Exop}{\opE}			
\safemath{\fun}{\scf}						
\safemath{\altfun}{\scg}
\safemath{\aaltfun}{\sch}
\safemath{\bel}{\sce}					
\safemath{\altbel}{\sce}					
\safemath{\frel}{g}					
\safemath{\altfrel}{g}					
\safemath{\dfrel}{\tilde{g}}					
\safemath{\altdfrel}{\tilde{g}}					
\newcommand{\nullspace}{\setN}	 			
\newcommand{\Ex}[2]{\ensuremath{\Exop_{#1}\lefto[#2\right]}} 	
\newcommand{\bigEx}[2]{\ensuremath{\Exop_{#1} \bigl[#2\bigr]}} 	
\newcommand{\ind}[1]{\mathbbm{1}_{#1}}				
\newcommand{\conj}[1]{\ensuremath{\overline{#1}}} 	
\newcommand{\inv}[1]{\ensuremath{#1^{-1}}} 	
\safemath{\dirac}{\delta}					
\safemath{\diracp}{\dirac(\time)}			
\safemath{\krond}{\dirac}					
\safemath{\upto}{\uparrow}
\safemath{\downto}{\downarrow}
\safemath{\iu}{i}							
\safemath{\maj}{\succ}
\newcommand{\dftmat}[1]{\matF_{#1}}			
\safemath{\mdft}{\dftmat{}}					
\safemath{\runity}{\beta}					
\safemath{\eval}{\lambda}					
\safemath{\veval}{\veclambda}				
\safemath{\littleo}{\sco}					
\let\im\undefined
\safemath{\re}{\mathfrak{Re}}				
\safemath{\im}{\mathfrak{Im}}				
\safemath{\euclidspace}{\complexset}			
\safemath{\confunspace}{\setC}				
\newcommand{\banachseqspace}[1]{l^{#1}}		
\safemath{\hilseqspace}{\banachseqspace{2}}	
\newcommand{\banachfunspace}[1]{\setL^{#1}}	
\safemath{\hilfunspace}{\banachfunspace{2}}	
\safemath{\schwarzspace}{\setS}				
\newcommand{\hadj}[1]{#1^{\star}}			
\safemath{\SNR}{\text{\sc snr}} 				
\safemath{\No}{N_0}							
\safemath{\Es}{E_s}							
\safemath{\Eb}{E_b}							
\safemath{\EbNo}{\frac{\Eb}{\No}}
\safemath{\EsNo}{\frac{\Es}{\No}}
\let\time\undefined
\safemath{\time}{\sct}						
\safemath{\dtime}{\sck}						
\safemath{\delay}{\sctau}					
\safemath{\ddelay}{\scl}						
\safemath{\doppler}{\scnu}					
\safemath{\ddoppler}{\scm}					
\safemath{\freq}{\scf}						
\safemath{\dfreq}{\scn}						
\safemath{\Dtime}{\Delta\time}
\safemath{\Dfreq}{\Delta\freq}
\safemath{\Ddtime}{\Delta\dtime}
\safemath{\Ddfreq}{\Delta\dfreq}
\safemath{\bandwidth}{\scB}
\safemath{\maxdoppler}{\doppler_{0}}			
\safemath{\maxdelay}{\delay_{0}}				
\safemath{\spread}{\Delta_{\CHop}}			
\DeclareMathOperator{\CHop}{\ensuremath{\opH}} 
\safemath{\kernel}{\rndk_{\CHop}}			
\safemath{\kernelp}{\kernel(\time,\time')}	
\safemath{\tvir}{\rndh_{\CHop}}				
\safemath{\tvirp}{\tvir(\time,\delay)}		
\safemath{\tvirc}{\conj{\rndh}_{\CHop}}		
\safemath{\tvtf}{\rndl_{\CHop}}				
\safemath{\tvtfp}{\tvtf(\time,\freq)}			
\safemath{\tvtfc}{\conj{\rndl}_{\CHop}}		
\safemath{\spf}{\rnds_{\CHop}}				
\safemath{\spfp}{\spf(\doppler,\delay)}		
\safemath{\spfc}{\conj{\rnds}_{\CHop}}		
\safemath{\bff}{\rndb_{\CHop}}				
\safemath{\bffp}{\bff(\doppler,\freq)}		
\safemath{\irc}{\scr_{\rndh}}				
\safemath{\tfc}{\scr_{\rndl}}				
\safemath{\spc}{\scr_{\rnds}}				
\safemath{\bfc}{\scr_{\rndb}}				
\safemath{\scaf}{\scc_{\rnds}}				
\safemath{\scafp}{\scaf(\doppler,\delay)}		
\safemath{\ccf}{\scc_{\rndl}}				
\safemath{\ccfp}{\ccf(\Dtime,\Dfreq)}			
\safemath{\cic}{\scc_{\rndh}}				
\safemath{\cicp}{\cic(\Dtime,\delay)}			
\safemath{\mi}{\scI}							
\safemath{\capacity}{\scC}					
\DeclareMathOperator{\Prob}{\opP}		
\safemath{\normal}{\mathcal{N}}			
\safemath{\jpg}{\mathcal{CN}}			
\safemath{\mchain}{\leftrightarrow}		
\safemath{\dB}{\,\mathrm{dB}}
\safemath{\dBm}{\,\mathrm{dBm}}
\safemath{\Hz}{\,\mathrm{Hz}}
\safemath{\kHz}{\,\mathrm{kHz}}
\safemath{\MHz}{\,\mathrm{MHz}}
\safemath{\GHz}{\,\mathrm{GHz}}
\safemath{\s}{\,\mathrm{s}}
\safemath{\ms}{\,\mathrm{ms}}
\safemath{\mus}{\,\mathrm{\text{\textmu}s}}
\safemath{\ns}{\,\mathrm{ns}}
\safemath{\ps}{\,\mathrm{ps}}
\safemath{\meter}{\,\mathrm{m}}
\safemath{\mm}{\,\mathrm{mm}}
\safemath{\cm}{\,\mathrm{cm}}
\safemath{\m}{\,\mathrm{m}}
\safemath{\W}{\,\mathrm{W}}
\safemath{\mW}{\, \mathrm{mW}}
\safemath{\J}{\,\mathrm{J}}
\safemath{\K}{\,\mathrm{K}}
\safemath{\bit}{\,\mathrm{bit}}
\safemath{\nat}{\,\mathrm{nat}}
\safemath{\define}{\triangleq}					
\safemath{\equivalent}{\sim}
\safemath{\distas}{\sim}					
\safemath{\sdiff}{\Delta}				
\safemath{\reals}{\mathbb R}
\safemath{\positivereals}{\reals_{+}}
\safemath{\integers}{\mathbb Z}
\safemath{\posint}{\integers_{+}}
\safemath{\naturals}{\mathbb N}
\safemath{\posnaturals}{\naturals_{+}}
\safemath{\complexset}{\mathbb C}
\safemath{\rationals}{\mathbb Q}
\newcommand*{\fancyrefparlabelprefix}{par}		
\newcommand*{\fancyrefchalabelprefix}{cha}		
\newcommand*{\fancyrefapplabelprefix}{app}		
\newcommand*{\fancyrefthmlabelprefix}{thm}		
\newcommand*{\fancyreflemlabelprefix}{lem}		
\newcommand*{\fancyrefcorlabelprefix}{cor}		
\newcommand*{\fancyrefdeflabelprefix}{def}		
\safemath{\iSet}{\setI}
\safemath{\rel}{\bowtie}					
\safemath{\eqrel}{\sim}					
\safemath{\rlord}{\leq}					
\safemath{\slord}{<}						
\safemath{\rpord}{\preceq}				
\safemath{\rrpord}{\succeq}				
\safemath{\spord}{\prec}					
\safemath{\sig}{\sigma}					
\safemath{\metric}{d}					
\safemath{\setfun}{\Phi}					
\safemath{\measure}{\mu}					
\newcommand{\outerm}[1]{#1^{\star}}		
\newcommand{\innerm}[1]{#1_{\star}}		
\safemath{\omeasure}{\outerm{\measure}}		
\safemath{\imeasure}{\innerm{\measure}}		
\safemath{\aecol}{\colS^{\star}_{\measure}} 
\safemath{\emeasure}{\bar{\measure}_{0}}	
\safemath{\rmeasure}{\tilde{\measure}}	
\safemath{\bmeasure}{\measure_{0}}		
\safemath{\glength}{\measure_{\altfun}}	
\safemath{\lebmea}{\lambda}				
\safemath{\blebmea}{\lebmea_{0}}			
\safemath{\sfun}{s}						
\safemath{\absintspace}{\colL^{1}}		
\safemath{\sqintspace}{\colL^{2}}		
\safemath{\abssumspace}{l^{1}}		
\safemath{\sqsumspace}{l^{2}}		
\safemath{\boundspace}{l^{\infty}}	
\safemath{\sfield}{\setF}				
\safemath{\vectors}{\setV}				
\safemath{\vecspace}{(\vectors,\sfield)}	
\safemath{\linspace}{\setV}				
\safemath{\clinspace}{(\linspace,\sfield)} 
\safemath{\nspace}{\setU}				
\safemath{\metspace}{\setM}				
\safemath{\bspace}{\setB}				
\safemath{\ipspace}{\setG}				
\safemath{\hilspace}{\setH}				
\safemath{\blospace}{\setG}				
\safemath{\lop}{\opT}					
\safemath{\altlop}{\opS}					
\safemath{\nullsp}{\nullspace(\lop)}		
\safemath{\lfun}{l}						
\safemath{\altlfun}{g}					
\newcommand{\dual}[1]{#1^{'}}			
\safemath{\dsum}{\oplus}					
\safemath{\funspace}{\colL}				
\renewcommand{\adj}[1]{#1^{\times}}		
\safemath{\adjlop}{\adj{\lop}}			
\safemath{\hadjlop}{\hadj{\lop}}			
\safemath{\tow}{\xrightarrow{w}}			
\safemath{\tows}{\xrightarrow{w^{*}}}		
\safemath{\cparam}{\lambda}
\safemath{\lopl}{\lop_{\cparam}}		
\safemath{\iop}{\opI}					
\safemath{\resolop}{\opR}				
\safemath{\resolvent}{\resolop_{\cparam}(\lop)}	
\safemath{\altresolvent}{\resolop_{\cparam}(\altlop)} 
\safemath{\reset}{\setQ}
\safemath{\spectrum}{\setS}
\safemath{\resolset}{\reset(\lop)}		
\safemath{\lopspec}{\spectrum(\lop)}		
\safemath{\altlopspec}{\spectrum(\altlop)} 
\safemath{\pspec}{\spectrum_{p}(\lop)}	
\safemath{\cspec}{\spectrum_{c}(\lop)}	
\safemath{\rspec}{\spectrum_{r}(\lop)}	
\safemath{\ev}{\cparam}					
\newcommand{\specrad}[1]{r_{#1}}			
\safemath{\lopsrad}{\specrad{\lop}}		
\safemath{\pop}{\opP}					
\safemath{\specfam}{\colE}				
\safemath{\specop}{\opE_{\cparam}}		
\safemath{\altspecop}{\opE_{\mu}}		
\safemath{\vmulti}{\vecone}				
\safemath{\unitaryop}{\opU}				
\safemath{\sval}{\sigma}					
\safemath{\corrcoef}{\rho}				
\safemath{\sangle}{\theta}				
\let\time\undefined
\safemath{\iset}{\setI}				
\safemath{\shift}{\nu}
\safemath{\scale}{\alpha}
\safemath{\time}{t}
\safemath{\specfreq}{\theta}	
\newcommand{\transopgen}[1]{\opT_{#1}} 
\safemath{\transop}{\transopgen{\delay}}
\newcommand{\modopgen}[1]{\opM_{#1}}	
\safemath{\modop}{\modopgen{\shift}}
\newcommand{\dilopgen}[1]{\opD_{#1}}	
\safemath{\dilop}{\dilopgen{\scale}}
\safemath{\fram}{\setF}				
\safemath{\dfram}{\dual{\fram}}		
\safemath{\ufb}{\scB}					
\safemath{\lfb}{\scA}					
\safemath{\sop}{\hadj{\aop}}				
\safemath{\aop}{\opT}			
\safemath{\fop}{\opS}				
\safemath{\daop}{\tilde\opT}			
\safemath{\dsop}{\hadj{\tilde\opT}}				
\safemath{\ifop}{\inv{\fop}}			
\safemath{\rifop}{\fop^{-1/2}}			
\newcommand{\ft}[1]{\widehat{#1}}	
\safemath{\transeq}{\setT}			
\safemath{\nfun}{\Phi}				
\safemath{\funvec}{\vecf}			
\safemath{\altfunvec}{\vecg}
\safemath{\samplespace}{\Omega}
\safemath{\probspace}{(\samplespace,\sfield,\Prob)}	
\safemath{\ccoef}{\rho}			
\safemath{\infstate}{\vecpi}				
\safemath{\typset}{\setA_{\epsilon}^{(N)}}	
\safemath{\expequal}{\doteq}				
\safemath{\code}{C}						
\safemath{\dstringset}{\setD^{\star}}		
\safemath{\cwlength}{l}					
\safemath{\elength}{L}					
\safemath{\extension}{C^{\star}}			
\safemath{\approaches}{\rightarrow}		
\safemath{\evnt}{\setA}					
\safemath{\altevnt}{\setB}					
\safemath{\rv}{\rndx}					
\safemath{\altrv}{\rndy}					
\safemath{\complexrv}{\rndu}					
\safemath{\altcrv}{\rndv}				
\safemath{\rvec}{\rvecx}					
\safemath{\altrvec}{\rvecy}				
\safemath{\crvec}{\rvecu}				
\safemath{\altcrvec}{\rvecv}				
\safemath{\variance}{\sigma^{2}}			
\safemath{\map}{T}						
\safemath{\jacobian}{J}					
\safemath{\wvec}{\rvecw}					
\safemath{\wrv}{\rndw}					
\safemath{\orthmat}{\matQ}				
\safemath{\evmat}{\matLambda}			
\safemath{\identity}{\matidentity}		
\safemath{\innovec}{\vecv}				
\safemath{\convas}{\xrightarrow{\text{a.s.}}}	
\safemath{\convr}{\xrightarrow{\text{r}}}	
\safemath{\convp}{\xrightarrow{\text{P}}}	
\safemath{\convd}{\xrightarrow{\text{D}}}	
\safemath{\ltis}{\opL}				
\safemath{\ir}{h}					
\safemath{\tf}{\MakeUppercase{\ir}}	
\safemath{\signal}{\scx} 
\safemath{\signalFt}{\widehat{\signal}}	
\safemath{\sigTime}{\signal \! \left( \time \right)}	
\safemath{\orFt}{\widehat{\signal} \! \left(\freq\right)}	
\safemath{\sampFt}{\widehat{\signal}_{\scd} \! \left(\freq\right)}	
\safemath{\fZero}{\freq_{0}}	
\safemath{\fSamp}{\freq_{\scs}}	
\safemath{\specOc}{\setI} 
\safemath{\sampSet}{\setP} 
\safemath{\sampVal}{\signal \! \left(\time_\scn\right)}	
\safemath{\beuDen}{\setD^{-} \! \left(\sampSet\right)}	
\safemath{\samPer}{\scT_\scs}	
\safemath{\hSp}{\setH}	
\safemath{\multSig}{\setB \! \left( \specOc \right)}	
\safemath{\Ltwo}{\sqintspace \! \left(\reals\right)}	
\safemath{\cellNo}{\scL} 
\safemath{\noIn}{s} 
\safemath{\cosetNo}{\scK} 
\safemath{\sampSig}{\scX \! \left(\freq\right)} 
\newcommand{\vandEnt}[1]{\scx_{#1}} 
\safemath{\vandM}{\matV \! \left(\vandEnt{0}, \vandEnt{1}, \ldots, \vandEnt{\cosetNo - 1} \right)} 
\safemath{\sampMat}{\matA} 
\safemath{\suppXhat}{\gamma} 
\safemath{\orFtSupp}{\widehat{\signal}_{\suppXhat} \! \left(\freq\right)} 
\safemath{\maxCard}{\scC} 
\safemath{\dict}{\matD} 
\safemath{\measVec}{\vecy} 
\safemath{\sigVec}{\vecx} 
\safemath{\meas}{\scm} 
\safemath{\sigDim}{\scn} 
\safemath{\spars}{\scs} 
\safemath{\FM}{\matF}	
\safemath{\IM}{\matI}	
\safemath{\fONB}{\matA}	
\safemath{\sONB}{\matB}	
\safemath{\dictCol}{d} 
\safemath{\recSigVec}{\hat{\sigVec}} 
\safemath{\PZ}{\left( \text{P0} \right)} 
\safemath{\BP}{\left( \text{BP} \right)} 
\safemath{\PO}{\left( \text{P1} \right)} 
\safemath{\coher}{\mu} 
\safemath{\supp}{\setS}	
\safemath{\corMeas}{\vecz} 
\safemath{\erVec}{\vece} 
\safemath{\concSig}{\check{\sigVec}} 
\safemath{\fVec}{\vecp} 
\safemath{\sVec}{\vecq} 
\safemath{\conVec}{\vecv} 
\safemath{\sgnl}{\vecs} 
\safemath{\fONBCol}{\veca} 
\safemath{\sONBCol}{\vecb} 
\safemath{\fSupp}{\setP} 
\safemath{\sSupp}{\setQ} 
\safemath{\fDim}{\sigDim_{\fONBCol}} 
\safemath{\sDim}{\sigDim_{\sONBCol}} 
\safemath{\fCoher}{\coher_{\fONBCol}} 
\safemath{\sCoher}{\coher_{\sONBCol}} 
\safemath{\sigSupp}{\setX} 
\safemath{\erSupp}{\setE} 
\safemath{\PZErSup}{\left( \text{P0, } \erSupp \right)} 
\safemath{\falseSigVec}{\tilde{\sigVec}} 
\safemath{\falseErVec}{\tilde{\erVec}} 
\safemath{\sigSpars}{\sigDim_{\sigVec}} 
\safemath{\erSpars}{\sigDim_{\erVec}} 
\safemath{\BPErSup}{\left( \text{BP, } \erSupp \right)} 
\safemath{\gram}{\matG} 
\safemath{\WHT}{\scT}	
\safemath{\WHF}{\scF}	
\safemath{\prot}{\scg}	
\safemath{\vprot}{\vecg}	
\safemath{\proti}{\prot_{\scm, \scn}}	
\safemath{\prott}{\prot \! \left( \time \right) }	
\safemath{\protit}{\proti \! \left(\time\right)}	
\safemath{\vproti}{\vprot_{\scm, \scn}}	
\safemath{\dprot}{\tilde{\scg}}	
\safemath{\dvprot}{\tilde{\vecg}}	
\safemath{\dproti}{\dprot_{\scm, \scn}}	
\safemath{\dprott}{\dprot \! \left( \time \right) }	
\safemath{\dprotit}{\dproti \! \left(\time\right)}	
\safemath{\dvproti}{\dvprot_{\scm, \scn}}	
\safemath{\WO}{\scW}		
\safemath{\WOpam}{\WO^{\left( \WHT, \WHF \right)}_{\scm,\scn}}	
\newcommand{\WOind}[2]{\WO_{#1, #2}} 
\safemath{\WOsind}{\WOind \scm \scn }
\safemath{\zak}{\opZ} 
\safemath{\zakpar}{\zak^{\WHT,\WHF}} 
\safemath{\zaksig}{\zak_{\signal} \! \left( \time, \freq \right)} 
\newcommand{\zakprot}[2]{\zak_{\prot} \! \left( #1, #2 \right)} 
\safemath{\zakprots}{\zakprot{\time}{\freq}}
\safemath{\zakprotis}{\zak_{\prot_{\scm,\scn}} \! \left( \time, \freq \right)} 
\safemath{\tfr}{\scR} 
\safemath{\funmin}{\scm \! \left( \prott; \WHT \right)}	
\safemath{\funmax}{\scM \! \left( \prott; \WHT \right)}	
\safemath{\sclf}{\phi}	
\safemath{\vscf}{\vecphi}	
\newcommand{\scfa}[1]{\sclf \! \left( #1 \right)}	
\safemath{\scft}{\scfa \time}	
\newcommand{\vscfi}[2]{\vscf_{#1,#2}}	
\safemath{\vscfs}{\vscfi \scj \scn}
\safemath{\fscf}{\ft \sclf}	
\safemath{\vfscf}{\ft \vscf}	
\newcommand{\fscfa}[1]{\fscf \! \left( #1 \right)}	
\safemath{\fscff}{\fscfa \freq}	
\safemath{\wav}{\psi}	
\safemath{\vwav}{\vecpsi}	
\newcommand{\wava}[1]{\wav \! \left( #1 \right)}	
\safemath{\wavt}{\wava \time}	
\newcommand{\vwavi}[2]{\vwav_{#1,#2}}	
\safemath{\vwavs}{\vwavi \scj \scn} 
\safemath{\fwav}{\ft \wav}	
\safemath{\vfwav}{\ft \vwav}	
\newcommand{\fwava}[1]{\fwav \! \left( #1 \right)}	
\safemath{\fwavf}{\fwava \freq}	
\DeclareMathOperator{\unif}{Unif}			
\newcommand{\channel}[2]{W ( #1 | #2 )} 
\newcommand{\bigchannel}[2]{W \bigl( #1 \bigl| #2 \bigr)} 
\safemath{\oprobf}{\hat{p}}	
\safemath{\probf}{p} 
\safemath{\binind}{\setI}			
\safemath{\pool}{\bm{ \mathcal P}}			
\safemath{\poolre}{\setP}			
\safemath{\eptyp}{\setT^{(n)}_\epsilon}	
\safemath{\dist}{\mathbb{P}}			
\safemath{\tdist}{\tilde \dist}			
\newcommand{\distof}[1]{\dist [ #1 ]}	
\newcommand{\bigdistof}[1]{\dist \bigl[ #1 \bigr]}	
\newcommand{\Bigdistof}[1]{\dist \Bigl[ #1 \Bigr]}	
\safemath{\ry}{\setY}			
\safemath{\rz}{\setZ}			
\safemath{\trho}{\frac{1}{1+\rho}}
\safemath{\tirho}{\tilde \rho}
\DeclareRobustCommand{\impliedby}{\reflectbox {$\implies$}}
\DeclareRobustCommand{\notimpliedby}{\centernot {\reflectbox {$\implies$}}}
\newtheorem{theorem}{Theorem}
\numberwithin{theorem}{section}
\newtheorem{lemma}[theorem]{Lemma}
\newtheorem{corollary}[theorem]{Corollary}
\newtheorem{definition}[theorem]{Definition}
\newtheorem{remark}[theorem]{Remark}
\newtheorem{example}[theorem]{Example}
\begin{document}
%
%
\selectlanguage{USenglish}
\pagenumbering{arabic}
%
\title{The Zero-Error Feedback Capacity of State-Dependent Channels}
  
\author{Annina Bracher and Amos Lapidoth}

\maketitle

\huge
\begin{abstract}
  \normalsize
  \vspace{0.5cm}
  
  \let\thefootnote\relax\footnotetext{The results in this paper were presented in part at the IEEE International Symposium on Information Theory (ISIT), Barcelona, Spain, Jul.\ 2016.}
  \let\thefootnote\relax\footnotetext{A.\ Bracher and A.\ Lapidoth are with the Department of Information Technology and Electrical Engineering, ETH Zurich, Switzerland (e-mail: bracher@isi.ee.ethz.ch; lapidoth@isi.ee.ethz.ch).}

The zero-error feedback capacity of the Gelfand-Pinsker channel is
established. It can be positive even if the channel's zero-error
capacity is zero in the absence of feedback. Moreover, the error-free
transmission of a single bit may require more than one channel
use. These phenomena do not occur when the state is revealed to the
transmitter causally, a case that is solved here using Shannon
strategies. Cost constraints on the channel inputs or channel states
are also discussed, as is the scenario where---in addition to the
message---also the state sequence must be recovered.
\end{abstract}
\normalsize

\section{Introduction}\label{sec:introduction}

Motivated by Shannon's characterization of the zero-error capacity of
the discrete memoryless channel (DMC) with a feedback link from the channel output to the encoder \cite{shannon56}, we compute the corresponding capacity
for the state-dependent DMC (SD-DMC) whose state is revealed acausally to
the transmitter. This ``Gelfand-Pinsker
channel,'' which was introduced by Gelfand and Pinsker in
\cite{gelfandpinsker80,merhavweissman05}, is more general than the
channel studied by Shannon, and, indeed, when there is only one state we recover Shannon's result. But, more interestingly, this channel's zero-error feedback capacity exhibits phenomena that are not observed on the state-less channel: it
can be positive even if the zero-error capacity is zero in the absence of feedback; the error-free
transmission of a single bit may require more than one channel use;
and Shannon's sequential coding technique cannot be applied naively.

Like Shannon's, our coding scheme is a two-phase scheme where the
first phase reduces the receiver's ambiguity to a manageable size, and
the second removes it entirely. But our first phase differs from Shannon's
sequential approach and draws instead on Dueck's scheme
for zero-error communication over the multiple-access channel with
feedback \cite{dueck85}, which in turn draws on Ahlswede's work
\cite{ahlswede73,ooiwornell98,merhavweissman05}. The second phase is tricky, because
sending a single bit reliably may require more than one channel use, so
``uncoded'' transmission need not work.

We also compute the zero-error feedback capacity of the SD-DMC $W(y|x,s)$ when the state is revealed to the transmitter causally. As we show, causal state information (SI) is utilized optimally using Shannon strategies. Consequently, when the SI is causal, the zero-error capacity is positive with feedback if, and only if, (iff) it is positive without it, and one channel use suffices to transmit a single bit error-free.

Several extensions are also discussed: we compute the zero-error feedback capacity of the Gelfand-Pinkser channel for the case where---in addition to the message---the encoder wishes to convey error-free also the state sequence; and we present capacity results for the Gelfand-Pinsker channel with cost constraints on the channel inputs or channel states. Under channel-input constraints a naive application of Shannon's sequential coding technique turns out to be suboptimal even on the state-less channel.\\

The rest of this paper is structured as follows. We conclude this section by introducing some notation; by recalling the zero-error feedback capacity of the state-less DMC; and by exploring connections with the m-capacity of an arbitrarily-varying channel (AVC). Section~\ref{sec:main} contains the problem formulation and the results. The main results for the Gelfand-Pinkser channel are proved in Section~\ref{sec:proofs}, and the paper concludes with a brief summary.

\subsection{Notation and Terminology}

We consider a SD-DMC of transition law $\channel y {x,s}$, which is governed by an IID $\sim Q$ state process. The channel-input alphabet $\setX$, the channel-state alphabet $\setS$, and the channel-output alphabet $\setY$ are all finite. By possibly redefining $\setS$, we can assume without loss of generality that
\begin{equation}
Q (s) > 0, \quad s \in \setS. \label{eq:assQAssignsPosProbs}
\end{equation}
Subject to \eqref{eq:assQAssignsPosProbs}, the exact nature of the PMF $Q$ is immaterial.

By default $\log (\cdot)$ denotes base-$2$ logarithm, and $\ln (\cdot)$ denotes natural logarithm. We denote by $h_{\textnormal b} (\cdot)$ the binary entropy function. If $\xi$ is a real number, then $[\xi]^+$ denotes the maximum of $\xi$ and zero. Chance variables are denoted by upper-case letters and their realizations or the elements of their support sets by lower-case letters, e.g., $Y$ denotes the random channel output and $y \in \setY$ a value it may take. Sets are denoted by calligraphic letters and in boldface if they are random, so the set of all messages is denoted $\setM$, and $\bm {\mathcal M}_1$ could be the set of messages of positive posterior probability given a first block of (random) channel outputs. Sequences are in bold lower- or upper-case letters depending on whether they are deterministic or random, e.g., $\rndvecY$ is the length-$n$ channel-output sequence, and $\vecy$ is an $n$-tuple from $\setY^n$. The positive integer $n \in \naturals$ stands for the blocklength, and unless otherwise specified sequences are of length~$n$.

Variables pertaining to Time~$i$ have the subscript $i$, so $S_i$ denotes the Time-$i$ channel state. Sequences of variables that occur in the time-range $j$ to $i$ bear a subscript $j$ and a superscript $i$, where the subscript $j = 1$ may be dropped, e.g., $S_4^5$ denotes the fourth and fifth state, and $S^n$ denotes all the states through Time~$n$. We also use a similar notation for sequences whose indices need not coincide with time, e.g., if $\vecs$ is a 5-tuple from $\setS^5$, then $s_3$ denotes its third component, $s^5_4$ its fourth and fifth component, and $s^5$ the entire 5-tuple.

If the input $X$ to the channel $\channel y x$ is of PMF $P$, then $P \times W$ denotes the joint distribution of $X$ and the channel output $Y$ $$( P \times W ) ( x,y ) = P ( x ) \, \channel y x, \quad (x,y) \in \setX \times \setY,$$ and $P W$ denotes the corresponding $Y$-marginal $$( P W ) ( y ) = \sum_{x \in \setX} ( P \times W ) ( x,y  ) = \sum_{x \in \setX} P ( x ) \, \channel y x, \quad y \in \setY.$$

Given two PMFs $P_1$ and $P_2$ on some finite set $\setZ$, we say that $P_2$ is absolutely continuous w.r.t.\ $P_1$ and write $$P_2 \ll P_1,$$ if $P_2 (z)$ is zero whenever $P_1 (z)$ is. If $P_2$ is absolutely continuous w.r.t.\ $P_1$, then the events that have probability zero w.r.t.\ $P_1$ must also have probability zero w.r.t.\ $P_2$. Likewise for events of probability one.

For an SD-DMC $\channel y {x,s}$ we denote by $\mathscr P ( W )$ the set of transition laws $P_{Y|X,S}$ from $\setX \times \setS$ to $\setY$ for which for every pair $(x,s) \in \setX \times \setS$ $$P_{Y|X,S} (\cdot|x,s) \ll \channel \cdot {x,s}.$$ For a state-less DMC $\channel y x$ we drop $s$, and $\mathscr P ( W )$ denotes the set of transition laws $P_{Y|X}$ from $\setX$ to $\setY$ for which for every $x \in \setX$ $$P_{Y|X} (\cdot|x) \ll \channel \cdot x.$$

The empirical type of an $n$-tuple $\vecx \in \setX^n$ is denoted $P_\vecx$, i.e., $$P_\vecx (x) = \frac{N (x|\vecx)}{n}, \quad x \in \setX,$$ where $N (x|\vecx)$ is the number of components of the $n$-tuple $\vecx$ that equal $x$. For a PMF $P$ on $\setX$ the type class $\setT^{(n)}_P$ comprises the elements of $\setX^n$ whose empirical type is $P$. If $\setT^{(n)}_P$ is nonempty, then we say that $P$ is an $n$-type. For an $n$-type $P$ on $\setX$, a transition law~$W$ from $\setX$ to $\setY$, and an element $\vecx$ of $\setT^{(n)}_P$ the $W$-shell $\setT^{(n)}_{W} (\vecx)$ comprises the $n$-tuples $\vecy \in \setT^{(n)}_{P W}$ that satisfy $(\vecx,\vecy) \in \setT^{(n)}_{P \times W}$.

\subsection{State-Less Channels} \label{sec:stateLess}

Shannon showed in \cite{shannon56} that the zero-error capacity of the state-less DMC $\channel y x$ (with or without feedback) is positive iff
\begin{IEEEeqnarray}{l}
\exists \, x, \, x^\prime \in \setX \textnormal{ s.t.\ } \Bigl( \channel y x \, \channel y {x^\prime} = 0, \,\, \forall \, y \in \setY \Bigr). \label{eq:positiveShannon}
\end{IEEEeqnarray}
When \eqref{eq:positiveShannon} holds, the error-free transmission of a single bit requires one channel use. He also showed that, when it is positive, the zero-error feedback capacity of $\channel y x$ is
\begin{IEEEeqnarray}{l}
\max_{P_X} \min_{y \in \setY} - \log \sum_{x \in \setX \colon \channel y x > 0} P_X (x). \label{eq:capacityShannon}
\end{IEEEeqnarray}
Ahlswede \cite{ahlswede73} proved that \eqref{eq:capacityShannon} can be alternatively expressed as
\begin{IEEEeqnarray}{l}
\max_{P_X} \min_{P_{Y|X} \in \mathscr P (W)} I (X;Y), \label{eq:capacityAhlswede}
\end{IEEEeqnarray}
where the mutual information is computed w.r.t.\ the joint PMF $P_X \times P_{Y|X}$. He also provided an alternative coding scheme. Unlike \eqref{eq:positiveShannon}, the formulas \eqref{eq:capacityShannon} and \eqref{eq:capacityAhlswede} are only for channels with feedback. Indeed, feedback can increase the zero-error capacity of a DMC \cite{shannon56}.

\subsection{Connection to the AVC} \label{sec:connAVC}

There are interesting connections between the problem of computing the
zero-error capacity of a DMC and that of computing the m-capacity (the
capacity under the maximal-probability-of-error criterion) of an AVC \cite{ahlswede70}. Indeed,
given a DMC $W(y|x)$ with input alphabet~$\setX$ and output
alphabet~$\setY$, the following construction produces an AVC
$\widetilde{W}(y|x,\sigma)$ whose m-capacity is equal to the zero-error capacity of
the channel $W(y|x)$ \cite[Section~2]{ahlswede70}, \cite[Problem~12.3]{csiszarkoerner11}. To construct the AVC we consider the functions
$\sigma \colon \setX \to \setY$ that satisfy that
$W(\sigma(x)|x)$ is positive for all $x \in \setX$. With each such
function $\sigma(\cdot)$ we associate a state $\sigma$ and the transition
law
\begin{equation}
  \label{eq:constructA}
  \widetilde{W}(y|x, \sigma) = \begin{cases}
    1 & \text{if $y = \sigma(x)$}, \\
    0 & \text{otherwise}.
    \end{cases}
\end{equation}
The constructed AVC has two important properties. The first is that
to every pair of input and output sequences $x_{1}, \ldots, x_{n}$
and $y_{1},\ldots, y_{n}$ for which $\prod_k W(y_{k}|x_{k})$ is
positive, there corresponds a sequence of states $\sigma_{1}, \ldots,
\sigma_{n}$ such that $y_{k} = \sigma_{k}(x_{k})$ for $k=1,\ldots,
n$. The second is that $\widetilde{W}(y|x,\sigma)$ is $\{0,1\}$-valued in
the sense that
\begin{equation*}
\widetilde{W}(y|x,\sigma) \in \{0,1\}, \,\, \forall \, y, \, x, \, \sigma.
\end{equation*}
This latter property guarantees that the conditional probability of error over
the AVC (conditional on the transmitted message and the state
sequence) is $\{0,1\}$-valued and thus small (say, smaller than $1/2$) only if
it is zero. 

This relationship between the zero-error capacity and the m-capacity
fails when the original channel whose zero-error capacity we seek is
state-dependent and the state is revealed to the encoder. To see why, let us denote by $W(y|x,s)$ the transition law of the
state-dependent channel whose zero-error capacity we seek when the state is
revealed to the encoder, and suppose we want to construct an AVC
$\widetilde{W}(y|x,\sigma)$ whose m-capacity when the state $\sigma$ is
revealed to the encoder is equal to the zero-error capacity we seek.
We have intentionally used different letters $s$ and $\sigma$ for the
state of the original channel and of the AVC because the two need not
\emph{prima facie} be the same. For example, if there is only one
state $s^{\star}$, then we are back to the state-less case and the
construction we described above in \eqref{eq:constructA} results in
the number of AVC states being equal to the number of functions
$\sigma \colon \setX \to \setY$ that satisfy that
$W(\sigma(x)|x,s^{\star})$ is positive for all $x \in \setX$. However,
in this case the $m$-capacity of the AVC $\widetilde{W}(y|x,\sigma)$ is
equal to the zero-error capacity we seek only if the state $\sigma$ is
\emph{not} revealed to the encoder. In attempting to construct the
AVC we are faced with two conflicting requirements. For the
state information (SI) that is revealed to the encoder in the two
scenarios to be identical, the states $s$ and $\sigma$ should be identical.
But for the AVC to have a $\{0,1\}$-law, the number of AVC states $\sigma$
should typically be larger than the number of states $s$. 

The construction does go through in the special case where the original
state-dependent transition law $W(y|x,s)$ happens to be $\{0,1\}$-valued.
In this special case we can choose $\sigma$ to equal~$s$, and the
m-capacity equals the zero-error capacity. In this case feedback is
superfluous, because from the state (which is revealed to the encoder)
and from the input (that it produces) the encoder can compute
the output. We thus see that when $W(y|x,s)$ is
$\{0,1\}$-valued the zero-error feedback capacity with acausal SI can be inferred from Ahlswede's results on the feedback-less AVC with SI at the encoder
\cite{ahlswede86}; but in general it cannot.

\section{Problem Formulation and Results}\label{sec:main}

We consider an SD-DMC $\channel y {x,s}$ with feedback whose encoder is furnished with the state sequence either acausally (Figure~\ref{fig:model}), or causally (Figure~\ref{fig:modelCausal}), or strictly-causally (Figure~\ref{fig:modelStrCaus}). Using $n$ channel uses, the encoder wants to convey to the receiver error-free a message~$m$ from some finite set of messages $\setM$. To this end it uses an $(n, \setM)$ zero-error code:

\begin{definition} \label{def:zeroErrorFBCode}
Given a finite set $\setM$ and a positive integer $n \in \naturals$, an $(n, \setM)$ zero-error feedback code for the SD-DMC $\channel {y}{x,s}$ with acausal SI to the encoder consists of $n$ encoding mappings
\begin{IEEEeqnarray}{l}
f_i \colon \setM \times \setS^n \times \setY^{i-1} \rightarrow \setX, \quad i \in [1:n] \label{eq:encAcaus}
\end{IEEEeqnarray}
and $|\setM|$ disjoint decoding sets $$\setD_m \subseteq \setY^n, \quad m \in \setM$$ such that, for every $m \in \setM$ and every realization $\vecs \in \setS^n$ of the state sequence, the probability of a decoding error is zero, i.e., $$\distof {Y^n \notin \setD_m | M = m, S^n = \vecs } = 0, \,\, \forall \, m \in \setM, \, \vecs \in \setS^n,$$ where
\begin{IEEEeqnarray}{l}
\distof {Y^n \notin \setD_m | M = m, S^n = \vecs } = \sum_{\vecy \in \setY^n \setminus \setD_m } \prod^n_{i = 1} W \bigl( y_i \bigl| f_i (m, \vecs, y^{i-1}), s_i \bigr). 
\end{IEEEeqnarray}
A rate~$R$ is achievable if for every sufficiently-large blocklength~$n$ there exists an $(n, \setM)$ zero-error feedback code with $$\log |\setM| \geq n R.$$ The zero-error feedback capacity with acausal SI is the supremum of all achievable rates and is denoted $C_{\textnormal{f},0}$.

The zero-error feedback capacities with causal and strictly-causal SI are denoted $C^{\textnormal{caus}}_{\textnormal{f},0}$ and $C^{\textnormal{s-caus}}_{\textnormal{f},0}$, respectively. They are defined like $C_{\textnormal{f},0}$ except that the encoding mappings \eqref{eq:encAcaus} are replaced by
\begin{IEEEeqnarray}{l}
f_i \colon \setM \times \setS^i \times \setY^{i-1} \rightarrow \setX, \quad i \in [1:n] \label{eq:encCaus}
\end{IEEEeqnarray}
in the causal case and by
\begin{IEEEeqnarray}{l}
f_i \colon \setM \times \setS^{i-1} \times \setY^{i-1} \rightarrow \setX, \quad i \in [1:n] \label{eq:endStrCaus}
\end{IEEEeqnarray}
in the strictly-causal case.
\end{definition}

Note that the PMF $Q$ governing the state does not appear in Definition~\ref{def:zeroErrorFBCode} and therefore does not affect the zero-error feedback capacities with acausal, causal, and strictly-causal SI. Also note that our definition assumes deterministic encoders. This assumption is not restrictive:

\begin{remark}\label{re:detEncWlg}
Allowing stochastic encoders does not increase the zero-error feedback capacities with acausal, causal, and strictly-causal SI.
\end{remark}

\begin{proof}
A proof for the case where the encoder observes the SI acausally is provided in Appendix~\ref{sec:pfDetEncWlg}. The proof goes through also when the SI is causal or strictly-causal.
\end{proof}

\subsection{Acausal SI} \label{sec:acaus}

In this section we assume that the encoder observes the SI acausally (see Figure~\ref{fig:model}). Our main result is presented in the following two theorems, which together provide a single-letter characterization of $C_{\textnormal{f},0}$. The first characterizes the channels for which it is positive, and the second provides a formula for $C_{\textnormal{f},0}$ when it is positive.

\begin{figure}[ht]
\vspace{-2mm}

\begin{center}
\def\pgfsysdriver{pgfsys-dvipdfm.def}
\begin{tikzpicture}[circuit logic US]
	\tikzstyle{sensor}=[draw, minimum width=1em, text centered, minimum height=2em]
	\tikzstyle{stategen}=[draw, minimum width=1em, text centered, minimum height=1em]
	\tikzstyle{delay}=[draw, minimum width=0.5em, text centered, minimum height=0.5em]
	\def\blockdist{2.4}
	\def\edgedist{2.5}
    \node (naveq) [sensor] {$\channel y {x,s}$};
    \path (naveq.west)+(-0.8*\blockdist,0) node (enc) [sensor] {Encoder};
    \path (naveq.east)+(0.8*\blockdist,0) node (dec) [sensor] {Decoder};
    \path (dec.east)+(0.5*\blockdist,0) node (guess) [text centered] {};
    \path (enc.west)+(-0.5*\blockdist,0) node (sour) [text centered] {};
    \path (naveq.east)+(0.2*\blockdist,-0.4*\blockdist) node (del) [delay] {\tiny $D$};
    \path (naveq.center)+(0,0.65*\blockdist) node (state) [sensor] {$Q (s)$};

    \fill [black] (del.north |- dec.west) circle (2pt);
    \path [draw, ->] (sour) -- node [above] {$M$}
    		(enc.west |- sour);  		
    \path [draw, ->] (enc) -- node [above] {$X_i$} 
        (naveq.west |- enc);
    \path [draw] (state.west) -- node [above] {$S^n$} (enc.north |- state.west);
    \path [draw, ->] (enc.north |- state.west) -- (enc.north);
    \path [draw, ->] (state) -- node [right] {$S_i$} 
        (naveq.north);    
    	\path [draw, <-] (dec) -- node [above] {$Y_i$}
    		(naveq.east |- dec);		
    	\path [draw, ->] (dec) -- node [above] {$\widehat{M}$} 
        (guess);
    \path [draw] (del) -- node [above] {} (del.north |- dec.west);
	\path [draw] (del.west) -- node [below] {$Y^{i-1}$} (enc.south |- del.west);
    \path [draw, <-] (enc.south) -- node [left] {} (enc.south |- del.west);
    
\end{tikzpicture}

\caption[SD-DMC with acausal SI and feedback]{SD-DMC with acausal SI and feedback.}
\label{fig:model}
\end{center}
\vspace{-2mm}

\end{figure}
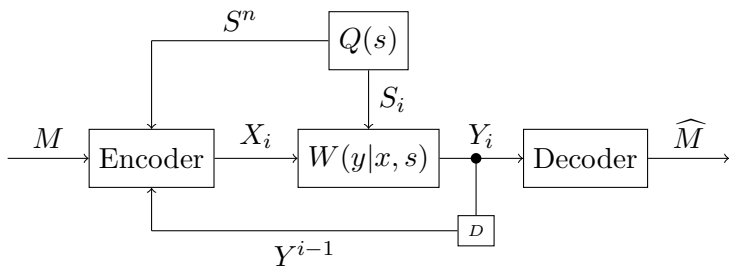

\begin{theorem}\label{th:positive}
A necessary and sufficient condition for $C_{\textnormal{f},0}$ to be positive is
\begin{IEEEeqnarray}{l} 
\forall \, s, \, s^\prime \in \setS \quad \exists \, x, \, x^\prime \in \setX \textnormal{ s.t.\ } \Bigl(  \channel y {x,s} \, \channel y {x^\prime,s^\prime} = 0, \,\, \forall \, y \in \setY \Bigr). \label{eq:positive}
\end{IEEEeqnarray}
\end{theorem}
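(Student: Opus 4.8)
The two implications have rather different flavors, and I would prove them separately.

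\emph{Necessity.} This direction is elementary and I would dispatch it first. Suppose \eqref{eq:positive} fails, so there are states $s_0,s_0'\in\setS$ such that for \emph{every} pair $x,x'\in\setX$ some output $y$ has $\channel y {x,s_0}\,\channel y {x',s_0'}>0$. Fix any $(n,\setM)$ zero-error feedback code and suppose, towards a contradiction, that $|\setM|\geq 2$; pick distinct $m,m'\in\setM$ and set $\vecs:=(s_0,\dots,s_0)$ and $\vecs':=(s_0',\dots,s_0')$ in $\setS^n$. I would build an output $n$-tuple $\vecy^{\star}=(y^{\star}_1,\dots,y^{\star}_n)$ one component at a time: having fixed $y^{\star}_1,\dots,y^{\star}_{i-1}$, put $x_i:=f_i\bigl(m,\vecs,(y^{\star}_1,\dots,y^{\star}_{i-1})\bigr)$ and $x_i':=f_i\bigl(m',\vecs',(y^{\star}_1,\dots,y^{\star}_{i-1})\bigr)$, and invoke the failure of \eqref{eq:positive} for the pair $(x_i,x_i')$ to pick $y^{\star}_i$ with $\channel {y^{\star}_i} {x_i,s_0}>0$ and $\channel {y^{\star}_i} {x_i',s_0'}>0$. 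Taking the product over $i$ then yields $\distof{Y^n=\vecy^{\star}\mid M=m,\,S^n=\vecs}>0$ and $\distof{Y^n=\vecy^{\star}\mid M=m',\,S^n=\vecs'}>0$, so the zero-error requirement forces $\vecy^{\star}\in\setD_m$ and $\vecy^{\star}\in\setD_{m'}$, contradicting the disjointness of the decoding sets. Hence every zero-error feedback code has $|\setM|\leq 1$, i.e.\ $C_{\textnormal{f},0}=0$.

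\emph{Sufficiency.} This is the substantive direction; it is exactly the achievability half of the main coding theorem established in Section~\ref{sec:proofs}, and I would reduce it to one claim: if \eqref{eq:positive} holds, then a single bit can be transmitted error-free using some number $n_0$ of channel uses that depends only on the (finite) alphabets $\setX,\setS,\setY$. Granting this, one runs $k$ independent length-$n_0$ blocks to convey $k$ bits in $kn_0$ channel uses (ignoring any leftover channel uses when the blocklength is not a multiple of $n_0$), so every rate below $1/n_0$ is achievable and $C_{\textnormal{f},0}>0$. To build the one-bit code I would follow the two-phase scheme used for the general theorem: a first phase, patterned on Dueck's and Ahlswede's feedback constructions rather than on Shannon's sequential approach, that drives the receiver's list of still-possible messages down to a bounded size, and a second ``resolution'' phase that removes the remaining ambiguity entirely.

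The crux, and the reason the state-less recipe does not transplant directly, is the resolution phase. Condition \eqref{eq:positive} provides, for every ordered pair of states $(s,s')$, inputs whose output laws $\channel \cdot {x,s}$ and $\channel \cdot {x',s'}$ have disjoint supports; but, as the introduction stresses, it does \emph{not} guarantee that a single bit gets through in one channel use, so ``uncoded'' transmission in the resolution phase can fail. The work is therefore to weave these pairwise separations --- over all state pairs that the adversarially chosen state sequence might realize --- into a genuine constant-length zero-error block code for the bit, using the feedback to keep encoder and decoder agreed on which separation is currently being exploited. Doing this uniformly over all state sequences is the main obstacle, and it is carried out in Section~\ref{sec:proofs}, where in fact the precise value of $C_{\textnormal{f},0}$ is determined; the sufficiency direction of the present theorem then follows.
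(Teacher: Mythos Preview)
Your necessity argument is correct and matches the paper's converse in Section~\ref{sec:pfThPositive} essentially line for line: fix the two ``bad'' states, take the all-$s$ and all-$s'$ state sequences, and inductively build a confusable output sequence.

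Your sufficiency sketch, however, conflates two different schemes and ends up circular. The two-phase ``Dueck/Ahlswede, then resolve'' scheme you describe is the capacity-achieving scheme of Theorem~\ref{th:capacity}, not the positivity scheme. Two concrete problems: First, in a one-bit code the ``list of still-possible messages'' has size two from the start, so a Phase~1 that ``drives the message list down to a bounded size'' is vacuous; that cannot be what Phase~1 does here. Second, and more seriously, the direct part of Theorem~\ref{th:capacity} \emph{assumes} $C_{\textnormal{f},0}>0$ as a hypothesis and \emph{invokes} the one-bit code of Remark~\ref{re:chUsesPerBit} (i.e., the sufficiency half of the present theorem) in its final block. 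So pointing to the capacity theorem to establish positivity reverses the logical dependency.

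What the paper actually does for sufficiency (Remark~\ref{re:chUsesPerBit}) is quite different from your description. Phase~1 reduces not the list of messages but the decoder's uncertainty about the \emph{Phase-2 state sequence}: one partitions the set of possible Phase-2 state sequences into $|\setY|$ cells indexed by outputs, and the encoder---knowing the current state $s$ and which cell contains the true Phase-2 sequence---sends an input $x(s,y)$ with $W(y|x(s,y),s)=0$, thereby ruling out one cell per channel use. This is a Shannon-style sequential step, not Dueck/Ahlswede, and its feasibility only needs the weaker condition~\eqref{eq:condCapacityPos}, which \eqref{eq:positive} implies by taking $s'=s$. After enough channel uses the set of candidate Phase-2 state sequences is small (bounded independently of the bit), and Phase~2 then uses~\eqref{eq:positive} directly: for each surviving pair of state sequences one dedicates a coordinate and plants the separating inputs $x(s_i,s_i'),x'(s_i,s_i')$ there. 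The efficient index allocation via $i(k,\ell)=(\ell-k \bmod |\bm{\mathcal S}_{n_1}|)+1$ keeps Phase~2 to length $|\bm{\mathcal S}_{n_1}|$ rather than $|\bm{\mathcal S}_{n_1}|^2$, which is what yields the explicit bound~\eqref{eq:nbitState} on $n_{\textnormal{bit}}$.
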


\begin{proof}
See Section~\ref{sec:pfThPositive}.
\end{proof}

\begin{theorem}\label{th:capacity}
If $C_{\textnormal{f},0}$ is positive, then
\begin{IEEEeqnarray}{l} 
C_{\textnormal{f},0} = \min_{P_S} \max_{P_{U,X|S}} \min_{ \substack{ P_{Y|U,X,S} \colon \\ P_{Y|U=u,X,S} \in \mathscr P (W), \,\, \forall \, u \in \setU }} I (U;Y) - I (U;S), \label{eq:capacity}
\end{IEEEeqnarray}
where $U$ is an auxiliary chance variable taking values in a finite set $\setU$, and the mutual informations are computed w.r.t.\ the joint PMF $P_S \times P_{U,X|S} \times P_{Y|U,X,S}$. Restricting $X$ to be a function of $U$ and $S$, i.e., $P_{U,X|S}$ to have the form
\begin{equation}
P_{U,X|S} (u,x|s) = P_{U|S} (u|s) \, \ind {x = g (u,s)}, \label{eq:thCapCardUXFuncUS}
\end{equation}
does not change the RHS of \eqref{eq:capacity}, nor does restricting the cardinality of $\setU$ to
\begin{equation}
|\setU| \leq |\setX|^{|\setS|}. \label{eq:cardU}
\end{equation}
\end{theorem}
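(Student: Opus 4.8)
The plan is to prove Theorem~\ref{th:capacity} by establishing achievability and a matching converse for the value on the right-hand side of \eqref{eq:capacity} --- call it $R^\star$ --- and then separately verifying the two ``without loss of optimality'' claims \eqref{eq:thCapCardUXFuncUS} and \eqref{eq:cardU} about the form of $P_{U,X|S}$ and the size of $\setU$. It helps to read the minimax in \eqref{eq:capacity} as a game: a fictitious adversary first fixes the state type $P_S$; the encoder then chooses the Gelfand--Pinsker input law $P_{U,X|S}$; and finally a second adversary chooses the channel realization $P_{Y|U,X,S}$ consistent with the support pattern of $W$. Since $I(U;Y)-I(U;S)$ is the usual Gelfand--Pinsker rate, \eqref{eq:capacity} is the natural zero-error analogue of Ahlswede's formula \eqref{eq:capacityAhlswede}, the new ingredient being the outer minimization over state types, which reflects the worst-case (rather than stochastic) nature of the state.

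For achievability I would use a two-phase feedback scheme, in the spirit of Shannon \cite{shannon56}, but with the first phase built on the Ahlswede/Dueck technique \cite{ahlswede73,dueck85} rather than on sequential coding. In Phase~1 the encoder reads the entire state sequence $\vecs$, synchronizes with the decoder on the state type $P_\vecs$ (via a negligible-rate preamble, or simply by using a code universal over types), and runs a Gelfand--Pinsker random-binning code matched to $P_\vecs$ and to an input law nearly attaining the inner $\max\min$ for that type; the decoder does not decode but forms the list of messages whose bins contain a $U$-sequence jointly compatible --- in the zero-error support sense --- with everything received so far, and, crucially, the feedback lets the encoder reconstruct this list. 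A standard typicality and counting argument then shows that if the Phase~1 rate lies below $\min_{P_{Y|U,X,S}\in\mathscr P(W)}[I(U;Y)-I(U;S)]$ for the realized state type, the list grows subexponentially, and iterating the reduction a bounded number of times (as in Dueck's scheme) drives its size down to a constant independent of $n$; since the encoder must cope with the worst state type, the resulting Phase~1 rate is $R^\star$. Phase~2 must then resolve this bounded residual ambiguity error-free: here I would invoke the positivity condition \eqref{eq:positive} to build, over a constant number of extra channel uses, inputs whose output supports separate the surviving messages for every pair of state sequences. The reason a single channel use need not suffice --- so ``uncoded'' transmission fails --- is that the separating inputs handed out by \eqref{eq:positive} depend on the pair $(s,s')$ being disambiguated, so a short code, not one symbol, is needed. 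I expect Phase~2 --- error-free resolution of a residual list under acausal but adversarial state --- to be the main obstacle, since it is precisely where Gelfand--Pinsker binning and the zero-error constraint interact badly.

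For the converse I would fix a minimizer $P_S$ of the outer problem, pick an $n$-type close to it, and argue against zero-error feedback codes by restricting to state sequences of that type. Zero-error decoding requires that no output sequence be reachable, for any pair of state sequences, from two distinct messages; I would encode this combinatorial constraint and single-letterize it with an auxiliary of the familiar Gelfand--Pinsker kind (e.g.\ $U_i=(M,S_{i+1}^n,Y^{i-1})$), taking care that with feedback $X_i$ may legitimately depend on $Y^{i-1}$, and noting that passing to the worst support-compatible channel realization is exactly what produces the inner minimum over $P_{Y|U,X,S}\in\mathscr P(W)$. This should yield $\log|\setM|\le n\,(R^\star+o(1))$.

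It remains to prove \eqref{eq:thCapCardUXFuncUS} and \eqref{eq:cardU}. To reach the functional form I would enlarge the auxiliary to $\tilde U=(U,X)$, which makes $X$ trivially a function of $(\tilde U,S)$, and check that $\min_{P_{Y|\tilde U,X,S}}[I(\tilde U;Y)-I(\tilde U;S)]$ is no smaller than the original objective, so the maximization is not hurt; equivalently one may argue by convexity that for fixed $P_S$ and $P_{U|S}$ the objective is maximized at a deterministic $P_{X|U,S}$. The bound \eqref{eq:cardU} then follows by identifying $U$ with the function $s\mapsto g(u,s)$ from $\setS$ to $\setX$ it induces --- there are $|\setX|^{|\setS|}$ such functions --- and merging values of $U$ inducing the same function; the one point needing care is that such merging must not help the channel adversary, i.e.\ must not decrease $\min_{P_{Y|U,X,S}}[I(U;Y)-I(U;S)]$, and a Fenchel--Eggleston--Carath\'eodory argument is a fallback route to \eqref{eq:cardU} should the merging step prove awkward.
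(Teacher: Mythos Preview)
Your overall plan matches the paper's: block-based achievability in the Ahlswede/Dueck style with Gelfand--Pinsker binning, a converse via $U_i=(M,Y^{i-1},S^n_{i+1})$, and a separate cardinality reduction. Two steps, however, have genuine gaps.

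In the converse, the paper does not restrict to state sequences of a fixed type; instead it exploits $Q(s)>0$ to replace $Q$ by any IID $\tilde P_S$ and $W$ by any $\tilde P_{Y_i|U_i,X_i,S_i}$ with $\tilde P_{Y_i|U_i=u_i,X_i,S_i}\in\mathscr P(W)$, obtaining a law absolutely continuous with respect to the true one under which the code is still zero-error. This keeps $S^n$ IID (your type restriction would not) and, crucially, lets the adversarial channel be chosen \emph{sequentially}: because of feedback, $\tilde P_{U_i,X_i,S_i}$ depends on $\{\tilde P_{Y_j|U_j,X_j,S_j}\}_{j<i}$, so the paper picks each $\tilde P_{Y_i|U_i,X_i,S_i}$ to minimize $I(U_i;Y_i)-I(U_i;S_i)$ only after $\tilde P_{U_i,X_i,S_i}$ has been determined by the previous choices --- an induction your proposal does not address. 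For the functional form \eqref{eq:thCapCardUXFuncUS}, the enlargement $\tilde U=(U,X)$ fails the check you propose: $I(\tilde U;S)=I(U;S)+I(X;S|U)$, and the extra term can be strictly positive, so the payoff may drop. The paper instead uses the Functional Representation Lemma to write $X=h(U,V,S)$ with $V$ independent of $(U,S)$ and sets $\tilde U=(U,V)$; then $I(\tilde U;S)=I(U;S)$ exactly while $I(\tilde U;Y)\ge I(U;Y)$, and a short calculation shows the inner minimum does not shrink either. Your merging argument for \eqref{eq:cardU} is then essentially what the paper does.
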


\begin{proof}
See Section~\ref{sec:pfThCapacity}.
\end{proof}

\begin{remark}\label{re:posSuffButNotNec1}
The hypothesis in Theorem~\ref{th:capacity} that  $C_{\textnormal{f},0}$ be positive is essential: the RHS of \eqref{eq:capacity} may be positive even when  $C_{\textnormal{f},0}$ is zero.
\end{remark}

In fact, as we prove in Appendix~\ref{sec:pfPosSuffButNotNec}:

\begin{remark}\label{re:posSuffButNotNec}
The RHS of \eqref{eq:capacity} is positive iff
\begin{IEEEeqnarray}{l} 
\forall \, ( s,y ) \in \setS \times \setY \quad \exists \, x \in \setX \textnormal{ s.t.\ } \channel y {x,s} = 0. \label{eq:condCapacityPos}
\end{IEEEeqnarray}
\end{remark}

Theorems~\ref{th:positive} and \ref{th:capacity} generalize to the SD-DMC with feedback and acausal SI Shannon's characterization \cite[Theorem~7]{shannon56} of the zero-error feedback capacity of the (state-less) DMC $\channel y x$ (see \eqref{eq:positiveShannon} and \eqref{eq:capacityShannon} in Section~\ref{sec:stateLess}). That \eqref{eq:positive} reduces to \eqref{eq:positiveShannon} when $|\setS| = 1$ is evident. That \eqref{eq:capacity} reduces to \eqref{eq:capacityShannon} when $|\setS| = 1$ becomes evident when we recall from \cite{ahlswede73} Ahlswede's alternative form \eqref{eq:capacityAhlswede} for \eqref{eq:capacityShannon}: clearly, \eqref{eq:capacity} specializes to \eqref{eq:capacityAhlswede} and thus to \eqref{eq:capacityShannon} when $|\setS| = 1$. The way in which \eqref{eq:capacity} generalizes \eqref{eq:capacityAhlswede} is reminiscent of the way the Gelfand-Pinsker capacity generalizes the ordinary capacity of the state-less DMC (cf.\ \cite{shannon48,gelfandpinsker80}).\\

In the remainder of this section we discuss how feedback affects the zero-error capacity with acausal SI. By considering the case of a single state, i.e., $|\setS| = 1$, and invoking Shannon's result \cite{shannon56} that feedback can increase the zero-error capacity of a DMC, we readily obtain that feedback can also increase the zero-error capacity of an SD-DMC with acausal SI. But, in the presence of acausal SI, more is true. Unlike the stateless channel, here feedback can increase the capacity from zero:

\begin{theorem}\label{th:zeroWithoutFBPosWithFB}
The zero-error capacity of an SD-DMC with acausal SI can be positive with feedback yet zero without it.
\end{theorem}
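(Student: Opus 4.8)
The plan is to establish the theorem by exhibiting an explicit SD-DMC $W(y|x,s)$, necessarily one whose transition law is not $\{0,1\}$-valued: for $\{0,1\}$-valued channels feedback is superfluous (as noted in Section~\ref{sec:connAVC}, the encoder can compute the output from the known state and its own input), so feedback could not raise the capacity from zero. Concretely I would take two states and a channel in which, under each individual state, the channel is essentially noiseless and admits two inputs with disjoint output supports, yet the two states' output supports are interlocked by noise in a cyclic pattern, so that an output symbol never reveals the state. Two things must be checked: (i) condition \eqref{eq:positive} of Theorem~\ref{th:positive} holds — a finite verification, exhibiting for every pair $(s,s')\in\setS^2$ a pair of inputs whose supports are disjoint — whence $C_{\textnormal{f},0}>0$; and (ii) the zero-error capacity \emph{without} feedback vanishes.

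For (ii) I would first reduce to a single block. Zero-error codes compose — run independent codes on disjoint blocks of channel uses and decode each block separately — so the largest number of messages $M(n)$ that a feedback-less zero-error code of blocklength $n$ can convey satisfies $M(n+n')\ge M(n)\,M(n')$; hence if ever $M(n_0)\ge 2$ then $M(kn_0)\ge 2^k$ and the no-feedback zero-error capacity is at least $(\log 2)/n_0>0$. It therefore suffices to show that for the chosen channel and every blocklength $n$ no feedback-less zero-error code can carry two messages. Fix two (deterministic) encoding maps $\vecx=\phi(m,\vecs)$ and $\vecx=\phi(m',\vecs)$, functions of the acausally known state sequence, and write $S^s_x:=\operatorname{supp}W(\cdot|x,s)$. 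A decoding error is avoided exactly when the output-support sets $O_m=\bigcup_{\vecs}\prod_i S^{s_i}_{\phi_i(m,\vecs)}$ and $O_{m'}=\bigcup_{\vecs'}\prod_i S^{s'_i}_{\phi_i(m',\vecs')}$ are disjoint. The crux is to show that the state adversary — who may pick one state sequence $\vecs$ for $m$ and a \emph{different} one $\vecs'$ for $m'$ — can always force $O_m\cap O_{m'}\ne\emptyset$. The mechanism is combinatorial: call $(x,s)$ and $(x',s')$ confusable when $S^s_x\cap S^{s'}_{x'}\ne\emptyset$; condition \eqref{eq:positive} forces the non-confusable cross-state pairs to be so sparse that, coordinate by coordinate, the adversary can match whatever symbols $\phi$ emits and assemble a common output string (for instance by first pinning the state sequence of one message and then solving for the other, exploiting that the permitted cross-state separations all funnel through one distinguished input).

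I expect step (ii) to be the main obstacle, and within it the treatment of \emph{block-adaptive} encoders: an encoder that exploits the full future of $\vecs$ is powerful, and the example channel must be arranged precisely so that this adaptivity cannot be turned into a valid code, unlike in naive attempts where the encoder simply pre-compensates for the state. It is worth recording the contrast that makes the phenomenon possible: with causal SI, Shannon strategies are optimal, so the zero-error capacity is positive with feedback iff it is positive without it and one channel use suffices for one bit; it is the combination of acausal side information with a genuinely stochastic channel that simultaneously lets feedback bootstrap a two-phase scheme (à la Dueck and Ahlswede) from nothing and leaves the feedback-less encoder unable to resolve the channel's randomness, producing the decoder's irreducible ambiguity.
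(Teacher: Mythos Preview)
Your high-level plan is the paper's: exhibit a channel, check~\eqref{eq:positive} for the feedback side, and for the no-feedback side show that for every $n$ and every pair of encoding maps the adversary can produce state sequences $\vecs,\vecs'$ and a common output $\vecy$. Your superadditivity reduction is fine (and the paper's argument is equivalent). But two things are missing or off.

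First, you have not identified the structural property of the channel that lets the adversary win. You write that ``condition~\eqref{eq:positive} forces the non-confusable cross-state pairs to be so sparse that\ldots'' --- this is backwards: \eqref{eq:positive} asserts that non-confusable pairs \emph{exist}, and the more of them there are, the \emph{easier} feedback-less coding becomes. What is needed is an extra hypothesis, which the paper isolates as Lemma~\ref{le:suffZeroWithoutFB}: there is a state $s^\star$ such that for \emph{every} input $x$ there is a state $s'=s'(x)$ making $(x,s^\star)$ confusable with \emph{all} pairs $(x',s')$. This quantifier alternation is exactly what makes your ``first pin one state sequence, then solve for the other'' work: fix $\vecs$ to be all-$s^\star$, read off $\vecx=f(0,\vecs)$, then set $s'_i=s'(x_i)$; now \emph{whatever} $f(1,\vecs')$ is, each coordinate shares an output with $(x_i,s^\star)$. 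Without naming this condition your step~(ii) is a sketch of a sketch.

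Second, your commitment to two states is unsubstantiated and likely wrong for small input alphabets. With $|\setX|=|\setS|=2$, the condition above is provably incompatible with~\eqref{eq:positive}: \eqref{eq:positive} at $(s,s)$ forces $(0,s)$ and $(1,s)$ disjoint for each $s$, so $s'(x)$ must be the other state, whence every cross-state pair is confusable, contradicting \eqref{eq:positive} at $(s_1,s_2)$. The paper's working example (Example~\ref{ex:zeroWithoutPosWith}) takes $|\setX|=2$, $|\setS|=|\setY|=5$, with each $(x,s)$ having a two-element output support arranged so that both~\eqref{eq:positive} and~\eqref{eq:suffCondCapWithoutFBZero} hold. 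Your description (``essentially noiseless under each state'', ``cyclic pattern'', ``output never reveals the state'') does not pin down a channel, and the first clause sits uneasily with your own correct observation that the law cannot be $\{0,1\}$-valued.
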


\begin{proof}
See Section~\ref{sec:pfThZeroWithoutFBPosWithFB}.
\end{proof}

Condition~\eqref{eq:positive} is thus only for channels with feedback: the no-feedback zero-error capacity of the SD-DMC $\channel y {x,s}$ with acausal SI can be zero also when the channel satisfies \eqref{eq:positive}. Because feedback can help only if the encoder uses the channel more than once, we obtain the following corollary, which marks another difference to the state-less case:

\begin{corollary}\label{co:moreThanOneChUse}
On the SD-DMC with acausal SI and feedback, the error-free transmission of a single bit may require more than one channel use.
\end{corollary}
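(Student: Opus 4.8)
The plan is to combine Theorem~\ref{th:zeroWithoutFBPosWithFB} with the elementary fact that feedback carries no information at blocklength one. Concretely, I would fix the SD-DMC $\channel y {x,s}$ produced in the proof of Theorem~\ref{th:zeroWithoutFBPosWithFB}, for which the zero-error capacity with acausal SI is positive \emph{with} feedback but zero \emph{without} it. Since $C_{\textnormal{f},0}>0$ for this channel, some rate $R>0$ is achievable, so for all sufficiently large $n$ there is an $(n,\setM)$ zero-error feedback code with $\log|\setM|\ge nR\ge 1$, hence with $|\setM|\ge 2$; retaining only two of its messages (which keeps the decoding sets disjoint) shows that a single bit can indeed be transmitted error-free over this channel using some, possibly large, number of channel uses. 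The substance of the corollary is therefore the claim that one channel use does not suffice, and this is what I would prove.

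For that claim I would argue by contradiction. Suppose this channel admitted a $(1,\setM)$ zero-error feedback code with $|\setM|=2$. By Definition~\ref{def:zeroErrorFBCode} such a code consists of a single encoding map $f_1\colon\setM\times\setS^1\times\setY^{0}\to\setX$; because $\setY^{0}$ is a one-element set, $f_1$ does not depend on the (empty) feedback string, so the very same pair of encoding map and decoding sets is a $(1,\setM)$ zero-error code for the channel \emph{without} feedback. Using the memorylessness of the channel and the fact that an acausally-informed encoder may apply $f_1$ coordinate by coordinate to the state sequence and decode each coordinate separately, I would concatenate $k$ copies of this one-shot no-feedback code, obtaining for every $k\in\naturals$ a length-$k$ zero-error code without feedback for $2^{k}$ messages. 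This forces the no-feedback zero-error capacity of the channel to be at least $1$, contradicting the defining property of the channel chosen from Theorem~\ref{th:zeroWithoutFBPosWithFB}. Hence no such $(1,\setM)$ code exists, and on this channel the error-free transmission of a single bit requires at least two channel uses.

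I do not expect a genuine obstacle here: all the difficulty is packaged inside Theorem~\ref{th:zeroWithoutFBPosWithFB}, and the only delicate point is the almost tautological assertion that feedback is useless when the channel is used only once, since before the first use the encoder has observed no channel outputs. If one prefers to avoid mentioning the no-feedback zero-error capacity, the same reasoning can be phrased contrapositively as ``a one-use feedback code for one bit'' $\Rightarrow$ ``a one-use no-feedback code for one bit'' $\Rightarrow$ ``positive no-feedback zero-error capacity,'' applied to the channel of Theorem~\ref{th:zeroWithoutFBPosWithFB}.
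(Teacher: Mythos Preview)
Your proposal is correct and follows essentially the same route as the paper. The paper derives the corollary in one sentence—``feedback can help only if the encoder uses the channel more than once''—and you have simply unpacked that sentence: a blocklength-one feedback code is automatically a no-feedback code (since $\setY^{0}$ is trivial), and concatenating it would yield positive no-feedback zero-error capacity, contradicting the defining property of the channel from Theorem~\ref{th:zeroWithoutFBPosWithFB}.
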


This result will be strengthened in Section~\ref{sec:caus}, where we show that also in the absence of feedback the error-free transmission of a single bit may require more than one channel use (Corollary~\ref{co:moreThanOneChUseWithoutFB}).\\

As we have seen in Section~\ref{sec:connAVC}, if the transition law $\channel y {x,s}$ of the SD-DMC happens to be $\{0,1\}$-valued, then $C_{\textnormal{f},0}$ is related to Ahlswede's AVC with acausal SI. As we show in Appendix~\ref{sec:anExYFunXS}, in this case Theorems~\ref{th:positive} and \ref{th:capacity} can be greatly simplified:

\begin{example}\label{ex:YFunXS}
If the transition law $\channel y {x,s}$ of an SD-DMC is $\{0,1\}$-valued, then
\begin{IEEEeqnarray}{l}
C_{\textnormal{f},0} = \min_{s \in \setS} \log \bigl| \bigl\{ y \in \setY \colon \exists \, x \in \setX \textnormal{ s.t.\ } \channel {y}{x,s} > 0 \bigr\} \bigr|. \label{eq:capYFunXS}
\end{IEEEeqnarray}
\end{example}

Remark~\ref{re:posSuffButNotNec1} not withstanding, if $\channel y {x,s}$ is $\{0,1\}$-valued, then the RHS of \eqref{eq:capacity}---which in this case is equal to the RHS of \eqref{eq:capYFunXS}---is positive iff $C_{\textnormal{f},0}$ is positive. This agrees with Ahlswede's observation \cite{ahlswede86} that the formula for the (a- and m-) capacity of the general AVC $\channel y {x,s}$ whose state sequence is revealed acausally to the encoder not only applies when the capacity is positive but also determines whether it is positive.

\subsection{Causal SI} \label{sec:caus}

In this section we assume that the encoder observes the SI causally (see Figure~\ref{fig:modelCausal}). The following two theorems together provide a single-letter characterization of $C^{\textnormal{caus}}_{\textnormal{f},0}$. The first characterizes the channels for which it is positive, and the second provides a formula for the capacity when it is positive.

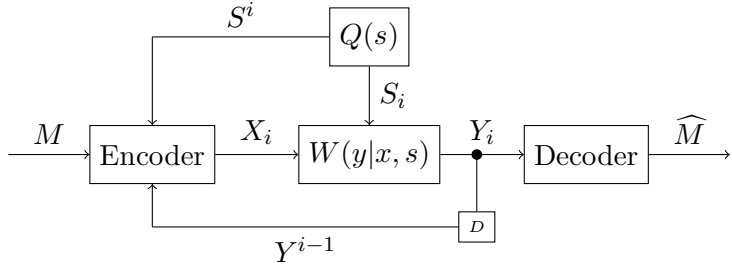
\begin{figure}[ht]
\vspace{-2mm}

\begin{center}
\def\pgfsysdriver{pgfsys-dvipdfm.def}
\begin{tikzpicture}[circuit logic US]
	\tikzstyle{sensor}=[draw, minimum width=1em, text centered, minimum height=2em]
	\tikzstyle{stategen}=[draw, minimum width=1em, text centered, minimum height=1em]
	\tikzstyle{delay}=[draw, minimum width=0.5em, text centered, minimum height=0.5em]
	\def\blockdist{2.4}
	\def\edgedist{2.5}
    \node (naveq) [sensor] {$\channel y {x,s}$};
    \path (naveq.west)+(-0.8*\blockdist,0) node (enc) [sensor] {Encoder};
    \path (naveq.east)+(0.8*\blockdist,0) node (dec) [sensor] {Decoder};
    \path (dec.east)+(0.5*\blockdist,0) node (guess) [text centered] {};
    \path (enc.west)+(-0.5*\blockdist,0) node (sour) [text centered] {};
    \path (naveq.east)+(0.2*\blockdist,-0.4*\blockdist) node (del) [delay] {\tiny $D$};
    \path (naveq.center)+(0,0.65*\blockdist) node (state) [sensor] {$Q (s)$};

    \fill [black] (del.north |- dec.west) circle (2pt);
    \path [draw, ->] (sour) -- node [above] {$M$}
    		(enc.west |- sour);  		
    \path [draw, ->] (enc) -- node [above] {$X_i$} 
        (naveq.west |- enc);
    \path [draw] (state.west) -- node [above] {$S^i$} (enc.north |- state.west);
    \path [draw, ->] (enc.north |- state.west) -- (enc.north);
    \path [draw, ->] (state) -- node [right] {$S_i$} 
        (naveq.north);    
    	\path [draw, <-] (dec) -- node [above] {$Y_i$}
    		(naveq.east |- dec);		
    	\path [draw, ->] (dec) -- node [above] {$\widehat{M}$} 
        (guess);
    \path [draw] (del) -- node [above] {} (del.north |- dec.west);
	\path [draw] (del.west) -- node [below] {$Y^{i-1}$} (enc.south |- del.west);
    \path [draw, <-] (enc.south) -- node [left] {} (enc.south |- del.west);
    
\end{tikzpicture}

\caption[SD-DMC with causal SI and feedback]{SD-DMC with causal SI and feedback.}
\label{fig:modelCausal}
\end{center}
\vspace{-2mm}

\end{figure}

\begin{theorem}\label{th:causPositive}
A necessary and sufficient condition for $C^{\textnormal{caus}}_{\textnormal{f},0}$ to be positive is that there exist a partition $\setY_0, \, \setY_1$ of $\setY$ for which
\begin{IEEEeqnarray}{l}
\forall \, s \in \setS \quad \exists \, x, \, x^\prime \in \setX \textnormal{ s.t.\ } \channel {\setY_0}{x,s} = \channel {\setY_1}{x^\prime,s} = 1. \label{eq:causPositive}
\end{IEEEeqnarray}
If $C^{\textnormal{caus}}_{\textnormal{f},0}$ is positive, then one channel use suffices to transmit a single bit error-free, and therefore the zero-error capacity with causal SI is positive with feedback iff it is positive without it.
\end{theorem}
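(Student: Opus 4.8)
The plan is to establish the two directions of the equivalence separately, in each case reducing the causal‑SI problem to a one‑shot problem over \emph{Shannon strategies} --- the set $\setT=\setX^{\setS}$ of maps $t\colon\setS\to\setX$. Using $t$ in a channel use whose state is $s$ makes the input $t(s)$ and the output law $\channel\cdot{t(s),s}$, so the set of outputs that can occur when $t$ is used, as $s$ ranges over $\setS$, is the nonempty set $\Omega(t):=\bigcup_{s\in\setS}\bigl\{y\in\setY\colon \channel y{t(s),s}>0\bigr\}$. The pivot is the combinatorial fact that a partition $(\setY_0,\setY_1)$ of $\setY$ satisfying \eqref{eq:causPositive} exists if and only if there are $t_0,t_1\in\setT$ with $\Omega(t_0)\cap\Omega(t_1)=\varnothing$: given such $t_0,t_1$, take $\setY_0=\Omega(t_0)$ and $\setY_1=\setY\setminus\setY_0$ (which contains the nonempty $\Omega(t_1)$), so \eqref{eq:causPositive} holds with $x=t_0(s)$, $x^\prime=t_1(s)$; conversely, given the partition, for each $s$ pick inputs $x_s,x_s^\prime$ with $\channel{\setY_0}{x_s,s}=\channel{\setY_1}{x_s^\prime,s}=1$ and set $t_0(s)=x_s$, $t_1(s)=x_s^\prime$, whence $\Omega(t_0)\subseteq\setY_0$ and $\Omega(t_1)\subseteq\setY_1$ are disjoint.

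For sufficiency, assume \eqref{eq:causPositive} and fix $t_0,t_1$ as above. One channel use then conveys a bit error‑free \emph{without} feedback: to send $b\in\{0,1\}$ the encoder applies $t_b$ (inputs $t_b(S_1)$) and the decoder announces $0$ if $Y_1\in\setY_0$ and $1$ otherwise, which is correct for every state since the output lies in $\Omega(t_b)\subseteq\setY_b$. Concatenating $n$ such blocks sends $n$ bits error‑free in $n$ channel uses, so rate $R=1$ is achievable, already without feedback; in particular the zero‑error capacity with causal SI is positive without feedback --- a fortiori $C^{\textnormal{caus}}_{\textnormal{f},0}>0$ --- and one channel use suffices for a single bit.

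For necessity I would argue the contrapositive: if no partition satisfies \eqref{eq:causPositive}, then $C^{\textnormal{caus}}_{\textnormal{f},0}=0$. By the combinatorial fact, \emph{every} pair $t_0,t_1\in\setT$ has $\Omega(t_0)\cap\Omega(t_1)\neq\varnothing$; equivalently, for all $t_0,t_1$ there are $s^{(0)},s^{(1)}\in\setS$ and $y\in\setY$ with $\channel y{t_0(s^{(0)}),s^{(0)}}>0$ and $\channel y{t_1(s^{(1)}),s^{(1)}}>0$. Suppose some $R>0$ were achievable; then for a suitable $n$ there is an $(n,\setM)$ zero‑error feedback code with $|\setM|\geq 2$, and I fix distinct messages $m_0,m_1$. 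I would construct state sequences $\vecs^{(0)},\vecs^{(1)}\in\setS^n$ and an output sequence $\vecy\in\setY^n$ such that, for $b\in\{0,1\}$, $\prod_{j=1}^{n}\channel{y_j}{f_j(m_b,s^{(b),j},y^{j-1}),\,s^{(b)}_j}>0$, where $s^{(b),j}=(s^{(b)}_1,\dots,s^{(b)}_j)$ and $y^{j-1}=(y_1,\dots,y_{j-1})$; i.e.\ $\vecy$ has positive probability both given $(M=m_0,S^n=\vecs^{(0)})$ and given $(M=m_1,S^n=\vecs^{(1)})$. The zero‑error property then forces $\vecy\in\setD_{m_0}$ and $\vecy\in\setD_{m_1}$, contradicting $\setD_{m_0}\cap\setD_{m_1}=\varnothing$. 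The construction is inductive on $i=1,\dots,n$: having fixed the prefixes $s^{(b)}_1,\dots,s^{(b)}_{i-1}$ and $y_1,\dots,y_{i-1}$ with the corresponding partial products positive, define $t_b\in\setT$ by $t_b(s)=f_i\bigl(m_b,(s^{(b)}_1,\dots,s^{(b)}_{i-1},s),y^{i-1}\bigr)$ --- a legitimate function $\setS\to\setX$ precisely because, under causal SI, $f_i$ takes a length‑$i$ state tuple together with the length‑$(i-1)$ output prefix. Applying the displayed fact to $t_0,t_1$ yields $s^{(0)}_i,s^{(1)}_i$ and $y_i$ with $\channel{y_i}{t_b(s^{(b)}_i),s^{(b)}_i}>0$ for $b\in\{0,1\}$; since $t_b(s^{(b)}_i)=f_i(m_b,s^{(b),i},y^{i-1})$, appending $(s^{(0)}_i,s^{(1)}_i,y_i)$ preserves positivity of the partial products, and after $i=n$ we are done.

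Putting the two directions together gives the ``iff'' for positivity of $C^{\textnormal{caus}}_{\textnormal{f},0}$; because the achievability scheme is feedback‑free and uses one channel use per bit while the necessity argument covers feedback, it also follows that $C^{\textnormal{caus}}_{\textnormal{f},0}$ is positive iff the zero‑error capacity with causal SI is positive without feedback, and that one channel use suffices for a single bit whenever it is positive. I expect the main obstacle to be the necessity direction: since $f_i$ depends on the entire state prefix $s^i$ rather than only on $s_i$, the code is not literally a sequence of one‑shot Shannon‑strategy codes, and the resolution --- freezing the already‑chosen state prefixes and the \emph{shared} fed‑back output history so that $f_i$ becomes an honest Shannon strategy at step $i$, while allowing the two message branches to use \emph{different} state sequences (the combinatorial fact only supplies a common output symbol $y$ for possibly distinct states $s^{(0)},s^{(1)}$) --- is the crux of the argument.
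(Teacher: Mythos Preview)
Your proof is correct and follows essentially the same approach as the paper: both the direct part (one-shot bit transmission via the partition) and the converse (inductive construction of adversarial state sequences $\vecs^{(0)},\vecs^{(1)}$ and a common output $\vecy$, freezing the past so that $f_i$ becomes a map $\setS\to\setX$ at each step) are identical in substance. The only organizational difference is that you extract the one-shot combinatorial fact about Shannon strategies (existence of the partition $\iff$ existence of $t_0,t_1$ with $\Omega(t_0)\cap\Omega(t_1)=\varnothing$) as a standalone lemma and then invoke it at each inductive step, whereas the paper re-derives the equivalent contradiction argument inline at every step $\ell$ by explicitly constructing the would-be partition $\setY_0=\{y:\exists\,s\text{ s.t. }W(y\,|\,f_\ell(0,s^\ell,y^{\ell-1}),s_\ell)>0\}$.
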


\begin{proof}
See Appendix~\ref{sec:pfCausPositive}.
\end{proof}

\begin{theorem}\label{th:causCapacity}
If $C^{\textnormal{caus}}_{\textnormal{f},0}$ is positive, then
\begin{IEEEeqnarray}{rCl}
C^{\textnormal{caus}}_{\textnormal{f},0} & = & \max_{P_U} \min_{P_{Y|U} \in \mathscr P (W^\prime)} I (U;Y) \label{eq:causCapacityAhlswedeFrom} \\
& = & \max_{P_U} \min_y - \log \sum_{u \colon W^\prime (y|u) > 0} P_U (u), \label{eq:causCapacity}
\end{IEEEeqnarray}
where $U$ is an auxiliary chance variable taking values in a finite set $\setU$ of cardinality $|\setU| = |\setX|^{|\setS|}$; the mutual information is computed w.r.t.\ the joint PMF $P_U  \times P_{Y|U}$; and
\begin{IEEEeqnarray}{l}
W^\prime (y|u) = \sum_{s \in \setS} Q_S (s) \, \bigchannel {y}{g (u,s),s}, \quad (u,y) \in \setU \times \setY,
\end{IEEEeqnarray}
where $\bigl\{ g (u,\cdot) \colon u \in \setU \bigr\}$ is the set of functions from $\setS$ to $\setX$, i.e., $\setX^\setS$. Because
\begin{IEEEeqnarray}{l}
\Bigl( W^\prime (y|u) > 0 \Bigr) \iff \Bigl( \exists \, s \in \setS \textnormal{ s.t.\ } \bigchannel {y}{g(u,s),s} > 0 \Bigr), \label{eq:causCapWPrimeExistsSW}
\end{IEEEeqnarray}
$P_{Y|U} \in \mathscr P (W^\prime)$ holds iff
\begin{IEEEeqnarray}{l}
\Bigl( \bigchannel {y}{g(u,s),s} = 0, \,\, \forall \, s \in \setS \Bigr) \implies \Bigl( P_{Y|U} (y|u) = 0 \Bigr).
\end{IEEEeqnarray}
\end{theorem}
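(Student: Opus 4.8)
The plan is to show that, as far as zero-error communication is concerned, a feedback code with causal SI \emph{is} a feedback code for the state-less DMC $W^\prime$ whose input alphabet may be taken to be $\setX^\setS$ (with $g(u,\cdot)$ realizing the identification $\setU \leftrightarrow \setX^\setS$, so $|\setU|=|\setX|^{|\setS|}$), and then to invoke Shannon's and Ahlswede's results on the state-less channel (\eqref{eq:capacityShannon} and \eqref{eq:capacityAhlswede} in Section~\ref{sec:stateLess}). Call an $(n,\setM)$ zero-error feedback code with causal SI a \emph{Shannon-strategy code} if each $f_i$ depends on $\vecs\in\setS^i$ only through its last entry $s_i$; equivalently, for each $(m,y^{i-1})$ the encoder picks a function $u_i=f_i(m,\cdot,y^{i-1})\in\setX^\setS$ and transmits $X_i=g(u_i,S_i)$. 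The key step is the claim that Shannon-strategy codes are optimal: every $(n,\setM)$ zero-error feedback code with causal SI can be replaced by an $(n,\setM)$ Shannon-strategy code.

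To prove that claim, fix such a code $\{f_i\}$ with disjoint decoding sets $\{\setD_m\}$, and for a message $m$ let $\setB_m\subseteq\setY^n$ be the set of output $n$-tuples producible for $m$ over \emph{some} state sequence; zero-error forces $\setB_m\subseteq\setD_m$, so the $\setB_m$ are disjoint. I would build the Shannon-strategy code $\{\tilde f_i\}$ by induction on $i$, carrying along, for every $m$ and every prefix $y^i$ reachable for $m$ under $\{\tilde f_j\}_{j\le i}$, a \emph{canonical state prefix} $\hat{s}^{i}(m,y^{i})\in\setS^i$ that is consistent with $(m,y^i)$ under $\{\tilde f_j\}$ and \emph{nested}, meaning $\hat{s}^{i-1}(m,y^{i-1})$ is a prefix of $\hat{s}^{i}(m,y^{i})$: set $\tilde f_i(m,s,y^{i-1}) := f_i\bigl(m,(\hat{s}^{i-1}(m,y^{i-1}),s),y^{i-1}\bigr)$, and when a reachable $y_i$ arrives, extend the prefix by any $s$ with $\channel{y_i}{\tilde f_i(m,s,y^{i-1}),s}>0$. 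Writing $\setB_m^\prime$ for the set of $n$-tuples producible for $m$ under $\{\tilde f_i\}$, one then checks that $\setB_m^\prime\subseteq\setB_m$: if $\vecy\in\setB_m^\prime$ then each prefix $(m,y^i)$ is reachable, so $\hat{s}^{n}(m,\vecy)$ is defined, and nesting together with the defining relation for $\tilde f_i$ shows that the state sequence $\hat{s}^{n}(m,\vecy)$ produces $\vecy$ for $m$ under $\{f_i\}$. Hence the $\setB_m^\prime$ are disjoint and $\{\tilde f_i\}$ with $\{\setD_m\}$ is again a zero-error code, so restricting to Shannon-strategy codes leaves $C^{\textnormal{caus}}_{\textnormal{f},0}$ unchanged.

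Next, a Shannon-strategy code is \emph{verbatim} a zero-error feedback code for the state-less DMC $W^\prime$: putting $\phi_i(m,y^{i-1}) := \tilde f_i(m,\cdot,y^{i-1})$, the set of $n$-tuples producible for $m$ under $\{\tilde f_i\}$ coincides with the set producible for $m$ under the $W^\prime$-feedback code $\{\phi_i\}$, because the predicate ``$\channel{y_i}{\tilde f_i(m,s,y^{i-1}),s}>0$ for some $s$'' at time $i$ involves only $s_i$ — so the existential over the whole state sequence decouples into per-time existentials — and, by \eqref{eq:causCapWPrimeExistsSW} together with the positivity of $Q$ (see \eqref{eq:assQAssignsPosProbs}), this predicate is the same as $W^\prime(y_i|\phi_i(m,y^{i-1}))>0$. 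Consequently $C^{\textnormal{caus}}_{\textnormal{f},0}$ equals the zero-error feedback capacity of the state-less DMC $W^\prime$. Since $C^{\textnormal{caus}}_{\textnormal{f},0}>0$ by hypothesis, this capacity is positive, so Shannon's characterization \eqref{eq:capacityShannon} applied to $W^\prime$ gives \eqref{eq:causCapacity}, and Ahlswede's identity \eqref{eq:capacityAhlswede} rewrites it as \eqref{eq:causCapacityAhlswedeFrom}. The remaining assertions — the equivalence \eqref{eq:causCapWPrimeExistsSW} and the description of $\mathscr P(W^\prime)$ — follow at once from the definition of $W^\prime$ and the positivity of $Q$.

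The main obstacle is the optimality of Shannon-strategy codes. With causal SI the encoder may, at time $i$, exploit the \emph{entire} past state sequence $S^i$ rather than only $S_i$, and one must argue that this extra knowledge cannot help against a worst-case (equivalently, since $Q(s)>0$ for all $s$, adversarial) state. The delicate point is the bookkeeping that makes the reachable output sets of the replacement code \emph{shrink} rather than merely change — which is precisely why the canonical state prefixes must be chosen compatibly across time (nested). Everything after that step is a translation to the state-less setting together with the cited results.
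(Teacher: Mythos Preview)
Your proof is correct and takes a somewhat different, more modular route than the paper. Both arguments rest on the same core observation --- that the past states $s^{i-1}$ can be replaced by a canonical state prefix determined by $(m,y^{i-1})$ --- but they deploy it differently. You package this as a standalone structural lemma (``Shannon-strategy codes are zero-error-optimal''), and then invoke Shannon's and Ahlswede's results for the state-less DMC $W'$ as black boxes; both directions follow at once. The paper's proof treats direct and converse separately: its direct part is exactly yours (Shannon strategies plus a citation), but its converse re-runs Shannon's survivor-set induction from scratch on the causal-SI code, with the canonical-state-prefix construction embedded inside that induction rather than factored out as a lemma. Your decomposition is cleaner and isolates \emph{why} causal SI cannot help (Shannon strategies already suffice for zero-error) as a reusable statement; the paper's approach trades that modularity for not having to track the nested-prefix bookkeeping separately, at the price of essentially rewriting Shannon's converse with the SI twist baked in.
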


\begin{proof}
The proof draws on Shannon's results \cite{shannon56, shannon58} (see Appendix~\ref{sec:pfCausCapacity}).
\end{proof}

\begin{remark}\label{re:causPosSuffButNotNec1}
The hypothesis in Theorem~\ref{th:capacity} that $C^{\textnormal{caus}}_{\textnormal{f},0}$ be positive is essential: the RHS of \eqref{eq:causCapacityAhlswedeFrom} may be positive even when $C^{\textnormal{caus}}_{\textnormal{f},0}$ is zero.
\end{remark}

In fact, as we prove in Appendix~\ref{sec:pfCausPosSuffButNotNec}:\footnote{Remarks~\ref{re:posSuffButNotNec} and \ref{re:causPosSuffButNotNec} imply that, like the ordinary capacities with causal and acausal SI \cite{shannon58,gelfandpinsker80}, the RHS of \eqref{eq:capacity} is positive iff that of \eqref{eq:causCapacityAhlswedeFrom} is positive. As we shall see, however, this does not hold for the capacities: the zero-error capacity can be positive with acausal SI yet zero with causal SI (see Theorem~\ref{re:zeroCausPosAcaus} ahead).}

\begin{remark}\label{re:causPosSuffButNotNec}
The RHS of \eqref{eq:causCapacityAhlswedeFrom} is positive iff
\begin{IEEEeqnarray}{l}
\forall \, (s,y) \in \setS \times \setY \quad \exists \, x \in \setX \textnormal{ s.t.\ } \channel y {x,s} = 0. \label{eq:condCausCapFormulaPos}
\end{IEEEeqnarray}
\end{remark}

Theorems~\ref{th:causPositive} and \ref{th:causCapacity} generalize to the SD-DMC with feedback and causal SI Shannon's characterization \cite[Theorem~7]{shannon56} of the zero-error feedback capacity of the (state-less) DMC $\channel y x$ (see \eqref{eq:positiveShannon} and \eqref{eq:capacityShannon} in Section~\ref{sec:stateLess}). The way in which \eqref{eq:causPositive} and \eqref{eq:causCapacity} generalize \eqref{eq:positiveShannon} and \eqref{eq:capacityShannon} is reminiscent of the way the ordinary capacity with causal SI generalizes the ordinary capacity of the state-less DMC (cf.\ \cite{shannon48,shannon58}): in both cases causal SI is utilized optimally by using \emph{Shannon strategies}. To see this, recall that by using Shannon strategies the encoder transforms the SD-DMC $\channel y {x,s}$ with causal SI into the state-less DMC $$W^\prime (y|u) = \sum_{s \in \setS} Q_S (s) \, \bigchannel {y}{g (u,s),s}$$ with input alphabet $\setU$ of cardinality $|\setU| = |\setX|^{|\setS|}$, where $\bigl\{ g (u,\cdot) \colon u \in \setU \bigr\}$ equals $\setX^\setS$: an encoder with causal SI is said to use Shannon strategies if it performs the encoding over the set $\setU$ and obtains the Time-$i$ channel-input by evaluating the function $g (\cdot,\cdot) \colon \setU \times \setS \rightarrow \setX$ for the $i$-th codeword-symbol $u_i \in \setU$ and the Time-$i$ channel-state $S_i$ (see Figure~\ref{fig:shannonStrategy} and \cite[Remark~7.6]{gamalkim11}). By comparing \eqref{eq:causPositive} and \eqref{eq:causCapacity} to \eqref{eq:positiveShannon} and \eqref{eq:capacityShannon}, respectively, we see that, indeed, the zero-error feedback capacity of the SD-DMC $\channel y {x,s}$ with causal SI equals the zero-error feedback capacity of the state-less DMC $W^\prime (y|u)$, and hence causal SI is utilized optimally by using Shannon strategies.\\

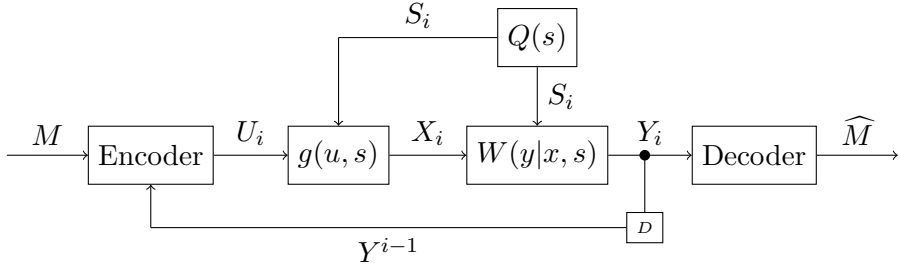
\begin{figure}[ht]
\vspace{-2mm}

\begin{center}
\def\pgfsysdriver{pgfsys-dvipdfm.def}
\begin{tikzpicture}[circuit logic US]
	\tikzstyle{sensor}=[draw, minimum width=1em, text centered, minimum height=2em]
	\tikzstyle{stategen}=[draw, minimum width=1em, text centered, minimum height=1em]
	\tikzstyle{delay}=[draw, minimum width=0.5em, text centered, minimum height=0.5em]
	\def\blockdist{2.4}
	\def\edgedist{2.5}
    \node (naveq) [sensor] {$\channel y {x,s}$};
    \path (naveq.west)+(-0.7*\blockdist,0) node (shStrat) [sensor] {$g (u,s)$};
    \path (shStrat.west)+(-0.75*\blockdist,0) node (enc) [sensor] {Encoder};
    \path (naveq.east)+(0.8*\blockdist,0) node (dec) [sensor] {Decoder};
    \path (dec.east)+(0.5*\blockdist,0) node (guess) [text centered] {};
    \path (enc.west)+(-0.5*\blockdist,0) node (sour) [text centered] {};
    \path (naveq.east)+(0.2*\blockdist,-0.4*\blockdist) node (del) [delay] {\tiny $D$};
    \path (naveq.center)+(0,0.65*\blockdist) node (state) [sensor] {$Q (s)$};

    \fill [black] (del.north |- dec.west) circle (2pt);
    \path [draw, ->] (sour) -- node [above] {$M$}
    		(enc.west |- sour);
    \path [draw, ->] (enc) -- node [above] {$U_i$} 
        (shStrat.west |- enc);		
    \path [draw, ->] (shStrat) -- node [above] {$X_i$} 
        (naveq.west |- shStrat);
    \path [draw] (state.west) -- node [above] {$S_i$} (shStrat.north |- state.west);
    \path [draw, ->] (shStrat.north |- state.west) -- (shStrat.north);
    \path [draw, ->] (state) -- node [right] {$S_i$} 
        (naveq.north);    
    	\path [draw, <-] (dec) -- node [above] {$Y_i$}
    		(naveq.east |- dec);		
    	\path [draw, ->] (dec) -- node [above] {$\widehat{M}$} 
        (guess);
    \path [draw] (del) -- node [above] {} (del.north |- dec.west);
	\path [draw] (del.west) -- node [below] {$Y^{i-1}$} (enc.south |- del.west);
    \path [draw, <-] (enc.south) -- node [left] {} (enc.south |- del.west);
    
\end{tikzpicture}

\caption[Shannon strategies]{Shannon strategies.}
\label{fig:shannonStrategy}
\end{center}
\vspace{-2mm}

\end{figure}

In the remainder of this section we briefly contrast how feedback affects the zero-error capacities with acausal and causal SI. As in the acausal case, by considering the case of a single state, i.e., $|\setS| = 1$, and invoking Shannon's result \cite{shannon56} that feedback can increase the zero-error capacity of a DMC, we readily obtain that feedback can also increase the zero-error capacity in the causal case. However, unlike the acausal case, the zero-error capacity with causal SI is positive with feedback iff it is positive without it (see Theorem~\ref{th:causPositive}).

Since acausal SI is better than causal SI, and since the zero-error capacity with causal SI is positive with feedback iff it is positive without it, the condition in Theorem~\ref{th:causPositive} is sufficient for the no-feedback zero-error capacity of the SD-DMC $\channel y {x,s}$ with acausal SI to be positive. (Alternatively, this is obtained by noting that \eqref{eq:causPositive} of Theorem~\ref{th:causPositive} implies \eqref{bl:suffCondCapWithoutFBPositive} of Lemma~\ref{le:suffZeroWithoutFB} ahead, which is a sufficient condition for the no-feedback zero-error capacity of the SD-DMC with acausal SI to be positive. As we shall see in Example~\ref{ex:zeroCausPosNoncausWithoutFB} ahead, the reverse implication need not hold.) By Theorem~\ref{th:zeroWithoutFBPosWithFB} the zero-error capacity with acausal SI can be positive with feedback yet zero without it. Consequently, unlike the ordinary capacities with causal and acausal SI (see \cite{shannon58,gelfandpinsker80}) or the RHSs of \eqref{eq:capacity} and \eqref{eq:causCapacityAhlswedeFrom} (see Remarks~\ref{re:posSuffButNotNec} and \ref{re:causPosSuffButNotNec}), the zero-error feedback capacity can be positive with acausal SI yet zero with causal SI. In fact, more is true:

\begin{theorem}\label{re:zeroCausPosAcaus}
The zero-error capacity can be positive with acausal SI yet zero with causal SI even when feedback is available in the latter setting and absent in the former.
\end{theorem}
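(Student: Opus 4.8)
The plan is to prove the theorem by exhibiting a single SD-DMC for which the no-feedback zero-error capacity with acausal SI is positive while the zero-error \emph{feedback} capacity with causal SI is zero. I would take $\setX = \{a,b\}$, $\setS = \{1,2,3\}$, $\setY = \{1,2,3\}$, and let $W$ be the $\{0,1\}$-valued transition law $\channel{y}{x,s} = \ind{y = \phi(x,s)}$ given by the map $\phi\colon\setX\times\setS\to\setY$ with $\phi(a,1)=1$, $\phi(b,1)=2$, $\phi(a,2)=2$, $\phi(b,2)=3$, $\phi(a,3)=1$, $\phi(b,3)=3$ (the governing PMF $Q$ is immaterial by the note following Definition~\ref{def:zeroErrorFBCode}). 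The key structural feature is that the set of outputs reachable in state $s$, namely $\{\phi(x,s)\colon x\in\setX\}$, equals $\{1,2\}$, $\{2,3\}$, $\{1,3\}$ for $s=1,2,3$ respectively---the three two-element subsets of $\setY$, arranged as the edges of a triangle.

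For the acausal claim I would use that $W$ is $\{0,1\}$-valued: as noted in the discussion preceding Example~\ref{ex:YFunXS}, an encoder furnished with the state sequence acausally can compute the channel output from its own inputs, so feedback is superfluous and the no-feedback zero-error capacity with acausal SI equals $C_{\textnormal{f},0}$. By Example~\ref{ex:YFunXS} the latter equals $\min_{s\in\setS}\log\bigl|\{\phi(x,s)\colon x\in\setX\}\bigr| = \log 2 = 1$, which is positive. (Alternatively one could verify the sufficient condition of Lemma~\ref{le:suffZeroWithoutFB} directly, but routing through the $\{0,1\}$-valued case is cleaner and self-contained.)

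For the causal claim I would invoke Theorem~\ref{th:causPositive}: $C^{\textnormal{caus}}_{\textnormal{f},0}$ is positive iff some partition $\setY=\setY_0\sqcup\setY_1$ satisfies \eqref{eq:causPositive}. Because $W$ is $\{0,1\}$-valued, \eqref{eq:causPositive} asserts exactly that for each $s$ the reachable set $\{\phi(x,s)\colon x\in\setX\}$ meets both $\setY_0$ and $\setY_1$; in other words the partition must simultaneously bisect each of $\{1,2\}$, $\{2,3\}$, $\{1,3\}$. But any partition of the three-point set $\setY$ into two blocks either has an empty block---in which case \eqref{eq:causPositive} fails outright---or has a block of size exactly two, and that block \emph{is} one of $\{1,2\}$, $\{2,3\}$, $\{1,3\}$ and hence is not bisected. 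Thus no partition satisfies \eqref{eq:causPositive}, so $C^{\textnormal{caus}}_{\textnormal{f},0}=0$, and, combined with the previous paragraph, this establishes the theorem.

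The only genuine difficulty is the choice of channel: forcing $C^{\textnormal{caus}}_{\textnormal{f},0}=0$ requires the reachable output sets to interlock across states so that no global ``which half of $\setY$'' rule survives, and such interlocking tends to destroy the acausal capacity as well. Using a $\{0,1\}$-valued channel resolves the tension---$\{0,1\}$-valuedness makes feedback irrelevant in the acausal setting and delivers a positive acausal capacity via Example~\ref{ex:YFunXS}, while the odd-cycle arrangement of the reachable sets $\{1,2\},\{2,3\},\{1,3\}$ defeats every bisection. Two states would not suffice for this route, since two two-element subsets of any output alphabet always admit a common bisecting partition; three states is the smallest number that works.
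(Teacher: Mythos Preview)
Your proof is correct and is essentially the paper's own argument: the paper proves the theorem via Example~\ref{ex:zeroCausPosNoncausWithoutFB}, which is (up to relabeling) the same deterministic channel you construct, with the same triangle of reachable output sets $\{1,2\},\{2,3\},\{1,3\}$; and the causal part is identical, appealing to Theorem~\ref{th:causPositive} and the impossibility of bisecting all three edges of the triangle. The only difference is in how the acausal no-feedback positivity is established: the paper invokes Lemma~\ref{le:suffZeroWithoutFB} directly with explicit parameters, whereas you route through the $\{0,1\}$-valued observation (feedback superfluous) and Example~\ref{ex:YFunXS}; both are valid, and the paper itself records your route in Remark~\ref{re:lemmaSuffZeroWithoutFB}.
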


Because acausal SI can be better than causal SI only if the encoder uses the channel more than once, we obtain the following corollary, which strengthens Corollary~\ref{co:moreThanOneChUse}:

\begin{corollary}\label{co:moreThanOneChUseWithoutFB}
On the SD-DMC with acausal SI, the error-free transmission of a single bit may require more than one channel use also in the absence of feedback.
\end{corollary}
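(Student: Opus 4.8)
The plan is to deduce the corollary directly from Theorem~\ref{re:zeroCausPosAcaus}, which furnishes an SD-DMC $\channel y {x,s}$ whose zero-error capacity with acausal SI is positive even without feedback, yet whose zero-error feedback capacity with causal SI, $C^{\textnormal{caus}}_{\textnormal{f},0}$, equals zero. I would fix such a channel and show that on it a single bit can be sent error-free with acausal SI and no feedback but never in a single channel use; existence of a channel with this property is exactly the content of the corollary.

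First I would record the easy half: since the no-feedback zero-error capacity with acausal SI is positive, for every sufficiently large blocklength there is a feedback-free zero-error code with $|\setM| \geq 2$, so a single bit \emph{can} be conveyed error-free. The substance is therefore to rule out blocklength one.

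Next I would argue by contradiction. Suppose there were a blocklength-one zero-error code for this channel with acausal SI and no feedback, i.e.\ a single map $f_1 \colon \setM \times \setS \to \setX$ with $|\setM| = 2$ together with disjoint decoding sets $\setD_m \subseteq \setY$. At $n = 1$ the causal encoding rule \eqref{eq:encCaus} has this very form, since $\setS^1 = \setS$ and $\setY^0$ is a singleton, and feedback is vacuous; hence the same map and decoding sets constitute a blocklength-one zero-error feedback code with causal SI. Using this block independently over $k$ consecutive channel uses—at Time~$i$ applying $f_1$ to the $i$-th data bit and to the causally available state $S_i$—yields, for every $k$, a $(k,\{0,1\}^k)$ zero-error feedback code with causal SI, the concatenation being error-free for every state sequence because each block is. This makes every rate below $1$ achievable and forces $C^{\textnormal{caus}}_{\textnormal{f},0} \geq 1 > 0$, contradicting the choice of channel. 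Hence no blocklength-one zero-error code with $|\setM| = 2$ exists in the acausal, feedback-free setting, so a single bit requires $n \geq 2$ channel uses.

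I do not expect a genuine obstacle here; the corollary is essentially a repackaging of Theorem~\ref{re:zeroCausPosAcaus}. The only points needing care are the observation that both feedback and the distinction between causal and acausal SI collapse at blocklength one, and the routine concatenation argument turning a single-block zero-error code into positive-rate zero-error codes (the per-block zero-error property surviving because it holds for \emph{every} block state). The real work lies in Theorem~\ref{re:zeroCausPosAcaus} itself.
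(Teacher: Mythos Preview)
Your proposal is correct and follows the paper's own route: the paper deduces the corollary from Theorem~\ref{re:zeroCausPosAcaus} with the one-line observation that ``acausal SI can be better than causal SI only if the encoder uses the channel more than once,'' which is precisely your point that at $n=1$ the causal and acausal encoders coincide (and feedback is vacuous). Your concatenation step is just an explicit version of what the paper takes for granted (cf.\ Theorem~\ref{th:causPositive}, where a single-use zero-error code immediately yields positive causal capacity).
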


To prove Theorem~\ref{re:zeroCausPosAcaus}, we provide an example for which the zero-error capacity (with and without feedback) is positive with acausal SI yet zero with causal SI:

\begin{example}\label{ex:zeroCausPosNoncausWithoutFB}
Consider a deterministic SD-DMC $\channel y {x,s}$ over the alphabets $\setX = \{ 0, 1 \}$ and $\setS = \setY = \{ 1, 2, 3 \}$. Let the output corresponding to the input $x$ and the state $s$ be the single element of the set $\setY_{x,s}$ that is given in Table~\ref{tb:exZeroCausPosNoncausWithoutFB}
\begin{IEEEeqnarray}{l}
\bigl\{ y \in \setY \colon \channel {y}{x,s} > 0 \bigr\} = \setY_{x,s}, \,\, \forall \, (x,s) \in \setX \times \setS.
\end{IEEEeqnarray}
Since this channel violates \eqref{eq:causPositive} but satisfies \eqref{bl:suffCondCapWithoutFBPositive} of Lemma~\ref{le:suffZeroWithoutFB} ahead for $\kappa = \lambda = 3$,
\begin{IEEEeqnarray}{l}
x (s,k) = \begin{cases} 0 &\textnormal{if } k = 1 \textnormal{ or } (s,k) = (3,2), \\ 1 &\textnormal{otherwise}, \end{cases} \quad (s,k) \in \{ 1,2,3 \} \times \{ 1,2,3 \}, \nonumber
\end{IEEEeqnarray}
and $\setY_\ell = \{ \ell \}, \,\, \ell \in \{ 1,2,3\}$ (cf.\ Remark~\ref{re:lemmaSuffZeroWithoutFB} ahead), its zero-error capacity (both with and without feedback) is positive with acausal SI yet zero with causal SI.
\end{example}

\begin{table}[h]
\begin{center}
  \renewcommand{\arraystretch}{1.2}
  \begin{tabular}{ c c | c c c }
    \multicolumn{2}{c|}{\multirow{2}{*}{$\setY_{x,s}$}} & \multicolumn{3}{c}{$s$} \\
  	&& 1 & 2 & 3 \\
    \hline    
    \multirow{2}{*}{$x$} & 0 & \{ 2 \} & \{ 1 \} & \{ 1 \} \\
    & 1 & \{ 3 \} & \{ 3 \} & \{ 2 \}
  \end{tabular}
  \caption{Nonzero transitions of the SD-DMC in Example~\ref{ex:zeroCausPosNoncausWithoutFB}.}
  \label{tb:exZeroCausPosNoncausWithoutFB}
\end{center}
\end{table}

\subsection{Strictly-Causal SI}

In this section we assume that the encoder observes the SI strictly-causally (see Figure~\ref{fig:modelStrCaus}).

\begin{figure}[ht]
\vspace{-2mm}

\begin{center}
\def\pgfsysdriver{pgfsys-dvipdfm.def}
\begin{tikzpicture}[circuit logic US]
	\tikzstyle{sensor}=[draw, minimum width=1em, text centered, minimum height=2em]
	\tikzstyle{stategen}=[draw, minimum width=1em, text centered, minimum height=1em]
	\tikzstyle{delay}=[draw, minimum width=0.5em, text centered, minimum height=0.5em]
	\def\blockdist{2.4}
	\def\edgedist{2.5}
    \node (naveq) [sensor] {$\channel y {x,s}$};
    \path (naveq.west)+(-0.8*\blockdist,0) node (enc) [sensor] {Encoder};
    \path (naveq.east)+(0.8*\blockdist,0) node (dec) [sensor] {Decoder};
    \path (dec.east)+(0.5*\blockdist,0) node (guess) [text centered] {};
    \path (enc.west)+(-0.5*\blockdist,0) node (sour) [text centered] {};
    \path (naveq.east)+(0.2*\blockdist,-0.4*\blockdist) node (del) [delay] {\tiny $D$};
    \path (naveq.center)+(0,0.65*\blockdist) node (state) [sensor] {$Q (s)$};
    \path (state.west)+(-0.3*\blockdist,0) node (delState) [delay] {\tiny $D$};

    \fill [black] (del.north |- dec.west) circle (2pt);
    \path [draw, ->] (sour) -- node [above] {$M$}
    		(enc.west |- sour);  		
    \path [draw, ->] (enc) -- node [above] {$X_i$} 
        (naveq.west |- enc);
    \path [draw] (state) -- (delState.east);
    \path [draw] (delState.west) -- node [above] {$S^{i-1}$} (enc.north |- state.west);
    \path [draw, ->] (enc.north |- state.west) -- (enc.north);
    \path [draw, ->] (state) -- node [right] {$S_i$} 
        (naveq.north);    
    	\path [draw, <-] (dec) -- node [above] {$Y_i$}
    		(naveq.east |- dec);		
    	\path [draw, ->] (dec) -- node [above] {$\widehat{M}$} 
        (guess);
    \path [draw] (del) -- node [above] {} (del.north |- dec.west);
	\path [draw] (del.west) -- node [below] {$Y^{i-1}$} (enc.south |- del.west);
    \path [draw, <-] (enc.south) -- node [left] {} (enc.south |- del.west);
    
\end{tikzpicture}

\caption[SD-DMC with strictly-causal SI and feedback]{SD-DMC with strictly-causal SI and feedback.}
\label{fig:modelStrCaus}
\end{center}
\vspace{-2mm}

\end{figure}
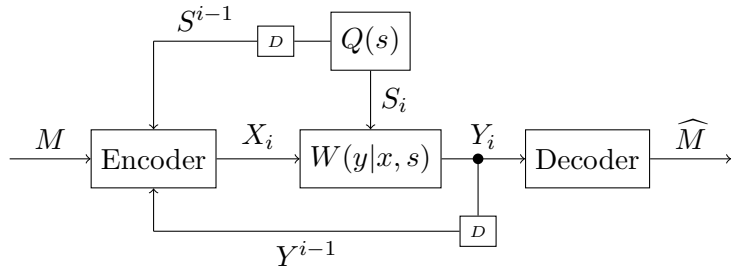

The results \eqref{eq:positiveShannon}--\eqref{eq:capacityAhlswede} for the state-less DMC also provide the zero-error feedback capacity $C_{\textnormal{f},0}^{\textnormal{s-caus}}$ of the state-dependent channel whose state is revealed strictly-causally to the encoder:

\begin{remark} \label{re:strCaus}
Shannon's proof of \eqref{eq:positiveShannon} and \eqref{eq:capacityShannon} in \cite{shannon56} goes through also when the channel is state-dependent and the SI is revealed strictly-causally to the encoder. Consequently, such SI cannot increase the zero-error feedback capacity. That is, if we define
\begin{equation}
\widetilde W ( y | x ) = \sum_{s \in \setS} Q (s) \, \channel y {x,s}, \quad (x,y) \in \setX \times \setY,
\end{equation}
then a necessary and sufficient condition for $C_{\textnormal{f},0}^{\textnormal{s-caus}}$ to be positive is that \eqref{eq:positiveShannon} hold for the channel $\widetilde W ( y | x )$, and if $C_{\textnormal{f},0}^{\textnormal{s-caus}}$ is positive, then it can be computed by substituting $\widetilde W ( y | x )$ for $\channel y x$ in \eqref{eq:capacityShannon} or \eqref{eq:capacityAhlswede}.\footnote{Note that by \eqref{eq:assQAssignsPosProbs} $\widetilde W ( y | x )$ is positive iff there exists some state for which $\channel y {x,s}$ is positive.}
\end{remark}

\subsection{Zero-Error Rate-and-State} \label{sec:alsoStates}

In this section we consider a scenario where---in addition to the message~$m$---the encoder wishes to convey to the receiver (error-free) also the state sequence $S^n$, which it observes acausally. For the standard setting where the probability of a message error need not be zero but can be arbitrarily small, Kim, Sutivong, and Cover \cite{kimsutivongcover08} introduced and solved a related problem with list decoding of state sequences. Choudhuri, Kim, and Mitra \cite{choughurikimmitra13} studied the causal and strictly-causal settings subject to a constraint on the distortion between the state sequence and its receiver-side estimate. Analogous results in the presence of feedback were recently reported by Bross and Lapidoth \cite{brosslapidoth16}.

We begin with the basic definitions of an $(n, \setM)$ zero-error code:

\begin{definition}
Given a finite set $\setM$ and a positive integer $n \in \naturals$, an $(n, \setM)$ zero-error state-conveying feedback code for the SD-DMC $\channel y {x,s}$ with acausal SI to the encoder consists of $n$ encoding mappings $$f_i \colon \setM \times \setS^n \times \setY^{i-1} \rightarrow \setX, \quad i \in [1:n]$$ and $|\setM| \, |\setS|^n$ disjoint decoding sets $$\setD_{m,\vecs} \subseteq \setY^n, \quad (m,\vecs) \in \setM \times \setS^n$$ such that for every $(m,\vecs) \in \setM \times \setS^n$ the probability of a decoding error is zero, i.e., $$\distof {Y^n \notin \setD_{m,\vecs} | M = m, S^n = \vecs } = 0, \,\, \forall \, (m,\vecs) \in \setM \times \setS^n,$$ where
\begin{IEEEeqnarray}{l}
\distof {Y^n \notin \setD_{m,\vecs} | M = m, S^n = \vecs } = \sum_{\vecy \in \setY^n \setminus \setD_{m,\vecs} } \prod^n_{i = 1} W \bigl( y_i \bigl| f_i (m, \vecs, y^{i-1}), s_i \bigr). 
\end{IEEEeqnarray}
A positive rate~$R$ is called achievable if for every sufficiently-large blocklength~$n$ there exists an $(n, \setM)$ zero-error state-conveying feedback code satisfying $$\frac{1}{n} \log |\setM| \geq R.$$ The zero-error state-conveying feedback capacity is the supremum of all achievable rates and is denoted $C^{\textnormal{m} + \textnormal{s}}_{\textnormal{f},0}$. If no positive rate is achievable, then we say that $C^{\textnormal{m} + \textnormal{s}}_{\textnormal{f},0} = 0$, regardless of whether or not it is possible to convey the state sequence error-free.
\end{definition}

Our definition of an $(n, \setM)$ zero-error state-conveying code does not depend on the PMF $Q$ of the state and assumes a deterministic encoder. Like the scenario where the encoder need not convey the state, $C^{\textnormal{m} + \textnormal{s}}_{\textnormal{f},0}$ does not depend on the PMF $Q$, and allowing stochastic encoders does not increase it (cf.\ the proof of Remark~\ref{re:detEncWlg}).\\

The following theorem provides a single-letter characterization of $C^{\textnormal{m} + \textnormal{s}}_{\textnormal{f},0}$:

\begin{theorem}\label{th:stateAmplification}
A necessary condition for $C^{\textnormal{m} + \textnormal{s}}_{\textnormal{f},0}$ to be positive is \eqref{eq:positive}, and if \eqref{eq:positive} holds, then
\begin{IEEEeqnarray}{l}
C^{\textnormal{m} + \textnormal{s}}_{\textnormal{f},0} = \biggl[ \min_{P_S} \max_{P_{X|S}} \min_{P_{Y|X,S} \in \mathscr P (W)} I (X,S;Y) - H (S) \biggr]^+, \label{eq:capacitySA}
\end{IEEEeqnarray}
where the mutual information and the entropy are computed w.r.t.\ the joint PMF $P_S \times P_{X|S} \times P_{Y|X,S}$.
\end{theorem}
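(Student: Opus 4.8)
The plan is to reduce the rate-and-state problem to the plain message-only problem of Theorems~\ref{th:positive} and \ref{th:capacity} by a standard block-Markov-style "split the transmission'' argument, but with the bookkeeping done so that everything stays \emph{zero}-error. First I would prove the necessity of \eqref{eq:positive}: since a zero-error state-conveying code is in particular a zero-error (message-only) code (just forget the state part of the decoding sets, merging $\setD_{m,\vecs}$ over $\vecs$ into $\setD_m$), a positive value of $C^{\textnormal{m}+\textnormal{s}}_{\textnormal{f},0}$ forces $C_{\textnormal{f},0}>0$, which by Theorem~\ref{th:positive} is \eqref{eq:positive}. For the converse (achievability of the RHS of \eqref{eq:capacitySA}), I would argue that conveying the state sequence error-free costs exactly $H(S)$ bits per channel use in the following sense: the encoder first uses a Slepian--Wolf-type / enumerative description of $S^n$ — but since this must be \emph{exactly} recovered and the decoder has no side information about $S^n$, the encoder simply needs to communicate the index of $S^n$ within its type class and the type; the type costs $o(n)$ bits, and the type class of a $P_S$-typical sequence has $\approx 2^{nH(S)}$ elements, so $\approx nH(S)$ bits suffice, and for sequences of \emph{every} type we still only need $\le n\log|\setS|$ bits, with the $P_S$-atypical ones contributing negligibly to the rate budget when we take the $\min_{P_S}$. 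These bits, together with the $nR$ message bits, are then sent over the feedback channel using the zero-error scheme underlying Theorem~\ref{th:capacity}, which delivers $\min_{P_S}\max_{P_{X|S}}\min_{P_{Y|X,S}\in\mathscr P(W)} I(X,S;Y)$ total bits per channel use error-free (note $I(U;Y)-I(U;S)$ with $U=(X,S)$ degenerate, i.e.\ $U$ carrying both the Shannon-strategy symbol and the state, becomes $I(X,S;Y)-H(S)$ once we account for the $H(S)$ bits "spent'' describing the state).

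More concretely, for the direct part I would: (i) fix the minimizing $P_S$ in \eqref{eq:capacitySA} and the maximizing $P_{X|S}$; (ii) split the $n$ channel uses into a first block of length $n_1=(1-\delta)n$ and a second of length $n_2=\delta n$; (iii) in the first block run the two-phase zero-error scheme of Section~\ref{sec:proofs} (the one proving Theorem~\ref{th:capacity}) to convey the pair (message, description of $S^{n_1}$) — the description being the index of $S^{n_1}$ in its type class plus its type — which is possible at total rate approaching $\min_{P_S}\max_{P_{X|S}}\min_{P_{Y|X,S}\in\mathscr P(W)} I(X,S;Y)$ because that scheme works for \emph{any} message set and here the "message set'' is $\setM\times\{\text{type-class indices}\}$; (iv) in the second (vanishing-rate) block convey the residual information about the last $\delta n$ states, again via the zero-error feedback scheme. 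Subtracting the $\le H(S)+o(1)$ bits per channel use that the state description consumes yields $R\le\min_{P_S}\max_{P_{X|S}}\min_{P_{Y|X,S}\in\mathscr P(W)} I(X,S;Y)-H(S)$ as the achievable message rate, and the outer $[\,\cdot\,]^+$ handles the case where this is negative (then $R=0$ but the state may or may not still be conveyable, consistent with the definition).

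For the converse (the rate cannot exceed the RHS), I would adapt the converse of Theorem~\ref{th:capacity}. The key observation is that a zero-error state-conveying code induces, for each state sequence $\vecs$, an "environment'' the adversary (the $P_{Y|X,S}\in\mathscr P(W)$ quantifier) can realize, and since the decoder must pin down both $m$ \emph{and} $\vecs$ from $Y^n$, the decoding sets $\{\setD_{m,\vecs}\}$ must be disjoint across \emph{both} indices; running the Fano-free / combinatorial counting argument from the proof of Theorem~\ref{th:capacity} — now with $U=(X,S)$ and the term $I(U;S)=H(S)$ because the whole state is being conveyed — gives $\tfrac1n\log(|\setM|\,|\mathcal T^{(n)}_{P_S}|) \le \max_{P_{X|S}}\min_{P_{Y|X,S}\in\mathscr P(W)} I(X,S;Y)+o(1)$ for the worst-case (hence the $\min_{P_S}$) type, and $\tfrac1n\log|\mathcal T^{(n)}_{P_S}|\to H(S)$ finishes it. The main obstacle is step (iv)/the converse meshing cleanly: one must verify that the "cost $H(S)$'' is incurred against the \emph{same} worst-case $P_S$ that governs the message-rate term, i.e.\ that the $\min_{P_S}$ can be pulled outside of $\bigl[\max_{P_{X|S}}\min_{P_{Y|X,S}}I(X,S;Y)-H(S)\bigr]$ rather than being split between the two terms; I expect this to follow from the fact that the adversary controls $P_{Y|X,S}$ but \emph{not} $P_S$ (which is fixed by the code's type-covering requirement over all of $\setS^n$), so the game is genuinely $\min_{P_S}$ of the difference, exactly as written in \eqref{eq:capacitySA}.
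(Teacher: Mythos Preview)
The necessity of \eqref{eq:positive} and your converse sketch are in the right spirit (the paper's converse is more explicit: it replaces the channel law by an arbitrary $\tilde P_{Y|X,S}\in\mathscr P(W)$ and the state marginal by an arbitrary $\tilde P_S$, then computes $\tfrac1n\log|\setM|=\tfrac1n[H(M)+H(S^n)-H(S^n)]\le\tfrac1n\sum_i[I(X_i,S_i;Y_i)-H(S_i)]$ directly, and chooses $\tilde P_S$ and each $\tilde P_{Y_i|X_i,S_i}$ adversarially).

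But the direct part has a genuine gap: separating into ``describe $S^{n_1}$ in $\approx n_1 H(S)$ bits, then send (message, description) via the Theorem~\ref{th:capacity} scheme'' \emph{double-counts} the state cost. The scheme underlying Theorem~\ref{th:capacity} is built for a message \emph{independent} of the state; it achieves at most $\max_{P_{U,X|S}}\min_{P_{Y|U,X,S}}[I(U;Y)-I(U;S)]$ bits per use for such a message, not $I(X,S;Y)$. Specializing to $U=(X,S)$ gives only $I(X,S;Y)-H(S)$ for your \emph{augmented} message, so after you subtract the $\approx H(S)$ bits reserved for the state description you are left with message rate $I(X,S;Y)-2H(S)$, off by $H(S)$. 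Your augmented message is of course not independent of the state---it determines it---but then Theorem~\ref{th:capacity} no longer applies as a black box, and you have not supplied an argument exploiting this dependence.

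The paper's direct part avoids separation altogether. It modifies the Theorem~\ref{th:capacity} scheme so that the per-block ambiguity set $\bm{\mathcal I}_b$ tracks \emph{triples} (message, state-so-far, last-block state) rather than just messages. With $U=(X,S)$ the bins are subsets of the joint type class $\setT^{(k)}_{P^{(b)}_{X,S}}$, each covering every $\vecs$ \emph{exactly once}; the decoder, upon listing the $\approx 2^{kH(X,S|Y)}$ candidate $(\vecx,\vecs)$ pairs consistent with $\vecy^{(b)}$, learns not only the candidate bins but simultaneously the candidate block-$b$ state sequences. That is the missing idea: the state is recovered as a \emph{byproduct} of decoding, not described separately, and the per-block reduction factor on the triple ambiguity set is $2^{k(H(X|S)-H(X,S|Y))}=2^{k(I(X,S;Y)-H(S))}$. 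Applied to the initial set of size $|\setM|\cdot|\setS|^{k'}$ (with $k'=o(n)$) this yields exactly message rate $I(X,S;Y)-H(S)$. Your step~(iv) also needs care: the last block's states cannot be conveyed ``via the zero-error feedback scheme'' within that same block; the paper instead folds the Block-$(B{+}1)$ state sequence into $\bm{\mathcal I}_0$ from the outset and resolves the small residual set $\bm{\mathcal J}_B$ with a tailored random-coding block.
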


\begin{proof}
The result is proved in Appendix~\ref{sec:pfThStateAmplification} by adapting the proofs of Theorems~\ref{th:positive} and \ref{th:capacity} so as to guarantee that the receiver can decode also the state sequence error-free.
\end{proof}

\subsection{Constrained Inputs} \label{sec:constrainedInputs}

In this section we establish the zero-error feedback capacity of the SD-DMC $\channel y {x,s}$ with acausal SI subject to a cost constraint on the channel inputs. Consider some nonnegative ``cost-function'' $\gamma \colon \setX \rightarrow \reals^+_0$, and define $$\gamma_{\textnormal {min}} = \min_{x \in \setX} \gamma (x) \quad \textnormal{and} \quad \gamma_{\textnormal {max}} = \max_{x \in \setX} \gamma (x).$$ Let the set $\setX^\prime \subseteq \setX$ comprise all the minimizers of $\gamma ( \cdot )$ $$\setX^\prime = \bigl\{ x \in \setX \colon \gamma (x) = \gamma_{\textnormal {min}} \bigr\}.$$ The cost constraint we study is that, at every blocklength~$n$ and for every transmitted message $m \in \setM$, the channel inputs' average cost $$\gamma^{(n)} (X^n) = \frac{1}{n} \sum^n_{i = 1} \gamma (X_i)$$ satisfy the cost constraint
\begin{equation}
\gamma^{(n)} (X^n) \leq \Gamma \label{eq:ccInputs}
\end{equation}
for some given $\Gamma$ satisfying
\begin{equation}
\gamma_{\textnormal {min}} < \Gamma < \gamma_{\textnormal {max}}. \label{eq:ccGamma}
\end{equation}
The zero-error feedback capacity with acausal SI subject to \eqref{eq:ccInputs} is denoted $C_{\textnormal{f},0} (\Gamma)$. We restrict $\Gamma$ to \eqref{eq:ccGamma}, because all other values of $\Gamma$ are uninteresting: if $\Gamma < \gamma_{\textnormal {min}}$, then \eqref{eq:ccInputs} cannot hold; if $\Gamma = \gamma_{\textnormal {min}}$, then the encoder can only use inputs in $\setX^\prime$, and the zero-error feedback capacity is thus that of the channel with input alphabet $\setX^\prime$ and without a cost constraint; and if $\Gamma \geq \gamma_{\textnormal {max}}$, then \eqref{eq:ccInputs} always holds, and the cost constraint can be ignored.

As we argue next,
\begin{equation}
C_{\textnormal{f},0} (\Gamma) \geq \frac{\Gamma - \gamma_{\textnormal {min}}}{\gamma_{\textnormal {max}} - \gamma_{\textnormal {min}}} \, C_{\textnormal{f},0}. \label{eq:ccInpustConstFact}
\end{equation}
In fact $C_{\textnormal{f},0} (\cdot)$ is nondecreasing and concave on $[\gamma_{\textnormal {min}}, \gamma_{\textnormal {max}}]$. Indeed, we can divide the blocklength-$n$ transmission into two frames, Frame~1 and Frame~2, with the former of $\alpha n$ channel uses and the latter of $(1 - \alpha) n$ channel uses, where $$\alpha = \frac{\gamma_{\textnormal {max}} - \Gamma}{\gamma_{\textnormal {max}} - \gamma_{\textnormal {min}}}.$$ If in Frame~1 the encoder repeatedly transmits an element of $\setX^\prime$, then the cost constraint will be satisfied irrespective of the inputs in Frame~2. Those can thus be chosen to achieve the unconstrained capacity $C_{\textnormal{f},0}$, with the resulting rate being the RHS of \eqref{eq:ccInpustConstFact}. This proves \eqref{eq:ccInpustConstFact}.

It follows from \eqref{eq:ccInpustConstFact} that $C_{\textnormal{f},0} (\Gamma)$ is positive iff $C_{\textnormal{f},0}$ is positive. By adapting the proof of Theorem~\ref{th:capacity} to account for the cost constraint \eqref{eq:ccInputs} (see Appendix~\ref{sec:pfThCcInputs}), we obtain the following generalization of Theorems~\ref{th:positive} and \ref{th:capacity}:

\begin{theorem}\label{th:ccInputs}
Given any $\Gamma$ satisfying \eqref{eq:ccGamma}, a necessary and sufficient condition for $C_{\textnormal{f},0} (\Gamma)$ to be positive is \eqref{eq:positive}. If $C_{\textnormal{f},0} (\Gamma)$ is positive, then
\begin{IEEEeqnarray}{l}
C_{\textnormal{f},0} (\Gamma) = \min_{P_S} \max_{ \substack{ P_{U,X|S} \colon \\ \Ex {}{\gamma (X)} \leq \Gamma } } \min_{ \substack{ P_{Y|U,X,S} \colon \\ P_{Y|U = u,X,S} \in \mathscr P (W), \,\, \forall \, u \in \setU }} I (U;Y) - I (U;S), \label{eq:ccCapacity}
\end{IEEEeqnarray}
where $U$ is an auxiliary chance variable taking values in a finite set $\setU$, the expectation is computed w.r.t\ the joint PMF $P_S \times P_{U,X|S}$, and the mutual informations are computed w.r.t.\ the joint PMF $P_S \times P_{U,X|S} \times P_{Y|U,X,S}$. Restricting $X$ to be a function of $U$ and $S$, i.e., $P_{U,X|S}$ to have the form \eqref{eq:thCapCardUXFuncUS}, does not change the RHS of \eqref{eq:ccCapacity}, nor does restricting the cardinality of $\setU$ to \eqref{eq:cardU}.
\end{theorem}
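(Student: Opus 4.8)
The plan is to handle the positivity statement and the capacity formula separately, in both cases building directly on the proofs of Theorems~\ref{th:positive} and~\ref{th:capacity}. The positivity statement requires essentially no new work: every $(n,\setM)$ zero-error feedback code obeying~\eqref{eq:ccInputs} is in particular an $(n,\setM)$ zero-error feedback code, so $C_{\textnormal{f},0}(\Gamma)\le C_{\textnormal{f},0}$; together with~\eqref{eq:ccInpustConstFact} this shows that $C_{\textnormal{f},0}(\Gamma)$ is positive iff $C_{\textnormal{f},0}$ is positive, which by Theorem~\ref{th:positive} is equivalent to~\eqref{eq:positive}. Everything else is in establishing~\eqref{eq:ccCapacity} under the assumption that~\eqref{eq:positive} holds, and I would obtain this by adding a cost-bookkeeping layer to both the converse and the achievability of Theorem~\ref{th:capacity}.

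For the converse I would repeat the single-letterization behind the converse to Theorem~\ref{th:capacity} essentially verbatim. Fix the state type $P_S$ that minimizes the RHS of~\eqref{eq:ccCapacity}, let nature realize a state sequence $\vecs$ whose type is as close to $P_S$ as the type granularity permits, and use that zero-error decodability forces $H(M\mid Y^n)=0$ under \emph{every} channel law in $\mathscr P(W)$, not merely under $W$. Introducing the usual Gelfand-Pinsker auxiliary $U_i$---which also makes $X_i$ a function of $(U_i,S_i)$, as in~\eqref{eq:thCapCardUXFuncUS}---yields $nR\le\sum_i\bigl(I(U_i;Y_i)-I(U_i;S_i)\bigr)$ with the inner worst-case channel, hence a bound of the shape of the RHS of~\eqref{eq:ccCapacity} but \emph{without} the cost restriction. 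The one new observation is that, because~\eqref{eq:ccInputs} holds surely, averaging $\gamma^{(n)}(X^n)\le\Gamma$ over the message and the feedback and grouping time indices by state value produces precisely a single-letter conditional input law with $\Ex{}{\gamma(X)}\le\Gamma$ against the realized state type, which confines the maximization to cost-feasible laws. Passing to the limit $n\to\infty$ (so the realized state type tends to $P_S$, the $o(1)$ slack being absorbed by the continuity---indeed concavity---of the RHS of~\eqref{eq:ccCapacity} in $\Gamma$ on $[\gamma_{\textnormal{min}},\gamma_{\textnormal{max}}]$) and then taking $\min_{P_S}$ gives the asserted upper bound.

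For the achievability I would rerun the two-phase scheme underlying Theorem~\ref{th:capacity}, engineered so that the inputs it produces obey~\eqref{eq:ccInputs} for \emph{every} state sequence and every feedback path, not only the typical ones. Fix a small $\epsilon>0$ with $\Gamma-\epsilon$ still in $(\gamma_{\textnormal{min}},\gamma_{\textnormal{max}})$. First transmit, error-free, the type $P_\vecs$ of the observed state sequence using an $O(\log n)$-channel-use preamble---feasible since~\eqref{eq:positive} holds---so the decoder knows which codebook is in force. For that type, pick $P_{U,X|S}$ of the form~\eqref{eq:thCapCardUXFuncUS} attaining $\max_{P_{U,X|S}\colon\,\Ex{}{\gamma(X)}\le\Gamma-\epsilon}\,\min_{P_{Y|U,X,S}}I(U;Y)-I(U;S)$ (expectation against $P_\vecs\times P_{U,X|S}$, inner minimum over the channels in $\mathscr P(W)$ as in~\eqref{eq:ccCapacity}); by concavity this exceeds the RHS of~\eqref{eq:ccCapacity} minus a term vanishing as $\epsilon\downarrow0$. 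Then run the Dueck-type first phase and the short second phase of Theorem~\ref{th:capacity} with this $P_{U,X|S}$, using conditional constant-composition codebooks so that the empirical composition of $X^n$---hence $\gamma^{(n)}(X^n)$---is pinned down regardless of the feedback path taken. Since the preamble and the second phase occupy only $o(n)$ channel uses, they contribute $o(1)$ to the average cost and vanishing rate, so the $\Gamma-\epsilon$ budget in the main phase keeps the overall cost below $\Gamma$ while the rate tends to the RHS of~\eqref{eq:ccCapacity}. That restricting $P_{U,X|S}$ to the form~\eqref{eq:thCapCardUXFuncUS} and $|\setU|$ to~\eqref{eq:cardU} does not change the RHS is inherited from Theorem~\ref{th:capacity}: identifying $\setU$ with $\setX^{\setS}$ and $X$ with $g(U,S)$, both the objective $I(U;Y)-I(U;S)$ and the constraint $\Ex{}{\gamma(g(U,S))}\le\Gamma$ depend on $P_{U,X|S}$ only through $P_{U|S}$, so the cardinality argument of Theorem~\ref{th:capacity} applies unchanged and no further Carath\'eodory step is needed.

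The main obstacle I anticipate is the last point of the achievability: enforcing the \emph{sure} constraint~\eqref{eq:ccInputs} along every branch of the feedback tree in a feedback-adaptive two-phase scheme. In the unconstrained proof the inputs need only be well-behaved with high probability, whereas here the composition of $X^n$ must be controlled on every feedback path, which forces one to track how the Dueck correction rounds perturb that composition and to absorb those perturbations into the $\epsilon$-margin; matching the converse then requires that this perturbation be $o(1)$ per symbol. (Were~\eqref{eq:ccInputs} imposed only in conditional expectation given $(M,S^n)$, this difficulty would largely disappear and the achievability would follow from that of Theorem~\ref{th:capacity} by a routine substitution of input distributions.)
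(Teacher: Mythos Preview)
Your outline is broadly aligned with the paper's, and the positivity and cardinality parts match. Two points deserve attention.

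\textbf{Converse.} Your description mixes two mechanisms. Fixing a single state sequence $\vecs$ of type close to $P_S$ does not produce the term $I(U_i;S_i)$ in the Gelfand--Pinsker identity (with $S_i$ deterministic it vanishes), so the inequality $nR\le\sum_i\bigl(I(U_i;Y_i)-I(U_i;S_i)\bigr)$ you quote only makes sense under the random-state approach of the converse to Theorem~\ref{th:capacity}, where $S^n$ is IID $\tilde P_S$ and absolute continuity is invoked. More importantly, the cost bookkeeping requires a step you do not name: from $\gamma^{(n)}(X^n)\le\Gamma$ surely one gets only $\tfrac{1}{n}\sum_i\Ex{}{\gamma(X_i)}\le\Gamma$, and the individual $\tilde P_{U_i,X_i|S_i}$ need not be cost-feasible. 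The paper resolves this by introducing a time-sharing variable $V$ uniform on $[1{:}n]$, setting $\tilde U=(U_V,V)$, and using $I(V,U;Y)-I(V,U;S)\ge I(U;Y\mid V)-I(U;S\mid V)$ together with the independence of $V$ and $S$ under $\tilde P$. Your ``grouping time indices by state value'' is not this step; with a fixed $\vecs$ it would introduce an extraneous $H(P_{\vecs})$ that does not cancel.

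\textbf{Achievability.} The sure-constraint obstacle you flag is a non-issue, and your own constant-composition remark is the reason. In the scheme of Theorem~\ref{th:capacity} the Block-$b$ inputs satisfy $\bigl(\vecu^{(b)},\vecx^{(b)},\vecs^{(b)}\bigr)\in\setT^{(k)}_{P^{(b)}_{U,X,S}}$, so $\gamma^{(k)}\bigl(\vecx^{(b)}\bigr)$ equals $\Ex{}{\gamma(X)}$ under the $k$-type $P^{(b)}_{U,X,S}$ \emph{exactly}, on every feedback path. The paper therefore simply restricts each $P^{(b)}_{U,X,S}$ to satisfy $\Ex{}{\gamma(X)}\le\Gamma$; no $\Gamma-\epsilon$ margin or continuity argument is needed for the first $B$ blocks. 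The only place the inputs are not type-controlled is Block~$B{+}1$, and there the paper pads with minimum-cost symbols from $\setX^\prime$, inflating that block's length by at most the constant factor $\bigl\lceil(\gamma_{\textnormal{max}}-\gamma_{\textnormal{min}})/(\Gamma-\gamma_{\textnormal{min}})\bigr\rceil$, which is harmless since the block is already $o(n)$. Finally, the paper does not transmit the overall state type as a preamble: the per-block types $P_{\vecs^{(1)}},\ldots,P_{\vecs^{(B)}}$ are sent in Block~$B{+}1$, and each block uses its own $P^{(b)}_{U,X|S}$ optimized for its own state type---this is what allows bounding every block's reduction factor by the worst case over $P_S$.
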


Specializing Theorem~\ref{th:ccInputs} to the state-less case, we obtain:

\begin{corollary}\label{co:ccInputsStateless}
For a state-less DMC $\channel y x$ and any $\Gamma$ satisfying \eqref{eq:ccGamma}, $C_{\textnormal{f},0} (\Gamma)$ is positive iff \eqref{eq:positiveShannon} holds. If $C_{\textnormal{f},0} (\Gamma)$ is positive, then it is given by
\begin{IEEEeqnarray}{l}
C_{\textnormal{f},0} (\Gamma) = \max_{ \substack{P_{X} \colon \\ \Ex {}{\gamma (X)} \leq \Gamma } } \min_{ P_{Y|X} \in \mathscr P (W) } I (X;Y), \label{eq:ccCapacityStateless}
\end{IEEEeqnarray}
where the expectation is computed w.r.t.\ the PMF $P_X$ and the mutual information w.r.t.\ the joint PMF $P_X \times P_{Y|X}$.
\end{corollary}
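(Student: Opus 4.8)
The plan is to obtain Corollary~\ref{co:ccInputsStateless} by specializing Theorem~\ref{th:ccInputs} to the case $|\setS| = 1$, in which there is a single state $s^\star$ (with $Q (s^\star) = 1$) and the SD-DMC $\channel y {x,s}$ reduces to the state-less DMC $\channel y x$. The positivity claim is then immediate: condition~\eqref{eq:positive} quantifies over all pairs $(s,s') \in \setS \times \setS$, so when $|\setS| = 1$ it reads ``$\exists\, x, x' \in \setX$ s.t.\ $\channel y x \, \channel y {x'} = 0$ for all $y \in \setY$,'' which is exactly~\eqref{eq:positiveShannon}; by Theorem~\ref{th:ccInputs} this is necessary and sufficient for $C_{\textnormal{f},0} (\Gamma)$ to be positive for any $\Gamma$ satisfying~\eqref{eq:ccGamma}.

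For the capacity formula I would start from the right-hand side of~\eqref{eq:ccCapacity}. When $|\setS| = 1$ the outer minimization $\min_{P_S}$ is vacuous, the state $S$ is deterministic so that $I (U;S) = 0$, the constraint $\Ex {}{\gamma (X)} \leq \Gamma$ depends on $P_{U,X|S}$ only through the $X$-marginal $P_X$, and $\mathscr P (W)$ is the state-less absolute-continuity class of Section~\ref{sec:stateLess}. The right-hand side of~\eqref{eq:ccCapacity} thus collapses to $\max_{P_{U,X} \colon \Ex {}{\gamma (X)} \leq \Gamma} \min_{P_{Y|U,X}} I (U;Y)$, where the inner minimum is over all $P_{Y|U,X}$ with $P_{Y|U = u,X} (\cdot|x) \ll \channel \cdot x$ for every $u$ and $x$. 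It then remains only to show that the auxiliary variable $U$ can be dropped, i.e., that this quantity equals $\max_{P_X \colon \Ex {}{\gamma (X)} \leq \Gamma} \min_{P_{Y|X} \in \mathscr P (W)} I (X;Y)$; the reductions of the side conditions of Theorem~\ref{th:ccInputs} (that $X$ be a function of $U$ and $S$, and the cardinality bound~\eqref{eq:cardU}) are subsumed by this step.

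This last equivalence is the crux, and I would prove it by two inequalities. For ``$\geq$'', take $\setU = \setX$ and let $U = X$ (permissible since $|\setX| = |\setX|^{|\setS|}$ when $|\setS| = 1$); then $I (U;Y) = I (X;Y)$ and, on the support of $P_{U,X}$, a choice of admissible $P_{Y|U,X}$ amounts to a choice of $P_{Y|X} \in \mathscr P (W)$, so the left-hand quantity is at least the right-hand one. For ``$\leq$'', fix any feasible $P_{U,X}$; the inner adversary may in particular pick a channel of the form $P_{Y|U,X} (y|u,x) = P_{Y|X} (y|x)$ with $P_{Y|X} \in \mathscr P (W)$, which is admissible, so $U \to X \to Y$ forms a Markov chain and $I (U;Y) \leq I (X;Y)$ by the data-processing inequality; minimizing over $P_{Y|X} \in \mathscr P (W)$ and then maximizing over $P_{U,X}$ (whose $X$-marginal ranges over all $P_X$ meeting the cost constraint) yields the reverse inequality. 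I expect no genuine obstacle: the one point requiring care is this removal of $U$ — namely, the observation that the inner minimization in~\eqref{eq:ccCapacity} can be restricted to ``channels'' $P_{Y|X}$ that ignore $U$ — while the cost constraint passes through transparently because it constrains only the $X$-marginal.
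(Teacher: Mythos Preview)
Your proposal is correct and follows the same approach as the paper: specialize Theorem~\ref{th:ccInputs} to $|\setS|=1$, observe that \eqref{eq:positive} collapses to \eqref{eq:positiveShannon}, and identify the RHS of \eqref{eq:ccCapacity} with \eqref{eq:ccCapacityStateless}. The paper's proof simply asserts this last identification in one line, whereas you spell out the removal of the auxiliary $U$ via the $U=X$ choice in one direction and the data-processing argument (restricting the inner minimization to channels ignoring $U$) in the other; this extra detail is sound and is precisely what the paper leaves implicit (or, alternatively, delegates to the reductions $X=g(U,S)$ and $|\setU|\le|\setX|^{|\setS|}=|\setX|$ already built into Theorem~\ref{th:ccInputs}).
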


\begin{proof}[Proof of Corollary~\ref{co:ccInputsStateless}]
This follows from Theorem~\ref{th:ccInputs} when we consider an SD-DMC $\channel y {x,s}$ with a single state, i.e., $|\setS| = 1$, whose transition law is
\begin{IEEEeqnarray}{l}
\channel y {x,s} = \channel y x, \,\, \forall \, ( x,s,y ) \in \setX \times \setS \times \setY,
\end{IEEEeqnarray}
because on this channel SI is useless, \eqref{eq:positive} is equivalent to \eqref{eq:positiveShannon}, and the RHS of \eqref{eq:ccCapacity} equals that of \eqref{eq:ccCapacityStateless}.
\end{proof}

The RHS of \eqref{eq:ccCapacityStateless} is a natural generalization of Ahlswede's capacity formula \eqref{eq:capacityAhlswede} to the setting with the cost constraint \eqref{eq:ccInputs}. Since Ahlswede's capacity formula \eqref{eq:capacityAhlswede} is an alternative form for Shannon's capacity formula \eqref{eq:capacityShannon}, one might wonder whether the RHS of \eqref{eq:ccCapacityStateless} can also be expressed as the ``natural'' generalization of Shannon's formula \eqref{eq:capacityShannon}, namely as
\begin{IEEEeqnarray}{l}
\max_{ \substack{ P_X \colon \\ \Ex {}{\gamma (X)} \leq \Gamma } } \min_{ y \in \setY } - \log \sum_{x \in \setX \colon \channel y x > 0 } P_X (x), \label{eq:ccCapacityStatelessShannon}
\end{IEEEeqnarray}
where the expectation is computed w.r.t.\ the PMF $P_X$. The answer is no:

\begin{remark}\label{re:ccShannonSmallerAhlswede}
For any $\Gamma \geq \gamma_{\textnormal{min}}$ and every state-less DMC $\channel y x$
\begin{IEEEeqnarray}{l}
\max_{ \substack{ P_X \colon \\ \Ex {}{\gamma (X)} \leq \Gamma } } \min_{ y \in \setY } - \log \sum_{x \in \setX \colon \channel y x > 0 } P_X (x) \leq \max_{ \substack{P_{X} \colon \\ \Ex {}{\gamma (X)} \leq \Gamma } } \min_{ P_{Y|X} \in \mathscr P (W) } I (X;Y), \label{eq:ccShannonSmallerAhlswede}
\end{IEEEeqnarray}
where the expectations are computed w.r.t.\ the PMF $P_X$ and the mutual information w.r.t.\ the joint PMF $P_X \times P_{Y|X}$. The inequality can be strict.
\end{remark}

The inequality \eqref{eq:ccShannonSmallerAhlswede} is proved in Appendix~\ref{sec:pfReCcShannonSmallerAhlswede}. That it can be strict follows from the example below:

\begin{example}\label{ex:ccShannonStrictlySmallerAhlswede}
Suppose $$\setX = \setY = \{ 0,1 \};$$ that for every $(x,y) \in \setX \times \setY$
\begin{IEEEeqnarray}{rCl}
\channel y x & = & \ind { \{ y = x \} }, \\
\gamma (x) & = & x;
\end{IEEEeqnarray}
and that $0 < \Gamma < 1/2$. The RHS of \eqref{eq:ccShannonSmallerAhlswede} evaluates to
\begin{IEEEeqnarray}{l}
\max_{ \substack{P_{X} \colon \\ \Ex {}{\gamma (X)} \leq \Gamma } } \min_{ P_{Y|X} \in \mathscr P (W) } I (X;Y) = h_{\textnormal b} (\Gamma);
\end{IEEEeqnarray}
the LHS of \eqref{eq:ccShannonSmallerAhlswede} evaluates to
\begin{IEEEeqnarray}{l}
\max_{ \substack{ P_X \colon \\ \Ex {}{\gamma (X)} \leq \Gamma } } \min_{ y \in \setY } - \log \sum_{x \in \setX \colon \channel y x > 0 } P_X (x) = - \log (1 - \Gamma);
\end{IEEEeqnarray}
and
\begin{equation}
- \log (1 - \Gamma) < h_{\textnormal b} (\Gamma), \quad 0 < \Gamma < 1/2.
\end{equation}
\end{example}

The following may explain why the inequality in \eqref{eq:ccShannonSmallerAhlswede} can be strict. Recall Shannon's sequential coding scheme \cite{shannon56}, which achieves the zero-error feedback capacity \eqref{eq:capacityShannon} of the state-less DMC: The encoder selects some PMF $P_X$, and, before every channel use, it maps a fraction of approximately $P_X (x)$ of the survivor set to the input symbol~$x$. If the channel output is $y \in \setY$, then the survivor set is reduced by a factor of nearly
\begin{equation}
\Biggl( \sum_{x \in \setX \colon \channel y x > 0} P_X (x) \Biggr)^{-1}.
\end{equation}
The generalization \eqref{eq:ccCapacityStatelessShannon} of Shannon's capacity formula \eqref{eq:capacityShannon} is obtained when the PMF $P_X$ is restricted to satisfy $\bigEx {}{\gamma (X)} \leq \Gamma$. As the following argument suggests, a more adaptive coding scheme may be required in the presence of the cost constraint \eqref{eq:ccInputs}. To see why, fix some PMF $P_X$ w.r.t.\ which $\bigEx {}{\gamma (X)} \leq \Gamma$, and let $y^\star \in \setY$ maximize
\begin{equation}
\sum_{x \in \setX \colon \channel {y^\star} x > 0} P_X (x).
\end{equation}
If the cost of every input symbol $x \in \setX$ for which $\channel {y^\star} x > 0$ is smaller than $\Gamma$, then the cost constraint loosens for the remaining channel uses, and the encoder should take advantage of this.

\subsection{Constrained States} \label{sec:constrainedStates}

This section provides some insight into how cost constraints on the channel states affect the zero-error feedback capacity of the SD-DMC $\channel y {x,s}$ with acausal SI. Consider some nonnegative ``cost-function'' $\lambda \colon \setS \rightarrow \reals^+_0$, define $$\lambda_{\textnormal {min}} = \min_{s \in \setS} \lambda (s) \quad \textnormal{and} \quad \lambda_{\textnormal {max}} = \max_{s \in \setS} \lambda (s),$$ and let
\begin{equation}
\lambda_{\textnormal {min}} < \Lambda < \lambda_{\textnormal {max}}. \label{eq:ccLambda}
\end{equation}
Like the cost constraint \eqref{eq:ccInputs} on the channel inputs, where we restrict $\Gamma$ to \eqref{eq:ccGamma}, we restrict $\Lambda$ to \eqref{eq:ccLambda}, because all other values of $\Lambda$ are uninteresting. In the following, we shall consider two different cost constraints on the channel states.\\

The first is that, at every blocklength~$n$, the channel states' average cost $$\lambda^{(n)} (S^n) = \frac{1}{n} \sum^n_{i = 1} \lambda (S_i)$$ satisfy the cost constraint
\begin{equation}
\lambda^{(n)} (S^n) \leq \Lambda \label{eq:ccStates1}
\end{equation}
for some given $\Gamma$ satisfying \eqref{eq:ccLambda}. Let $C_{\textnormal{f},0}^{(1)} (\Lambda)$ denote the zero-error feedback capacity with acausal SI subject to \eqref{eq:ccStates1}. Unlike the cost constraint on the channel inputs \eqref{eq:ccInputs}, the cost constraint on the channel states \eqref{eq:ccStates1} affects not only the formula for $C_{\textnormal{f},0}$ when it is positive but also whether $C_{\textnormal{f},0}$ is positive. The reason for this is that the time-sharing argument of Section~\ref{sec:constrainedInputs} does not work for the adversarial state selector: since the state is revealed acausally to the encoder, if the state selector chooses only ``benign'' states of low cost during Frame~1 and only ``hurtful'' states of high cost during Frame~2, then the encoder can concentrate its transmission in the first frame, where the state assumes only ``benign'' realizations of low cost.

Indeed, the cost constraint \eqref{eq:ccStates1} can increase the zero-error feedback capacity with acausal SI from zero:

\begin{remark}\label{re:ccStates1Pos}
Even when $\Lambda$ satisfies \eqref{eq:ccLambda}, the zero-error feedback capacity of an SD-DMC with acausal SI can be zero in the absence of a state cost-constraint yet be positive in its presence.
\end{remark}

We prove Remark~\ref{re:ccStates1Pos} by means of the following example:

\begin{example}\label{ex:stuckAtOne}
Consider a deterministic SD-DMC $\channel y {x,s}$ over the binary alphabets $\setX = \setS = \setY = \{ 0,1 \}$ with the state cost-function
\begin{equation}
\lambda (s) = s, \quad s \in \setS.
\end{equation}
Let the output corresponding to the input $x$ and the state $s$ be the single element of the set $\setY_{x,s}$ that is given in Table~\ref{tb:exStuckAtOne}
\begin{IEEEeqnarray}{l}
\bigl\{ y \in \setY \colon \channel {y}{x,s} > 0 \bigr\} = \setY_{x,s}, \,\, \forall \, (x,s) \in \setX \times \setS.
\end{IEEEeqnarray}
Since \eqref{eq:positive} does not hold for this channel, Theorem~\ref{th:positive} implies that $C_{\textnormal{f},0}$ is zero. However, as shown in Appendix~\ref{sec:exStuckAtOne}, $C_{\textnormal{f},0}^{(1)} (\Lambda)$ is positive when $\Lambda > 0$ is sufficiently small so that
\begin{equation}
\Lambda + h_{\textnormal b} (\Lambda) < 1. \label{eq:exStuckAtOneLambda}
\end{equation}
This holds also in the absence of feedback: because $\channel y {x,s}$ is $\{ 0,1 \}$-valued, the encoder can compute the output from the state (which is revealed to it acausally) and from the input (that it produces), and feedback does not, therefore, increase capacity. 
\end{example}

\begin{table}[h]
\begin{center}
  \begin{tabular}{ c c | c c }
    \multicolumn{2}{c|}{\multirow{2}{*}{$\setY_{x,s}$}} & \multicolumn{2}{c}{$s$} \\
  	&& 0 & 1 \\
    \hline    
    \multirow{2}{*}{$x$} & 0 & \{ 0 \} & \{ 1 \} \\
    & 1 & \{ 1 \} & \{ 1 \}
  \end{tabular}
  \caption{Nonzero transitions of the SD-DMC in Example~\ref{ex:stuckAtOne}.}
  \label{tb:exStuckAtOne}
\end{center}
\end{table}

If $\channel y {x,s}$ satisfies \eqref{eq:positive}, i.e., if $C_{\textnormal{f},0}$ is positive in the absence of a state cost-constraint, then we can adapt the proof of Theorem~\ref{th:capacity} to account for the cost constraint \eqref{eq:ccStates1} and to thus express $C_{\textnormal{f},0}^{(1)} (\Lambda)$ as the ``natural'' generalization of \eqref{eq:capacity}, i.e., as the RHS of \eqref{eq:capacityCCStates} ahead. However, by Remark~\ref{re:ccStates1Pos} the capacity can be positive also when \eqref{eq:positive} does not hold; and for this case we do not have a generalization of Theorems~\ref{th:positive} and \ref{th:capacity}. The difficulty in extending Theorems~\ref{th:positive} and \ref{th:capacity} to this case is that the cost constraint \eqref{eq:ccStates1} allows the adversarial state selector to choose whichever states it likes in $\beta n$ epochs, where
\begin{equation}
\beta = \frac{\Lambda - \lambda_{\textnormal {min}}}{\lambda_{\textnormal {max}} - \lambda_{\textnormal {min}}}, \label{eq:fracChUsesStatesUnconst}
\end{equation}
and these epochs are not revealed to the receiver. This is problematic, because the coding schemes by which we prove the direct parts of Theorems~\ref{th:positive} and~\ref{th:capacity} comprise multiple short transmission phases. For example, the last block of the coding scheme by which we prove the direct part of Theorem~\ref{th:capacity} is of negligible length compared to $n$ and consequently also compared to $\beta n$, and hence the adversarial state selector is free to choose whichever states it likes during the last block.\\

The second type of cost constraint we consider is that, for some fixed $l \in \naturals$ and at every blocklength~$n$, the channel states satisfy the cost constraint
\begin{IEEEeqnarray}{l}
\frac{1}{l} \sum^{j l}_{i = ( j - 1 ) l + 1} \lambda (S_i) \leq \Lambda, \,\, \Bigl( \forall \, j \in \naturals \textnormal{ s.t.\ } j l \leq n \Bigr). \label{eq:ccStates2}
\end{IEEEeqnarray}
Note that \eqref{eq:ccStates2} is more stringent than \eqref{eq:ccStates1}, because it constrains the average cost of prespecified $l$-blocks of consecutive channel states and consequently also the channel states' average over the entire blocklength. The zero-error capacity subject to \eqref{eq:ccStates2}, $C^{(2)}_{\textnormal{f},0} (\Lambda,l)$, depends on $l$. We define the zero-error feedback capacity of the SD-DMC $\channel y {x,s}$ with acausal SI under this type of constraint as $$\liminf_{l \rightarrow \infty} C^{(2)}_{\textnormal{f},0} (\Lambda,l),$$ and we denote it $C_{\textnormal{f},0}^{(2)} (\Lambda)$. By adapting the proofs of Theorems~\ref{th:positive} and \ref{th:capacity} to account for the cost constraint \eqref{eq:ccStates2} (see Appendix~\ref{sec:pfThCcStates2Capacity}), we obtain the following single-letter characterization of $C_{\textnormal{f},0}^{(2)} (\Lambda)$:

\begin{theorem}\label{th:ccStates2Capacity}
Given any $\Lambda$ satisfying \eqref{eq:ccLambda}, a necessary condition for $C^{(2)}_{\textnormal{f},0} (\Lambda)$ to be positive is that
\begin{IEEEeqnarray}{l}
\biggl( \forall \, s, \, s^\prime \in \setS \textnormal{ s.t.\ } \frac{\lambda (s) + \lambda (s^\prime)}{2} \leq \Lambda \biggr) \quad \exists \, x, \, x^\prime \in \setX \textnormal{ s.t.\ } \nonumber \\*[-0.5\normalbaselineskip]
  \label{eq:positiveCCStates}
\\*[-0.5\normalbaselineskip]
\qquad \Bigl( \channel y {x,s} \, \channel y {x^\prime,s^\prime} = 0, \,\, \forall \, y \in \setY \Bigr). \nonumber
\end{IEEEeqnarray}
If this condition holds, then
\begin{IEEEeqnarray}{l}
C^{(2)}_{\textnormal{f},0} (\Lambda) = \min_{ \substack{ P_S \colon \\ \Ex {}{\lambda (S)} \leq \Lambda} } \max_{ P_{U,X|S} } \min_{ \substack{ P_{Y|U,X,S} \colon \\ P_{Y|U = u,X,S} \in \mathscr P (W), \,\, \forall \, u \in \setU }} I (U;Y) - I (U;S), \label{eq:capacityCCStates}
\end{IEEEeqnarray}
where $U$ is an auxiliary chance variable taking values in a finite set $\setU$, the expectation is computed w.r.t\ the PMF $P_S$, and the mutual informations are computed w.r.t.\ the joint PMF $P_S \times P_{U,X|S} \times P_{Y|U,X,S}$. Restricting $X$ to be a function of $U$ and $S$, i.e., $P_{U,X|S}$ to have the form \eqref{eq:thCapCardUXFuncUS}, does not change the RHS of \eqref{eq:capacityCCStates}, nor does restricting the cardinality of $\setU$ to \eqref{eq:cardU}.
\end{theorem}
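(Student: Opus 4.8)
The plan is to adapt, step by step, the converse and achievability arguments behind Theorems~\ref{th:positive} and~\ref{th:capacity} to a state selector confined by the $l$-block constraint \eqref{eq:ccStates2}. The structural fact driving everything is that, as $l \to \infty$, this constraint forces the empirical type of the state sequence over every sufficiently long window into the polytope $\mathcal Q_\Lambda$ of PMFs $P_S$ on $\setS$ with $\Ex {}{\lambda (S)} \le \Lambda$: each prescribed $l$-block has a state type in $\mathcal Q_\Lambda$ (up to the granularity of $l$-types), hence so does every union of consecutive $l$-blocks, while a union of $l$-blocks can realize essentially any $P_S \in \mathcal Q_\Lambda$. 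This is exactly why the free outer minimum in \eqref{eq:capacity} becomes the constrained $\min_{P_S \in \mathcal Q_\Lambda}$ in \eqref{eq:capacityCCStates}. Unlike the per-blocklength constraint \eqref{eq:ccStates1}, whose ``splurge budget'' $\beta n$ of \eqref{eq:fracChUsesStatesUnconst} is linear in~$n$ and can be dumped entirely onto the negligible-length final phase of the scheme of Theorem~\ref{th:capacity} (the difficulty noted in Section~\ref{sec:constrainedStates}), the block constraint keeps the selector honest on every $l$-block, and the final phase can moreover be stretched to span many full $l$-blocks; so that difficulty does not arise here.

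\emph{Necessity of \eqref{eq:positiveCCStates}.} Suppose \eqref{eq:positiveCCStates} fails: there are $s_0, s_0' \in \setS$ with $\tfrac12 (\lambda(s_0) + \lambda(s_0')) \le \Lambda$ such that for every $x, x' \in \setX$ some $y$ has $W (y | x, s_0)\, W (y | x', s_0') > 0$. Fix an even $l$ and a blocklength $n$ divisible by $l$, and feed the two interleaved state sequences $\vecs^{(0)} = (s_0, s_0', s_0, s_0', \dots)$ and $\vecs^{(1)} = (s_0', s_0, s_0', s_0, \dots)$; each $l$-block of either sequence contains $l/2$ copies of $s_0$ and $l/2$ of $s_0'$, so both satisfy \eqref{eq:ccStates2}. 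Running the adaptive ``common confusable output'' construction from the proof of Theorem~\ref{th:positive}, at every time~$i$ the two candidate state symbols form the pair $(s_0, s_0')$ or $(s_0', s_0)$, and the failure of \eqref{eq:positiveCCStates}---which is symmetric under $s_0 \leftrightarrow s_0'$---produces a next output symbol compatible both with message~$0$ under $\vecs^{(0)}$ and with message~$1$ under $\vecs^{(1)}$. The resulting $n$-tuple lies in the support of $Y^n$ under $(M = 0, S^n = \vecs^{(0)})$ and under $(M = 1, S^n = \vecs^{(1)})$, contradicting the disjointness of $\setD_0$ and $\setD_1$. Hence no two-message zero-error code exists at any such~$n$, whence $C^{(2)}_{\textnormal{f},0} (\Lambda, l) = 0$ for every even~$l$, and therefore $\liminf_{l \to \infty} C^{(2)}_{\textnormal{f},0} (\Lambda, l) = 0$. (If the witness has $s_0 = s_0'$ then $\lambda(s_0) \le \Lambda$ and the constant sequence $(s_0, \dots, s_0)$ does the job.)

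\emph{The capacity formula: converse.} Fix $l$ and let $P_S^\star$ attain the minimum in \eqref{eq:capacityCCStates}. Choose an $l$-type $P_S^{(l)} \in \mathcal Q_\Lambda$ arbitrarily close to $P_S^\star$ and an $\vecs_0 \in \setT^{(l)}_{P_S^{(l)}}$; the periodic sequence $(\vecs_0, \vecs_0, \dots)$ then satisfies \eqref{eq:ccStates2}, so any $(n, \setM)$ zero-error code robust to \eqref{eq:ccStates2} is in particular zero-error against it. Feeding this sequence---of empirical type $P_S^{(l)}$---to the converse of Theorem~\ref{th:capacity} with the state type held fixed at $P_S^{(l)}$ bounds $\tfrac1n \log |\setM|$ by $\max_{P_{U,X|S}} \min_{P_{Y|U=u,X,S} \in \mathscr P (W)\,\forall u} \bigl( I (U;Y) - I (U;S) \bigr)$ evaluated at $P_S^{(l)}$, up to a term that vanishes as $n \to \infty$. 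Letting first $n \to \infty$ and then $l \to \infty$ (so that $P_S^{(l)} \to P_S^\star$) shows that $\liminf_l C^{(2)}_{\textnormal{f},0} (\Lambda, l)$ is at most the right-hand side of \eqref{eq:capacityCCStates}.

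\emph{The capacity formula: achievability, and the main obstacle.} Assume \eqref{eq:positiveCCStates}; the case where the right-hand side of \eqref{eq:capacityCCStates} is zero is vacuous, so assume it is positive. Fix $\epsilon > 0$, take $l$ large and $n \gg l$, and run the two-phase scheme behind the direct part of Theorem~\ref{th:capacity}, arranging its sub-blocks to be unions of consecutive prescribed $l$-blocks. Since the encoder observes the states acausally and the Gelfand--Pinsker-/Dueck-style analysis of the first phase depends on the states only through each sub-block's empirical type, and since every such type lies in $\mathcal Q_\Lambda$, the first phase reduces the receiver's candidate set to bounded size at a rate no smaller than $\min_{P_S \in \mathcal Q_\Lambda} \max_{P_{U,X|S}} \min \bigl( I (U;Y) - I (U;S) \bigr) - \epsilon$, the worst case being over the admissible sub-block types. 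The delicate part is the second, ambiguity-resolving phase: it occupies $o(n)$ channel uses, yet \eqref{eq:ccStates2} controls the states there only in aggregate over full $l$-blocks, so within its terminal $l$-block the selector may front-load high-cost states. The remedy is to stretch this phase to span many full $l$-blocks---length $\Theta(\sqrt n)$, say, still $o(n)$---so that its own empirical state type lies in $\mathcal Q_\Lambda$ up to an $O(l / \sqrt n)$ error, and then to convey the residual index error-free over this segment by the cost-constrained analogue of the bounded-length single-bit transmission supplied by the proof of Theorem~\ref{th:positive}: knowing the upcoming states, the encoder exploits \eqref{eq:positiveCCStates} along a matching of state positions into pairs of average cost at most~$\Lambda$. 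Showing that \eqref{eq:positiveCCStates}---a pairwise condition restricted to cost-$\le\Lambda$ pairs---really does suffice to resolve a bounded ambiguity error-free against \emph{every} state sequence whose type lies in $\mathcal Q_\Lambda$ (the analogue of Corollary~\ref{co:moreThanOneChUse}, together with the bounded channel-use count implicit in Theorem~\ref{th:positive}) is the crux of the argument and the place where the proof of Theorem~\ref{th:positive} must be re-examined. Finally, the cardinality bound \eqref{eq:cardU} and the functional form \eqref{eq:thCapCardUXFuncUS} carry over verbatim from Theorem~\ref{th:capacity}, being properties of the inner maximum--minimum over $(P_{U,X|S}, P_{Y|U,X,S})$ that the constrained outer minimum over $P_S$ leaves untouched.
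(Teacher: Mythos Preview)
Your necessity argument matches the paper's: interleave $s_0,s_0'$ so each $l$-block (for even $l$) has the right average cost, then reuse the confusable-output construction from the converse of Theorem~\ref{th:positive}.

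Your converse, however, has a real gap. Feeding a \emph{deterministic} periodic sequence of type $P_S^{(l)}$ into the converse of Theorem~\ref{th:capacity} does not recover the Gelfand--Pinsker form. With $S^n$ a point mass, each $S_i$ is deterministic, so $I(U_i;S_i)=0$ and the bound collapses to the time-average of single-state quantities $\sum_s P_S^{(l)}(s)\,\max_{P_X}\min_{P_{Y|X}\in\mathscr P(W(\cdot|\cdot,s))}I(X;Y)$---essentially the capacity with state known also to the \emph{decoder}, which in general strictly exceeds the RHS of \eqref{eq:capacityCCStates}. Nor can you simply plug the i.i.d.\ law $\tilde P_S = P_S^{(l)}$ into that converse: i.i.d.\ states violate \eqref{eq:ccStates2} with positive probability, so $\tilde P_{M,S^n,X^n,Y^n}$ places mass on state sequences against which the code is not required to be zero-error, and the absolute-continuity step \eqref{eq:convZeroDecErrImp} breaks. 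The paper's fix is to take $\tilde P_{S^n}$ uniform on $\bigl(\setT^{(l)}_{\tilde P_S}\bigr)^J$: each $l$-block then satisfies \eqref{eq:ccStates2} deterministically, while each $S_i$ still has marginal $\tilde P_S$. The price is that $S_i$ and $S_{i+1}^n$ are no longer independent, but the extra term $\tfrac1n\sum_i I(S_{i+1}^n;S_i)$ is at most $|\setS|\log(1+l)/l$ and vanishes as $l\to\infty$.

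On achievability you are missing one structural idea. In the paper's scheme, Blocks~$1$ through~$B$ do not reduce the survivor set of \emph{messages}; they reduce the ambiguity set of \emph{pairs} $(m,\vecs^{(B+1)})$, where $\vecs^{(B+1)}$ is the state sequence in the resolution phase. This is why the resolution phase can succeed: the decoder will need to test codewords indexed by both message and phase-$(B{+}1)$ state, and only a bounded number of such pairs survive. The resolution phase itself is not a matching-based bit-by-bit transmission but a random-coding argument: draw codewords $\rndvecX(m',\vecs')$ uniformly and show via \eqref{eq:positiveCCStates} that among the $k'$ positions, at least a fixed fraction $\alpha$ (see \eqref{bl:defKPrimePrimeAndMoreCCStates}) have state pairs $(s_i',s_i'')$ with $\tfrac12(\lambda(s_i')+\lambda(s_i''))\le\Lambda$, hence admit a distinguishing input pair. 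Your matching heuristic points at the right combinatorics but stops short of this.
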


We do not know whether \eqref{eq:positiveCCStates} guarantees that the RHS of \eqref{eq:capacityCCStates} be positive, and hence we do not know whether \eqref{eq:positiveCCStates} is also sufficient for $C_{\textnormal{f},0} (\Lambda)$ to be positive.

For the deterministic SD-DMC of Example~\ref{ex:stuckAtOne}, Theorem~\ref{th:ccStates2Capacity} yields the following result:

\begin{example}
For the channel and cost-function of Example~\ref{ex:stuckAtOne}
\begin{equation}
C_{\textnormal{f},0}^{(2)} (\Lambda) = 1 - \Lambda, \quad 0 \leq \Lambda \leq 1.
\end{equation}
\end{example}

\begin{proof}
Here \eqref{eq:positiveCCStates} holds iff $\Lambda < 1$, so the capacity is zero if $\Lambda = 1$. (This could have also been established by noting that the all-one state-sequence results in the output being one irrespective of the input.) If $0 \leq \Lambda < 1$, then the capacity is
\begin{IEEEeqnarray}{rCl}
C^{(2)}_{\textnormal{f},0} (\Lambda) & = & \min_{ \substack{ P_S \colon \\ P_S (1) \leq \Lambda} } \max_{ P_{U,X|S} } I (U;Y) - I (U;S) \\
& = & \min_{ \substack{ P_S \colon \\ P_S (1) \leq \Lambda} } \sum_{s \in \setS} P_S (s) \log \bigl| \bigl\{ y \in \setY \colon \exists \, x \in \setX \textnormal{ s.t.\ } \channel y {x,s} > 0 \bigr\} \bigr| \\
& = & 1 - \Lambda,
\end{IEEEeqnarray}
where the mutual informations are computed w.r.t.\ the joint PMF $P_S \times P_{U,X|S} \times W$, and the first two equalities can be proved similarly as in Appendix~\ref{sec:anExYFunXS}.
\end{proof}

\section{Selected Proofs}\label{sec:proofs}

This section contains the proofs of the results in Section~\ref{sec:acaus}: Theorem~\ref{th:positive} is proved in Section~\ref{sec:pfThPositive}; Theorem~\ref{th:capacity} in Section~\ref{sec:pfThCapacity}; and Theorem~\ref{th:zeroWithoutFBPosWithFB} in Section~\ref{sec:pfThZeroWithoutFBPosWithFB}.

\subsection{A Proof of Theorem~\ref{th:positive}}\label{sec:pfThPositive}

The proof consists of a direct and a converse part. We first establish the direct part. In fact, we prove the following stronger result:

\begin{remark}\label{re:chUsesPerBit}
Consider an SD-DMC $\channel {y}{x,s}$ with feedback whose encoder is furnished with acausal SI. If \eqref{eq:positive} holds, then $n_{\textnormal{bit}}$ channel uses suffice for the error-free transmission of a bit, where $n_{\textnormal{bit}}$ is 1 if $|\setS| = 1$, and is otherwise upper-bounded by\footnote{Note that all logarithms in \eqref{eq:nbitState} are nonnegative, because \eqref{eq:positive} implies that $|\setY| \geq 2$.}
\begin{IEEEeqnarray}{l}
\frac{ 2 \, |\setY| \log |\setS| - \log |\setY|}{\log |\setY| - \log \bigl( |\setY| - 1 \bigr)} + 1 + 2 \, |\setY|. \label{eq:nbitState}
\end{IEEEeqnarray}
\end{remark}

The direct part of Theorem~\ref{th:positive} follows from Remark~\ref{re:chUsesPerBit}, because if \eqref{eq:positive} is satisfied, then, by Remark~\ref{re:chUsesPerBit},
\begin{IEEEeqnarray}{l}
C_{\textnormal{f},0} \geq 1 / n_{\textnormal{bit}} > 0.
\end{IEEEeqnarray}

In proving Remark~\ref{re:chUsesPerBit} we focus on the case $|\setS| \geq 2$, because the case $|\setS| = 1$ follows directly from Shannon \cite{shannon56}. (In this case \eqref{eq:positive} is equivalent to \eqref{eq:positiveShannon}.) 

Before we prove Remark~\ref{re:chUsesPerBit}, we briefly describe the coding scheme that we propose. Because the zero-error capacity of the SD-DMC $\channel y {x,s}$ with acausal SI can be zero without feedback but positive with feedback (Theorem~\ref{th:zeroWithoutFBPosWithFB}), it is not always possible to transmit a single bit error-free in only one channel use (Corollary~\ref{co:moreThanOneChUse}). Our scheme thus requires more than one channel use, and it utilizes the feedback link.

The scheme has two phases. Phase~1 is not used to convey the bit but rather to reduce the decoder's ambiguity about the Phase-2 state-sequence. This is attained with an adaptive feedback code reminiscent of the one used in the first phase of Shannon's coding scheme for the stateless DMC \cite{shannon56}. But in our Phase~1, the encoder utilizes the Phase-1 state-sequence (albeit only causally). After Phase~1 the decoder computes the set of Phase-2 state-sequences of positive posterior probability given the Phase-1 outputs. This set can also be computed by the encoder thanks to the Phase-1 feedback. This enables the encoder to transmit the bit error-free in Phase~2. The feedback link is not used in Phase~2.

The condition in Theorem~\ref{th:positive} ensures that Phase~1 and 2 are feasible. As we shall see, Phase~1 is feasible iff \eqref{eq:condCapacityPos} holds, whereas Phase~2 is feasible iff \eqref{eq:positive} holds, where by Remarks~\ref{re:posSuffButNotNec1} and \ref{re:posSuffButNotNec} $$\eqref{eq:positive} \implies \eqref{eq:condCapacityPos} \quad \text{and} \quad \eqref{eq:positive} \notimpliedby \eqref{eq:condCapacityPos},$$ so feasibility is easier to attain in Phase~1 than in Phase~2.

\begin{proof}[Proof of Remark~\ref{re:chUsesPerBit}]
The case $|\setS| = 1$ follows from Shannon \cite{shannon56}, and we hence assume that $|\setS| \geq 2$. To transmit a single bit $m \in \{ 0,1 \}$, we divide the blocklength-$n_{\textnormal{bit}}$ transmission into Phase~1 and Phase~2 of $n_1$ and $n_2$ channel uses, where
\begin{equation}
n_{\textnormal{bit}} = n_1 + n_2. \label{eq:nBitSumN1N2}
\end{equation}
For now, $( n_{\textnormal{bit}}, n_1, n_2 )$ could be any triple of positive integers satisfying \eqref{eq:nBitSumN1N2}. At the end of the proof, we shall exhibit a choice of the triple for which the transmission is error-free and $n_{\textnormal{bit}}$ is upper-bounded by \eqref{eq:nbitState}. Before we do that, we describe Phase~1 and Phase~2, beginning with Phase~1.


Let $\setS^{n_1 + n_2}$ denote the set of possible length-$(n_1 + n_2)$ state-sequences, and let $\setS^{n_2}$ denote the set of possible state sequences occurring during Phase~2. Before the transmission begins, the encoder observes the entire state sequence $S^{n_1 + n_2}$. The goal of Phase~1 is to produce a random subset $\bm { \mathcal S}_{n_1} \subseteq \setS^{n_2}$ with the following three properties: 1)~$\bm { \mathcal S}_{n_1}$ is determined by the Phase-1 outputs $Y_1, \ldots, Y_{n_1}$, so both encoder and decoder know $\bm { \mathcal S}_{n_1}$ before Phase~2 begins; 2)~with probability one $\bm { \mathcal S}_{n_1}$ contains the Phase-2 state-sequence $S^{n_1 + n_2}_{n_1 + 1}$; and 3)~the cardinality of $\bm { \mathcal S}_{n_1}$ is upper-bound by
\begin{IEEEeqnarray}{l}
|\bm { \mathcal S}_{n_1}| \leq \biggl( \frac{|\setY| - 1}{|\setY|} \biggr)^{\!\! n_1} |\setS|^{n_2} + |\setY|. \label{eq:noPhase2SLeft}
\end{IEEEeqnarray}
To that end we partition the set $\bm { \mathcal S}_0 = \setS^{n_2}$ into $|\setY|$ different subsets whose size is between $\bigl\lfloor |\bm { \mathcal S}_0| / |\setY| \bigr\rfloor$ and $\bigl\lceil |\bm { \mathcal S}_0| / |\setY| \bigr\rceil$. We index the $|\setY|$ subsets by the output alphabet $\setY$ and reveal the result to the encoder and decoder. To every pair $( s, y) \in \setS \times \setY$ we assign an input $x (s,y) \in \setX$ for which
\begin{equation}
W \bigl ( y \bigl| x (s,y), s \bigr) = 0. \label{eq:avoidY}
\end{equation}
Such an $x (s,y)$ exists, because substituting $s$ for both $s$ and $s^\prime$ in \eqref{eq:positive} demonstrates that \eqref{eq:positive} implies that there exists a pair of inputs $x^\prime, \, x^{\prime\prime} \in \setX$ for which
\begin{IEEEeqnarray}{l}
\channel {y}{x^\prime,s} \, \channel {y}{x^{\prime\prime},s} = 0, \,\, \forall \, y \in \setY,
\end{IEEEeqnarray}
i.e., for which for every $y \in \setY$ either $\channel y {x^\prime,s}$ or $\channel {y}{x^{\prime\prime},s}$ is zero. We can thus choose $x (s,y)$ to be $x^\prime$ when $\channel y {x^\prime,s}$ is zero and to be $x^{\prime\prime}$ when it is not.\footnote{This is nothing else but $\eqref{eq:positive} \implies \eqref{eq:condCapacityPos}$, which follows from Remarks~\ref{re:posSuffButNotNec1} and \ref{re:posSuffButNotNec}.} If, thanks to its acausal SI, the encoder knows that the Time-1 state $S_1$ is $s$ and that $S^{n_1 + n_2}_{n_1 + 1}$ is in the subset of $\bm { \mathcal S}_0$ indexed by $y$, then at Time~1 it transmits $x (s,y)$. This choice guarantees by \eqref{eq:avoidY} that, upon observing the Time-1 output $Y_1$, the decoder will know that the Phase-2 state-sequence is not an element of the subset of $\bm { \mathcal S}_0$ indexed by $Y_1$, and that it is thus in the $\bm { \mathcal S}_0$-complement of this subset, which we denote $\bm { \mathcal S}_1$. Note that: 1)~both encoder and decoder know $\bm { \mathcal S}_1$ after Channe-Use~1;~2) $\bm { \mathcal S}_1$ contains $S^{n_1 + n_2}_{n_1 + 1}$; and 3)~the cardinality of $\bm { \mathcal S}_1$ is upper-bounded by
\begin{IEEEeqnarray}{l}
|\bm { \mathcal S}_1| \leq |\bm { \mathcal S}_0| - \biggl\lfloor \frac{|\bm { \mathcal S}_0|}{|\setY|} \biggr\rfloor = \biggl\lceil \frac{|\setY| - 1}{|\setY|} \, |\bm { \mathcal S}_0| \biggr\rceil \leq \frac{|\setY| - 1}{|\setY|} \, |\bm { \mathcal S}_0| + 1.
\end{IEEEeqnarray}

Phase~1 continues in the same fashion: Let $i \in [2:n_1]$, and assume that the first $i-1$ channel uses have produced a random subset $\bm { \mathcal S}_{i-1}$ of $\setS^{n_2}$ with the following three properties: 1)~both encoder and decoder know $\bm { \mathcal S}_{i-1}$ after Channel-Use~$(i-1)$; 2)~$\bm { \mathcal S}_{i-1}$ contains $S^{n_1 + n_2}_{n_1 + 1}$; and 3)~the cardinality of $\bm { \mathcal S}_{i-1}$ is upper-bounded by
\begin{IEEEeqnarray}{l}
|\bm {\mathcal S}_{i-1}| \leq \frac{|\setY| - 1}{|\setY|} \, |\bm { \mathcal S}_{i-2}| + 1.
\end{IEEEeqnarray}
After Channel-Use~$(i-1)$, we partition $\bm { \mathcal S}_{i-1}$ into $|\setY|$ different subsets whose size is between $\bigl\lfloor |\bm { \mathcal S}_{i-1}| / |\setY| \bigr\rfloor$ and $\bigl\lceil |\bm { \mathcal S}_{i-1}| / |\setY| \bigr\rceil$. We index the subsets by the elements of the output alphabet $\setY$ and reveal the result to the encoder and decoder. If, thanks to its acausal SI, the encoder knows that the Time-$i$ state $S_i$ is $s$ and that $S^{n_1 + n_2}_{n_1 + 1}$ is an element of the subset of $\bm {\mathcal S}_{i-1}$ indexed by $y$, then it transmits $x (s,y)$ at Time~$i$. This choice guarantees by \eqref{eq:avoidY} that, upon observing the Time-$i$ output $Y_i$, the decoder will know that the Phase-2 state-sequence is not an element of the subset indexed by $Y_i$, and that it is thus in the $\bm { \mathcal S}_{i-1}$-complement of this subset, which we denote $\bm { \mathcal S}_i$. Note that: 1)~both encoder and decoder know $\bm { \mathcal S}_i$ after Channel-Use~$i$; 2)~$\bm { \mathcal S}_i$ contains $S^{n_1 + n_2}_{n_1 + 1}$; and 3)~the cardinality of $\bm { \mathcal S}_i$ is upper-bounded by
\begin{IEEEeqnarray}{l}
|\bm { \mathcal S}_i| \leq |\bm { \mathcal S}_{i-1}| - \biggl\lfloor \frac{|\bm { \mathcal S}_{i-1}|}{|\setY|} \biggr\rfloor = \biggl\lceil \frac{|\setY| - 1}{|\setY|} \, |\bm { \mathcal S}_{i-1}| \biggr\rceil \leq \frac{|\setY| - 1}{|\setY|} \, |\bm { \mathcal S}_{i-1}| + 1. \label{eq:sizeSeti}
\end{IEEEeqnarray}

Since this holds for every $i \in [1 : n_1]$, the goal of Phase~1 is attained, and the first $n_1$ channel uses produce a random subset $\bm { \mathcal S}_{n_1}$ of $\setS^{n_2}$ with the following three properties: 1)~both encoder and decoder know $\bm { \mathcal S}_{n_1}$ before Phase~2 begins; 2)~$\bm { \mathcal S}_{n_1}$ contains the Phase-2 state-sequence $S^{n_1 + n_2}_{n_1 + 1}$; and~3) the cardinality of $\bm { \mathcal S}_{n_1}$ is upper-bound by
\begin{IEEEeqnarray}{rCl}
|\bm { \mathcal S}_{n_1}| & \leq & \biggl( \frac{|\setY| - 1}{|\setY|} \biggr)^{\!\! n_1} |\bm { \mathcal S}_0| + \sum^{n_1 - 1}_{i = 0} \biggl( \frac{|\setY| - 1}{|\setY|} \biggr)^{\!\! i} \\
& = & \biggl( \frac{|\setY| - 1}{|\setY|} \biggr)^{\!\! n_1} |\setS|^{n_2} + \frac{|\setY|^{n_1} - \bigl( |\setY| - 1 \bigr)^{n_1}}{|\setY|^{n_1} - \bigl( |\setY| - 1 \bigr) |\setY|^{n_1 - 1}} \\
& = & \biggl( \frac{|\setY| - 1}{|\setY|} \biggr)^{\!\! n_1} |\setS|^{n_2} + \frac{|\setY|^{n_1} - \bigl( |\setY| - 1 \bigr)^{n_1}}{|\setY|^{n_1-1}} \\
& \leq & \biggl( \frac{|\setY| - 1}{|\setY|} \biggr)^{\!\! n_1} |\setS|^{n_2} + |\setY|.
\end{IEEEeqnarray}

We next turn to Phase~2 whose goal is to transmit the bit error-free. To that end the encoder allocates to every bit value $m \in \{ 0, 1 \}$ and every state sequence $\vecs$ in $\bm { \mathcal S}_{n_1}$ a length-$n_2$ codeword $\vecx (m, \vecs)$, where the codewords are chosen so that
\begin{IEEEeqnarray}{l}
\!\!\!\!\!\!\!\! \forall \, \vecs, \, \vecs^\prime \in \bm { \mathcal S}_{n_1} \quad \exists \, i \in [1 : n_2] \textnormal{ s.t.\ } \Bigl( W \bigl( y \bigl| x_i (0, \vecs), s_i \bigr) \, W \bigl( y \bigl| x_i (1, \vecs^\prime), s^\prime_i \bigr) = 0, \,\, \forall \, y \in \setY \Bigr). \label{eq:condPhase2}
\end{IEEEeqnarray}
(We will shortly show how this can be done.) If the value of the bit to be sent is $m \in \{ 0,1 \}$ and if the Phase-2 state-sequence is $\vecs$, then the encoder transmits in Phase~2 the codeword $\vecx (m, \vecs)$. Condition \eqref{eq:condPhase2} implies that, upon observing the realization $\vecy \in \setY^{n_2}$ of the Phase-2 output-sequence $Y^{n_1 + n_2}_{n_1 + 1}$, the decoder, who knows $\bm { \mathcal S}_{n_1}$ and the codewords $\bigl\{ \vecx (\tilde m, \tilde \vecs) \bigr\}$, can determine the value of $m$ error-free, because for the true realization $\vecs \in \bm { \mathcal S}_{n_1}$ of the Phase-2 state-sequence
\begin{IEEEeqnarray}{l}
\prod^{n_2}_{i = 1} W \bigl( y_i \bigl| x_i (m, \vecs), s_i \bigr) > 0,
\end{IEEEeqnarray}
whereas \eqref{eq:condPhase2} implies for $m^\prime \neq m$
\begin{IEEEeqnarray}{l}
\prod^{n_2}_{i = 1} W \bigl( y_i \bigl| x_i (m^\prime, \tilde \vecs), \tilde s_i \bigr) = 0, \,\, \forall \, \tilde \vecs \in \bm { \mathcal S}_{n_1}.
\end{IEEEeqnarray}
The decoder can thus calculate $\prod_i W \bigl( y_i \bigl| x_i (\tilde m, \tilde \vecs), \tilde s_i \bigr)$ for each $\tilde m \in \{ 0,1 \}$ and $\tilde \vecs \in \bm { \mathcal S}_{n_1}$ and produce the $\tilde m$ for which this product is positive for some $\tilde \vecs \in \bm {\mathcal S}_{n_1}$.

One (inefficient) way to achieve \eqref{eq:condPhase2} is the following. Let $x^\star$ be an arbitrary fixed element of $\setX$, and for every pair $s, \, s^\prime \in \setS$ choose a pair $x (s, s^\prime), \, x^\prime (s, s^\prime) \in \setX$ for which
\begin{IEEEeqnarray}{l}
W \bigl( y \bigl| x (s, s^\prime), s \bigr) \, W \bigl( y \bigl| x^\prime (s, s^\prime), s^\prime \bigr) = 0, \,\, \forall \, y \in \setY. \label{eq:choiceXXPrimeSSPrime}
\end{IEEEeqnarray}
By \eqref{eq:positive} such a pair $x (s, s^\prime), \, x^\prime (s, s^\prime)$ exists. Now choose
\begin{equation}
n_2 \geq |\bm { \mathcal S}_{n_1}|^2; \label{eq:ineffChoiceN2}
\end{equation}
allocate to every ordered pair $(\vecs, \vecs^\prime) \in \bm { \mathcal S}_{n_1} \times \bm { \mathcal S}_{n_1}$ a different index $i \in \bigl[ 1 : |\bm { \mathcal S}_{n_1}|^2 \bigr]$; and for the allocated index $i$ choose $x_i (0, \vecs) = x (s_i, s^\prime_i)$ and $x_i (1, \vecs^\prime) = x^\prime (s_i, s^\prime_i)$, and thus guarantee, by \eqref{eq:choiceXXPrimeSSPrime}, that
\begin{IEEEeqnarray}{l}
\Bigl( W \bigl( y \bigl| x_i (0, \vecs), s_i \bigr) \, W \bigl( y \bigl| x_i (1, \vecs^\prime), s^\prime_i \bigr) = 0, \,\, \forall \, y \in \setY \Bigr).
\end{IEEEeqnarray}
The above specifies $|\bm { \mathcal S}_{n_1}|$ out of $n_2 \geq |\bm { \mathcal S}_{n_1}|^2$ symbols of each codeword $\vecx (m, \vecs)$. How we choose the other $n_2 - |\bm { \mathcal S}_{n_1}|$ symbols is immaterial. To be explicit, we choose each of them to be $x^\star$. The described choice of the codewords $\bigl\{ \vecx (m, \vecs) \bigr\}$ clearly satisfies \eqref{eq:condPhase2}. Hence, it would only remain to exhibit some choice of the triple $( n_{\textnormal{bit}}, n_1, \, n_2 )$ satisfying \eqref{eq:nBitSumN1N2} and \eqref{eq:ineffChoiceN2}. This can be done using \eqref{eq:noPhase2SLeft}, but the resulting value of $n_{\textnormal{bit}}$ need not be upper-bounded by \eqref{eq:nbitState}. To fix this, we allocate the indices more efficiently. Note that for every $i \in \bigl[ 1 : |\bm { \mathcal S}_{n_1}|^2 \bigr]$ the above choice of the codewords $\bigl\{ \vecx (m, \vecs) \bigr\}$ allocates meaningful values to the $i$-th symbols of only two codewords, namely $\vecx (0, \vecs)$ and $\vecx (1, \vecs^\prime)$, where $( \vecs, \vecs^\prime )$ is the ordered pair to which Index~$i$ has been allocated. More efficiently, we can allocate the same index~$i$ to several distinct pairs $(\vecs, \vecs^\prime)$. (Still, we let $x_i (0, \vecs) = x (s_i, s^\prime_i)$ and $x_i (1, \vecs^\prime) = x^\prime (s_i, s^\prime_i)$ when Index~$i$ has been allocated to the ordered pair $(\vecs, \vecs^\prime)$, and we choose each codeword symbol that has not been assigned a value to be $x^\star$.) This works whenever any two distinct pairs $( \vecs, \vecs^\prime ), \, ( \tilde \vecs, \tilde \vecs^\prime )$ that are allocated the same index~$i$ satisfy $\vecs \neq \tilde \vecs$ and $\vecs^\prime \neq \tilde \vecs^\prime$, because then every codeword symbol $x_i (m, \vecs)$ is assigned exactly one value. An efficient way to allocate the indices and guarantee that this requirement is met is the following. Instead of \eqref{eq:ineffChoiceN2}, choose any integer $n_2$ that satisfies
\begin{equation}
n_2 \geq |\bm { \mathcal S}_{n_1}|. \label{eq:pfPositiveRequirementN2}
\end{equation}
(An explicit choice for which $n_{\textnormal{bit}}$ is upper-bounded by \eqref{eq:nbitState} will be given in \eqref{bl:explChoiceN1N2}.) Index the elements of $\bm { \mathcal S}_{n_1}$ by $\bigl[ 1 : |\bm { \mathcal S}_{n_1}| \bigr]$, where $\vecs (j)$ denotes the element of $\bm { \mathcal S}_{n_1}$ indexed by $j$. Allocate to every ordered pair $\bigl( \vecs (k), \vecs (\ell) \bigr)$, where $k, \, \ell \in \bigl[ 1 : |\bm { \mathcal S}_{n_1}| \bigr]$, the index
\begin{IEEEeqnarray}{l}
i (k,\ell) = \bigl( \ell - k \mod |\bm { \mathcal S}_{n_1}| \bigr) + 1, \label{eq:pfPositiveIEllJ}
\end{IEEEeqnarray}
which clearly satisfies
\begin{equation}
i \in \bigl[ 1 : |\bm { \mathcal S}_{n_1}| \bigr] \subseteq [ 1 : n_2 ].
\end{equation}
By \eqref{eq:pfPositiveIEllJ} any two distinct pairs $\bigl( \vecs (k), \vecs (\ell) \bigr), \, \bigl( \vecs (k^\prime), \vecs (\ell^\prime) \bigr)$ that are allocated the same index $i$ satisfy $k \neq k^\prime$ and $\ell \neq \ell^\prime$, so $\vecs (k) \neq \vecs (k^\prime)$ and $\vecs (\ell) \neq \vecs (\ell^\prime)$.

To conclude the direct part, it remains to exhibit some choice of the triple $( n_{ \textnormal{bit} }, n_1, \, n_2 )$ satisfying \eqref{eq:nBitSumN1N2} and \eqref{eq:pfPositiveRequirementN2}. By \eqref{eq:noPhase2SLeft} these are satisfied if
\begin{subequations}\label{bl:explChoiceN1N2}
\begin{IEEEeqnarray}{rCl}
n_1 & = & \biggl\lceil \frac{ 2 \, |\setY| \log |\setS| - \log |\setY|}{\log |\setY| - \log \bigl( |\setY| - 1 \bigr)} \biggr\rceil, \\
n_2 & = & 2 \, |\setY|, \\
n_{\textnormal{bit}} & = & \biggl\lceil \frac{ 2 \, |\setY| \log |\setS| - \log |\setY|}{\log |\setY| - \log \bigl( |\setY| - 1 \bigr)} \biggr\rceil + 2 \, |\setY|,
\end{IEEEeqnarray}
\end{subequations}
and for this choice $n_{\textnormal{bit}}$ is upper-bounded by \eqref{eq:nbitState}.
\end{proof}

We next prove the converse part of Theorem~\ref{th:positive}.

\begin{proof}[Converse Part]
To show that \eqref{eq:positive} is necessary for $C_{\textnormal {f}, 0}$ to be positive, we need to prove that if \eqref{eq:positive} does not hold, i.e., if there exists a pair of states $s, \, s^\prime \in \setS$ such that
\begin{IEEEeqnarray}{l}
\nexists \, x, \, x^\prime \in \setX \textnormal{ s.t.\ } \Bigl( \channel y {x,s} \, \channel y {x^\prime,s^\prime} = 0, \,\, \forall \, y \in \setY \Bigr), \label{eq:oppOfPositive}
\end{IEEEeqnarray}
then it is impossible to transmit a single bit error-free. Condition~\eqref{eq:oppOfPositive} can be alternatively expressed as
\begin{IEEEeqnarray}{l}
\forall \, x, \, x^\prime \in \setX \quad \exists \, y \in \setY \colon \, \channel y {x,s} \, \channel y {x^\prime,s^\prime} > 0, \label{eq:oppOfPositive2}
\end{IEEEeqnarray}
which makes the claim almost obvious. Indeed, \eqref{eq:oppOfPositive2} implies that, if the state sequence is all $s$ or all $s^\prime$, then---during every channel use and irrespective of the inputs $x, \, x^\prime$ that we choose---the pairs $(x,s)$ and $(x^\prime,s^\prime)$ can produce the same output. This implies that for every pair of messages $m, \, m^\prime \in \setM$ and every encoding mappings there exists an output sequence of positive probability conditional on each of the following two events: 1) the message is $m$, and the state sequence is all $s$; or 2) the message is $m^\prime$, and the state sequence is all $s^\prime$.

To prove this formally, let the bit take values in the set $\setM = \{ 0,1 \}$, and fix a blocklength~$n$ and $n$ encoding mappings $$f_i \colon \setM \times \setS^n \times \setY^{i-1} \rightarrow \setX, \quad i \in [1:n].$$ Denote by $\vecs \in \setS^n$ the all-$s$ and by $\vecs^\prime \in \setS^n$ the all-$s^\prime$ state-sequence, so
\begin{equation}
s_i = s \quad \textnormal{and} \quad s_i^\prime = s^\prime, \,\, \forall \, i \in [1:n]. \label{eq:badOutputsSiSiPrime}
\end{equation}
To show that the mappings do not achieve error-free transmission, we will exhibit an output sequence $\vecy \in \setY^n$ that for every $i \in [1:n]$ satisfies
\begin{IEEEeqnarray}{l} 
W \bigl( y_i \bigl| f_i (0,\vecs,y^{i-1}), s_i \bigr) \, W \bigl( y_i \bigl| f_i (1,\vecs^\prime,y^{i-1}), s^\prime_i \bigr) > 0. \label{eq:badOutputs}
\end{IEEEeqnarray}
This will rule out error-free transmission, because if the state sequence is either $\vecs$ or $\vecs^\prime$, then the decoder, not knowing which, cannot recover the bit.

Our construction of $\vecy \in \setY^n$ is inductive, i.e., we first exhibit a Time-$1$ output $y_1 \in \setY$ that satisfies \eqref{eq:badOutputs} for $i = 1$, and we then repeatedly increment $i$ by one (until it reaches $n$) and exhibit a Time-$i$ output $y_i \in \setY$ that---together with the previously constructed $\{ y_j \}_{j \in [1:i-1]}$---satisfies \eqref{eq:badOutputs}.

We start by exhibiting a Time-1 output $y_1 \in \setY$ that satisfies \eqref{eq:badOutputs} for $i = 1$. To this end we observe from \eqref{eq:oppOfPositive2} and \eqref{eq:badOutputsSiSiPrime} that
\begin{IEEEeqnarray}{l}
\exists \, y \in \setY \textnormal{ s.t.\ } W \bigl( y \bigl| f_1 (0,\vecs), s_1\bigr) \, W \bigl( y \bigl| f_1 (1,\vecs^\prime), s^\prime_1 \bigr) > 0. \label{eq:badOutputsTime1}
\end{IEEEeqnarray}
If $y$ is as promised in \eqref{eq:badOutputsTime1}, then we choose $y_1 = y$ with the result that \eqref{eq:badOutputs} holds for $i = 1$.

For the inductive step, suppose $\ell \in [2:n]$, and that we have already constructed $\{ y_i \}_{i \in [1:\ell - 1]}$ for which \eqref{eq:badOutputs} holds for every $i \in [1:\ell - 1]$. We construct a Time-$\ell$ output $y_\ell \in \setY$ that---together with the previously constructed $\{ y_i \}_{i \in [1:\ell - 1]}$---satisfies \eqref{eq:badOutputs} when we substitute $\ell$ for $i$ in \eqref{eq:badOutputs}, i.e., we show that
\begin{IEEEeqnarray}{l}
\exists \, y_\ell \in \setY \textnormal{ s.t.\ } W \bigl( y_\ell \bigl| f_\ell (0,\vecs,y^{\ell-1}), s_\ell \bigr) \, W \bigl( y_\ell \bigl| f_\ell (1,\vecs^\prime,y^{\ell-1}), s^\prime_\ell \bigr) > 0. \label{eq:badOutputsTimeEll}
\end{IEEEeqnarray}
In fact, \eqref{eq:badOutputsTimeEll} follows from \eqref{eq:oppOfPositive2} and \eqref{eq:badOutputsSiSiPrime}.

Since the construction goes through for every $\ell \in [1:n]$, when $\ell$ reaches $n$ we have constructed an output sequence $\vecy \in \setY^n$ that for every $i \in [1:n]$ satisfies \eqref{eq:badOutputs}.
\end{proof}

\subsection{A Proof of Theorem~\ref{th:capacity}}\label{sec:pfThCapacity}

As we prove in Appendix~\ref{sec:leCardU}, restricting $X$ to be a function of $U$ and $S$, i.e., $P_{U,X|S}$ to have the form \eqref{eq:thCapCardUXFuncUS}, does not change the RHS of \eqref{eq:capacity}, nor does restricting the cardinality of $\setU$ to \eqref{eq:cardU} (Lemma~\ref{le:cardU}). To prove Theorem~\ref{th:capacity} it thus suffices to establish a direct part for the case where the cardinality of $\setU$ is restricted to \eqref{eq:cardU} and a converse part for the case where $\setU$ is any finite set. We first establish the direct part.

\begin{proof}[Direct Part]
Our coding scheme can be roughly described as follows. We partition the blocklength-$n$ transmission into $B + 1$ blocks, with each of the first $B$ blocks being of length~$k$. Each of these blocks is guaranteed to reduce the ``survivor set''---i.e., the set of messages of positive posterior probability given the channel outputs---by at least a factor of nearly $$\min_{P_S} \max_{P_{U,X|S}} \min_{ \substack{ P_{Y|U,X,S} \colon \\ P_{Y|U = u, X,S} \in \mathscr P (W) , \,\, \forall \, u \in \setU}} 2^{ k ( I (U;Y) - I (U;S) ) },$$ where $U$ is an auxiliary chance variable taking values in a finite set $\setU$, and where the mutual informations are computed w.r.t.\ the joint PMF $P_{S} \times P_{U,X|S} \times P_{Y|U,X,S}$. The parameter $B$ is chosen so that the post-Block-$B$ survivor-set be ``small.'' The last block further reduces the survivor-set from a small set to a singleton containing the transmitted message. The coding scheme asymptotically achieves the rate on the RHS of \eqref{eq:capacity}, because, when $B$ and $k$ are large, the last block is of negligible length compared to $B k$ and therefore does not affect the code's asymptotic rate.\\

In the first $B$ blocks our scheme draws on Dueck's scheme for zero-error communication over the multiple-access channel with feedback \cite{dueck85}. Dueck's scheme in turn draws on Ahlswede's work \cite{ahlswede73, ooiwornell98, merhavweissman05}, which was originally motivated by the AVC with feedback, and which on the (state-less) DMC $\channel y x$ achieves the zero-error feedback capacity \eqref{eq:capacityAhlswede} \cite{ahlswede73}. We next describe Blocks~1 through $B$ of Ahlswede's scheme and then show how to adapt them to the present setting.

Fix positive integers $B, \, k$ and a $k$-type $P_X$ on $\setX$. Let $\bm { \mathcal M}_0 \triangleq \setM$ be the set of possible messages, and for every $b \in [1:B]$ let $\bm { \mathcal M}_b$ be the post-Block-$b$ survivor-set, i.e., the (random) set of messages of positive posterior probability given the channel outputs $Y^{b k}$ during the first $b$ blocks. Thus, $\bm { \mathcal M}_b$ is the (random) subset of $\bm { \mathcal M}_{b-1}$ comprising the messages in $\bm { \mathcal M}_{b-1}$ of positive posterior probability given the Block-$b$ outputs $\vecy^{(b)} \triangleq Y^{b k}_{(b-1)k + 1}$. Ahlswede's scheme is designed so as to guarantee that
\begin{IEEEeqnarray}{l}
| \bm { \mathcal M}_b | \lesssim \biggl( \max_{\substack{ P_{Y|X} \in \mathscr P (W)}} 2^{ - k I (X;Y) } \biggr) | \bm { \mathcal M}_{b-1} |, \label{eq:ahlswedeSchemeExpRed}
\end{IEEEeqnarray}
where the mutual information is computed w.r.t.\ the joint PMF $P_X \times P_{Y|X}$.

For every $b \in [1:B]$ Ahlswede's Block-$b$ transmission can be described as follows. Thanks to the feedback link, the set $\bm { \mathcal M}_{b-1}$ can be computed by both transmitter and receiver after Block~$(b-1)$. They can thus agree on a partition of $\bm { \mathcal M}_{b-1}$ into $\bigl| \setT^{(k)}_{P_X} \bigr|$ message sets whose size is between $\bigl\lfloor | \bm { \mathcal M}_{b-1} | / \bigl| \setT^{(k)}_{P_X} \bigr| \bigr\rfloor$ and $\bigl\lceil | \bm { \mathcal M}_{b-1} | / \bigl| \setT^{(k)}_{P_X} \bigr| \bigr\rceil$, and they can agree on a way to associate with each message set a different $k$-tuple from $\setT^{(k)}_{P_X}$. To transmit Message~$m \in \bm { \mathcal M}_{b-1}$, the encoder transmits the $k$-tuple $\vecx^{(b)} \in \setT^{(k)}_{P_X}$ associated with the message set containing $m$. Based on the Block-$b$ outputs $\vecy^{(b)}$, the encoder and decoder compute $\bm { \mathcal M}_b$ as follows: they identify all the $k$-tuples in $\setT^{(k)}_{P_X}$ that could have produced the Block-$b$ outputs $\vecy^{(b)}$, and they compute $\bm { \mathcal M}_b$ as the union of the message sets with which these $k$-tuples are associated.

We next establish \eqref{eq:ahlswedeSchemeExpRed}, or more precisely that
\begin{subequations}\label{bl:ahlswedeCardMbUB}
\begin{IEEEeqnarray}{l}
|\bm { \mathcal M}_b| \leq \biggl( \max_{P_{Y|X} \in \mathscr P (W)} 2^{ - k (I (X;Y) - \alpha_k)} \biggr) |\bm { \mathcal M}_{b-1}|, \label{eq:ahlswedeCardMbUB}
\end{IEEEeqnarray}
whenever
\begin{IEEEeqnarray}{l}
|\bm { \mathcal M}_{b-1}| \geq \bigl| \setT^{(k)}_{P_X} \bigr|, \label{eq:ahlswedeCardCondMbMin1}
\end{IEEEeqnarray}
\end{subequations}
where the mutual information is computed w.r.t.\ the joint PMF $P_X \times P_{Y|X}$, and where $\alpha_k$ is given by
\begin{IEEEeqnarray}{l}
\alpha_k = \frac{\log (1 + k) |\setX| \bigl( 1 + |\setY| \bigr) + 1}{k}
\end{IEEEeqnarray}
and hence converges to zero as $k$ tends to infinity. To this end assume that \eqref{eq:ahlswedeCardCondMbMin1} holds and note that, with probability one, the empirical type of the pair of Block-$b$ inputs and outputs $\bigl( \vecx^{(b)}, \vecy^{(b)} \bigr)$ satisfies
\begin{subequations}\label{bl:ahlswedeCodeConstEmpType}
\begin{IEEEeqnarray}{l}
P_{\vecx^{(b)}} = P_X, \\
\Bigl( \channel y x = 0 \Bigr) \implies \Bigl( P_{\vecx^{(b)},\vecy^{(b)}} (x,y) = 0 \Bigr).
\end{IEEEeqnarray}
\end{subequations}
This allows us to upper-bound the number of $k$-tuples in $\setT^{(k)}_{P_X}$ that could have produced the observed Block-$b$ outputs $\vecy^{(b)}$: For every fixed $k$-type $P_{X,Y}$ on $\setX \times \setY$, the number of $k$-tuples $\vecx$ that satisfy $\bigl( \vecx, \vecy^{(b)} \bigr) \in \setT^{(k)}_{P_{X,Y}}$ cannot exceed $2^{k H (X|Y)}$, where the conditional entropy is computed w.r.t.\ the joint PMF $P_{X,Y}$ \cite[Lemma~2.5]{csiszarkoerner11}. This, combined with \eqref{bl:ahlswedeCodeConstEmpType} and the fact that the number of $k$-types on $\setX \times \setY$ cannot exceed $(1 + k)^{|\setX| \, |\setY|}$, implies that the number of $k$-tuples in $\setT^{(k)}_{P_X}$ that could have produced the observed Block-$b$ outputs $\vecy^{(b)}$ is upper-bounded by
\begin{IEEEeqnarray}{l}
2^{\log (1 + k) |\setX| \, |\setY|} \max_{P_{Y|X} \in \mathscr P (W)} 2^{k H (X|Y)}, \label{eq:ahlswedeCodeUBSeqX}
\end{IEEEeqnarray}
where the conditional entropy is computed w.r.t.\ the joint PMF $P_X \times P_{Y|X}$. Every $k$-tuple from $\setT^{(k)}_{P_X}$ is associated with a message set whose size is at most $\bigl\lceil | \bm { \mathcal M}_{b-1} | / \bigl| \setT^{(k)}_{P_X} \bigr| \bigr\rceil$; and, by the assumption that \eqref{eq:ahlswedeCardCondMbMin1} holds,
\begin{IEEEeqnarray}{rCl}
\Bigl\lceil | \bm { \mathcal M}_{b-1} | / \bigl| \setT^{(k)}_{P_X} \bigr| \Bigr\rceil & \stackrel{(a)}\leq & 2 \, | \bm { \mathcal M}_{b-1} | / \bigl| \setT^{(k)}_{P_X} \bigr| \\
& \stackrel{(b)}\leq & | \bm { \mathcal M}_{b-1} | \, 2^{ - k H (X) + \log (1 + k) |\setX| + 1 }, \label{eq:ahlswedeCodeUBMsgSet}
\end{IEEEeqnarray}
where $(a)$ follows from \eqref{eq:ahlswedeCardCondMbMin1}; and $(b)$ follows from the inequality $\bigl| \setT^{(k)}_{P_X} \bigr| \geq (1 + k)^{- |\setX|} \, 2^{k H (X)}$, where the entropy is computed w.r.t.\ $P_X$ \cite[Lemma~2.3]{csiszarkoerner11}. From \eqref{eq:ahlswedeCodeUBSeqX} and \eqref{eq:ahlswedeCodeUBMsgSet} we obtain \eqref{bl:ahlswedeCardMbUB}.\\

We next sketch our adaption of Ahlswede's scheme to the present setting. For every $b \in [ 1 : B ]$ the Block-$b$ transmission can be described as follows. Before the transmission begins, the encoder is revealed the realization $\vecs^{(b)} \triangleq S^{b k}_{(b-1) k + 1}$ of the Block-$b$ state-sequence. Assume for now that the decoder---while incognizant of $\vecs^{(b)}$---knows its empirical type $P_{\vecs^{(b)}}$: the latter will be conveyed to the decoder error-free in Block~$B + 1$. Let $\bm { \mathcal M}_0 \triangleq \setM$ be the set of possible messages, and let $\bm { \mathcal M}_b$ be the post-Block-$b$ survivor-set, i.e., the (random) subset of $\bm { \mathcal M}_{b-1}$ comprising the messages in $\bm { \mathcal M}_{b-1}$ of positive posterior probability given the Block-$b$ outputs $\vecy^{(b)}$ and the empirical type $P_{\vecs^{(b)}}$. Choose some $k$-type $P_{U,X,S}^{(b)}$ whose $\setS$-marginal $P_S^{(b)}$ equals $P_{\vecs^{(b)}}$. In the following, unless otherwise specified, all entropies and mutual informations are computed w.r.t.\ the joint PMF $P_{U,X,S}^{(b)}$. Unlike Ahlswede's Block~$b$, which partitions $\bm { \mathcal M}_{b-1}$ into $\bigl| \setT^{(k)}_{P_X^{(b)}} \bigr|$ message sets and associates with each a different $k$-tuple from $\setT^{(k)}_{P_X^{(b)}}$, we fix some $\epsilon > 0$ and partition $\bm { \mathcal M}_{b-1}$ into
\begin{equation}
\Theta \triangleq \Bigl\lceil 2^{k (H (U|S) - \epsilon)} \Bigr\rceil \label{eq:defTheta}
\end{equation}
message sets whose size is between $\bigl\lfloor |\bm { \mathcal M}_{b-1}| / \Theta \bigr\rfloor$ and $\bigl\lceil |\bm { \mathcal M}_{b-1}| / \Theta \bigr\rceil$; and we associate with each message set a different bin from the bins $$\setB_\ell \subseteq \setT^{(k)}_{P_U^{(b)}}, \quad \ell \in [ 1 : \Theta ],$$ where the bins $\{ \setB_\ell \}_{\ell \in [ 1 : \Theta ]}$ are pairwise disjoint subsets of $\setT^{ (k) }_{P_U^{(b)}}$
\begin{subequations}\label{bl:binsBellConds}
\begin{IEEEeqnarray}{l}
\setB_\ell \cap \setB_{\ell^\prime} = \emptyset, \,\, \Bigl( \forall \, \ell, \, \ell^\prime \in [ 1 : \Theta ] \textnormal{ s.t.\ } \ell^\prime \neq \ell \Bigr), \label{eq:binsBellDisjoint}
\end{IEEEeqnarray}
and where each bin ``covers'' $\setT^{ (k) }_{P_S^{(b)}}$ in the sense that
\begin{IEEEeqnarray}{l}
\forall \, ( \vecs, \ell ) \in \setT^{ (k) }_{P_S^{(b)}} \times [ 1 : \Theta ] \quad \exists \, \vecu \in \setB_\ell \textnormal{ s.t.\ } ( \vecu, \vecs ) \in \setT^{ (k) }_{ P^{(b)}_{U,S} }. \label{eq:coveringCond}
\end{IEEEeqnarray}
\end{subequations}
(Lemma~\ref{le:covering} ahead guarantees the existence of such bins whenever $k$ is sufficiently large.) To transmit Message~$m \in \bm { \mathcal M}_{b-1}$, the encoder picks from the bin that is associated with the message set containing $m$ a $k$-tuple $\vecu^{(b)}$ satisfying $\bigl( \vecu^{(b)}, \vecs^{(b)} \bigr) \in \setT^{ (k) }_{ P^{(b)}_{U,S} }$. (By \eqref{eq:coveringCond} such a $k$-tuple $\vecu^{(b)}$ exists.) It then chooses as the Block-$b$ channel-inputs some $k$-tuple $\vecx^{(b)}$ satisfying $\bigl( \vecu^{(b)}, \vecx^{(b)}, \vecs^{(b)} \bigr) \in \setT^{ (k) }_{ P^{(b)}_{U,X,S} }$. (This is possible, because $\setT^{ (k) }_{ P^{(b)}_{U,X,S} }$ is not empty since $P_{U,X,S}^{(b)}$ is a $k$-type, and because, by \eqref{eq:coveringCond}, $\bigl( \vecu^{(b)}, \vecs^{(b)} \bigr) \in \setT^{ (k) }_{ P^{(b)}_{U,S} }$.) Based on the Block-$b$ outputs $\vecy^{(b)}$ and the empirical type $P_{\vecs^{(b)}}$, the encoder and decoder compute $\bm { \mathcal M}_b$ as follows. First, they identify all the $k$-tuples in $\setT^{ (k) }_{ P^{(b)}_{U} }$ that could have produced the observed Block-$b$ outputs $\vecy^{(b)}$. Then, they determine all the bins that contain at least one of the identified $k$-tuples. Finally, they compute $\bm { \mathcal M}_b$ as the union of the message sets with which these bins are associated.\footnote{Our Blocks~1 through $B$ are reminiscent of Merhav and Weissman's $\epsilon$-error scheme for the state-dependent DMC with acausal SI and feedback to the encoder \cite[Section~III]{merhavweissman05}, which also draws on \cite{ahlswede73, ooiwornell98}. Unlike the $\epsilon$-error scheme, our Block~$b$ must, however, reduce $\bm {\mathcal M}_{b-1}$ with probability one and hence differs from Block~$b$ of the $\epsilon$-error scheme in the following three aspects: 1) it can deal with every possible Block-$b$ state-sequence, regardless of whether or not its empirical type is close to the PMF $Q$ of the state; 2) for every fixed $k$-type $P^{(b)}_{U,S}$ on $\setU \times \setS$, every Block-$b$ state-sequence $\vecs^{(b)}$ of empirical type $P^{(b)}_S$, and every message $m$ in $\bm {\mathcal M}_{b-1}$, the bin allocated to the message set containing $m$ contains some $k$-tuple $\vecu^{(b)}$ that satisfies $\bigl( \vecu^{(b)},\vecs^{(b)} \bigr) \in \setT^{(k)}_{P_{U,S}}$; and 3) our Block~$b$ can deal with every possible Block-$b$ output-sequence, regardless of whether or not the sequence is typical according to $\channel y {x,s}$.}

Using arguments similar to those for the state-less DMC, we next show that
\begin{subequations}\label{bl:noMessMbGivenOutputs}
\begin{IEEEeqnarray}{l}
| \bm { \mathcal M}_b | \leq \left( \max_{\substack{P_{Y|U,X,S} \colon \\ P_{Y|U = u, X,S} \in \mathscr P (W), \,\, \forall \, u \in \setU}} 2^{ - k ( I (U;Y) - I (U;S) - ( \epsilon + \beta_k ) ) } \right) \! |\bm { \mathcal M}_{b-1}|, \label{eq:noMessMbGivenOutputs}
\end{IEEEeqnarray}
whenever
\begin{IEEEeqnarray}{l}
|\bm { \mathcal M}_{b-1}| \geq 2^{k (H (U|S) - \epsilon)}, \label{eq:noMessMbCondMbMin1}
\end{IEEEeqnarray}
\end{subequations}
where the mutual informations are computed w.r.t.\ the joint PMF $P^{(b)}_{U,X,S} \times P_{Y|U,X,S}$, and where $\beta_k$ is given by
\begin{IEEEeqnarray}{l}
\beta_{k} = \frac{\log (1 + k) |\setU| \, |\setY| + 1}{k} \label{eq:gammaKPfCapacity}
\end{IEEEeqnarray}
and hence converges to zero as $k$ tends to infinity. To this end assume that \eqref{eq:noMessMbCondMbMin1} holds and note that, with probability one, the empirical type of the tuple $(\vecu^{(b)}, \vecx^{(b)}, \vecs^{(b)}, \vecy^{(b)})$ satisfies
\begin{subequations}\label{bl:adaptAhlswedeCodeConstEmpType}
\begin{IEEEeqnarray}{l}
P_{\vecu^{(b)},\vecx^{(b)},\vecs^{(b)}} = P_{U,X,S}^{(b)}, \\
\Bigl( \channel y {x,s} = 0 \Bigr) \implies \Bigl( P_{\vecu^{(b)},\vecx^{(b)},\vecs^{(b)},\vecy^{(b)}} (u,x,s,y) = 0, \,\, \forall \, u \in \setU \Bigr).
\end{IEEEeqnarray}
\end{subequations}
This allows us to upper-bound the number of $k$-tuples in $\setT^{ (k) }_{ P^{(b)}_{U} }$ that could have produced the observed Block-$b$ outputs $\vecy^{(b)}$: For every fixed $k$-type $P_{U,Y}$ on $\setU \times \setY$, the number of $k$-tuples $\vecu$ that satisfy $(\vecu,\vecy^{(b)}) \in \setT^{(k)}_{P_{U,Y}}$ cannot exceed $2^{k H (U|Y)}$, where the conditional entropy is computed w.r.t.\ the joint PMF $P_{U,Y}$ \cite[Lemma~2.5]{csiszarkoerner11}. This, combined with \eqref{bl:adaptAhlswedeCodeConstEmpType} and the fact that the number of $k$-types on $\setU \times \setY$ cannot exceed $(1 + k)^{|\setU| \, |\setY|}$, implies that the number of $k$-tuples in $\setT^{(k)}_{P_U^{(b)}}$ that could have produced the observed Block-$b$ outputs $\vecy^{(b)}$ is upper-bounded by
\begin{IEEEeqnarray}{l}
2^{\log (1 + k) |\setU| \, |\setY|} \max_{\substack{P_{Y|U,X,S} \colon \\ P_{Y|U = u,X,S} \in \mathscr P (W), \,\, \forall \, u \in \setU}} 2^{k H (U|Y)}, \label{eq:noSeqU}
\end{IEEEeqnarray}
where the conditional entropy is computed w.r.t.\ the joint PMF $P_{U,X,S}^{(b)} \times P_{Y|U,X,S}$. Since the bins are pairwise disjoint \eqref{eq:binsBellDisjoint}, no $k$-tuple is contained in more than one bin, and \eqref{eq:noSeqU} is thus also an upper bound on the number of bins that contain at least one $k$-tuple that could have produced the observed Block-$b$ outputs. Every bin is associated with a message set whose size is at most $\bigl\lceil |\bm { \mathcal M}_{b-1}| / \Theta \bigr\rceil$; and, by \eqref{eq:defTheta} and the assumption that \eqref{eq:noMessMbCondMbMin1} holds,
\begin{IEEEeqnarray}{l}
\bigl\lceil |\bm { \mathcal M}_{b-1}| / \Theta \bigr\rceil \leq \Bigl\lceil 2^{-k (H (U|S) - \epsilon)} \, |\bm { \mathcal M}_{b-1}| \Bigr\rceil \leq 2^{- k (H (U|S) - \epsilon) + 1} \, |\bm { \mathcal M}_{b-1}|. \label{eq:noMsgPerBl}
\end{IEEEeqnarray}
From \eqref{eq:noSeqU}, \eqref{eq:noMsgPerBl}, and the fact that
\begin{IEEEeqnarray}{l}
H (U|S) - H (U|Y) = I (U;Y) - I (U;S)
\end{IEEEeqnarray}
we obtain \eqref{bl:noMessMbGivenOutputs}.

Since $H (U|S) \leq \log |\setU|$ and $\epsilon > 0$, it follows from \eqref{bl:noMessMbGivenOutputs} that
\begin{subequations}\label{bl:noMessMbGivenOutputs2}
\begin{IEEEeqnarray}{l}
| \bm { \mathcal M}_b | \leq \left( \max_{\substack{P_{Y|U,X,S} \colon \\ P_{Y|U = u, X,S} \in \mathscr P (W), \,\, \forall \, u \in \setU}} 2^{ - k ( I (U;Y) - I (U;S) - ( \epsilon + \beta_k ) ) } \right) \! |\bm { \mathcal M}_{b-1}|, \label{eq:noMessMbGivenOutputs2}
\end{IEEEeqnarray}
whenever
\begin{IEEEeqnarray}{l}
|\bm { \mathcal M}_{b-1}| \geq 2^{k \log |\setU|}, \label{eq:noMessMbCondMbMin12}
\end{IEEEeqnarray}
\end{subequations}
where the mutual informations are computed w.r.t.\ the joint PMF $P^{(b)}_{U,X,S} \times P_{Y|U,X,S}$, and where $\beta_k$ is defined in \eqref{eq:gammaKPfCapacity}.

From \eqref{bl:noMessMbGivenOutputs2}, which holds for every $b \in [1:B]$, we infer that we can choose $B$ to be the smallest integer for which
\begin{equation}
| \bm { \mathcal M}_B | \leq 2^{k \log |\setU|}. \label{eq:MBSm2PowkLogU}
\end{equation}

In Block~$(B + 1)$ we resolve the post-Block-$B$ survivor-set $\bm { \mathcal M}_B$, and we transmit the empirical types $P_{\vecs^{(1)}}, \ldots, P_{\vecs^{(B)}}$ of the state sequences pertaining to Blocks~1 through $B$. It follows from \eqref{eq:MBSm2PowkLogU} that, when $B$ is large, the number of bits that are needed to resolve $\bm { \mathcal M}_B$ is negligible compared to $B k$. Moreover, when $k$ is large, $B \log (1 + k) |\setS|$, which upper-bounds the number of bits needed to represent $P_{\vecs^{(1)}}, \ldots, P_{\vecs^{(B)}}$, is small compared to $B k$. If we thus choose $B$ and $k$ sufficiently large, then---compared to $B k$---the encoder will only need to transmit few bits error-free in Block~$(B + 1)$, and by Remark~\ref{re:chUsesPerBit} this can be achieved with the length of the last block negligible compared to $B k$.\\

We next describe and analyze our coding scheme in detail, beginning with Blocks~1 through $B$ and ending with the last block. Throughout, we assume that $C_{\textnormal{f},0}$ is positive, which (by Theorem~\ref{th:positive}) is equivalent to the assumption that \eqref{eq:positive} holds.

For Blocks~1 through $B$ we only provide the missing details. Fix positive integers $B, \, k$, some finite set $\setU$ of cardinality
\begin{equation}
|\setU| \leq |\setX|^{|\setS|}, \label{eq:pfCapacityBoundCardU}
\end{equation}
and some $\epsilon > 0$. Assume for now that the decoder knows the empirical types $\bigl\{ P_{\vecs^{(b)}} \bigr\}_{b \in [ 1 : B ]}$ of the state sequences $\bigl\{ \vecs^{(b)} \bigr\}_{b \in [ 1 : B ]}$: those will be conveyed to the decoder error-free in Block~$B + 1$. Let $\bm { \mathcal M}_0 \triangleq \setM$ be the set of possible messages, and for every $b \in [1:B]$ let $\bm { \mathcal M}_b$ be the post-Block-$b$ survivor-set, i.e., the (random) set of messages of positive posterior probability given the channel outputs $Y^{b k}$ and the empirical types $\bigl\{ P_{\vecs^{(b^\prime)}} \bigr\}_{b^\prime \in [ 1 : b ]}$. Thus, $\bm { \mathcal M}_b$ is the subset of $\bm { \mathcal M}_{b-1}$ comprising the messages in $\bm { \mathcal M}_{b-1}$ of positive posterior probability given the Block-$b$ outputs $\vecy^{(b)}$ and the empirical type $P_{\vecs^{(b)}}$ of the Block-$b$ state-sequence $\vecs^{(b)}$. We already described the Block-$b$ transmission for every $b \in [ 1 : B ]$; it only remains to show that we can find bins $$\setB_\ell \subseteq \setT^{(k)}_{P_U^{(b)}}, \quad \ell \in [ 1 : \Theta ]$$ such that \eqref{bl:binsBellConds} holds. This follows from the following lemma:

\begin{lemma}\label{le:covering}
Let $\setU$ and $\setS$ be finite sets. For every $\epsilon > 0$ we can find a positive integer $\eta_0 = \eta_0 \bigl( |\setU|, |\setS|, \epsilon \bigr)$ that will guarantee that, for every $k \geq \eta_0$ and every $k$-type $P_{U,S}$, there exist a partition $\{ \setB_\ell \}_{\ell \in [ 1 : \Theta ]}$ of the type class $\setT^{ (k) }_{P_U}$ with the property that
\begin{IEEEeqnarray}{l}
\forall \, ( \vecs, \ell ) \in \setT^{ (k) }_{P_S} \times [ 1 : \Theta ] \quad \exists \, \vecu \in \setB_\ell \textnormal{ s.t.\ } ( \vecu, \vecs ) \in \setT^{ (k) }_{ P_{U,S} }, \label{eq:leCoveringCond}
\end{IEEEeqnarray}
where $\Theta = \bigl\lceil 2^{k (H (U|S) - \epsilon)} \bigr\rceil$ with $H (U|S)$ being computed w.r.t.\ the joint PMF $P_{U,S}$.
\end{lemma}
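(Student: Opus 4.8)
The plan is to prove Lemma~\ref{le:covering} by a random-binning (covering-code) argument. First I would dispose of the degenerate case: if $H(U|S) \le \epsilon$, then $\Theta = 1$, the only bin is $\setT^{(k)}_{P_U}$, and \eqref{eq:leCoveringCond} holds trivially because $P_{U,S}$ being a $k$-type forces $\setT^{(k)}_{P_{U|S}}(\vecs) \neq \emptyset$ for every $\vecs \in \setT^{(k)}_{P_S}$. So I may assume $H(U|S) > \epsilon$. Then I would construct the partition at random: assign to every $\vecu \in \setT^{(k)}_{P_U}$ an independent, uniform label in $[1:\Theta]$, and let $\setB_\ell$ collect the $\vecu$'s carrying label $\ell$. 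This is automatically a partition of $\setT^{(k)}_{P_U}$ into $\Theta$ labeled parts, and it will remain to show that with positive probability every part covers $\setT^{(k)}_{P_S}$ in the sense of \eqref{eq:leCoveringCond}.

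Next I would estimate, for a fixed pair $(\vecs,\ell) \in \setT^{(k)}_{P_S} \times [1:\Theta]$, the probability of the ``bad'' event $E_{\vecs,\ell}$ that no $\vecu \in \setB_\ell$ satisfies $(\vecu,\vecs) \in \setT^{(k)}_{P_{U,S}}$. Since the $|\setT^{(k)}_{P_{U|S}}(\vecs)|$ tuples that are jointly typical with $\vecs$ receive independent labels, $\distof{E_{\vecs,\ell}} = (1 - 1/\Theta)^{|\setT^{(k)}_{P_{U|S}}(\vecs)|} \le \exp\bigl(-|\setT^{(k)}_{P_{U|S}}(\vecs)|/\Theta\bigr)$. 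I would then plug in the standard type bounds $|\setT^{(k)}_{P_{U|S}}(\vecs)| \ge (k+1)^{-|\setU|\,|\setS|} 2^{kH(U|S)}$ \cite[Lemma~2.5]{csiszarkoerner11} and $\Theta \le 2^{k(H(U|S)-\epsilon)} + 1 \le 2 \cdot 2^{k(H(U|S)-\epsilon)}$ to obtain $\distof{E_{\vecs,\ell}} \le \exp\bigl(-2^{\,k\epsilon - |\setU|\,|\setS|\log(k+1) - 1}\bigr)$, a doubly-exponentially small quantity.

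Then I would conclude with a union bound over all pairs $(\vecs,\ell)$, of which there are at most $2^{kH(S)}\,\Theta \le 2^{k\log|\setS|}(2^{k\log|\setU|}+1) \le 2^{k(\log|\setS| + \log|\setU|)+1}$. This bounds $\distof{\bigcup_{\vecs,\ell} E_{\vecs,\ell}}$ by the product of a factor growing like $2^{O(k)}$ and a factor decaying like $\exp(-2^{\Omega(k)})$; since $k\epsilon - |\setU|\,|\setS|\log(k+1) - 1 \to \infty$, this product drops below $1$ once $k$ exceeds some threshold $\eta_0$ that depends only on $|\setU|$, $|\setS|$, and $\epsilon$ (crucially, none of the estimates involve the particular type $P_{U,S}$, and the degenerate case was handled uniformly). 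For every such $k$ there is a labeling in which all the $E_{\vecs,\ell}$ fail, and the partition it induces satisfies \eqref{eq:leCoveringCond} -- in particular each bin is then nonempty, since $\setT^{(k)}_{P_S} \neq \emptyset$.

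I do not expect a genuine obstacle: this is the textbook covering-lemma argument. The only point requiring care is the balance of scales -- within a bin the expected number of tuples jointly typical with a given $\vecs$ is about $2^{k\epsilon}$ divided by a polynomial in $k$, so it diverges and yields concentration, and one must check that the resulting $\exp(-2^{\Omega(k)})$ decay dominates the merely exponential $2^{O(k)}$ union-bound factor, which it does precisely because of the slack $\epsilon$ in the exponent defining $\Theta$. Beyond that, the only housekeeping is the $\Theta = 1$ case and the observation that $P_{U,S}$ being a $k$-type keeps all the relevant (conditional) type classes nonempty.
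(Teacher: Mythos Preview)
Your proposal is correct and follows essentially the same approach as the paper's proof: both dispose of the degenerate case $H(U|S)\le\epsilon$ first, then use random binning (independent uniform labels) together with the bound $(1-1/\Theta)^{|\setT^{(k)}_{P_{U|S}}(\vecs)|}\le\exp(-|\setT^{(k)}_{P_{U|S}}(\vecs)|/\Theta)$, the standard conditional-type-class lower bound, and a union bound over at most $|\setS|^k\cdot|\setU|^k$ pairs $(\vecs,\ell)$ to conclude that the failure probability is below~$1$ once $k$ exceeds a threshold depending only on $|\setU|$, $|\setS|$, and $\epsilon$.
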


\begin{proof}
See Appendix~\ref{sec:leCovering}.
\end{proof}

By Lemma~\ref{le:covering} and \eqref{eq:pfCapacityBoundCardU} we can find a positive integer $\eta_0 = \eta_0 \bigl( |\setX|, |\setS|, \epsilon \bigr)$ that guarantees that, for every $k \geq \eta_0$ and $k$-type $P_U^{(b)}$, there exist bins $$\setB_\ell \subseteq \setT^{(k)}_{P_U^{(b)}}, \quad \ell \in [ 1 : \Theta ]$$ satisfying \eqref{bl:binsBellConds}.

Henceforth, assume that $k \geq \eta_0$ and that the bins are as above. We next conclude the analysis of Blocks~1 through $B$ by showing that each of these blocks can reduce the survivor set by at least a factor of nearly $$\min_{P_S} \max_{P_{U,X|S}} \min_{ \substack{ P_{Y|U,X,S} \colon \\ P_{Y|U = u, X,S} \in \mathscr P (W) , \,\, \forall \, u \in \setU}} 2^{ k ( I (U;Y) - I (U;S) ) },$$ where the mutual informations are computed w.r.t.\ the joint PMF $P_{S} \times P_{U,X|S} \times P_{Y|U,X,S}$. To that end recall that if \eqref{eq:noMessMbCondMbMin12} holds, then $|\bm { \mathcal M}_b|$ can be upper-bounded in terms of $|\bm { \mathcal M}_{b-1}|$ using \eqref{eq:noMessMbGivenOutputs2}, where the mutual informations are computed w.r.t.\ the joint PMF $P^{(b)}_{U,X,S} \times P_{Y|U,X,S}$, and where $\beta_k$ is defined in \eqref{eq:gammaKPfCapacity}. Since we can choose any $k$-type $P_{U,X,S}^{(b)}$ whose $\setS$-marginal $P_S^{(b)}$ is $P_{\vecs^{(b)}}$, we can choose $P_{U,X,S}^{(b)} = P_{\vecs^{(b)}} \times P^{(b)}_{U,X|S}$, where $P^{(b)}_{U,X|S}$ is the conditional $k$-type that---among all conditional $k$-types---maximizes
\begin{IEEEeqnarray}{l}
\min_{\substack{P_{Y|U,X,S} \colon \\ P_{Y|U = u, X,S} \in \mathscr P (W), \,\, \forall \, u \in \setU}} I (U;Y) - I (U;S),
\end{IEEEeqnarray}
where the mutual informations are computed w.r.t.\ the joint PMF $P_{\vecs}^{(b)} \times P^{(b)}_{U,X|S} \times P_{Y|U,X,S}$. Every conditional PMF can be approximated in the total variation distance by a conditional $k$-type when $k$ is sufficiently large; and, because entropy and mutual information are continuous in this distance \cite[Lemma~2.7]{csiszarkoerner11}, it follows that---for the above choice of the conditional $k$-type and some $\gamma_k = \gamma_k \bigl( |\setU|, |\setX|, |\setS|, |\setY| \bigr)$, which converges to zero as $k$ tends to infinity---\eqref{bl:noMessMbGivenOutputs2} implies that when $| \bm { \mathcal M}_{b-1} | \geq 2^{ k \log |\setU| }$
\begin{IEEEeqnarray}{l}
| \bm { \mathcal M}_b | \leq \left( \max_{P_S} \min_{P_{U,X|S}} \max_{ \substack{P_{Y|U,X,S} \colon \\ P_{Y|U = u, X,S} \in \mathscr P (W), \,\, \forall \, u \in \setU}} 2^{ - k ( I (U;Y) - I (U;S) - \epsilon - \gamma_k ) } \right) \! | \bm { \mathcal M}_{b-1} |, \label{eq:cardBSetMSuffLarge}
\end{IEEEeqnarray}
where the mutual informations are computed w.r.t.\ the joint PMF $P_S \times P_{U,X|S} \times P_{Y|U,X,S}$. Because our scheme works for any $\epsilon > 0$, it follows that for every $\epsilon > 0$ and positive integer $k \geq \eta_0 \bigl( |\setX|, |\setS|, \epsilon \bigr)$ each of Blocks~1 through $B$ is guaranteed to reduce the survivor set by a factor of at least
\begin{IEEEeqnarray}{l}
\min_{P_S} \max_{P_{U,X|S}} \min_{ \substack{ P_{Y|U,X,S} \colon \\ P_{Y|U = u, X,S} \in \mathscr P (W) , \,\, \forall \, u \in \setU}} 2^{ k ( I (U;Y) - I (U;S) - \delta (\epsilon, k) ) }, \label{eq:cardBSetMSuffLarge2}
\end{IEEEeqnarray}
until $|\bm { \mathcal M}_B|$ is smaller than $2^{k \log |\setU|}$. Here the mutual informations are computed w.r.t.\ the joint PMF $P_{S} \times P_{U,X|S} \times P_{Y|U,X,S}$, and
\begin{equation}
\delta (\epsilon, k) = \epsilon + \gamma_k \label{eq:defDeltaOfEpsK}
\end{equation}
and hence converges to zero as $\epsilon$ tends to zero and $k$ to infinity.

Since $C_{\textnormal{f},0}$ is positive, so is the RHS of \eqref{eq:capacity}; and, because $\delta (\epsilon, k)$ converges to zero as $\epsilon \downarrow 0$ and $k \rightarrow \infty$, it follows that we can choose $\epsilon$ sufficiently small and $B$ and $k$ sufficiently large so that
\begin{subequations}\label{bl:choiceBKEpsForLastBlock}
\begin{IEEEeqnarray}{l}
k \geq \eta_0 \bigl( |\setX|, |\setS|, \epsilon \bigr) \label{eq:choiceBKEpsForLastBlockK}
\end{IEEEeqnarray}
and
\begin{IEEEeqnarray}{l}
\left( \max_{P_S} \min_{P_{U,X|S}} \max_{ \substack{ P_{Y|U,X,S} \colon \\ P_{Y|U = u, X,S} \in \mathscr P (W) , \,\, \forall \, u \in \setU}} 2^{ - B k ( I (U;Y) - I (U;S) - \delta (\epsilon, k) ) } \right) \! |\setM| \leq 2^{k \log |\setU|}. \label{eq:choiceBKEpsForLastBlock}
\end{IEEEeqnarray}
\end{subequations}
This guarantees that
\begin{equation}
|\bm { \mathcal M}_B| \leq 2^{k \log |\setU|}, \label{eq:setMBSmUPowK}
\end{equation}
because each block reduces the survivor set by the factor in \eqref{eq:cardBSetMSuffLarge2} until $|\bm { \mathcal M}_B|$ is smaller than $2^{k \log |\setU|}$.

We now deal with Block~$B + 1$. In Block~$(B + 1)$ we resolve the post-Block-$B$ survivor-set $\bm { \mathcal M}_B$, and we transmit the empirical types $P_{\vecs^{(1)}}, \ldots, P_{\vecs^{(B)}}$ of the state sequences pertaining to Blocks~1 through $B$. By \eqref{eq:setMBSmUPowK} the resolution of $\bm { \mathcal M}_B$ requires at most $k \log |\setU|$ bits. And since the empirical type of each $\vecs^{(b)}$ can take on at most $(1 + k)^{|\setS|}$ values, we need at most $B \log (1 + k) \, |\setS|$ bits to describe $P_{\vecs^{(1)}}, \ldots, P_{\vecs^{(B)}}$. In the last block we thus need to transmit at most
\begin{equation}
\bigl\lceil k \log |\setU| + B \log (1 + k) \, |\setS| \bigr\rceil
\end{equation}
bits error-free. Remark~\ref{re:chUsesPerBit} and the assumption that $C_{\textnormal{f},0}$ is positive guarantee that this can be achieved by choosing the length of the last block to be
\begin{equation}
\bigl\lceil k \log |\setU| + B \log (1 + k) \, |\setS| \bigr\rceil n_{\textnormal{bit}}, \label{eq:chUsesLastBlock}
\end{equation}
where $n_{\textnormal{bit}} = n_{\textnormal{bit}} \bigl( |\setS|, |\setY| \bigr)$.\\

We are now ready to join the dots and conclude that the coding scheme asymptotically achieves any rate smaller than the RHS of \eqref{eq:capacity}. More precisely, we will show that, for every rate $R$ smaller than the RHS of \eqref{eq:capacity} and every sufficiently-large blocklength~$n$, our coding scheme can convey $n R$ bits error-free in $n$ channel uses. It follows from \eqref{bl:choiceBKEpsForLastBlock} and \eqref{eq:chUsesLastBlock} that if the positive integers $n, \, B, \, k$ and $\epsilon > 0$ are such that
\begin{subequations}\label{bl:choiceBKEpsForLastBlock2}
\begin{IEEEeqnarray}{l}
k \geq \eta_0 \bigl( |\setX|, |\setS|, \epsilon \bigr) \label{eq:choiceBKEpsForLastBlock2K}
\end{IEEEeqnarray}
and
\begin{IEEEeqnarray}{l}
n R \leq B k \left( \min_{P_S} \max_{P_{U,X|S}} \min_{ \substack{P_{Y|U,X,S} \colon \\ P_{Y|U = u, X,S} \in \mathscr P (W), \,\, \forall \, u \in \setU}} I (U;Y) - I (S;Y) - \delta (\epsilon, k ) \right) \!\!, \label{eq:choiceBKEpsForLastBlock2}
\end{IEEEeqnarray}
\end{subequations}
then our coding scheme can convey $n R$ bits error-free in
\begin{IEEEeqnarray}{l}
B k + \bigl\lceil k \log |\setU| + B \log (1 + k) \, |\setS| \bigr\rceil n_{\textnormal{bit}} \label{eq:chUsesBlocks1ThroughBPl1}
\end{IEEEeqnarray}
channel uses. It thus remains to exhibit positive integers $B, \, k$ and some $\epsilon > 0$ such that, for every sufficiently-large blocklength~$n$, \eqref{bl:choiceBKEpsForLastBlock2} holds and
\begin{IEEEeqnarray}{l}
B k + \bigl\lceil k \log |\setU| + B \log (1 + k) \, |\setS| \bigr\rceil n_{\textnormal{bit}} \leq n. \label{eq:chUsesBlocks1ThroughBPl1SmN}
\end{IEEEeqnarray}
As we argue next, when $n$ is sufficiently large we can choose
\begin{subequations}\label{bl:defBk}
\begin{IEEEeqnarray}{rCl}
B & = & \lfloor \sqrt n \rfloor - \bigl\lceil \log |\setU| + \log (1 + \sqrt n) \, |\setS| \bigr\rceil n_{\textnormal{bit}}, \\
k & = & \lfloor \sqrt n \rfloor, \label{eq:defBkk}
\end{IEEEeqnarray}
\end{subequations}
and we can choose any $\epsilon > 0$ for which
\begin{IEEEeqnarray}{l}
R + \epsilon < \min_{P_S} \max_{P_{U,X|S}} \min_{ \substack{P_{Y|U,X,S} \colon \\ P_{Y|U = u, X,S} \in \mathscr P (W), \,\, \forall \, u \in \setU}} I (U;Y) - I (S;Y). \label{eq:defEps}
\end{IEEEeqnarray}
Note that, whenever $n$ is sufficiently large, $B$ is positive and \eqref{eq:chUsesBlocks1ThroughBPl1SmN} is satisfied. To see that also \eqref{bl:choiceBKEpsForLastBlock2} holds whenever $n$ is sufficiently large, we first observe from \eqref{eq:defBkk} that $k$ tends to infinity as $n$ tends to infinity. This implies that \eqref{eq:choiceBKEpsForLastBlock2K} holds whenever $n$ is sufficiently large, and that $\delta (\epsilon, k)$ (which is defined in \eqref{eq:defDeltaOfEpsK}, where $\gamma_k = \gamma_k \bigl( |\setU|, |\setX|, |\setS|, |\setY| \bigr)$ converges to zero as $k$ tends to infinity) converges to $\epsilon$ as $n$ tends to infinity. We next observe that \eqref{bl:defBk} implies that $B k / n$ converges to one as $n$ tends to infinity. This, combined with the fact that $\delta (\epsilon, k)$ converges to $\epsilon$ as $n$ tends to infinity and with \eqref{eq:defEps}, implies that \eqref{eq:choiceBKEpsForLastBlock2} holds whenever $n$ is sufficiently large.

\end{proof}

We next prove the converse part of Theorem~\ref{th:capacity}.

\begin{proof}[Converse Part]
Fix a finite set $\setM$, a blocklength~$n$, and an $(n, \setM)$ zero-error code with $n$ encoding mappings
\begin{equation}
f_i \colon \setM \times \setS^n \times \setY^{i-1} \rightarrow \setX, \quad i \in [1:n] \label{eq:convCapEnc}
\end{equation}
and $|\setM|$ disjoint decoding sets $\setD_m \subseteq \setY^n, \,\, m \in \setM$. We will show that, for some chance variable $U$ of finite support $\setU$, the rate $\frac{1}{n} \log |\setM|$ of the code is upper-bounded by the RHS of \eqref{eq:capacity}.

Draw $M$ uniformly over $\setM$, and denote its distribution $P_M$. Since the code is a zero-error code,
\begin{equation}
\distof {Y^n \in \setD_M} = 1, \label{eq:convZeroDecErr}
\end{equation}
where $\dist$ is the distribution of $(M,S^n,X^n,Y^n)$ induced by $P_M$, the state distribution $Q$, the encoding mappings \eqref{eq:convCapEnc}, and the channel law $\channel y {x,s}$, so for every $(m,\vecs,\vecx,\vecy) \in \setM \times \setS^n \times \setX^n \times \setY^n$
\begin{IEEEeqnarray}{l}
\bigdistof {(M,S^n,X^n,Y^n)=(m,\vecs,\vecx,\vecy)} \nonumber \\
\quad = P_M (m) \, Q^n (\vecs) \prod^n_{i = 1} \Bigl( P_{X_i|M,S^n,Y^{i-1}} (x_i|m,\vecs,y^{i-1}) \, \channel {y_i}{x_i,s_i} \Bigr), \label{eq:capacityConvDist}
\end{IEEEeqnarray}
where
\begin{IEEEeqnarray}{l}
P_{X_i|M,S^n,Y^{i-1}} (x_i|m,\vecs,y^{i-1}) = \begin{cases} 1 & \textnormal{if } x_i = f_i (m,\vecs,y^{i-1}), \\ 0 & \textnormal{otherwise}. \end{cases} \label{eq:convCondPMFXDetEncMapp}
\end{IEEEeqnarray}
Fix any PMF $\tilde P_S$ on $\setS$ and any collection of $n$ conditional PMFs $\bigl\{ \tilde P_{Y_i|M,Y^{i-1},S^n_{i+1},X_i,S_i} \bigr\}_{i \in [1:n]}$ that satisfy
\begin{IEEEeqnarray}{l}
\tilde P_{Y_i|M,Y^{i-1},S^n_{i+1},X_i,S_i} (\cdot|m,y^{i-1},s^n_{i+1},x_i,s_i) \ll \channel {\cdot}{x_i,s_i}, \nonumber \\
\quad \forall \, ( m, y^{i-1}, s^n_{i+1}, x_i, s_i ) \in \setM \times \setY^{i-1} \times \setS^{n - i} \times \setX \times \setS. \label{eq:convAbsContTrProb}
\end{IEEEeqnarray}
These PMFs induce the PMF on $\setM \times \setS^n \times \setX^n \times \setY^n$
\begin{IEEEeqnarray}{l}
\tilde P_{M, S^n, X^n, Y^n} = P_M \times \tilde P_S^n \times \prod^n_{i = 1} \bigl( P_{X_i|M,S^n,Y^{i-1}} \times \tilde P_{Y_i|M,Y^{i-1},S^n_{i + 1},X_i,S_i} \bigr). \label{eq:convAbsContPMF}
\end{IEEEeqnarray}
It follows from \eqref{eq:assQAssignsPosProbs} and \eqref{eq:convAbsContTrProb} that $\tilde P_{M, S^n, X^n, Y^n} \ll \dist$ and consequently that \eqref{eq:convZeroDecErr} implies
\begin{equation}
\tilde P_{M, S^n, X^n, Y^n} [Y^n \in \setD_M] = 1. \label{eq:convZeroDecErrImp}
\end{equation}

We upper-bound $\frac{1}{n} \log |\setM|$ by carrying out the following calculation as in \cite[Section~7.6]{gamalkim11} but under $\tilde P_{M, S^n, X^n, Y^n}$ of \eqref{eq:convAbsContPMF}:
\begin{IEEEeqnarray}{l}
\frac{1}{n} \log |\setM| \nonumber \\
\quad \stackrel{(a)}= \frac{1}{n} H (M) \\
\quad \stackrel{(b)}= \frac{1}{n} \Bigl[ I ( M ; Y^n ) - I ( M ; S^n ) \Bigr] \\
\quad \stackrel{(c)}= \frac{1}{n} \sum^n_{i = 1} \Bigl[ I ( M ; Y_i | Y^{i-1} ) - I ( M ; S_i | S^n_{i + 1} ) \Bigr] \\
\quad \stackrel{(d)}= \frac{1}{n} \sum^n_{i = 1} \Bigl[ I ( M, Y^{i-1}, S^n_{i + 1} ; Y_i ) - I ( Y^{i-1} ; Y_i ) - I ( S^n_{i + 1} ; Y_i | M, Y^{i-1} ) \nonumber \\
\qquad - I ( M, Y^{i-1}, S^n_{i + 1} ; S_i ) + I ( S^n_{i + 1} ; S_i ) + I ( Y^{i-1} ; S_i | M, S^n_{i + 1} ) \Bigr] \\
\quad \stackrel{(e)}\leq \frac{1}{n} \sum^n_{i = 1} \Bigl[ I ( M, Y^{i-1}, S^n_{i + 1} ; Y_i ) - I ( M, Y^{i-1}, S^n_{i + 1} ; S_i ) \Bigr], \label{eq:convFixedDists}
\end{IEEEeqnarray}
where $(a)$ holds because under $\tilde P_{M, S^n, X^n, Y^n}$ $M$ is uniform over $\setM$; $(b)$ holds by \eqref{eq:convZeroDecErrImp} and because under $\tilde P_{M,S^n,X^n,Y^n}$ $M$ is independent of $S^n$; $(c)$ and $(d)$ follow from the chain rule; and $(e)$ follows from Csisz\'{a}r's sum-identity, the nonnegativity of mutual information, and the independence of $S_i$ and $S^n_{i + 1}$ under $\tilde P_{M, S^n, X^n, Y^n}$.

For every $i \in [1:n]$ define the chance variable
\begin{IEEEeqnarray}{l}
U_i = ( M, Y^{i-1}, S^n_{i + 1} ). \label{eq:convDefRVUi}
\end{IEEEeqnarray}
From \eqref{eq:convFixedDists} it then follows that every choice of $\tilde P_S$ and $\bigl\{ \tilde P_{Y_i|M,Y^{i-1},S^n_{i+1},X_i,S_i} \bigr\}_{i \in [1:n]}$ satisfying \eqref{eq:convAbsContTrProb} gives rise to an upper bound
\begin{IEEEeqnarray}{l}
\frac{1}{n} \log |\setM| \leq \frac{1}{n} \sum^n_{i = 1} \Bigl[ I ( U_i ; Y_i ) - I ( U_i ; S_i ) \Bigr], \label{eq:convFixedDists2}
\end{IEEEeqnarray}
where the mutual informations in the $i$-th summand are computed w.r.t.\ the joint PMF $\tilde P_{U_i,X_i,S_i,Y_i}$ induced by $\tilde P_{M, S^n, X^n, Y^n}$.

We will conclude the proof by exhibiting a PMF $\tilde P_S$ and a collection of conditional PMFs $\bigl\{ \tilde P_{Y_i|M,Y^{i-1},S^n_{i+1},X_i,S_i} \bigr\}_{i \in [1:n]}$ satisfying \eqref{eq:convAbsContTrProb} for which each summand on the RHS of \eqref{eq:convFixedDists2} is upper-bounded by the RHS of \eqref{eq:capacity}.

We begin with the choice of $\bigl\{ \tilde P_{Y_i|M,Y^{i-1},S^n_{i+1},X_i,S_i} \bigr\}_{i \in [1:n]}$. To this end note from \eqref{eq:convDefRVUi} the one-to-one correspondence between $\tilde P_{Y_i|M,Y^{i-1},S^n_{i+1},X_i,S_i}$ and $\tilde P_{Y_i|U_i,X_i,S_i}$:
\begin{IEEEeqnarray}{l}
\tilde P_{Y_i|M,Y^{i-1},S^n_{i + 1},X_i,S_i} ( y_i | m, y^{i-1}, s^n_{i + 1}, x_i, s_i ) = \tilde P_{Y_i|U_i,X_i,S_i} \bigl( y_i \bigl| (m, y^{i-1}, s^n_{i + 1}), x_i, s_i \bigr), \nonumber \\
\quad \forall \, ( m, y^{i-1}, s^n_{i+1}, x_i, s_i ) \in \setM \times \setY^{i-1} \times \setS^{n - i} \times \setX \times \setS. \label{eq:convDistYCondUXS}
\end{IEEEeqnarray}
This implies that choosing a conditional PMF $\tilde P_{Y_i|M,Y^{i-1},S^n_{i+1},X_i,S_i}$ that satisfies \eqref{eq:convAbsContTrProb} is tantamount to choosing a conditional PMF $\tilde P_{Y_i|U_i,X_i,S_i}$ that satisfies
\begin{IEEEeqnarray}{l}
\tilde P_{Y_i|U_i = u_i,X_i,S_i} \in \mathscr P (W), \,\, \forall \, u_i \in \setU_i, \label{eq:convDistYCondUXSAbsContTrProb}
\end{IEEEeqnarray}
and consequently choosing a collection of conditional PMFs $\bigl\{ \tilde P_{Y_i|M,Y^{i-1},S^n_{i+1},X_i,S_i} \bigr\}_{i \in [1:n]}$ that satisfy \eqref{eq:convAbsContTrProb} is tantamount to choosing a collection of conditional PMFs $\bigl\{ \tilde P_{Y_i|U_i,X_i,S_i} \bigr\}_{i \in [1:n]}$ that satisfy \eqref{eq:convDistYCondUXSAbsContTrProb}. We shall choose the latter collection, and we shall do so as follows.

We first choose $\tilde P_{Y_i|U_i,X_i,S_i}$ for $i = 1$, and we then repeatedly increment $i$ by one until it reaches $n$. Key to our choice is the observation, which will be justified shortly, that $\tilde P_{U_i,X_i,S_i}$ is determined by $\tilde P_S$ and $\bigl\{ \tilde P_{Y_j|U_j,X_j,S_j} \bigr\}_{j \in [1:i-1]}$. Our choice of $\tilde P_{Y_i|U_i,X_i,S_i}$ can thus depend not only on our choice of $\tilde P_S$ and our previous choices of $\bigl\{ \tilde P_{Y_j|U_j,X_j,S_j} \bigr\}_{j \in [1:i-1]}$ but also on $\tilde P_{U_i,X_i,S_i}$. This will allow us to choose $\tilde P_{Y_i|U_i,X_i,S_i}$ as one that---among all conditional PMFs satisfying \eqref{eq:convDistYCondUXSAbsContTrProb}---minimizes
\begin{equation}
I ( U_i ; Y_i ) - I ( U_i ; S_i ),
\end{equation}
where the mutual informations are computed w.r.t.\ the joint PMF $\tilde P_{U_i,X_i,S_i} \times \tilde P_{Y_i|U_i,X_i,S_i}$. Since \eqref{eq:convAbsContPMF} implies that
\begin{IEEEeqnarray}{l}
\tilde P_{S_i} = \tilde P_{S}, \quad i \in [1:n], \label{eq:convDistSi}
\end{IEEEeqnarray}
we will then find that, for our choice of $\bigl\{ \tilde P_{Y_i|U_i,X_i,S_i} \bigr\}_{i \in [1:n]}$,
\begin{IEEEeqnarray}{l}
I ( U_i ; Y_i ) - I ( U_i ; S_i ) \nonumber \\
\quad \leq \max_{\tilde P_{U_i,X_i|S_i}} \min_{\substack{\tilde P_{Y_i|U_i,X_i,S_i} \colon \\ \tilde P_{Y_i|U_i = u_i,X_i,S_i} \in \mathscr P (W), \,\, \forall \, u_i \in \setU_i }} I ( U_i ; Y_i ) - I ( U_i ; S_i ), \quad i \in [1:n], \label{eq:convChoiceCondPMFsGuarantee}
\end{IEEEeqnarray}
where the mutual informations are computed w.r.t.\ the joint PMF $\tilde P_{S_i} \times \tilde P_{U_i,X_i|S_i} \times \tilde P_{Y_i|U_i,X_i,S_i}$. The chosen conditional PMFs $\bigl\{ \tilde P_{Y_i|U_i,X_i,S_i} \bigr\}_{i \in [1:n]}$ satisfy \eqref{eq:convDistYCondUXSAbsContTrProb}, and hence \eqref{eq:convFixedDists2} and \eqref{eq:convChoiceCondPMFsGuarantee} will imply that
\begin{IEEEeqnarray}{l}
\frac{1}{n} \log |\setM| \nonumber \\
\quad \leq \frac{1}{n} \sum^n_{i = 1} \max_{\tilde P_{U_i,X_i|S_i}} \min_{\substack{\tilde P_{Y_i|U_i,X_i,S_i} \colon \\ \tilde P_{Y_i|U_i = u_i,X_i,S_i} \in \mathscr P (W), \,\, \forall \, u_i \in \setU_i }} I ( U_i ; Y_i ) - I ( U_i ; S_i ), \label{eq:convFixedDists3}
\end{IEEEeqnarray}
where the mutual informations in the $i$-th summand are computed w.r.t.\ $\tilde P_{S_i} \times \tilde P_{U_i,X_i|S_i} \times \tilde P_{Y_i|U_i,X_i,S_i}$.

We now prove that indeed $\tilde P_{U_i,X_i,S_i}$ is determined by $\tilde P_S$ and $\bigl\{ \tilde P_{Y_j|U_j,X_j,S_j} \bigr\}_{j \in [1:i-1]}$. In fact, we will show that the latter two determine $\tilde P_{M,S^n,X^i,Y^{i-1}}$. The latter determines $\tilde P_{U_i,X_i,S_i}$, because, by \eqref{eq:convDefRVUi}, the tuple $(U_i,X_i,S_i)$ is determined by $(M,S^n,X^i,Y^{i-1})$ and consequently its PMF $\tilde P_{U_i,X_i,S_i}$ is determined by $\tilde P_{M,S^n,X^i,Y^{i-1}}$.

We use mathematical induction, but first we note that the PMF $\tilde P_{M, S^n, X^n, Y^n}$ is constructed inductively: by \eqref{eq:convAbsContPMF}
\begin{equation}
\tilde P_{M, S^n, X_1} = P_M \times \tilde P_S^n \times P_{X_1|M,S^n} \label{eq:convDistMSnX1}
\end{equation}
and, for every $\ell \in [2:n]$, $\tilde P_{M, S^n, X^\ell, Y^{\ell-1}}$ is constructed from $\tilde P_{M, S^n, X^{\ell-1}, Y^{\ell-2}}$ by
\begin{IEEEeqnarray}{l}
\tilde P_{M, S^n, X^\ell, Y^{\ell-1}} = \tilde P_{M, S^n, X^{\ell - 1}, Y^{\ell - 2}} \times \tilde P_{Y_{\ell-1}|M,Y^{\ell-2},S^n_\ell,X_{\ell-1},S_{\ell-1}} \times P_{X_\ell|M,S^n,Y^{\ell-1}}. \label{eq:convAbsContPMFXYFromiToiPl1}
\end{IEEEeqnarray}
In describing the proof we shall make the dependence on $P_M$, our choice of $\tilde P_S$, and $\bigl\{ P_{X_j|M,S^n,Y^{j-1}} \bigr\}_{j \in [1:n]}$, whose components are determined by the encoding mappings \eqref{eq:convCapEnc} via \eqref{eq:convCondPMFXDetEncMapp}, implicit.
\begin{enumerate}
\item Basis $\ell = 1$: It follows from \eqref{eq:convDistMSnX1} that $\tilde P_{M,S^n,X_1}$ is determined.
\item Inductive Step: Fix $\ell \in [2:i]$, and suppose that $\tilde P_{M,S^n,X^{\ell-1},Y^{\ell-2}}$ is determined by $\bigl\{ \tilde P_{Y_j|U_j,X_j,S_j} \bigr\}_{j \in [1:\ell - 2]}$. Since $\tilde P_{Y_{\ell-1}|M,Y^{\ell-2},S^n_\ell,X_{\ell-1},S_{\ell-1}}$ is by \eqref{eq:convDefRVUi} in a one-to-one correspondence with $\tilde P_{Y_{\ell-1}|U_{\ell-1},X_{\ell-1},S_{\ell-1}}$, this implies that $\tilde P_{M,S^n,X^{\ell-1},Y^{\ell-2}}$ and $\tilde P_{Y_{\ell-1}|M,Y^{\ell-2},S^n_\ell,X_{\ell-1},S_{\ell-1}}$ are determined by $\bigl\{ \tilde P_{Y_j|U_j,X_j,S_j} \bigr\}_{j \in [1:\ell - 1]}$. Consequently, it follows from \eqref{eq:convAbsContPMFXYFromiToiPl1} that $\tilde P_{M,S^n,X^\ell,Y^{\ell-1}}$ is determined by $\bigl\{ \tilde P_{Y_j|U_j,X_j,S_j} \bigr\}_{j \in [1:\ell - 1]}$.
\end{enumerate}
This proves that, for every $i \in [1:n]$, $\tilde P_{M,S^n,X^i,Y^{i-1}}$ and consequently also $\tilde P_{U_i,X_i,S_i}$ are determined by $\tilde P_S$ and $\bigl\{ \tilde P_{Y_j|U_j,X_j,S_j} \bigr\}_{j \in [1:i-1]}$, and hence \eqref{eq:convFixedDists3} holds.

Having established \eqref{eq:convFixedDists3}, we are now ready to conclude the proof. By the definition of $U_i$ \eqref{eq:convDefRVUi} the cardinality of the support $\setU_i$ of $U_i$ satisfies
\begin{equation}
|\setU_i| \leq |\setM| \max \bigl\{ |\setY|, |\setS| \bigr\}^n, \quad i \in [1:n]. \label{eq:convSuppRVUi}
\end{equation}
Consequently, \eqref{eq:convDistSi} and \eqref{eq:convFixedDists3} imply that
\begin{IEEEeqnarray}{l}
\frac{1}{n} \log |\setM| \leq \max_{\tilde P_{U,X|S}} \min_{\substack{\tilde P_{Y|U,X,S} \colon \\ \tilde P_{Y|U = u,X,S} \in \mathscr P (W), \,\, \forall \, u \in \setU}} I ( U ; Y ) - I ( U ; S ), \label{eq:convFixedDists4}
\end{IEEEeqnarray}
where $U$ is an auxiliary chance variable taking values in a finite set $\setU$, and the mutual informations are computed w.r.t.\ the joint PMF $\tilde P_S \times \tilde P_{U,X|S} \times \tilde P_{Y|U,X,S}$. Since we can choose any PMF $\tilde P_S$ on $\setS$, we can choose one that---among all PMFs on $\setS$---yields the tightest bound, i.e., minimizes
\begin{IEEEeqnarray}{l}
\max_{\tilde P_{U,X|S}} \min_{\substack{\tilde P_{Y|U,X,S} \colon \\ \tilde P_{Y|U = u,X,S} \in \mathscr P (W), \,\, \forall \, u \in \setU}} I ( U ; Y ) - I ( U ; S ),
\end{IEEEeqnarray}
where $U$ is an auxiliary chance variable taking values in a finite set $\setU$, and the mutual informations are computed w.r.t.\ the joint PMF $\tilde P_S \times \tilde P_{U,X|S} \times \tilde P_{Y|U,X,S}$. For this choice of $\tilde P_S$ \eqref{eq:convFixedDists4} implies that
\begin{IEEEeqnarray}{l}
\frac{1}{n} \log |\setM| \leq \min_{\tilde P_S} \max_{\tilde P_{U,X|S}} \min_{\substack{\tilde P_{Y|U,X,S} \colon \\ \tilde P_{Y|U = u,X,S} \in \mathscr P (W), \,\, \forall \, u \in \setU }} I ( U ; Y ) - I ( U ; S ), \label{eq:convRateUB}
\end{IEEEeqnarray}
where $U$ is an auxiliary chance variable taking values in a finite set $\setU$, and the mutual informations are computed w.r.t.\ the joint PMF $\tilde P_S \times \tilde P_{U,X|S} \times \tilde P_{Y|U,X,S}$.
\end{proof}

\subsection{A Proof of Theorem~\ref{th:zeroWithoutFBPosWithFB}}\label{sec:pfThZeroWithoutFBPosWithFB}

We use the following lemma to establish Theorem~\ref{th:zeroWithoutFBPosWithFB}:

\begin{lemma}[No Feedback]\label{le:suffZeroWithoutFB}
In the absence of feedback, a sufficient condition for the zero-error capacity of the SD-DMC $\channel y {x,s}$ with acausal SI to be zero is
\begin{IEEEeqnarray}{l}
\!\!\!\!\!\!\!\! \exists \,  s \in \setS \quad \forall \, x \in \setX \quad \exists \, s^\prime \in \setS \quad \forall \, x^\prime \in \setX \quad \exists \, y \in \setY \textnormal{ s.t.\ } \channel y {x,s} \, \channel y {x^\prime,s^\prime} > 0. \label{eq:suffCondCapWithoutFBZero}
\end{IEEEeqnarray}
A sufficient condition for the capacity in the absence of feedback to be positive is that for some $\kappa \in \bigl[ 2 : |\setY| \bigr]$ and $\lambda \in \bigl[ 2 : \kappa \, |\setX| \bigr]$ there exist channel inputs $$x (s,k), \quad (s,k) \in \setS \times [ 1 : \kappa ]$$ and $\lambda$ pairwise-disjoint nonempty subsets $\setY_1, \ldots, \setY_\ell \subset \setY$ such that the following two conditions hold:
\begin{subequations}\label{bl:suffCondCapWithoutFBPositive}
\begin{IEEEeqnarray}{l}
\forall \, (s,k) \in \setS \times [ 1 : \kappa ] \quad \exists \, \ell \in [ 1 : \lambda ] \textnormal{ s.t.\ } \bigchannel {\setY_\ell}{x (s,k), s} = 1 \label{eq:suffCondCapWithoutFBPositive1}
\end{IEEEeqnarray}
and
\begin{IEEEeqnarray}{l}
\forall \, \ell \in [ 1 : \lambda ] \quad \exists \, k^\prime \in [ 1 : \kappa ] \textnormal{ s.t.\ } \Bigl( \bigchannel {\setY_\ell}{x (s^\prime,k^\prime), s^\prime} = 0, \,\, \forall \, s^\prime \in \setS  \Bigr). \label{eq:suffCondCapWithoutFBPositive2}
\end{IEEEeqnarray}
\end{subequations}
\end{lemma}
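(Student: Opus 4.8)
The plan splits along the two halves of the lemma.

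\emph{Part~1: condition \eqref{eq:suffCondCapWithoutFBZero} forces the no-feedback zero-error capacity to vanish.} I would argue by contradiction that no $n$-block code can convey a single bit error-free. By the no-feedback form of Remark~\ref{re:detEncWlg} it suffices to rule out deterministic codes, so fix no-feedback encoding maps $f_i\colon\{0,1\}\times\setS^n\to\setX$ and disjoint decoding sets $\setD_0,\setD_1\subseteq\setY^n$. Let $s$ be the state promised by \eqref{eq:suffCondCapWithoutFBZero} and let $\vecs$ be the all-$s$ sequence; Message~$0$ under $\vecs$ then produces the fixed inputs $x_i=f_i(0,\vecs)$. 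I would build the state sequence $\vecs'=(s_1',\dots,s_n')$ for Message~$1$ coordinate by coordinate: for each $i$, applying the ``$\forall x\,\exists s'$'' clause of \eqref{eq:suffCondCapWithoutFBZero} to $x=x_i$ yields $s_i'\in\setS$ such that \emph{for every} $x'\in\setX$ there is a $y$ with $\channel{y}{x_i,s}\,\channel{y}{x',s_i'}>0$. The key point is that $\vecs'$ depends only on Message~$0$'s codeword, so it is a legitimate state sequence that can be fed into Message~$1$'s encoder, producing fixed inputs $x_i'=f_i(1,\vecs')$; the ``$\forall x'\,\exists y$'' part then supplies, for each $i$, an output $y_i$ with $\channel{y_i}{x_i,s}\,\channel{y_i}{x_i',s_i'}>0$. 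The resulting $\vecy$ has positive conditional probability both under $(M,S^n)=(0,\vecs)$ and under $(M,S^n)=(1,\vecs')$, hence it cannot lie in both $\setD_0$ and $\setD_1$, and the one it misses violates the zero-error requirement.

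\emph{Part~2: condition \eqref{bl:suffCondCapWithoutFBPositive} forces positive capacity.} The plan is to collapse $W$ to a deterministic, bin-valued SD-DMC and then invoke Theorem~\ref{th:positive}. By \eqref{eq:suffCondCapWithoutFBPositive1} and the disjointness of $\setY_1,\dots,\setY_\lambda$, every pair $(s,k)\in\setS\times[1:\kappa]$ determines a unique bin index $\ell(s,k)\in[1:\lambda]$ with $\bigchannel{\setY_{\ell(s,k)}}{x(s,k),s}=1$. Define $\widetilde W$ to be the SD-DMC with input alphabet $[1:\kappa]$, state alphabet $\setS$, output alphabet $[1:\lambda]$ and $\widetilde W(\ell\mid k,s)=\ind{\ell=\ell(s,k)}$. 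I would then use \eqref{eq:suffCondCapWithoutFBPositive2} to see that $B(s):=\{\ell(s,k)\colon k\in[1:\kappa]\}$ has at least two elements for every $s$: if $B(s)=\{\ell_0\}$, then \eqref{eq:suffCondCapWithoutFBPositive2} with $\ell=\ell_0$ would require a $k'$ with $\ell(s,k')\neq\ell_0$, contradicting $\ell(s,k')\in B(s)$. Since no $B(s)$ is a singleton, for any $s,s'\in\setS$ one may pick $k$ arbitrarily, set $v=\ell(s,k)$, and (because $B(s')\neq\{v\}$) pick $k'$ with $\ell(s',k')\neq v$; hence $\ell(s,k)\neq\ell(s',k')$, which is precisely condition \eqref{eq:positive} for $\widetilde W$. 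By Theorem~\ref{th:positive}, $\widetilde W$ has positive zero-error feedback capacity with acausal SI. Since $\widetilde W$ is deterministic and its encoder knows the whole state sequence in advance, it can compute every past output on its own, so feedback is superfluous and $\widetilde W$ also has positive zero-error capacity \emph{without} feedback. Finally I would lift any such code to $W$: transmit $x(s_i,k_i)$ in place of the $\widetilde W$-symbol $k_i$, and let the decoder first replace a received $\vecy$ by the sequence of bins containing its components and then apply the $\widetilde W$-decoder; by \eqref{eq:suffCondCapWithoutFBPositive1} this bin sequence equals the $\widetilde W$-output with probability one, so zero error and the rate are preserved.

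\emph{Where the difficulty lies.} Part~1 is essentially bookkeeping; the only thing to get right is the quantifier order of \eqref{eq:suffCondCapWithoutFBZero}, which is exactly what lets $\vecs'$ be fixed from Message~$0$'s inputs before Message~$1$'s inputs are determined, so no circular dependence arises. The substantive step is Part~2, and its main obstacle is finding the right reduction: a naive single-letter (memoryless) construction from \eqref{bl:suffCondCapWithoutFBPositive} fails — for instance, when the sets $B(s)$ form a triangle $\{1,2\},\{1,3\},\{2,3\}$, no two single-letter codewords have disjoint output sets — so one genuinely needs the block-length reduction and sequential refinement underlying Theorem~\ref{th:positive}. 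Passing to the deterministic bin channel $\widetilde W$ is what makes that machinery applicable (and feedback-free) and what turns the two-part hypothesis \eqref{bl:suffCondCapWithoutFBPositive} into the already-established condition \eqref{eq:positive}.
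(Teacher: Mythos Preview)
Your Part~1 matches the paper's argument essentially line for line: fix the special state $s$, let Message~$0$ face the all-$s$ sequence, read off its input string, use the $\forall x\,\exists s'$ clause coordinatewise to manufacture $\vecs'$, feed $\vecs'$ to Message~$1$, and then harvest the confusing output $\vecy$ from the final $\forall x'\,\exists y$ clause. Nothing to add.

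Your Part~2 is correct but takes a genuinely different route from the paper. The paper does \emph{not} reduce to an auxiliary channel; instead it rebuilds the two-phase scheme of Remark~\ref{re:chUsesPerBit} from scratch in a feedback-free form. In the paper's Phase~1 the set $\setS^{n_2}$ is partitioned into $\kappa$ (not $|\setY|$) cells, the encoder sends $x(s,k)$ where $k$ indexes the cell containing the Phase-2 state sequence, and---crucially---both parties know the bin index $\ell^\star$ without feedback because \eqref{eq:suffCondCapWithoutFBPositive1} lets the encoder compute it from $(s,k)$ while the decoder reads it off from the disjoint $\setY_\ell$'s. Condition~\eqref{eq:suffCondCapWithoutFBPositive2} then rules out at least one cell, yielding the recursion $|\bm{\mathcal S}_i|\le\frac{\kappa-1}{\kappa}|\bm{\mathcal S}_{i-1}|+1$; Phase~2 is reused verbatim from Remark~\ref{re:chUsesPerBit} since it was already feedback-free. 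This buys explicit bounds on $n_{\textnormal{bit}}$ in terms of $\kappa$, $|\setS|$ and $|\setY|$. Your reduction to the deterministic bin-channel $\widetilde W$ is more economical: the observation $|B(s)|\ge 2$ is exactly the $\{0,1\}$-valued specialization noted in Remark~\ref{re:lemmaSuffZeroWithoutFB}, and your pairing of Theorem~\ref{th:positive} with the feedback-is-superfluous argument for deterministic channels (stated in the paper around Example~\ref{ex:YFunXS} and Section~\ref{sec:connAVC}) closes the loop cleanly. The trade-off is that you get positivity as a black box rather than the paper's quantitative $n_{\textnormal{bit}}$ estimate; since Lemma~\ref{le:suffZeroWithoutFB} only asserts positivity, that is no loss here.
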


\begin{proof}
We first prove that if \eqref{eq:suffCondCapWithoutFBZero} holds, then without feedback it is impossible to transmit a single bit error-free. Let the bit take values in the set $\setM = \{ 0,1 \}$, and fix a blocklength~$n$, an encoding mapping $f \colon \setM \times \setS^n \rightarrow \setX^n$, and two disjoint decoding sets $\setD_m \subseteq \setY^n, \,\, m \in \setM$. By \eqref{eq:suffCondCapWithoutFBZero} there exists some state $s^\star \in \setS$ for which
\begin{IEEEeqnarray}{l}
\forall \, x \in \setX \quad \exists \, s^\prime \in \setS \quad \forall \, x^\prime \in \setX \quad \exists \, y \in \setY \textnormal{ s.t.\ } \channel y {x,s^\star} \, \channel y {x^\prime,s^\prime} > 0. \label{eq:suffCondCapWithoutFBZero1}
\end{IEEEeqnarray}
Let $\vecs^\star \in \setS^n$ be the all-$s^\star$ state-sequence, so $s_i^\star = s^\star, \,\, i \in [1:n]$, and let $\vecx = f (0,\vecs^\star)$. Choosing $x$ in \eqref{eq:suffCondCapWithoutFBZero1} to be the $i$-th component $x_i$ of $f (0,\vecs^\star)$, it follows from \eqref{eq:suffCondCapWithoutFBZero1} that for every $i \in [1:n]$ there exists some $s^\prime \in \setS$, say $s^\prime (i)$, for which
\begin{IEEEeqnarray}{l}
\forall \, x^\prime \in \setX \quad \exists \, y \in \setY \textnormal{ s.t.\ } \channel y {x_i, s_i^\star} \, W \bigl( y \bigl| x^\prime, s^\prime (i) \bigr) > 0. \label{eq:suffCondCapWithoutFBZero2}
\end{IEEEeqnarray}
Let $\vecs^\prime \in \setS^n$ be the state sequence whose $i$-th component $s^\prime_i$ is $s^\prime (i), \,\, i \in [1:n]$, and let $\vecx^\prime = f (1,\vecs^\prime)$. By \eqref{eq:suffCondCapWithoutFBZero2}
\begin{IEEEeqnarray}{l}
\exists \, \vecy \in \setY^n \textnormal{ s.t.\ } \prod^n_{i = 1} \Bigl( \channel {y_i} {x_i, s_i^\star} \, \channel {y_i} {x^\prime_i,s^\prime_i} \Bigr) > 0.
\end{IEEEeqnarray}
This makes it impossible for the decoder to determine with certainty whether the transmitted bit is $0$ or $1$ even if it is told that the state sequence is $\vecs^\star$ or $\vecs^\prime$. This concludes the proof of the first part of the lemma.\\

It remains to prove that if for some $\kappa \in \bigl[ 2 : |\setY| \bigr]$ and $\lambda \in \bigl[ 2 : \kappa \, |\setX| \bigr]$ there exist channel inputs $\bigl\{ x (s,k) \bigr\}_{(s,k) \in \setS \times [ 1 : \kappa ]}$ and pairwise-disjoint output-sets $\{ \setY_\ell \}_{\ell \in [ 1 : \lambda ]}$ for which \eqref{bl:suffCondCapWithoutFBPositive} holds, then the no-feedback zero-error capacity of the SD-DMC $\channel y {x,s}$ with acausal SI is positive. The proof is similar to that of Remark~\ref{re:chUsesPerBit}. To make up for the missing feedback, we shall choose the inputs so that the encoder---while incognizant of $Y_i$---will know which of the subsets $\{ \setY_\ell \}_{\ell \in [ 1 : \lambda ]}$ contains $Y_i$. The decoder will, of course, know that too.

If there is only one state $s^\star$, i.e., $\setS = \{ s^\star \}$, then upon defining $x \triangleq x (s^\star,1)$ we obtain from \eqref{eq:suffCondCapWithoutFBPositive1} the existence of some $\ell \in [1:\lambda]$ for which
\begin{subequations}\label{bl:leSuffZeroWithoutFBOneState}
\begin{equation}
\channel {\setY_\ell}{x,s^\star} = 1.
\end{equation}
It then follows from \eqref{eq:suffCondCapWithoutFBPositive2} that there exists some $k^\prime \in [1:\kappa]$ with corresponding $x^\prime = x (s^\star,k^\prime)$ for which
\begin{equation}
\channel {\setY_\ell}{x^\prime,s^\star} = 0.
\end{equation}
\end{subequations}
From \eqref{bl:leSuffZeroWithoutFBOneState} we obtain that
\begin{equation}
\channel y {x,s^\star} \, \channel y {x^\prime,s^\star}  = 0, \,\, \forall \, y \in \setY,
\end{equation}
and by sending $x$ or $x^\prime$ we can transmit a bit error-free. We hence consider now $|\setS| \geq 2$.

To transmit a single bit $m \in \{ 0,1 \}$, we use two phases of $n_1$ and $n_2$ channel uses, where
\begin{equation}
n_{\textnormal{bit}} = n_1 + n_2. \label{eq:nbitWithoutFB}
\end{equation}
The goal of Phase~1 is to produce a random subset $\bm {\mathcal S}_{n_1} \subseteq \setS^{n_2}$ with the following three properties: 1) both encoder and decoder know $\bm {\mathcal S}_{n_1}$ before Phase~2 begins; 2) with probability one $\bm {\mathcal S}_{n_1}$ contains the Phase-2 state-sequence $S^{n_1 + n_2}_{n_1 + 1}$; and 3) the cardinality of $\bm {\mathcal S}_{n_1}$ is upper-bounded by
\begin{IEEEeqnarray}{l}
| \bm {\mathcal S}_{n_1} | \leq \biggl( \frac{\kappa - 1}{\kappa} \biggr)^{n_1} |\setS|^{n_2} + \kappa.
\end{IEEEeqnarray}
To that end we partition the set $\bm {\mathcal S}_0 = \setS^{n_2}$ into $\kappa$ different subsets whose size is between $\bigl \lfloor |\bm {\mathcal S}_0| / \kappa \bigr \rfloor$ and $\bigl \lceil |\bm {\mathcal S}_0| / \kappa \bigr \rceil$. We index the $\kappa$ subsets by the set $[ 1 : \kappa ]$ and reveal the result to the encoder and decoder. If, thanks to its acausal SI, the encoder knows that the Time-1 state $S_1$ is $s$ and that $S^{n_1 + n_2}_{n_1 + 1}$ is in the subset of $\bm {\mathcal S}_0$ indexed by $k$, then at Time~1 it transmits $x (s,k)$. By \eqref{eq:suffCondCapWithoutFBPositive1} there exists some $\ell^\star \in [1:\lambda]$ with corresponding subset $\setY_{\ell^\star}$ such that, with probability one, $Y_1$ is in $\setY_{\ell^\star}$. And since the subsets $\{ \setY_\ell \}_{\ell \in [1:\lambda]}$ are pairwise disjoint, the probability of $Y_1$ being in another subset is zero. The decoder can thus compute $\ell^\star$ from $Y_1$ by checking which subset contains $Y_1$. The encoder knows $\ell^\star$, because it knows the pair $(s,k)$. Based on $\setY_{\ell^\star}$ the encoder and decoder can determine all $k^\prime \in [ 1 : \kappa ]$ for which
\begin{IEEEeqnarray}{l}
\bigchannel {\setY_{\ell^\star}}{x (s^\prime,k^\prime), s^\prime} = 0, \,\, \forall \, s^\prime \in \setS. \label{eq:condSubsetOutWithoutFB}
\end{IEEEeqnarray}
(By \eqref{eq:suffCondCapWithoutFBPositive2} at least one such $k^\prime$ exists.) Because $Y_1 \in \setY_{\ell^\star}$ and by \eqref{eq:condSubsetOutWithoutFB}, the Phase-2 state-sequence cannot be contained in a subset of $\bm {\mathcal S}_0$ indexed by such a $k^\prime$, and hence it is in the $\bm {\mathcal S}_0$-complement of these subsets, which we denote $\bm {\mathcal S}_1$. Note that: 1) both encoder and decoder know $\bm {\mathcal S}_1$ after Channel-Use 1; 2) $\bm {\mathcal S}_1$ contains $S^{n_1 + n_2}_{n_1 + 1}$; and 3) the cardinality of $\bm {\mathcal S}_1$ is upper-bounded by
\begin{IEEEeqnarray}{l}
| \bm {\mathcal S}_1 | \leq | \bm {\mathcal S}_0 | - \biggl\lfloor \frac{| \bm {\mathcal S}_0 |}{\kappa} \biggr\rfloor \leq \frac{\kappa - 1}{\kappa} |\bm {\mathcal S}_0| + 1.
\end{IEEEeqnarray}

Phase~1 continues in the same fashion, and hence we obtain that, for every $i \in [ 1 : n_1 ]$, the first~$i$ channel uses produce a random subset $\bm {\mathcal S}_i$ of $\setS^{n_2}$ satisfying that: 1) both encoder and decoder know $\bm {\mathcal S}_i$ after Channel-Use i; 2) $\bm {\mathcal S}_i$ contains $S^{n_1 + n_2}_{n_1 + 1}$; and 3) the cardinality of $\bm {\mathcal S}_i$ is upper-bounded by
\begin{IEEEeqnarray}{l}
| \bm {\mathcal S}_i | \leq | \bm {\mathcal S}_{i-1} | - \biggl\lfloor \frac{| \bm {\mathcal S}_{i-1} |}{\kappa} \biggr\rfloor \leq \frac{\kappa - 1}{\kappa} |\bm {\mathcal S}_{i-1}| + 1.
\end{IEEEeqnarray}
As in the proof of Remark~\ref{re:chUsesPerBit}, this implies that Phase~1 produces a random subset $\bm {\mathcal S}_{n_1}$ of $\setS^{n_2}$ with the desired three properties.

Phase~2 in the proof of Remark~\ref{re:chUsesPerBit} does not use the feedback link, and hence we can use it also in the current setting without feedback. Consequently, we can argue essentially as in the proof of Remark~\ref{re:chUsesPerBit} but with \eqref{bl:explChoiceN1N2} replaced by
\begin{subequations}\label{bl:explChoiceN1N2WithoutFB}
\begin{IEEEeqnarray}{rCl}
n_1 & = & \biggl\lceil \frac{ 2 \, \kappa \log |\setS| - \log \kappa}{\log \kappa - \log ( \kappa - 1 )} \biggr\rceil, \\
n_2 & = & 2 \, \kappa, \\
n_{\textnormal{bit}} & = & \biggl\lceil \frac{ 2 \, \kappa \log |\setS| - \log \kappa}{\log \kappa - \log ( \kappa - 1 )} \biggr\rceil + 2 \, \kappa
\end{IEEEeqnarray}
\end{subequations}
that $n_{\textnormal{bit}}$ channel uses suffice for the error-free transmission of a single bit. This concludes the proof, because $\kappa$ is at most $|\setY|$ and hence it follows from \eqref{bl:explChoiceN1N2WithoutFB} that $n_{\textnormal{bit}}$ satisfies the upper bound
\begin{equation}
n_{\textnormal{bit}} \leq \biggl\lceil \frac{ 2 \, |\setY| \log |\setS| - \log |\setY|}{\log |\setY| - \log \bigl( |\setY| - 1 \bigr)} \biggr\rceil + 2 \, |\setY|.
\end{equation}
\end{proof}

Theorem~\ref{th:zeroWithoutFBPosWithFB} follows from Theorem~\ref{th:positive}, Lemma~\ref{le:suffZeroWithoutFB}, and the following example:

\begin{example}\label{ex:zeroWithoutPosWith}
Suppose $\setX = \{ 0, 1 \}$ and $\setS = \setY = \{ 1, 2, 3, 4, 5 \}$. For every $x \in \setX$ and $s \in \setS$ define $\setY_{x,s}$ according to Table~\ref{tb:exZeroWithoutPosWith}, and let $\channel y {x,s}$ be such that
\begin{IEEEeqnarray}{l}
\bigl\{ y \in \setY \colon \channel {y}{x,s} > 0 \bigr\} = \setY_{x,s}, \,\, \forall \, (x, s) \in \setX \times \setS.
\end{IEEEeqnarray}
Then, the SD-DMC $\channel y {x,s}$ satisfies both \eqref{eq:positive} and \eqref{eq:suffCondCapWithoutFBZero}.
\end{example}

\begin{table}[h]
\begin{center}
  \renewcommand{\arraystretch}{1.2}
  \begin{tabular}{ c c | c c c c c }
    \multicolumn{2}{c|}{\multirow{2}{*}{$\setY_{x,s}$}} & \multicolumn{5}{c}{$s$} \\
  	&& 1 & 2 & 3 & 4 & 5 \\
    \hline    
    \multirow{2}{*}{$x$} & 0 & \{ 2,3 \} & \{ 1,5 \} & \{ 1,2 \} & \{ 2,3 \} & \{ 1,2 \} \\
    & 1 & \{ 4,5 \} & \{ 3,4 \} & \{ 4,5 \} & \{ 1,5 \} &\{ 3,4 \}
  \end{tabular}
  \caption{Nonzero transitions of the SD-DMC in Example~\ref{ex:zeroWithoutPosWith}.}
  \label{tb:exZeroWithoutPosWith}
\end{center}
\end{table}

\begin{remark}\label{re:lemmaSuffZeroWithoutFB}
Lemma~\ref{le:suffZeroWithoutFB} does not fully characterize the SD-DMCs whose capacity is positive in the absence of feedback. For example the SD-DMC of Example~\ref{ex:zeroWithoutPosWith} but with state alphabet $\setS = \{ 1, 2, 4 \}$ satisfies neither the conditions of the lemma. However, when $\channel y {x,s}$ is $\{ 0,1 \}$-valued (cf.\ Example~\ref{ex:YFunXS}), Lemma~\ref{le:suffZeroWithoutFB} implies that the capacity is positive iff
\begin{IEEEeqnarray}{l}
\bigl| \bigl\{ y \in \setY \colon \, \exists \, x \in \setX \textnormal{ s.t.\ } \channel y {x,s} > 0 \bigr\} \bigr| \geq 2, \,\, \forall s \in \setS.
\end{IEEEeqnarray}
(To see this, choose the sets $\{ \setY_\ell \}$ in Lemma~\ref{le:suffZeroWithoutFB} to be the singletons containing the outputs $y \in \setY$ for which $\channel y {x,s} > 0$ holds for some $(x,s) \in \setX \times \setS$.)
\end{remark}

\section{Summary} \label{sec:conclusion}

We now know the zero-error feedback capacity of the state-dependent
channel in all three cases: when the state is revealed to the encoder
strictly-causally, causally, or acausally. In each case the capacity
result comprises two parts: a characterization of the channels for
which the capacity is positive, and a formula for the capacity when it
is.
\begin{itemize}
\item Revealing the state to the encoder stictly-causally does not
  increase capacity (Remark~\ref{re:strCaus}), and the problem reduces
  to the state-less channel, which was solved by Shannon
  \cite{shannon56}, with Ahlswede \cite{ahlswede73} later providing an
  alternative form and an alternative blocks-based coding scheme.
\item When the state is revealed to the encoder causally, the SI is utilized
  optimally by using Shannon strategies, and the zero-error feedback
  capacity is thus that of the state-less channel into which the
  state-dependent channel is transformed when the encoder uses Shannon
  strategies (Theorems~\ref{th:causPositive} and
  \ref{th:causCapacity}). 
\item For the case where the state is revealed to the encoder
  acausally, our positivity characterization
  (Theorem~\ref{th:positive}) is reminiscent of Shannon's, and our
  formula (Theorem~\ref{th:capacity}) is reminiscent of Ahlswede's.
\end{itemize}

The acausal case exhibits phenomena that are not observed in the
strictly-causal and causal cases: The zero-error feedback capacity can
be positive even if in the absence of feedback the zero-error capacity
is zero (Theorem~\ref{th:zeroWithoutFBPosWithFB}), and the error-free
transmission of a single bit may require more than one channel use
(Corollary~\ref{co:moreThanOneChUse}).

Our coding scheme for the acausal case builds on Ahlswede's
blocks-based scheme \cite{ahlswede73} and to a lesser degree on Shannon's
sequential approach \cite{shannon56}. In contrast to Shannon's
sequential scheme, in Ahlswede's scheme the encoder codes over blocks,
and it can therefore take advantage of the acausal SI in a more
natural way. Ahlswede's scheme also seems to be more natural in the
state-less case in the presence of input constraints: his expression
remains valid provided we replace the maximization over the input
distribution with a constrained maximization
(Corollary~\ref{co:ccInputsStateless}). This is not the case for
Shannon's expression (Remark~\ref{re:ccShannonSmallerAhlswede}).

For the acausal case we also established the zero-error feedback
capacity for a scenario where---in addition to the message---also the
state sequence must be recovered
(Theorem~\ref{th:stateAmplification}); for a scenario with an
average-cost constraint on the channel inputs (Theorem~\ref{th:ccInputs}); and
for a scenario with an average-cost constraint on prespecified
$l$-blocks of consecutive channel states
(Theorem~\ref{th:ccStates2Capacity}).

A recurring theme in our coding schemes is that, as of the beginning
of the transmission, the encoder attempts to convey not only the
message but also the state sequence governing the last block, a state
sequence of which it is cognizant because the entire state sequence is
revealed to it acausally. Once the ambiguity about the
last-block's state sequence and the message has been sufficiently
reduced, the last block is used to resolve it, or rather to decode the
message.
%

Another recurring theme in our coding schemes is that---to reduce the
decoder's ambiguity about the message and the last-block's state
sequence---each block uses pairwise disjoint bins that ``completely
cover'' the set of possible state sequences in the sense that all the state
sequences pertaining to the block can be accommodated.

A recurring theme in the converse parts is to select the ``worst
possible'' joint distribution of the message, state sequence, input
sequence, and output sequence. By ``possible'' we mean here that the
distribution is compatible with the encoding mappings and absolutely
continuous w.r.t.\ the distribution that is induced by the uniform
message distribution, the state distribution, the encoding mappings,
and the channel law. By ``worst'' we mean that the distribution
yields---among all ``possible'' distributions---the tightest bound.

A remaining open problem is to characterize the family of channels whose 
zero-error capacity with acausal SI is zero \emph{in the absence of 
feedback.}
%
We provided a sufficient condition (Lemma~\ref{le:suffZeroWithoutFB}),
which we then used to show that some members of this family have positive
zero-error capacity in the presence of feedback
(Theorem~\ref{th:zeroWithoutFBPosWithFB}). We also showed that some
channels outside this family have zero zero-error capacity when the
state is revealed causally (Theorem~\ref{re:zeroCausPosAcaus}). On such 
channels with acausal SI the error-free transmission of a single bit 
requires more than one channel use also in the absence of feedback (Corollary~\ref{co:moreThanOneChUseWithoutFB}). (Recall that in the 
causal case the zero-error capacity---both in the presence and in the 
absence of feedback---is positive iff it is possible to transmit a 
single bit error-free in one channel use.) One way 
to characterize the family might be to upper-bound the maximal number of channel 
uses that could be necessary to transmit a single bit error-free. 
\begin{appendix}
%

\section{A Proof of Remark~\ref{re:detEncWlg}}\label{sec:pfDetEncWlg}

\begin{definition}\label{def:zeroErrorFBCodeStochEnc}
For any finite set $\setM$ and positive integer $n \in \naturals$, an $(n,\setM)$ zero-error feedback code with acausal SI and a stochastic encoder is defined like its deterministic counterpart (Definition~\ref{def:zeroErrorFBCode}) except that the encoding may depend on some chance variable $\Theta$ that is drawn from some finite set $\setT$ according to some PMF $P_{\Theta}$.\footnote{The assumption that $\Theta$ takes values in a finite set is not restrictive, because the channel-input, -state, and -output alphabets are finite (see Remark~\ref{re:stochEncEquivDef} at the end of this section).} The code thus consists of a finite set $\setT$, a PMF $P_\Theta$ on $\setT$, $n$ encoding mappings
\begin{IEEEeqnarray}{l}
f_i \colon \setM \times \setT \times \setS^n \times \setY^{i-1} \rightarrow \setX, \quad i \in [1:n], \label{eq:stochEncoder}
\end{IEEEeqnarray}
and $|\setM|$ disjoint decoding sets
\begin{equation}
\setD_m \subseteq \setY^n, \quad m \in \setM \label{eq:stochEncDecoder}
\end{equation}
such that for every $m \in \setM$ the probability of a decoding error is zero, i.e.,
\begin{IEEEeqnarray}{l}
\distof {Y^n \notin \setD_m | M = m, S^n = \vecs } = 0, \,\, \forall \, m \in \setM, \, \vecs \in \setS^n, \label{eq:stEncPrErrZero}
\end{IEEEeqnarray}
where
\begin{IEEEeqnarray}{l}
\distof {Y^n \notin \setD_m | M = m, S^n = \vecs } \nonumber \\
\quad = \sum_{\theta \in \setT} P_{\Theta} (\theta) \sum_{\vecy \in \setY^n \setminus \setD_m}  \prod^n_{i = 1} W \bigl( y_i \bigl| f_i (m, \theta, \vecs, y^{i-1}), s_i \bigr). 
\end{IEEEeqnarray}
\end{definition}

\begin{proof}[Proof of Remark~\ref{re:detEncWlg}]
Given an $(n, \setM)$ zero-error feedback code with a stochastic encoder \eqref{eq:stochEncoder} and decoding sets \eqref{eq:stochEncDecoder}, we can construct an $(n, \setM)$ zero-error feedback code with a deterministic encoder \eqref{eq:encAcaus} as follows. We fix some element $\theta^\star$ of $\setT$ for which $P_\Theta (\theta^\star) > 0$ and consider the $n$ deterministic encoding mappings
\begin{IEEEeqnarray}{rrCl}
g_i \colon & \setM \times \setS^n \times \setY^{i-1} & \rightarrow & \setX \\
& (m,\vecs,y^{i-1}) & \mapsto & f_i (m,\theta^\star,\vecs,y^{i-1}), \quad i \in [1:n].
\end{IEEEeqnarray}
It then follows from \eqref{eq:stEncPrErrZero} that for every $m \in \setM$ and $\vecs \in \setS^n$
\begin{IEEEeqnarray}{l}
\sum_{\vecy \in \setY^n \setminus \setD_m} \prod^n_{i = 1} W \bigl( y_i \bigl| g_i (m, \vecs, y^{i-1}), s_i \bigr) = 0,
\end{IEEEeqnarray}
so the encoding mappings $\{ g_i \}_{i \in [1:n]}$ and the decoding sets \eqref{eq:stochEncDecoder} constitute an $(n, \setM)$ zero-error feedback code with acausal SI and a deterministic encoder \eqref{eq:encAcaus}.
\end{proof}

To conclude this section, we show that allowing for any (not necessarily discrete) random variable $\Theta$ in Definition~\ref{def:zeroErrorFBCodeStochEnc} does not lead to a more general notion of an $(n, \setM)$ zero-error feedback code with acausal SI and a stochastic encoder. To this end we shall use the following lemma, which is proved, e.g., in \cite{willemsmeulen85}:

\begin{lemma}[Functional Representation Lemma] \label{le:funcRep}
Given two chance variables $X$ and $Y$ of finite support, there exist a chance variable $S$ of finite support $\setS$ that is independent of $X$ and a function $g \colon \setX \times \setS \rightarrow \setY$ such that $Y = g ( X, S )$.
\end{lemma}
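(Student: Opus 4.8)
The plan is to use the classical interval-partition construction (as in \cite{willemsmeulen85}). Enumerate the finite support of $Y$ as $\setY = \{ y_1, \ldots, y_{|\setY|} \}$, and for each $x$ in the support of $X$ partition the half-open unit interval $[0,1)$ into the consecutive sub-intervals
\[
I_{x,j} = \Bigl[ \, \sum_{k=1}^{j-1} P_{Y|X} (y_k|x), \;\; \sum_{k=1}^{j} P_{Y|X} (y_k|x) \, \Bigr), \quad j \in [1:|\setY|],
\]
so that the length of $I_{x,j}$ equals $P_{Y|X} (y_j|x)$. First I would pass to the common refinement of these partitions (one for each $x$ in the support of $X$): listing all the endpoints that occur as $0 = \tau_0 < \tau_1 < \cdots < \tau_L = 1$ yields the partition of $[0,1)$ into the intervals $J_\ell = [\tau_{\ell - 1}, \tau_\ell)$, $\ell \in [1:L]$, and one checks $L \leq |\setX| \, ( |\setY| - 1 ) + 1$. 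By construction, for every fixed $x$ each $J_\ell$ is contained in exactly one of $I_{x,1}, \ldots, I_{x,|\setY|}$, so there is a well-defined function $g \colon \setX \times [1:L] \to \setY$ given by $g (x,\ell) = y_j$ whenever $J_\ell \subseteq I_{x,j}$.

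Next I would take $\setS = [1:L]$, let $S$ be a chance variable with $\distof{S = \ell} = \tau_\ell - \tau_{\ell - 1}$ for $\ell \in [1:L]$, drawn independently of $X$ (passing to a suitably enlarged probability space if necessary), and verify that $( X, g(X,S) )$ has the same joint distribution as $(X,Y)$. Indeed, since the intervals $J_\ell$ that lie inside $I_{x,j}$ tile $I_{x,j}$, for every $x$ in the support of $X$ and every $j$
\[
\distof{ X = x, \, g(X,S) = y_j } = P_X (x) \!\! \sum_{\ell \,\colon\, J_\ell \subseteq I_{x,j}} \!\! ( \tau_\ell - \tau_{\ell - 1} ) = P_X (x) \, P_{Y|X} (y_j|x) = P_{X,Y} (x,y_j).
\]
Hence $Y$ may be replaced by $g(X,S)$, which is exactly the assertion of the lemma; $S$ is independent of $X$ and takes values in the finite set $\setS$, indeed $|\setS| = L \leq |\setX| \, ( |\setY| - 1 ) + 1$.

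The construction is elementary, so there is no genuine obstacle. The one point that needs a little care is that the conditional probabilities $P_{Y|X}(\cdot|x)$ need not be rational, which is precisely why one cannot simply take $S$ uniform over a finite set but must instead refine the interval partitions, whose atoms carry the (possibly irrational) masses $\tau_\ell - \tau_{\ell - 1}$. A second, purely formal remark is that ``$Y = g(X,S)$'' is to be read as a statement about the joint law of $(X,Y)$, realized after enlarging the underlying probability space to accommodate the independent randomness $S$; in every place where the lemma is invoked (here, in the discussion following Definition~\ref{def:zeroErrorFBCodeStochEnc}) only this joint law matters. If one drops the bound on $L$, the bare existence of a finite-support $S$ as claimed is immediate from the above.
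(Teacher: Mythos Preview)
Your argument is correct and is precisely the interval-partition construction of Willems and van der Meulen~\cite{willemsmeulen85}, which is the reference the paper cites in lieu of giving its own proof. The only minor point is that the lemma as stated asks for $Y = g(X,S)$ rather than merely equality in law; this can be obtained from your construction by defining $S$ conditionally on $(X,Y)$ (given $X=x$, $Y=y_j$, set $S=\ell$ with probability $(\tau_\ell-\tau_{\ell-1})/P_{Y|X}(y_j|x)$ for $J_\ell\subseteq I_{x,j}$), which you effectively acknowledge in your closing remark about enlarging the probability space.
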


\begin{remark}\label{re:stochEncEquivDef}
An $(n, \setM)$ zero-error feedback code with acausal SI and a stochastic encoder can also be viewed as a collection of $n$ conditional PMFs
\begin{equation}
P_{X_i|M,S^n,X^{i-1},Y^{i-1}}, \quad i \in [1:n]  \label{eq:stochEncCondPMFs}
\end{equation}
and $|\setM|$ disjoint decoding sets \eqref{eq:stochEncDecoder} for which \eqref{eq:stEncPrErrZero} holds, where
\begin{IEEEeqnarray}{l}
\distof {Y^n \notin \setD_m | M = m, S^n = \vecs } \nonumber \\
\quad = \sum_{\vecy \in \setY^n \setminus \setD_m} \sum_{\vecx \in \setX^n} \prod^n_{i = 1} P (x_i|m,\vecs,x^{i-1},y^{i-1}) \, W ( y_i | x_i, s_i ). \label{eq:stEncPrErrorCondPMFs}
\end{IEEEeqnarray}
Indeed, for every (not necessarily discrete) random variable $\Theta$ of support $\setT$, encoding mappings \eqref{eq:stochEncoder}, and decoding sets \eqref{eq:stochEncDecoder}, there exist $n$ conditional PMFs \eqref{eq:stochEncCondPMFs} for which
\begin{IEEEeqnarray}{l}
 \sum_{\vecy \in \setY^n \setminus \setD_m} \sum_{\vecx \in \setX^n} \prod^n_{i = 1} P (x_i|m,\vecs,x^{i-1},y^{i-1}) \, W ( y_i | x_i, s_i ) \nonumber \\
 \quad = \Exop_\Theta \!\! \left[ \sum_{\vecy \in \setY^n \setminus \setD_m} \prod^n_{i = 1} W \bigl( y_i \bigl| f_i (m, \Theta, \vecs, y^{i-1}), s_i \bigr) \right] \!\!, \,\, \forall \, m \in \setM, \, \vecs \in \setS^n. \label{eq:condPMFsThetaEquiv}
\end{IEEEeqnarray}
Conversely, for every collection of conditional PMFs \eqref{eq:stochEncCondPMFs} and decoding sets \eqref{eq:stochEncDecoder}, there exist a random variable $\Theta$ of support $\setT$ and encoding mappings \eqref{eq:stochEncoder} for which \eqref{eq:condPMFsThetaEquiv} holds. Since the channel-input, -state, and -output alphabets are finite, repeated application of the Functional Representation lemma, Lemma~\ref{le:funcRep}, yields, moreover, that we can choose the support $\setT$ of $\Theta$ finite as in Definition~\ref{def:zeroErrorFBCodeStochEnc}.
\end{remark}

\section{A Proof of Remarks~\ref{re:posSuffButNotNec1} and \ref{re:posSuffButNotNec}}\label{sec:pfPosSuffButNotNec}

\begin{proof}
We begin with Remark~\ref{re:posSuffButNotNec}. We first show that Condition~\eqref{eq:condCapacityPos} implies that the RHS of \eqref{eq:capacity} is positive. To this end assume that \eqref{eq:condCapacityPos} holds, pick $\setU = \setY$, and let $U$ be independent of $\setS$ and uniform over $\setU$, so
\begin{equation}
P_{U|S} = P_U = \unif (\setU). \label{eq:condCapacityPositiveChoiceU}
\end{equation}
Fix some conditional PMF $P_{X|U,S}$ that satisfies
\begin{IEEEeqnarray}{l}
\biggl( \Bigl( \channel u {x,s} > 0 \Bigr) \implies \Bigl( P_{X|U,S} (x|u,s) = 0 \Bigr) \biggr), \,\, \forall \, (u,x,s) \in \setU \times \setX \times \setS. \label{eq:condCapacityPositiveChoiceX}
\end{IEEEeqnarray}
(Such a $P_{X|U,S}$ exists, because \eqref{eq:condCapacityPos} says that for every pair $(u,s) \in \setU \times \setS$ there exists some $\tilde x = \tilde x (u,s) \in \setX$ for which $\channel u {\tilde x,s}$ is zero, and we can thus choose $P_{X|U,S}$ to assign $\tilde x (u,s)$ probability one.) For $P_{U,X|S} = P_U \times P_{X|U,S}$, for every PMF $P_S$ on $\setS$, and for every conditional PMF $P_{Y|U,X,S}$ satisfying
\begin{equation}
P_{Y|U = u,X,S} \in \mathscr P (W), \,\, \forall \, u \in \setU, \label{eq:condCapacityPositiveChoiceTransLaw}
\end{equation}
we obtain w.r.t.\ the joint PMF $P_S \times P_{U,X|S} \times P_{Y|U,X,S}$
\begin{IEEEeqnarray}{rCl}
I (U;Y) - I (U;S) & \stackrel{(a)}= & I (U;Y) \\
& \stackrel{(b)}= & \log |\setY| - H (U|Y) \\
& \stackrel{(c)}\geq & \log |\setY| - \log ( |\setY| - 1) \\
& > & 0,
\end{IEEEeqnarray}
where $(a)$ holds because $U$ is independent of $S$; $(b)$ holds because $U$ is uniform over its support $\setY$; and $(c)$ holds because \eqref{eq:condCapacityPositiveChoiceX} and \eqref{eq:condCapacityPositiveChoiceTransLaw} imply that $(P_S \times P_{X|U,S} \times P_{Y|U,X,S})$-almost-surely $U \neq Y$, and because the uniform distribution maximizes entropy. From this we conclude that Condition~\eqref{eq:condCapacityPos} is sufficient for the RHS of \eqref{eq:capacity} to be positive.\\


We next turn to proving that if the RHS of \eqref{eq:capacity} is positive, then \eqref{eq:condCapacityPos} holds. We prove the contrapositive: we show that if for some $( s^\star, y^\star ) \in \setS \times \setY$
\begin{equation}
\channel {y^\star} {x,s^\star} > 0, \,\, \forall \, x \in \setX, \label{eq:necSuffCondCapFormPos}
\end{equation}
then the RHS of \eqref{eq:capacity} must be zero. Suppose $s^\star$ and $y^\star$ are as above, introduce the PMF on $\setS$
\begin{equation}
P_S (s) = \begin{cases} 1 & \textnormal{if } s = s^\star, \\ 0 & \textnormal{otherwise}, \end{cases} \label{eq:condCapPosNecChoiceS}
\end{equation}
and choose $P_{Y|U,X,S} = P_{Y|X,S}$, where
\begin{IEEEeqnarray}{l}
P_{Y|X,S} (y | x, s) = \begin{cases} 1 & \textnormal{if } s = s^\star, \, y = y^\star, \\ 0 & \textnormal{if } s = s^\star, \, y \neq y^\star, \\ \channel y {x,s} & \textnormal{otherwise}. \end{cases} \label{eq:condCapPosNecChoiceTransLaw}
\end{IEEEeqnarray}
Note that the conditional PMF $P_{Y|U,X,S}$ satisfies $P_{Y|U = u,X,S} \in \mathscr P (W), \,\, \forall \, u \in \setU$, because \eqref{eq:necSuffCondCapFormPos} and \eqref{eq:condCapPosNecChoiceTransLaw} imply that $P_{Y|X,S} \in \mathscr P (W)$. For every conditional PMF $P_{U,X|S}$, \eqref{eq:condCapPosNecChoiceS} and \eqref{eq:condCapPosNecChoiceTransLaw} imply that $(P_{S} \times P_{U,X|S} \times P_{Y|U,X,S})$-almost-surely $Y = y^\star$, and hence we obtain w.r.t.\ the joint PMF $P_{S} \times P_{U,X|S} \times P_{Y|U,X,S}$
\begin{equation}
I (U;Y) - I (U;S) \leq 0.
\end{equation}
Since this holds for every conditional PMF $P_{U,X|S}$, we conclude that
\begin{IEEEeqnarray}{l}
\min_{P_S} \max_{P_{U,X|S}} \min_{\substack{P_{Y|U,X,S} \\ P_{Y|U = u,X,S} \in \mathscr P (W), \,\, \forall \, u \in \setU}} I (U;Y) - I (U;S) = 0,
\end{IEEEeqnarray}
where the mutual informations are computed w.r.t.\ the joint PMF $P_S \times P_{U,X|S} \times P_{Y|U,X,S}$.\\

Having established Remark~\ref{re:posSuffButNotNec}, we next prove Remark~\ref{re:posSuffButNotNec1} by providing an example for which Theorem~\ref{th:positive} implies that $C_{\textnormal{f},0} = 0$, and yet \eqref{eq:condCapacityPos} holds. Such an example is the SD-DMC $\channel y {x,s}$ for which $\setX = \setY = \{ 0,1,2 \}$ and
\begin{equation}
\channel y {x,s} = \begin{cases} \frac{1}{2} &\textnormal{if } y \neq x \oplus_3 2, \\ 0 &\textnormal{otherwise}. \end{cases}
\end{equation}
\end{proof}

\section{Analysis of Example~\ref{ex:YFunXS} where $\channel y {x,s}$ is $\{ 0,1 \}$-valued}\label{sec:anExYFunXS}

In this appendix we assume that $\channel y {x,s}$ is $\{ 0,1 \}$-valued, and we derive \eqref{eq:capYFunXS} from Theorems~\ref{th:positive} and \ref{th:capacity}.

We first show that Theorem~\ref{th:positive} implies that $C_{\textnormal {f},0}$ is positive iff the RHS of \eqref{eq:capYFunXS} is positive. The latter is positive iff
\begin{IEEEeqnarray}{l} 
\bigl| \bigl\{ y \in \setY \colon \exists \, x \in \setX \textnormal{ s.t.\ } \channel y {x,s} > 0 \bigr\} \bigr| \geq 2, \,\, \forall \, s \in \setS, \label{eq:positivityYFunXS}
\end{IEEEeqnarray}
i.e., iff for every state there exists a pair of inputs that the deterministic channel maps to different outputs. By Theorem~\ref{th:positive} $C_{\textnormal {f},0}$ is positive iff \eqref{eq:positive} holds, and we thus have to show that
\begin{IEEEeqnarray}{l}
\eqref{eq:positive} \iff \eqref{eq:positivityYFunXS}. \label{eq:exYFunXSPositiveEquiv}
\end{IEEEeqnarray}
The assumption that $\channel y {x,s}$ is $\{ 0,1 \}$-valued implies that for every pair of states $s, \, s^\prime \in \setS$ (not necessarily distinct) and every pair of inputs $x, \, x^\prime \in \setX$
\begin{IEEEeqnarray}{l}
\channel y {x,s} \, \channel y {x^\prime,s^\prime} = \begin{cases} 1 &\textnormal{if } \channel y {x,s} = \channel y {x^\prime,s^\prime} = 1, \\ 0 &\textnormal{otherwise}. \end{cases} \label{eq:exYFunXSWYXS}
\end{IEEEeqnarray}
Using this we prove \eqref{eq:exYFunXSPositiveEquiv}, beginning with
\begin{IEEEeqnarray}{l}
\eqref{eq:positive} \implies \eqref{eq:positivityYFunXS}. \label{eq:exYFunXSPositiveImplies}
\end{IEEEeqnarray}
If we let $s^\prime = s$, then \eqref{eq:positive} and \eqref{eq:exYFunXSWYXS} imply that for every state $s \in \setS$ there exists a pair of inputs $x, \, x^\prime \in \setX$ that the channel maps to different outputs $y, \, y^\prime \in \setY$, so
\begin{IEEEeqnarray}{l}
y \neq y^\prime \quad \textnormal{and} \quad \channel y {x,s} = \channel {y^\prime} {x^\prime,s} = 1,
\end{IEEEeqnarray}
and hence
\begin{IEEEeqnarray}{l} 
\bigl| \bigl\{ y \in \setY \colon \exists \, x \in \setX \textnormal{ s.t.\ } \channel y {x,s} > 0 \bigr\} \bigr| \geq 2.
\end{IEEEeqnarray}
This proves \eqref{eq:exYFunXSPositiveImplies}. It remains to show that
\begin{IEEEeqnarray}{l}
\eqref{eq:positive} \impliedby \eqref{eq:positivityYFunXS}. \label{eq:exYFunXSPositiveImplied}
\end{IEEEeqnarray}
From \eqref{eq:positivityYFunXS} and \eqref{eq:exYFunXSWYXS} it follows that for every state $s \in \setS$ there exists a pair of inputs $x, \, x^\prime \in \setX$ that the deterministic channel maps to different outputs $y, \, y^\prime \in \setY$, so
\begin{IEEEeqnarray}{l}
y \neq y^\prime \quad \textnormal{and} \quad \channel y {x,s} = \channel {y^\prime} {x^\prime,s} = 1.
\end{IEEEeqnarray}
This implies that for every pair of states $s, \, s^\prime \in \setS$ (not necessarily distinct) there exists a pair of inputs $x, \, x^\prime \in \setX$ that the deterministic channel maps to different outputs $y, \, y^\prime \in \setY$, so
\begin{IEEEeqnarray}{l}
y \neq y^\prime \quad \textnormal{and} \quad \channel y {x,s} = \channel {y^\prime} {x^\prime,s^\prime} = 1,
\end{IEEEeqnarray}
and hence we conclude that \eqref{eq:exYFunXSPositiveImplied} holds.\\

It remains to show that when $C_{\textnormal {f}, 0}$ is positive, then the RHS of \eqref{eq:capacity} coincides with the RHS of \eqref{eq:capYFunXS}. We first show that
\begin{IEEEeqnarray}{rCl}
C_{\textnormal{f},0} & = & \min_{P_S} \max_{P_{U,X|S}} I (U;Y) - I (U;S), \label{eq:yFunXSCapTransLawW}
\end{IEEEeqnarray}
where the mutual informations are computed w.r.t.\ the joint PMF $P_S \times P_{U,X|S} \times W$. Note that for every $u \in \setU$ the condition that $P_{Y|U = u,X,S} \in \mathscr P (W)$ is satisfied iff for every pair $(x,s) \in \setX \times \setS$ the outputs that have probability zero w.r.t.\ $\channel \cdot {x,s}$ have probability zero w.r.t.\ $P_{Y|U,X,S} ( \cdot | u,x,s )$. By the assumption that $\channel y {x,s}$ is $\{ 0,1 \}$-valued, this holds iff
\begin{IEEEeqnarray}{l}
P_{Y|U=u,X,S} = W, \,\, \forall \, u \in \setU, \label{eq:yFunXSConstTransLawEquiv}
\end{IEEEeqnarray}
and therefore \eqref{eq:yFunXSCapTransLawW} follows from Theorem~\ref{th:capacity}.

With \eqref{eq:yFunXSCapTransLawW} at hand, we are now ready to show that the RHS of \eqref{eq:capacity} is upper-bounded by the RHS of \eqref{eq:capYFunXS}: w.r.t.\ the joint PMF $P_S \times P_{U,X|S} \times W$
\begin{IEEEeqnarray}{rCl}
C_{\textnormal{f},0} &\stackrel{(a)}= & \min_{P_S} \max_{P_{U,X|S}} I (U;Y) - I (U;S) \\
&\stackrel{(b)}\leq & \min_{P_S} \max_{P_{U,X|S}} I (U;Y,S) - I (U;S) \\
&\stackrel{(c)}= & \min_{P_S} \max_{P_{U,X|S}} I (U;Y|S) \\
&\stackrel{(d)}\leq & \min_{P_S} \max_{P_{U,X|S}} H (Y|S) \\
&\stackrel{(e)}\leq & \min_{s \in \setS} \log \bigl| \bigl\{ y \in \setY \colon \exists \, x \in \setX \textnormal{ s.t.\ } \channel y {x,s} > 0 \bigr\} \bigr|,
\end{IEEEeqnarray}
where $(a)$ holds by \eqref{eq:yFunXSCapTransLawW}; $(b)$ holds because conditioning cannot increase entropy; $(c)$ follows from the chain rule; $(d)$ holds because conditional entropy is nonnegative; and $(e)$ holds because the uniform distribution maximizes entropy, and because we can choose $P_S$ to assign probability one to some $s \in \setS$ that minimizes $$\log \bigl| \bigl\{ y \in \setY \colon \exists \, x \in \setX \textnormal{ s.t.\ } \channel y {x,s} > 0 \bigr\} \bigr|.$$

Having shown that the RHS of \eqref{eq:capacity} is upper-bounded by the RHS of \eqref{eq:capYFunXS}, we now conclude by showing that the reverse also holds, i.e., that the RHS of \eqref{eq:capacity} is lower-bounded by the RHS of \eqref{eq:capYFunXS}. Take $\setU = \setY$, and for every $s \in \setS$ choose $P_{U|S} (\cdot | s)$ to be the uniform distribution on the set $$\bigl\{ y \in \setY \colon \exists \, x \in \setX \textnormal{ s.t.\ } \channel {y}{x,s} > 0 \bigr\}.$$ By the assumption that $\channel y {x,s}$ is $\{0,1\}$-valued, this choice of $P_{U|S}$ guarantees that for every pair $(u,s) \in \setU \times \setS$ for which $P_{U|S} (u|s) > 0$ there exists some $x = x (u,s) \in \setX$ for which $\channel u {x,s} = 1$. Now choose $P_{X|U,S}$ to assign $x (u,s)$ probability one. For $P_{U,X|S} = P_{U|S} \times P_{X|U,S}$ and for every PMF $P_S$ on $\setS$, we obtain $(P_S \times P_{U|S} \times P_{X|U,S} \times W)$-almost-surely $U = Y$ and w.r.t.\ the joint PMF $P_S \times P_{U|S} \times P_{X|U,S} \times W$
\begin{IEEEeqnarray}{rCl}
I (U;Y) - I (U;S) & \stackrel{(a)}= & H (U|S) \\
& \stackrel{(b)}= & \sum_{s \in \setS} P_S (s) \log \bigl| \bigl\{ y \in \setY \colon \exists \, x \in \setX \textnormal{ s.t.\ } \channel y {x,s} > 0 \bigr\} \bigr| \\
& \stackrel{(c)}\geq & \min_{s \in \setS} \log \bigl| \bigl\{ y \in \setY \colon \exists \, x \in \setX \textnormal{ s.t.\ } \channel y {x,s} > 0 \bigr\} \bigr|, \label{eq:yFunXsLBCap}
\end{IEEEeqnarray}
where $(a)$ holds because $(P_S \times P_{U|S} \times P_{X|U,S} \times W)$-almost-surely $U = Y$; $(b)$ holds because $P_{U|S} (\cdot | s)$ is for every $s \in \setS$ the uniform distribution on $$\bigl\{ y \in \setY \colon \exists \, x \in \setX \textnormal{ s.t.\ } \channel {y}{x,s} > 0 \bigr\};$$ and $(c)$ holds because the minimum of $$\log \bigl| \bigl\{ y \in \setY \colon \exists \, x \in \setX \textnormal{ s.t.\ } \channel y {x,s} > 0 \bigr\} \bigr|$$ over $s \in \setS$ cannot be larger than its weighted average over $s \in \setS$ with weights $P_S (s), \,\, s \in \setS$. From \eqref{eq:yFunXSCapTransLawW} and \eqref{eq:yFunXsLBCap} we conclude that the RHS of \eqref{eq:capacity} is lower-bounded by the RHS of \eqref{eq:capYFunXS}.

\section{A Cardinality Bound on $\setU$}\label{sec:leCardU}

\begin{lemma}\label{le:cardU}
Given a channel $\channel y {x,s}$ and a PMF $P_S$ on $\setS$, consider
\begin{IEEEeqnarray}{l}
\max_{P_{U,X|S}} \min_{ \substack{ P_{Y|U,X,S} \colon \\ P_{Y|U = u,X,S} \in \mathscr P (W), \,\, \forall \, u \in \setU }} I (U;Y) - I (U;S), \label{eq:leCardUCap}
\end{IEEEeqnarray}
where the maximization is over all chance variables $U$ of finite support, and the mutual informations are computed w.r.t.\ the joint PMF $P_S \times P_{U,X|S} \times P_{Y|U,X,S}$. Restricting $X$ to be a function of $U$ and $S$, i.e., $P_{U,X|S}$ to have the form
\begin{equation}
P_{U,X|S} (u,x|s) = P_{U|S} (u|s) \, \ind {x = g (u,s)}, \label{eq:leCardUXFuncUS}
\end{equation}
does not change \eqref{eq:leCardUCap}. Nor does requiring that $U$ take values in a set $\setU$  whose cardinality $| \setU |$ satisfies
\begin{equation}
|\setU| \leq |\setX|^{|\setS|}. \label{eq:cardULe}
\end{equation}
\end{lemma}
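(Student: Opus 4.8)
The plan is to establish both assertions at once by producing, for an arbitrary chance variable $U$ of finite support feasible in \eqref{eq:leCardUCap}, a \emph{single} replacement auxiliary that takes values in the function set $\setX^{\setS}$, that uses $X$ as a deterministic function of the auxiliary and $S$ as in \eqref{eq:leCardUXFuncUS}, and that attains a value no smaller than $U$ does. Since $|\setX^{\setS}| = |\setX|^{|\setS|}$ and since shrinking the feasible set of a maximization cannot increase its value, this yields both \eqref{eq:leCardUXFuncUS} and \eqref{eq:cardULe} with no loss.

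Fix $P_S$ and a conditional PMF $P_{U,X|S}$, and let $v$ denote the value of the inner minimum in \eqref{eq:leCardUCap} at $P_{U,X|S}$. I would introduce a chance variable $\Phi$ with values in $\setX^{\setS}$ coupled to $U$ so that, conditionally on $U=u$, the coordinates $\{\Phi(s)\}_{s\in\setS}$ are independent with $\Phi(s)\distas P_{X|U,S}(\cdot|u,s)$; thus $\Phi$ and $S$ are conditionally independent given $U$. Putting $X=\Phi(S)$, the joint law of $(U,S,X)$ is unchanged, $X$ is the deterministic function $(\phi,s)\mapsto\phi(s)$ of $(\Phi,S)$, and $\Phi$ lives in a set of the required cardinality. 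The remaining task is to show that the value $v'$ attained by $\Phi$ in \eqref{eq:leCardUCap} satisfies $v'\geq v$.

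For this I would pass to the joint law of $(S,U,\Phi,X,Y)$ in which $Y$ is generated from $(\Phi,X,S)$ through an \emph{arbitrary} feasible conditional law, i.e.\ one with $P_{Y|\Phi=\phi,X,S}\in\mathscr P (W)$ for every $\phi$. Then $U-(\Phi,S)-Y$ is a Markov chain, whence $I(U;Y|\Phi)\leq I(U;S|\Phi)$ (expand $I(U;S,Y|\Phi)$ two ways), and $\Phi\perp S\mid U$ gives $I(U;S|\Phi)=I(U;S)-I(\Phi;S)$; combining these with $I(U;Y)\leq I(U,\Phi;Y)=I(\Phi;Y)+I(U;Y|\Phi)$ yields $I(U;Y)-I(U;S)\leq I(\Phi;Y)-I(\Phi;S)$. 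The right-hand side is the objective of \eqref{eq:leCardUCap} evaluated at $\Phi$ for the chosen adversary; the left-hand side is the objective evaluated at $U$ for the adversary $P_{Y|U=u,X=x,S=s}\define\Pr[\,Y\in\cdot\mid U=u,X=x,S=s\,]$ computed in this joint law, which lies in $\mathscr P (W)$ because it is a mixture of the slices $P_{Y|\Phi=\phi,X=x,S=s}\in\mathscr P (W)$ and which induces the same $P_{Y|U}$, hence the same $I(U;Y)$, as in the original problem, so the left-hand side is $\geq v$. Taking the infimum over adversaries against $\Phi$ then gives $v'\geq v$, and (since the restricted problem over $\setX^{\setS}$ is a maximization over a compact set) the common value is attained.

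The delicate point I expect to take most care with is this last correspondence: because the transition law $P_{Y|U,X,S}$ in \eqref{eq:leCardUCap} is itself chosen by a minimizing ``adversary'' rather than fixed, $Y$ may depend on the auxiliary directly, so one must verify that conditioning an adversary-against-$\Phi$ on $U$ and on the event $\{\Phi(s)=x\}$ and averaging produces a bona fide adversary-against-$U$ that is both feasible and reproduces the original $(U,X,S,Y)$-law. A Fenchel--Eggleston--Carath\'eodory argument applied after the functional reduction would also deliver \emph{some} cardinality bound on $\setU$, but the construction above is preferable because it yields the sharp, operationally meaningful bound $|\setU|\leq|\setX|^{|\setS|}$ --- the number of Shannon-type strategies $\setS\to\setX$ --- together with \eqref{eq:leCardUXFuncUS} in one stroke.
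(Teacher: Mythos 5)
Your proposal is correct, and it takes a genuinely different route from the paper. The paper proves the lemma in two stages: first it invokes the Functional Representation Lemma to write $X = h(U,V,S)$ with $V$ independent of $(U,S)$ and passes to the enlarged auxiliary $\tilde U=(U,V)$, showing via $I(U,V;Y)-I(U,V;S)=I(U;Y)-I(U;S)+I(V;Y\mid U)\geq I(U;Y)-I(U;S)$ that this costs nothing; second, it repeatedly merges any two auxiliary values $u_1\neq u_2$ with $g(u_1,\cdot)=g(u_2,\cdot)$ into one, using $H(U\mid T,S)\leq H(U\mid T,Y)$ (which follows from $X$ being determined by $(T,S)$ and the Markov chain $U-(T,X,S)-Y$) to show merging cannot decrease the min-value, until at most $|\setX|^{|\setS|}$ values remain. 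You instead replace $U$ in one stroke by the random Shannon strategy $\Phi\in\setX^{\setS}$ with coordinates conditionally independent given $U$ and $\Phi\perp S\mid U$, and your chain $I(U;Y)-I(U;S)\leq I(\Phi;Y)-I(\Phi;S)$ (via $U-(\Phi,S)-Y$ and $I(U;S\mid\Phi)=I(U;S)-I(\Phi;S)$) is sound; note that $\Phi$ is neither a refinement nor a coarsening of $U$, so this is not a repackaging of the paper's two steps. Both arguments rest on the same feasibility-transfer device that you correctly flag as the delicate point --- marginalizing a feasible adversary against the new auxiliary into a feasible adversary against $U$ that reproduces the original $(U,X,S,Y)$-law, which works because $\mathscr P (W)$ is closed under the relevant mixtures (this is the paper's \eqref{eq:leCardUDefPYGivUXS}). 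What your approach buys is economy and interpretation: the sharp bound $|\setU|\leq|\setX|^{|\setS|}$ and the functional form \eqref{eq:leCardUXFuncUS} emerge simultaneously from the single canonical choice of auxiliary, whereas the paper needs the separate merging stage precisely because the Functional Representation Lemma alone leaves the auxiliary alphabet uncontrolled. The paper's version is more modular, and its merging step is reusable verbatim in the cost-constrained variants (Lemma~\ref{le:cardUCCInputs}); your construction also adapts to those settings since it preserves the marginal law of $X$ and hence any expected-cost constraint.
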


\begin{proof}
We first show that restricting $X$ to be a function of $U$ and $S$ does not change \eqref{eq:leCardUCap}. By the Functional Representation lemma (Lemma~\ref{le:funcRep}), for every conditional PMF $P_{U,X|S}$, there exists a chance variable $V$ of finite support $\setV$ and a function $h \colon \setU \times \setV \times \setS \rightarrow \setX$ such that
\begin{equation}
P_{U,X|S} (u,x|s) = \sum_{v \in \setV} P_{U|S} (u|s) \, P_V (v) \, \ind {x = h (u,v,s)}.
\end{equation}
Consequently, \eqref{eq:leCardUCap} is equal to
\begin{IEEEeqnarray}{l}
\max_{ P_V, h (\cdot), P_{U|S} } \min_{ \substack{ P_{Y|U,X,S} \colon \\ P_{Y|U = u,X,S} \in \mathscr P (W), \,\, \forall \, u \in \setU }} I (U;Y) - I (U;S), \label{eq:leCardUCapEquivV}
\end{IEEEeqnarray}
where the maximization is over all chance variables $V$ of finite support $\setV$, functions $h \colon \setU \times \setV \times \setS \rightarrow \setX$, and conditional PMFs over a finite set $\setU$; and where the mutual informations are computed w.r.t.\ the joint PMF $P_S \times P_V \times P_{U,X|V,S} \times P_{Y|U,X,S}$, where $P_{U,X|V,S}$ is
\begin{IEEEeqnarray}{l}
P_{U,X|V,S} (u,x|v,s) = P_{U|S} (u|s) \, \ind {x = h (u,v,s)}. \label{eq:leCardUCondPMFUXGivVS}
\end{IEEEeqnarray}
Fix some PMF $P_V$ on $\setV$, some function $h \colon \setU \times \setV \times \setS \rightarrow \setX$, and some conditional PMF $P_{U|S}$, and let $(U,V,X,S) \sim P_S \times P_V \times P_{U,X|V,S}$, where $P_{U,X|V,S}$ is given in \eqref{eq:leCardUCondPMFUXGivVS}.
Let $\tilde P_{Y|U,V,X,S}$ be some conditional PMF satisfying
\begin{equation}
\tilde P_{Y|U = u, V = v,X,S} \in \mathscr P (W), \,\, \forall \, (u, v ) \in \setU \times \setV, \label{eq:leCardUCondPMFYGivVUSatReq}
\end{equation}
and note that this implies that
\begin{equation}
\tilde P_{Y|U=u,X,S} \in \mathscr P (W), \,\, \forall \, u \in \setU, \label{eq:leCardUCondPMFYGivOnlyUSatReq}
\end{equation}
where
\begin{IEEEeqnarray}{l}
\tilde P_{Y|U,X,S} (y|u,x,s) = \sum_{v \in \setV} \frac{P_{U,V,X,S} (u,v,x,s)}{\sum_{v^\prime \in \setV} P_{U,V,X,S} (u,v^\prime,x,s)} \, \tilde P_{Y|U,V,X,S} (y|u,v,x,s). \label{eq:leCardUDefPYGivUXS}
\end{IEEEeqnarray}
W.r.t.\ the joint PMF $P_{U,V,X,S} \times \tilde P_{Y|U,V,X,S}$,
\begin{IEEEeqnarray}{rCl}
I (U,V;Y) - I (U,V;S) & \stackrel{(a)}= & I (U;Y) - I (U;S) + I (V;Y|U) \\
& \stackrel{(b)}\geq & I (U;Y) - I (U;S), \label{eq:leCardUMutWithVBigMutWithout}
\end{IEEEeqnarray}
where $(a)$ follows from the chain rule and the independence of $V$ and $(U,S)$ under $P_{U,V,X,S}$ \eqref{eq:leCardUCondPMFUXGivVS}; and $(b)$ holds because mutual information is nonnegative. Since $P_{U,X,S} \times \tilde P_{Y|U,X,S}$ is obtained from $P_{U,V,X,S} \times \tilde P_{Y|U,V,X,S}$ by integrating $V$ out \eqref{eq:leCardUDefPYGivUXS},
\begin{IEEEeqnarray}{ll}
I (U;Y) - I (U;S) \quad &\textnormal{w.r.t.} \quad P_{U,V,X,S} \times \tilde P_{Y|U,V,X,S} \nonumber \\
\quad = I (U;Y) - I (U;S) \quad &\textnormal{w.r.t.} \quad P_{U,X,S} \times \tilde P_{Y|U,X,S} \\
\quad = I (U;Y) - I (U;S) \quad &\textnormal{w.r.t.} \quad P_{U,V,X,S} \times \tilde P_{Y|U,X,S}.
\end{IEEEeqnarray}
This and \eqref{eq:leCardUMutWithVBigMutWithout} imply that
\begin{IEEEeqnarray}{ll}
I (U,V;Y) - I (U,V;S) \quad &\textnormal{w.r.t.} \quad P_{U,V,X,S} \times \tilde P_{Y|U,V,X,S} \nonumber \\
\quad \geq I (U;Y) - I (U;S) \quad &\textnormal{w.r.t.} \quad P_{U,V,X,S} \times \tilde P_{Y|U,X,S}. \label{eq:leCardUMutWithVBigMutWithoutPMFs}
\end{IEEEeqnarray}
Since \eqref{eq:leCardUCondPMFYGivVUSatReq} implies \eqref{eq:leCardUCondPMFYGivOnlyUSatReq}, it follows from \eqref{eq:leCardUMutWithVBigMutWithoutPMFs} that
\begin{IEEEeqnarray}{ll}
\max_{ P_V, h (\cdot), P_{U|S} } \min_{ \substack{ \tilde P_{Y|U,V,X,S} \colon \\ \tilde P_{Y|U = u,V = v,X,S} \in \mathscr P (W), \,\, \forall \, (u,v) \in \setU \times \setV }} I (U,V;Y) - I (U,V;S) \nonumber \\
\quad \geq \max_{ P_V, h (\cdot), P_{U|S} } \min_{ \substack{ \tilde P_{Y|U,X,S} \colon \\ \tilde P_{Y|U = u,X,S} \in \mathscr P (W), \,\, \forall \, u \in \setU }} I (U;Y) - I (U;S), \label{eq:leCardUMutWithVBigMutWithoutPMFs2}
\end{IEEEeqnarray}
where the mutual informations are computed w.r.t.\ $P_S \times P_V \times P_{U,X|V,S} \times \tilde P_{Y|U,V,X,S}$ in the first line and w.r.t.\ $P_S \times P_V \times P_{U,X|V,S} \times \tilde P_{Y|U,X,S}$ in the second line, where $P_{U,X|V,S}$ is given in \eqref{eq:leCardUCondPMFUXGivVS}. The RHS of \eqref{eq:leCardUMutWithVBigMutWithoutPMFs2} is \eqref{eq:leCardUCapEquivV}, which, as we have noted, is equal to \eqref{eq:leCardUCap}. Consequently, the LHS of \eqref{eq:leCardUMutWithVBigMutWithoutPMFs2} upper-bounds \eqref{eq:leCardUCap}. But the LHS of \eqref{eq:leCardUMutWithVBigMutWithoutPMFs2} corresponds to choosing the auxiliary chance variable $\tilde U = (U,V)$, with the result that $X$ is a deterministic function of $( \tilde U, S )$.\\

It remains to show that restricting the cardinality of $\setU$ to \eqref{eq:cardULe} does not change \eqref{eq:leCardUCap} when the maximization in \eqref{eq:leCardUCap} is over all conditional PMFs $P_{U,X|S}$ of the form \eqref{eq:leCardUXFuncUS}. To this end we show that \eqref{eq:leCardUCap} does not change when we require that for every distinct $u_1, \, u_2 \in \setU$ the mappings $g (u_1, \cdot)$ and $g (u_2, \cdot)$ differ. Since there are $|\setX|^{|\setS|}$ different mappings with domain $\setS$ and co-domain $\setX$, this implies that restricting the cardinality of $\setU$ to \eqref{eq:cardULe} does not change \eqref{eq:leCardUCap}.

Fix some finite set $\setU$ and some conditional PMF $P_{U,X|S}$ of the form \eqref{eq:leCardUXFuncUS}, and let $(U,X,S) \sim P_S \times P_{U,X|S}$. Suppose that there exist distinct $u_1, \, u_2 \in \setU$ for which
\begin{equation}
g (u_1,s) = g (u_2,s), \,\, \forall \, s \in \setS. \label{eq:uPrimeUDoublePrimeSameFunc}
\end{equation}
Define the chance variable
\begin{IEEEeqnarray}{l}
T = \begin{cases} U &\textnormal{if } U \neq u_2, \\ u_1 &\textnormal{otherwise} \end{cases} \label{eq:definUPrime}
\end{IEEEeqnarray}
of support $\setT = \setU \setminus \{ u_2 \}$, and denote by $P_{U,T,X,S}$ the joint PMF of $(U,T,X,S)$. By \eqref{eq:uPrimeUDoublePrimeSameFunc}
\begin{IEEEeqnarray}{l}
P_{X|T,S} \bigl( x \bigl| t,s \bigr) = \ind {x = g (t,s)}, \label{eq:xFuncUPrimeS}
\end{IEEEeqnarray}
where
\begin{IEEEeqnarray}{l}
P_{X|T,S} (x|t,s) = \frac{\sum_{u \in \setU} P_{U,T,X,S} (u,t,x,s)}{\sum_{(u^\prime, x^\prime) \in \setU \times \setX} P_{U,T,X,S} (u^\prime, t, x^\prime, s)}.
\end{IEEEeqnarray}
We will show that replacing $U$ with $T$ does not decrease our payoff, i.e., that
\begin{IEEEeqnarray}{l}
\min_{ \substack{ \tilde P_{Y|U,X,S} \colon \\ \tilde P_{Y|U = u,X,S} \in \mathscr P (W), \,\, \forall \, u \in \setU } } I (U;Y) - I (U;S) \nonumber \\
\quad \leq \min_{ \substack{ \tilde P_{Y|T,X,S} \colon \\ \tilde P_{Y|T = t,X,S} \in \mathscr P (W), \,\, \forall \, t \in \setT } } I (T;Y) - I (T;S), \label{eq:cardU4}
\end{IEEEeqnarray}
where the mutual informations are computed w.r.t.\ $P_{U,X,S} \times \tilde P_{Y|U,X,S}$ in the first line and w.r.t.\ $P_{T,X,S} \times \tilde P_{Y|T,X,S}$ in the second line. By repeating this process we can repeatedly reduce the cardinality of the support set of the auxiliary chance variable until $u_1 \neq u_2$ implies that $g (u_1,\cdot)$ and $g (u_2,\cdot)$ differ.

Let $\tilde P_{Y|T,X,S}$ be some conditional PMF satisfying
\begin{equation}
\tilde P_{Y|T = t,X,S} \in \mathscr P (W), \,\, \forall \, t \in \setT, \label{eq:leCardUCondPMFYGivUPrimeSatReq}
\end{equation}
and define the conditional PMF
\begin{IEEEeqnarray}{l}
\tilde P_{Y|U,X,S} (y|u,x,s) = \begin{cases} \tilde P_{Y|T,X,S} (y|u,x,s) &\textnormal{if } u \neq u_2, \\ \tilde P_{Y|T,X,S} (y|u_1,x,s) &\textnormal{otherwise}, \end{cases} \label{eq:definPYGivenUXSFromUPrime}
\end{IEEEeqnarray}
so $\tilde P_{Y|U,X,S} (y|u,x,s) = \tilde P_{Y|T,X,S} (y|t,x,s)$ when $u = t$ or when $u = u_2$ and $t = u_1$. From this and \eqref{eq:definUPrime}, which implies that $P_{U,T,X,S} (u,t,x,s)$ is positive only when $u = t$ or when $u = u_2$ and $t = u_1$, it follows that
\begin{IEEEeqnarray}{l}
P_{U,T,X,S} \times \tilde P_{Y|U,X,S} = P_{U,T,X,S} \times \tilde P_{Y|T,X,S}. \label{eq:leCardUCondUAndCondUPrimeResultInSamePMF}
\end{IEEEeqnarray}

From \eqref{eq:leCardUCondPMFYGivUPrimeSatReq} and the definition of $\tilde P_{Y|U,X,S}$ \eqref{eq:definPYGivenUXSFromUPrime} we see that
\begin{equation}
\tilde P_{Y|U = u,X,S} \in \mathscr P (W), \,\, \forall \, u \in \setU. \label{eq:leCardUCondPMFYGivUSatReq}
\end{equation}
W.r.t.\ the joint PMF $P_{U,T,X,S} \times \tilde P_{Y|U,X,S}$ (which equals $P_{U,T,X,S} \times \tilde P_{Y|T,X,S}$ by \eqref{eq:leCardUCondUAndCondUPrimeResultInSamePMF})
\begin{IEEEeqnarray}{l}
I (U;Y) - I (U;S) \nonumber \\
\quad \stackrel{(a)}= I (T,U;Y) - I (T,U;S) \\
\quad \stackrel{(b)}= I (T;Y) - I (T;S) + I (U;Y|T) - I (U;S|T) \\
\quad \stackrel{(c)}= I (T;Y) - I (T;S) + H (U|T,S) - H (U|T,Y), \label{eq:cardU1}
\end{IEEEeqnarray}
where $(a)$ holds because under $P_{U,T,X,S}$ $T$ is determined by $U$ \eqref{eq:definUPrime}; $(b)$ follows from the chain rule; and $(c)$ holds by definition of mutual information. W.r.t.\ the joint PMF $P_{U,T,X,S} \times \tilde P_{Y|U,X,S} = P_{U,T,X,S} \times \tilde P_{Y|T,X,S}$ the term $H (U|T,S) - H (U|T,Y)$ is not positive, because
\begin{IEEEeqnarray}{l}
H (U|T,S) - H (U|T,Y) \nonumber \\
\quad \stackrel{(a)}= H (U|T,X,S) - H (U|T,Y) \\
\quad \stackrel{(b)}\leq I (Y;U|T,X,S) \\
\quad \stackrel{(c)}= 0, \label{eq:cardU2}
\end{IEEEeqnarray}
where $(a)$ holds because under $P_{U,T,X,S}$ $X$ is determined by $(T,S)$ \eqref{eq:xFuncUPrimeS}; $(b)$ holds because conditioning cannot increase entropy and by definition of mutual information; and $(c)$ holds because under $P_{U,T,X,S} \times \tilde P_{Y|U,X,S} = P_{U,T,X,S} \times \tilde P_{Y|T,X,S}$ $U$ and $Y$ are conditionally independent given $(T,X,S)$. From \eqref{eq:cardU1} and \eqref{eq:cardU2} we obtain
\begin{IEEEeqnarray}{ll}
I (U;Y) - I (U;S) \quad &\textnormal{w.r.t.} \quad P_{U,T,X,S} \times \tilde P_{Y|U,X,S} \nonumber \\
\quad \leq I (T;Y) - I (T;S) \quad &\textnormal{w.r.t.} \quad P_{U,T,X,S} \times \tilde P_{Y|T,X,S},
\end{IEEEeqnarray}
which is equivalent to
\begin{IEEEeqnarray}{ll}
I (U;Y) - I (U;S) \quad &\textnormal{w.r.t.} \quad P_{U,X,S} \times \tilde P_{Y|U,X,S} \nonumber \\
\quad \leq I (T;Y) - I (T;S) \quad &\textnormal{w.r.t.} \quad P_{T,X,S} \times \tilde P_{Y|T,X,S}. \label{eq:cardU3}
\end{IEEEeqnarray}
Since \eqref{eq:leCardUCondPMFYGivUPrimeSatReq} and \eqref{eq:definPYGivenUXSFromUPrime} imply \eqref{eq:leCardUCondPMFYGivUSatReq}, we obtain from \eqref{eq:cardU3} that \eqref{eq:cardU4} holds, i.e., that replacing $U$ with $T$ does not decrease our payoff.

We can repeat the above process until we are left with a chance variable $\bar U$ of finite support $\bar \setU \subseteq \setU$ that satisfies that for every distinct $\bar u_1, \, \bar u_2 \in \bar \setU$ the mappings $g (\bar u_1, \cdot)$ and $g (\bar u_2, \cdot)$ differ, and, by \eqref{eq:cardU4}, that 
\begin{IEEEeqnarray}{l}
\min_{ \substack{ \tilde P_{Y|U,X,S} \colon \\ \tilde P_{Y|U = u,X,S} \in \mathscr P (W), \,\, \forall \, u \in \setU } } I (U;Y) - I (U;S) \nonumber \\
\quad \leq \min_{ \substack{ \tilde P_{Y|\bar U,X,S} \colon \\ \tilde P_{Y|\bar U = \bar u,X,S} \in \mathscr P (W), \,\, \forall \, \bar u \in \bar \setU } } I (\bar U;Y) - I (\bar U;S). \label{eq:cardU5}
\end{IEEEeqnarray}
From \eqref{eq:cardU5} we obtain the claim that \eqref{eq:leCardUCap}---with the maximization being over all conditional PMFs $P_{U,X|S}$ of the form \eqref{eq:leCardUXFuncUS}---does not change when we require that for every distinct $u_1, \, u_2 \in \setU$ the mappings $g (u_1, \cdot)$ and $g (u_2, \cdot)$ differ.
\end{proof}

\section{A Proof of Lemma~\ref{le:covering}}\label{sec:leCovering}

\begin{proof}
Recall that $\Theta = \bigl\lceil 2^{k (H (U|S) - \epsilon)} \bigr\rceil$. If $\epsilon \geq H (U|S)$, then $\Theta = 1$. The only size-1 partition of $\setT^{ (k) }_{P_U}$ is $\setT^{ (k) }_{P_U}$ itself, and, because $\setT^{(k)}_{P_{U,S}}$ is not empty (since $P_{U,S}$ is a $k$-type), this partition satisfies \eqref{eq:leCoveringCond}, i.e.,
\begin{IEEEeqnarray}{l}
\forall \, \vecs \in \setT^{(k)}_{P_S} \quad \exists \, \vecu \in \setT^{(k)}_{P_U} \textnormal{ s.t.\ } (\vecu, \vecs) \in \setT^{(k)}_{P_{U,S}}.
\end{IEEEeqnarray}

Consider now the more interesting case where $\epsilon < H (U|S)$. We will show that if $k$ exceeds some $\eta_0 \bigl( |\setU|, |\setS|, \epsilon \bigr)$ (to be specified later), then the desired partition $\{ \setB_\ell \}_{\ell \in [ 1 : \Theta ]}$ of $\setT^{ (k) }_{P_U}$ exists. We shall do so using the probabilistic method. Fix $k \in \naturals$ and a $k$-type $P_{U,S}$ with corresponding conditional entropy $H (U|S)$. Generate a random partition $\{ \bm {\mathcal B}_\ell \}_{\ell \in [ 1 : \Theta ]}$ of $\setT^{ (k) }_{P_U}$, where $\{ \bm {\mathcal B}_\ell \}$ is short for $\{ \bm {\mathcal B}_\ell \}_{\ell \in [ 1 : \Theta ]}$, by placing each $k$-tuple $\vecu \in \setT_{P_U}^{ (k) }$ in a uniformly-drawn bin. We show that the probability that $\{ \bm {\mathcal B}_\ell \}$ violates \eqref{eq:leCoveringCond} is smaller than one whenever $k \geq \eta_0 \bigl( |\setU|, |\setS|, \epsilon \bigr)$. From this it will follow that the desired partition exists.

To upper-bound the probability that $\{ \bm {\mathcal B}_\ell \}$ violates \eqref{eq:leCoveringCond}, we first upper-bound $$\Bigdistof {\nexists \, \vecu \in  \bm {\mathcal B}_\ell \textnormal{ s.t.\ } (\vecu,\vecs) \in \setT^{(k)}_{P_{U,S}}}$$ for any fixed pair $( \vecs, \ell ) \in \setT^{ (k) }_{P_S} \times [ 1 : \Theta ]$:
\begin{IEEEeqnarray}{l}
\Bigdistof { \nexists \, \vecu \in  \bm {\mathcal B}_\ell \textnormal{ s.t.\ } (\vecu,\vecs) \in \setT^{(k)}_{P_{U,S}} } \nonumber \\
\quad = \Bigdistof { \bm {\mathcal B}_\ell \cap \setT_{P_{U|S}}^{(k) } (\vecs) = \emptyset } \\
\quad \stackrel{(a)}= \bigl( 1 - \Theta^{-1} \bigr)^{\bigl| \setT_{P_{U|S}}^{(k)} (\vecs) \bigr|} \\
\quad \stackrel{(b)}\leq \Bigl( 1 - 2^{-k (H (U|S) - \epsilon) + 1} \Bigr)^{\bigl| \setT_{P_{U|S}}^{(k)} (\vecs) \bigr|} \\
\quad \stackrel{(c)}\leq \exp \Bigl\{ - 2^{k \epsilon - \log (1 + k) |\setU| \, |\setS| + 1} \Bigr\}, \label{eq:leCoveringCondNotFixedM}
\end{IEEEeqnarray}
where $(a)$ holds because each $k$-tuple $\vecu \in \setT_{P_{U|S}}^{(k)} (\vecs)$ is placed in $\setB_\ell$ with probability $\Theta^{-1}$; $(b)$ holds because
\begin{IEEEeqnarray}{l}
\Theta = \Bigl\lceil 2^{k (H (U|S) - \epsilon)} \Bigr\rceil \leq 2^{k (H (U|S) - \epsilon) + 1},
\end{IEEEeqnarray}
where the last inequality holds by assumption that $\epsilon < H (U|S)$; and $(c)$ holds because $1 - \xi \leq e^{-\xi}, \,\, \xi \in \reals$, and because $\bigl|\setT_{P_{U|S}}^{(k)} (\vecs)\bigr| \geq (1 + k)^{- |\setU| \, |\setS|} 2^{k H (U|S)}$ \cite[Lemma~2.5]{csiszarkoerner11}. Having obtained \eqref{eq:leCoveringCondNotFixedM} for every fixed $( \vecs, \ell ) \in \setT^{ (k) }_{P_S} \times [ 1 : \Theta ]$, we use the Union-of-Events bound to upper-bound the probability that $\{ \bm {\mathcal B}_\ell \}$ violates \eqref{eq:leCoveringCond}:
\begin{IEEEeqnarray}{l}
\Bigdistof { \exists \, ( \vecs, \ell ) \in \setT^{ (k) }_{P_S} \times [ 1 : \Theta ] \textnormal{ s.t.\ } \setB_\ell \cap \setT_{P_{U|S}}^{(k)} = \emptyset } \nonumber \\
\quad \stackrel{(a)}\leq \Bigl| \setT_{P_S}^{(k)} \Bigr| \, \Theta \exp \Bigl\{ - 2^{k \epsilon - \log (1 + k) |\setU| \, |\setS| + 1} \Bigr\} \\
\quad \stackrel{(b)}\leq \exp \Bigl\{ - 2^{ k \epsilon - \log (1 + k) |\setU| \, |\setS| + 1 } + k ( \ln |\setS| + \ln |\setU| - \epsilon \ln 2 ) \Bigr\}, \label{eq:leCoveringCondFixedM}
\end{IEEEeqnarray}
where $(a)$ follows from the Union-of-Events bound and \eqref{eq:leCoveringCondNotFixedM}; and $(b)$ holds because $\bigl|\setT_{P_S}^{(k)}\bigr| \leq |\setS|^k$ and $\Theta \leq |\setU|^k$. The exponent on the RHS of \eqref{eq:leCoveringCondFixedM}, $$- 2^{ k \epsilon - \log (1 + k) |\setU| \, |\setS| } + k ( \ln |\setS| + \ln |\setU| - \epsilon \ln 2 ),$$ depends only on $k$, $|\setU|$, $|\setS|$, and $\epsilon$, and it tends to $- \infty$ as $k$ tends to infinity. Consequently, there exists some $\eta_0 \bigl( |\setU|, |\setS|, \epsilon \bigr)$ that guarantees that the exponent is negative whenever $k \geq \eta_0 \bigl( |\setU|, |\setS|, \epsilon \bigr)$. For such values of $k$ the RHS of \eqref{eq:leCoveringCondFixedM} is smaller than one, and the desired partition exists.
\end{proof}

\section{A Proof of Theorem~\ref{th:causPositive}}\label{sec:pfCausPositive}

The proof consists of a direct and a converse part. We first establish the direct part.

\begin{proof}[Direct Part]
If there exists a partition $\setY_0, \, \setY_1$ of $\setY$ satisfying \eqref{eq:causPositive}, then the encoder can transmit a bit $m \in \{ 0,1 \}$ error-free in one channel use: If $m = 0$ and the Time-1 channel-state is $s \in \setS$, then it sends some $x \in \setX$ for which $\channel {\setY_0}{x,s} = 1$, and if $m = 1$ and the Time-1 channel-state is $s \in \setS$, then it sends some $x^\prime \in \setX$ for which $\channel {\setY_1}{x^\prime,s} = 1$. This allows the decoder to recover the transmitted bit error-free by declaring ``$m = 0$'' if the Time-$1$ channel-output is in $\setY_0$ and ``$m = 1$'' if the Time-$1$ channel-output is in $\setY_1$.
\end{proof}

We next prove the converse part of Theorem~\ref{th:causPositive}.

\begin{proof}[Converse Part]
To prove that \eqref{eq:causPositive} is necessary for $C^{\textnormal{caus}}_{\textnormal{f},0}$ to be positive, we will show that if no partition $\setY_0, \, \setY_1$ of $\setY$ satisfies \eqref{eq:causPositive}, then it is impossible to transmit a bit error-free. Assume then that no such partition exists, and let the bit take values in the set $\setM = \{ 0,1 \}$. Fix a blocklength~$n$ and $n$ encoding mappings $$f_i \colon \setM \times \setS^i \times \setY^{i-1} \rightarrow \setX, \quad i \in [ 1:n ].$$ To show that the mappings do not achieve error-free transmission, we will exhibit a pair of state sequences $\vecs, \, \tilde \vecs \in \setS^n$ and an output sequence $\vecy \in \setY^n$ that for every $i \in [1:n]$ satisfy
\begin{IEEEeqnarray}{l}
W \bigl( y_i \bigl| f_i (0,s^i,y^{i-1}), s_i \bigr) \, W \bigl( y_i \bigl| f_i (1,\tilde s^i,y^{i-1}), \tilde s_i \bigr) > 0. \label{eq:causBadOutputs}
\end{IEEEeqnarray}
This will rule out error-free transmission, because if the state sequence is either $\vecs$ or $\tilde \vecs$, then the decoder, not knowing which, cannot recover the bit.

Our construction of $\vecs, \, \tilde \vecs \in \setS^n$ and $\vecy \in \setY^n$ is inductive, i.e., we first exhibit Time-$1$ components $s_1, \, \tilde s_1 \in \setS$ and $y_1 \in \setY$ that satisfy \eqref{eq:causBadOutputs} for $i = 1$, and we then repeatedly increment $i$ by one (until it reaches $n$) and exhibit Time-$i$ components $s_i, \, \tilde s_i \in \setS$ and $y_i \in \setY$ that---together with the previously constructed $\{ s_j, \tilde s_j \}_{j \in [1:i-1]}$ and $\{ y_j \}_{j \in [1:i-1]}$---satisfy \eqref{eq:causBadOutputs}.

We start by exhibiting Time-$1$ components $s_1, \, \tilde s_1 \in \setS$ and $y_1 \in \setY$ that satisfy \eqref{eq:causBadOutputs} for $i = 1$. To this end we show that
\begin{IEEEeqnarray}{l}
\exists \, s, \, \tilde s \in \setS, \, y \in \setY \textnormal{ s.t.\ } \bigchannel {y}{f_1 (0,s), s} \, \bigchannel {y}{f_1 (1,\tilde s), \tilde s} > 0. \label{eq:causPositiveDoesNotHoldCond}
\end{IEEEeqnarray}
Our proof of \eqref{eq:causPositiveDoesNotHoldCond} is by contradiction. To reach a contradiction, suppose that \eqref{eq:causPositiveDoesNotHoldCond} does not hold, so
\begin{IEEEeqnarray}{l}
\Bigl( \bigchannel {y}{f_1 (0,s), s} \, \bigchannel {y}{f_1 (1,\tilde s), \tilde s} = 0, \,\, \forall \, y \in \setY \Bigr), \,\, \forall \, s, \, \tilde s \in \setS. \label{eq:causMICont}
\end{IEEEeqnarray}
Define the set
\begin{IEEEeqnarray}{l}
\setY_0 = \Bigr\{ y \in \setY \colon \exists \, s \in \setS \textnormal{ s.t.\ } \bigchannel {y}{f_1 (0,s), s} > 0 \Bigl\}
\end{IEEEeqnarray}
and its $\setY$-complement $\setY_1 = \setY \setminus \setY_0$. By the definition of the set $\setY_0$
\begin{IEEEeqnarray}{l}
\bigchannel {\setY_0}{f_1 (0,s), s} = 1, \,\, \forall \, s \in \setS,
\end{IEEEeqnarray}
and by \eqref{eq:causMICont}
\begin{IEEEeqnarray}{l}
\bigchannel {\setY_0}{f_1 (1,\tilde s), \tilde s} = 0, \,\, \forall \, \tilde s \in \setS,
\end{IEEEeqnarray}
so
\begin{IEEEeqnarray}{l}
\bigchannel {\setY_1}{f_1 (1,s), s} = 1 - \bigchannel {\setY_0}{f_1 (1,s), s} = 1, \,\, \forall \, s \in \setS.
\end{IEEEeqnarray}
This contradicts our assumption that no partition $\setY_0, \, \setY_1$ of $\setY$ satisfies \eqref{eq:causPositive} and thus establishes \eqref{eq:causPositiveDoesNotHoldCond}. If $s$, $\tilde s$, and $y$ are as promised in \eqref{eq:causPositiveDoesNotHoldCond}, then we choose $s_1 = s$, $\tilde s_1 = \tilde s$, and $y_1 = y$ with the result that \eqref{eq:causBadOutputs} holds for $i = 1$.

For the inductive step, suppose $\ell \in [2:n]$, and that we have already constructed $\{ s_i, \tilde s_i \}_{i \in [1:\ell-1]}$ and $\{ y_i \}_{i \in [1:\ell-1]}$ for which \eqref{eq:causBadOutputs} holds for every $i \in [1:\ell - 1]$. We construct Time-$\ell$ components $s_\ell, \, \tilde s_\ell \in \setS$ and $y_\ell \in \setY$ that---together with the previously constructed $\{ s_i, \tilde s_i \}_{i \in [1:\ell-1]}$ and $\{ y_i \}_{i \in [1:\ell-1]}$---satisfy \eqref{eq:causBadOutputs} when we substitute $\ell$ for $i$ in \eqref{eq:causBadOutputs}, i.e., we show that
\begin{IEEEeqnarray}{l}
\exists \, s_\ell, \, \tilde s_\ell \in \setS, \, y_\ell \in \setY \textnormal{ s.t.\ } \bigchannel {y_\ell}{f_\ell (0,s^\ell,y^{\ell - 1}), s_\ell} \, \bigchannel {y_\ell}{f_\ell (1,\tilde s^\ell,y^{\ell - 1}), \tilde s_\ell} > 0. \label{eq:causPositiveDoesNotHoldCondEll}
\end{IEEEeqnarray}
Our proof of \eqref{eq:causPositiveDoesNotHoldCondEll} is by contradiction. To reach a contradiction, suppose that \eqref{eq:causPositiveDoesNotHoldCondEll} does not hold, so
\begin{IEEEeqnarray}{l}
\!\!\!\!\!\!\!\! \Bigl( \bigchannel {y_\ell}{f_\ell (0,s^\ell,y^{\ell - 1}), s_\ell} \, \bigchannel {y_\ell}{f_\ell (1,\tilde s^\ell,y^{\ell - 1}), \tilde s_\ell} = 0, \,\, \forall \, y_\ell \in \setY \Bigr), \,\, \forall \, s_\ell, \, \tilde s_\ell \in \setS. \label{eq:causMIContEll}
\end{IEEEeqnarray}
Define the set
\begin{IEEEeqnarray}{l}
\setY_0 = \Bigr\{ y_\ell \in \setY \colon \exists \, s_\ell \in \setS \textnormal{ s.t.\ } \bigchannel {y_\ell}{f_\ell (0,s^\ell,y^{\ell - 1}), s_\ell} > 0 \Bigl\}
\end{IEEEeqnarray}
and its $\setY$-complement $\setY_1 = \setY \setminus \setY_0$. By the definition of the set $\setY_0$
\begin{IEEEeqnarray}{l}
\bigchannel {\setY_0}{f_\ell (0,s^\ell,y^{\ell - 1}), s_\ell} = 1, \,\, \forall \, s_\ell \in \setS,
\end{IEEEeqnarray}
and by \eqref{eq:causMIContEll}
\begin{IEEEeqnarray}{l}
\bigchannel {\setY_0}{f_\ell (1,\tilde s^\ell,y^{\ell - 1}), \tilde s_\ell} = 0, \,\, \forall \, \tilde s_\ell \in \setS,
\end{IEEEeqnarray}
so
\begin{IEEEeqnarray}{l}
\bigchannel {\setY_1}{f_\ell (1,\tilde s^\ell,y^{\ell - 1}), \tilde s_\ell} = 1 - \bigchannel {\setY_0}{f_\ell (1,\tilde s^\ell,y^{\ell - 1}), \tilde s_\ell} = 1, \,\, \forall \, \tilde s_\ell \in \setS.
\end{IEEEeqnarray}
This contradicts our assumption that no partition $\setY_0, \, \setY_1$ of $\setY$ satisfies \eqref{eq:causPositive}.

Since the construction goes through for every $\ell \in [1:n]$, when $\ell$ reaches $n$ we have constructed a pair of state sequences $\vecs, \, \tilde \vecs \in \setS^n$ and an output sequence $\vecy \in \setY^n$ that for every $i \in [1:n]$ satisfy \eqref{eq:causBadOutputs}.
\end{proof}

\section{A Proof of Theorem~\ref{th:causCapacity}}\label{sec:pfCausCapacity}

Suppose $\channel y {x,s}$ satisfies the condition in Theorem~\ref{th:causPositive} for $C^{\textnormal{caus}}_{\textnormal{f},0}$ to be positive. In this case the RHS of \eqref{eq:causCapacityAhlswedeFrom} and the RHS of \eqref{eq:causCapacity} are equal, because the latter is the zero-error feedback capacity of the (state-less) DMC $W^\prime (y|u)$ \eqref{eq:capacityShannon} and thus---using Ahlswede's alternative form \eqref{eq:capacityAhlswede}---can be alternatively expressed as \eqref{eq:causCapacityAhlswedeFrom}. It thus suffices to prove \eqref{eq:causCapacity}, i.e.,
\begin{IEEEeqnarray}{l}
C^{\textnormal{caus}}_{\textnormal{f},0} = \max_{P_U} \min_y - \log \sum_{u \colon W^\prime (y|u) > 0} P_U (u). \label{eq:capacityCausShannon}
\end{IEEEeqnarray}

The proof consists of a direct and a converse part. We first establish the direct part.

\begin{proof}[Direct Part]
That the RHS of \eqref{eq:capacityCausShannon} is achievable follows from Shannon's results on the zero-error capacity \cite[Theorem~7]{shannon56} and on channels with states \cite{shannon58}. Indeed, the encoder can convert the channel to a state-less channel whose inputs are Shannon strategies \cite{shannon58}. That is, it can perform the encoding over the set $\setU$, where $\bigl\{ g (u,\cdot) \colon u \in \setU \bigr\}$ equals $\setX^\setS$, and transmit at Time~$i$ the channel input $g \bigl( u_i (m), S_i \bigr)$, where $u_i (m)$ is the $i$-th component of the codeword~$\vecu (m)$ corresponding to the message~$m$ to be transmitted (see Figure~\ref{fig:shannonStrategy} and \cite[Remark~7.6]{gamalkim11}). In doing so, the encoder transforms the SD-DMC $\channel y {x,s}$ with causal SI and feedback into the state-less DMC $$W^\prime (y|u) = \sum_{s \in \setS} Q_S (s) \, \bigchannel {y}{g (u,s),s}$$ with feedback. Because the zero-error feedback capacity of the DMC $W^\prime (y|u)$ is equal to the RHS of \eqref{eq:capacityCausShannon} (see \cite[Theorem~7]{shannon56} or \eqref{eq:capacityShannon}), the RHS of \eqref{eq:capacityCausShannon} is achievable.
\end{proof}

We next establish the converse part.

\begin{proof}[Converse Part]
To establish that $C^{\textnormal{caus}}_{\textnormal{f},0}$ cannot be larger than the RHS of \eqref{eq:capacityCausShannon}, we adapt Shannon's converse of \cite[Theorem~7]{shannon56} to the present setting. Let
\begin{IEEEeqnarray}{l}
\xi = \max_{P_U} \min_y - \log \sum_{u \colon W^\prime (y|u) > 0} P_U (u), \label{eq:defXiConvCausSI}
\end{IEEEeqnarray}
and fix a finite set $\setM$, a blocklength~$n$, and $n$ encoding mappings $$f_i \colon \setM \times \setS^i \times \setY^{i-1} \rightarrow \setX, \quad i \in [ 1:n ].$$

We will exhibit an output sequence $\vecy \in \setY^n$ for which the corresponding post-$n$ survivor-set
\begin{IEEEeqnarray}{l}
\setM_n = \Biggl\{ m \in \setM \colon \exists \, \vecs \in \setS^n \textnormal{ s.t.\ } \prod^n_{i = 1} W \bigl( y_i \bigl| f_i (m,s^i,y^{i-1}),s_i \bigr) > 0 \Biggr\} \label{eq:capacityCausBadOutputs}
\end{IEEEeqnarray}
is of size at least
\begin{equation}
|\setM_n| \geq 2^{-n \xi} \, |\setM|. \label{eq:capacityCausMessageSet}
\end{equation}
From \eqref{eq:capacityCausBadOutputs} and \eqref{eq:capacityCausMessageSet} it will then follow that the probability of a decoding error can only be zero if $|\setM| \leq 2^{n \xi}$, because otherwise $|\setM_n| \geq 2$ and none of the messages in $\setM_n$ can be ruled out by the decoder.

To conclude the proof, we show by mathematical induction over $i \in [0:n]$ that for every $i \in [0:n]$
\begin{subequations}\label{bl:capacityCausBadOutputsPosti}
\begin{IEEEeqnarray}{l}
\exists \, \vecy \in \setY^i \textnormal{ s.t.\ } \Bigl( \bigl| \setM_i (\vecy) \bigr| \geq 2^{- i \xi} \, |\setM| \Bigr), \label{eq:capacityCausBadOutputsSizePosti}
\end{IEEEeqnarray}
where $\setM_i (\vecy)$ is the post-$i$ survivor-set corresponding to $\vecy$, so
\begin{IEEEeqnarray}{l}
\setM_i (\vecy) = \Biggl\{ m \in \setM \colon \exists \, \vecs \in \setS^i \textnormal{ s.t.\ } \prod^i_{j = 1} W \bigl( y_j \bigl| f_j (m,s^j,y^{j-1}),s_j \bigr) > 0 \Biggr\}. \label{eq:capacityCausBadOutputsPosti}
\end{IEEEeqnarray}
\end{subequations}
In \eqref{eq:capacityCausBadOutputsPosti} we use the convention that the empty product is $1$, so $\setM_0 (\emptyset) = \setM$ for $i = 0$.

\begin{enumerate}
\item Basis $i = 0$: Because $\setM_0 (\emptyset) = \setM$, \eqref{bl:capacityCausBadOutputsPosti} holds for $i = 0$.
\item Inductive Step: Fix $\ell \in [1:n]$, and assume that \eqref{bl:capacityCausBadOutputsPosti} holds for $i = \ell - 1$, i.e., that there exists some $y^{\ell - 1} \in \setY^{\ell - 1}$ for which
\begin{equation}
\bigl| \setM_{\ell - 1} (y^{\ell - 1}) \bigr| \geq 2^{- (\ell - 1) \xi} \, |\setM|. \label{eq:capacityCausBadOutputsIndHyp}
\end{equation}
Suppose $y^{\ell-1}$ is as above. By the definition of the set $\setM_{\ell - 1} (y^{\ell - 1})$ \eqref{eq:capacityCausBadOutputsPosti} there exists a collection $\bigl\{ s^{\ell - 1} (m) \bigr\}_{m \in \setM_{\ell - 1} (y^{\ell - 1})}$ of $(\ell - 1)$-tuples from $\setS^{\ell - 1}$ for which
\begin{IEEEeqnarray}{l}
\prod^{\ell - 1}_{i = 1} W \Bigl( y_i \Bigl| f_i (m, s^i (m), y^{i - 1}), s_i (m) \Bigr) > 0, \,\, \forall \, m \in \setM_{\ell - 1}  (y^{\ell - 1}). \label{eq:capacityCausBadOutputsiCondEllMin1}
\end{IEEEeqnarray}
To prove that \eqref{eq:capacityCausBadOutputsSizePosti} holds for $i = \ell$, we show that
\begin{IEEEeqnarray}{rl}
\exists \, y \in \setY \textnormal{ s.t.\ } & \biggl( \Bigl| \Bigl\{ m \in \setM_{\ell - 1} (y^{\ell - 1}) \colon \exists \, s_\ell (m) \in \setS \textnormal{ s.t.\ } \nonumber \\
& \qquad W \bigl( y \bigl| f_\ell \bigl( m, s^\ell (m), y^{\ell - 1} \bigr), s_\ell (m) \bigr) > 0 \Bigr\} \Bigr| \nonumber \\
& \qquad \geq 2^{- \xi} \, \bigl| \setM_{\ell - 1} (y^{\ell - 1}) \bigr| \biggr). \label{eq:capacityCausBadOutputsiEqEllPf}
\end{IEEEeqnarray}
Setting $y_\ell$ to be the $y \in \setY$ promised in \eqref{eq:capacityCausBadOutputsiEqEllPf} will prove \eqref{bl:capacityCausBadOutputsPosti} for $i = \ell$.

Because $\bigl\{ g (u,\cdot) \colon u \in \setU \bigr\}$ equals $\setX^\setS$, for every $m \in \setM_{\ell - 1} (y^{\ell - 1})$ there exists a $u \in \setU$, call it $u_\ell (m)$, satisfying that
\begin{IEEEeqnarray}{l}
f_\ell \bigl( m, s^\ell (m), y^{\ell - 1} \bigr) = g \bigl( u, s_\ell (m) \bigr), \,\, \forall \, s_\ell (m) \in \setS.
\end{IEEEeqnarray}
This and \eqref{eq:causCapWPrimeExistsSW} imply that \eqref{eq:capacityCausBadOutputsiEqEllPf} is equivalent to
\begin{IEEEeqnarray}{rl}
\exists \, y \in \setY \textnormal{ s.t.\ } & \biggl( \Bigl| \Bigl\{ m \in \setM_{\ell - 1} (y^{\ell - 1}) \colon W^\prime \bigl( y \bigl| u_\ell (m) \bigr) > 0 \Bigr\} \Bigr| \nonumber \\
&\qquad \geq 2^{- \xi} \, \bigl| \setM_{\ell - 1} (y^{\ell - 1}) \bigr| \biggr). \label{eq:capacityCausBadOutputsiEqEllPf1}
\end{IEEEeqnarray}
It thus suffices to establish \eqref{eq:capacityCausBadOutputsiEqEllPf1}. The proof is essentially the converse of \cite[Theorem~7]{shannon56}. For every $u \in \setU$ denote by $F_u$ the fraction of all the messages $m \in \setM_{\ell - 1} (y^{\ell - 1})$ for which $u_\ell (m)$ equals $u$, so
\begin{IEEEeqnarray}{l}
F_u \triangleq \frac{\bigl| \bigl\{ m \in \setM_{\ell - 1} (y^{\ell - 1}) \colon u_\ell (m) = u \bigr\} \bigr|}{\bigl| \setM_{\ell - 1} (y^{\ell - 1}) \bigr|}, \quad u \in \setU.
\end{IEEEeqnarray}
The construction of the collection $\{ F_u \}_{u \in \setU}$ guarantees that for every $y \in \setY$
\begin{IEEEeqnarray}{l}
\Bigl| \Bigl\{ m \in \setM_{\ell - 1} (y^{\ell - 1}) \colon W^\prime \bigl( y \bigl| u_\ell (m) \bigr) > 0 \Bigr\} \Bigr| = \!\!\!\! \sum_{u \colon W^\prime (y|u) > 0} \!\!\!\! F_u \, \bigl| \setM_{\ell - 1} (y^{\ell - 1}) \bigr|. \label{eq:capacityCausConvSizeYSurvs}
\end{IEEEeqnarray}
Moreover, the collection $\{ F_u \}_{u \in \setU}$ is like a PMF on $\setU$, i.e.,
\begin{subequations}\label{eq:capacityCausFuLikeProbAss}
\begin{IEEEeqnarray}{l}
F_u \geq 0, \,\, \forall \, u \in \setU, \\
\sum_{u \in \setU} F_u = 1.
\end{IEEEeqnarray}
\end{subequations}
Choose $y$ as one that---among all elements of $\setY$---maximizes
\begin{equation}
\sum_{u \colon W^\prime (y|u) > 0} F_u. \label{eq:choiceYMaxFu}
\end{equation}
For this choice of $y$ we obtain the lower bound
\begin{IEEEeqnarray}{l}
\Bigl| \Bigl\{ m \in \setM_{\ell - 1} (y^{\ell - 1}) \colon W^\prime \bigl( y \bigl| u_\ell (m) \bigr) > 0 \Bigr\} \Bigr| \nonumber \\
\quad \stackrel{(a)}= \sum_{u \colon W^\prime (y|u) > 0} F_u \, \bigl|\setM_{\ell - 1} (y^{\ell - 1})\bigr| \\
\quad \stackrel{(b)}= \max_{y \in \setY} \sum_{u \colon W^\prime (y|u) > 0} F_u \, \bigl|\setM_{\ell - 1} (y^{\ell - 1})\bigr| \\
\quad \stackrel{(c)}\geq \min_{P_U} \max_{y \in \setY} \sum_{u \colon W^\prime (y|u) > 0} P_U (u) \, \bigl|\setM_{\ell - 1} (y^{\ell - 1})\bigr| \\
\quad \stackrel{(d)}\geq 2^{-\xi} \, \bigl|\setM_{\ell - 1} (y^{\ell - 1})\bigr|,
\end{IEEEeqnarray}
where $(a)$ holds by \eqref{eq:capacityCausConvSizeYSurvs}; $(b)$ holds because $y$ maximizes \eqref{eq:choiceYMaxFu} and consequently also \eqref{eq:capacityCausConvSizeYSurvs} among all elements of $\setY$; $(c)$ holds by \eqref{eq:capacityCausFuLikeProbAss}; and $(d)$ holds by \eqref{eq:defXiConvCausSI}. This proves \eqref{eq:capacityCausBadOutputsiEqEllPf1} and consequently also \eqref{eq:capacityCausBadOutputsiEqEllPf}. If $y$ is as promised in \eqref{eq:capacityCausBadOutputsiEqEllPf} and we choose $y_\ell$ to be $y$, then it follows from \eqref{eq:capacityCausBadOutputsIndHyp} and \eqref{eq:capacityCausBadOutputsiCondEllMin1} that for $i = \ell$ the post-$\ell$ survivor-set $\setM_\ell (y^\ell)$ of \eqref{eq:capacityCausBadOutputsPosti} is of size at least $2^{- \ell \xi} \, |\setM|$, and hence that \eqref{bl:capacityCausBadOutputsPosti} holds for $i = \ell$.
\end{enumerate}

\end{proof}

\section{A Proof of Remarks~\ref{re:causPosSuffButNotNec1} and \ref{re:causPosSuffButNotNec}}\label{sec:pfCausPosSuffButNotNec}

\begin{proof}
We begin with Remark~\ref{re:causPosSuffButNotNec}. We first show that Condition~\eqref{eq:condCausCapFormulaPos} implies that the RHS of \eqref{eq:causCapacityAhlswedeFrom} is positive. To this end assume that \eqref{eq:condCausCapFormulaPos} holds. Recall that $\bigl\{ g (u,\cdot) \colon u \in \setU \bigr\}$ equals $\setX^\setS$. This, combined with \eqref{eq:condCausCapFormulaPos}, implies that for every $y \in \setY$ there must exist a $u \in \setU$, call it $u_y$, that satisfies
\begin{IEEEeqnarray}{l}
W \bigl( y \bigl| g (u_y, s), s \bigr) = 0, \,\, \forall \, s \in \setS. \label{eq:prResCausDefUY}
\end{IEEEeqnarray}
The mapping $y \mapsto u_y$ need not be one-to-one, but it follows from \eqref{eq:prResCausDefUY} that the cardinality of its range must exceed one. Let $U$ be uniform over the set $\{ u_y \colon y \in \setY \}$, so
\begin{IEEEeqnarray}{l}
P_U = \unif \bigl( \{ u_y \colon y \in \setY \} \bigr).
\end{IEEEeqnarray}
For every $P_{Y|U} \in \mathscr P (W^\prime)$ we obtain w.r.t.\ $P_U \times P_{Y|U}$
\begin{IEEEeqnarray}{rCl}
I (U;Y) & \stackrel{(a)}= & \log | \{ u_y \colon y \in \setY \} | - H (U|Y) \\
& \stackrel{(b)}\geq & \log | \{ u_y \colon y \in \setY \} | - \log \bigl( | \{ u_y \colon y \in \setY \} | - 1 \bigr) \\
& \stackrel{(c)}\geq & \log |\setY| - \log \bigl( |\setY| - 1 \bigr),
\end{IEEEeqnarray}
where $(a)$ holds because $U$ is uniform over $\{ u_y \colon y \in \setY \}$; $(b)$ holds because $U \neq u_Y$ and because the uniform distribution maximizes entropy; and $(c)$ holds because $|\setY| \geq 2$ (which follows from \eqref{eq:condCausCapFormulaPos}), and because the function $$\xi \mapsto \frac{\xi}{\xi-1}, \quad \xi > 1$$ is strictly monotonically decreasing in $\xi$. This implies that the RHS of \eqref{eq:causCapacityAhlswedeFrom} is positive:
\begin{IEEEeqnarray}{l}
\max_{P_U} \min_{P_{Y|U} \in \mathscr P (W^\prime)} I (U;Y) \nonumber \\
\quad \geq \log |\setY| - \log ( |\setY| - 1) \\
\quad > 0,
\end{IEEEeqnarray}
where the mutual information is computed w.r.t.\ the joint PMF $P_U \times P_{Y|U}$.\\

We next turn to proving that if the RHS of \eqref{eq:causCapacityAhlswedeFrom} is positive, then \eqref{eq:condCausCapFormulaPos} holds. We prove the contrapositive: we show that if for some $( s^\star, y^\star ) \in \setS \times \setY$
\begin{equation}
\channel {y^\star} {x,s^\star} > 0, \,\, \forall \, x \in \setX, \label{eq:causNecSuffCondCapFormPos}
\end{equation}
then the RHS of \eqref{eq:causCapacityAhlswedeFrom} must be zero. Suppose $s^\star$ and $y^\star$ are as above, and introduce the conditional PMF
\begin{equation}
P_{Y|U} (y|u) = \begin{cases} 1 &\textnormal{if } y = y^\star, \\ 0 &\textnormal{otherwise}. \end{cases} \label{eq:causNecSuffCondCapFormPosPYU}
\end{equation}
Note that $P_{Y|U} \in \mathscr P (W^\prime)$, because \eqref{eq:causNecSuffCondCapFormPos} implies that
\begin{IEEEeqnarray}{l}
W \bigl( y^\star \bigl| g (u,s^\star),s^\star \bigr) > 0, \,\, \forall \, u \in \setU.
\end{IEEEeqnarray}
For every PMF $P_U$ on $\setU$ \eqref{eq:causNecSuffCondCapFormPosPYU} implies that $(P_U \times P_{Y|U})$-almost-surely $Y = y^\star$, and hence we obtain w.r.t.\ the joint PMF $P_U \times P_{Y|U}$
\begin{equation}
I (U;Y) = 0.
\end{equation}
Because this holds for every PMF $P_U$ on $\setU$, we conclude that the RHS of \eqref{eq:causCapacityAhlswedeFrom} is zero:
\begin{IEEEeqnarray}{l}
\max_{P_U} \min_{P_{Y|U} \in \mathscr P (W^\prime)} I (U;Y) = 0,
\end{IEEEeqnarray}
where the mutual information is computed w.r.t.\ the joint PMF $P_U \times P_{Y|U}$.\\

Having established Remark~\ref{re:causPosSuffButNotNec}, we next prove Remark~\ref{re:causPosSuffButNotNec1} by providing an example for which Theorem~\ref{th:causPositive} implies that $C^{\textnormal{caus}}_{\textnormal{f},0} = 0$, and yet \eqref{eq:condCausCapFormulaPos} holds. Such an example is the SD-DMC $\channel y {x,s}$ for which $\setX = \setY = \{ 0,1,2 \}$ and
\begin{equation}
\channel y {x,s} = \begin{cases} \frac{1}{2} &\textnormal{if } y \neq x \oplus_3 2, \\ 0 &\textnormal{otherwise}. \end{cases}
\end{equation}
\end{proof}

\section{A Proof of Theorem~\ref{th:stateAmplification}}\label{sec:pfThStateAmplification}

The proof consists of a direct and a converse part. We first establish the direct part.

\begin{proof}[Direct Part]
We assume that \eqref{eq:positive} holds and show that the RHS of \eqref{eq:capacitySA} is achievable. If the RHS of \eqref{eq:capacitySA} is zero, then there is nothing to prove, so we assume that it is positive. The proof builds on the proofs of Remark~\ref{re:chUsesPerBit} and the direct part of Theorem~\ref{th:capacity}, adapting both to the case where---in addition to the message---the encoder wants to convey to the receiver error-free also the state sequence. We partition the blocklength-$n$ transmission into $B + 1$ blocks, with each of the first $B$ blocks being of length~$k$, and with Block~$(B + 1)$ being of length $k^\prime$. The choice we shall later make for $k^\prime$ will be such that the last block be of negligible length compared to $B k$ and therefore not affect the code's asymptotic rate.

Before the transmission begins, the encoder is revealed the realization $\vecs \triangleq S^n$ of the state sequence, from which it can compute the realization $\vecs^{(b)} \triangleq s^{b k}_{(b-1) k + 1}$ of the Block-$b$ state-sequence for every $b \in [1:B]$ and the realization $\vecs^{(B + 1)} \triangleq s^{B k + k^\prime}_{B k + 1}$ of the Block-$(B + 1)$ state-sequence. In the first $B$ blocks our scheme draws on the scheme we used in the direct part of Theorem~\ref{th:capacity} but with the following two modifications: 1)~to guarantee that the decoder can recover the Block-$(B + 1)$ state-sequence $\vecs^{(B + 1)}$, the encoder transmits the pair $\bigl( m,\vecs^{(B + 1)} \bigr) \in \setM \times \setS^{k^\prime}$ comprising the message to be sent and the Block-$(B + 1)$ state-sequence; and 2)~to guarantee that the decoder can recover the state sequences $\bigl\{ \vecs^{(b)} \bigr\}_{b \in [1:B]}$ during the first $B$ blocks, we choose the auxiliary chance variable $U$ to comprise the channel state $S$ and consequently to be $(X,S)$ (because we can w.l.g.\ restrict $X$ to be a function of $U$ and $S$). The last block draws on Phase~2 of the scheme we used to prove Remark~\ref{re:chUsesPerBit}. We next describe the proposed coding scheme in detail, beginning with the first $B$ blocks and ending with the last block.\\

For every $b \in [1:B]$ we adapt the Block~$b$ transmission of the scheme we used in the direct part of Theorem~\ref{th:capacity} as follows. Assume for now that the decoder---while incognizant of $\vecs^{(1)}, \ldots, \vecs^{(B)}$---knows the empirical types $P_{\vecs^{(1)}}, \ldots, P_{\vecs^{(B)}}$: Block~$(B + 1)$ will ensure that the scheme works even though the decoder is incognizant of these types. Let $\bm { \mathcal I}_0 = \setM \times \setS^{k^\prime}$ be the set of all possible pairs of message $m^\prime \in \setM$ and Block-$(B + 1)$ state-sequence $\vecs^\prime \in \setS^{k^\prime}$, and for every $b \in [1:B]$ let $\bm { \mathcal I}_b \subset \setM \times \setS^{b k} \times \setS^{k^\prime}$ be the (random) set comprising all the triples of message $m^\prime \in \setM$, state sequence $\hat \vecs \in \setS^{bk}$ pertaining to the first $b$ blocks, and Block-$(B + 1)$ state-sequence $\vecs^\prime \in \setS^{k^\prime}$ that have a positive posterior probability given the channel outputs $Y^{bk}$ and the empirical types $\bigl\{ P_{\vecs^{(b^\prime)}} \bigr\}_{b^\prime \in [1:b]}$ during the first $b$ blocks. Choose some $k$-type $P_{X,S}^{(b)}$ whose $\setS$-marginal $P_S^{(b)}$ equals $P_{\vecs^{(b)}}$. In the following, unless otherwise specified, all entropies and mutual informations are computed w.r.t.\ the joint PMF $P_{X,S}^{(b)}$.

For any $k$-length state-sequence $\vecs^\prime \in \setT^{(k)}_{P^{(b)}_S}$, let $L^{(k)}_{P^{(b)}_{X|S}}$ denote the size of the $P^{(b)}_{X|S} (\vecs^\prime)$-shell $\setT^{(k)}_{P^{(b)}_{X|S}} (\vecs^\prime)$, i.e.,
\begin{equation}
L^{(k)}_{P^{(b)}_{X|S}} = \biggl|\setT^{(k)}_{P_{X|S}^{(b)}} (\vecs^\prime) \biggr|, \quad \vecs^\prime \in \setT^{(k)}_{P^{(b)}_S}. \label{eq:defSizeSetsSA}
\end{equation}
This size does not depend on $\vecs^\prime \in \setT^{(k)}_{P^{(b)}_S}$, and by \cite[Lemma~2.5]{csiszarkoerner11}
\begin{equation}
L^{(k)}_{P^{(b)}_{X|S}} \geq (1 + k)^{- |\setX| \, |\setS|} \, 2^{k H (X|S)}. \label{eq:sizeSetsSA}
\end{equation}
We partition $\bm { \mathcal I}_{b-1}$ into $L^{(k)}_{P^{(b)}_{X|S}}$ subsets whose size is between $$\biggl\lfloor |\bm { \mathcal I}_{b-1}| / L^{(k)}_{P^{(b)}_{X|S}} \biggr\rfloor \quad \textnormal{and} \quad \biggl\lceil |\bm { \mathcal I}_{b-1}| / L^{(k)}_{P^{(b)}_{X|S}} \biggr\rceil;$$ and we associate with each set a different bin from the bins $$\setB_\ell \subseteq \setT^{(k)}_{P^{(b)}_{X,S}}, \quad \ell \in \biggl[ 1 : L^{(k)}_{P^{(b)}_{X|S}} \biggr],$$ where the bins $\{ \setB_\ell \}$ are pairwise disjoint subsets of $\setT^{(k)}_{P^{(b)}_{X,S}}$
\begin{subequations}
\begin{IEEEeqnarray}{l}
\setB_\ell \cap \setB_{\ell^\prime} = \emptyset, \,\, \Biggl( \forall \, \ell, \, \ell^\prime \in \biggl[ 1 : L^{(k)}_{P^{(b)}_{X|S}} \biggr], \, \ell^\prime \neq \ell \Biggr), \label{eq:binsDisjointSA}
\end{IEEEeqnarray}
and where each bin ``covers'' $\setT^{(k)}_{P^{(b)}_S}$ exactly in the sense that
\begin{IEEEeqnarray}{l}
\forall \, (\vecs, \ell) \in \setT^{(k)}_{P^{(b)}_S} \times \biggl[ 1 : L^{(k)}_{P^{(b)}_{X|S}} \biggr] \quad \exists ! \, \vecx \in \setT^{(k)}_{P^{(b)}_X} \textnormal{ s.t.\ } (\vecx,\vecs) \in \setB_\ell. \label{eq:coveringCondSA}
\end{IEEEeqnarray}
\end{subequations}
(Unlike the direct part of Theorem~\ref{th:capacity}, here we need not invoke Lemma~\ref{le:covering} to guarantee the existence of such bins. Indeed, that such bins exist follows from the definition of $L^{(k)}_{P^{(b)}_{X|S}}$ \eqref{eq:defSizeSetsSA}: for every $\vecs \in \setT^{(k)}_{P^{(b)}_S}$ there exist $L^{(k)}_{P^{(b)}_{X|S}}$ different $\vecx$ for which $(\vecx, \vecs) \in \setT^{(k)}_{P^{(b)}_{X,S}}$ \eqref{eq:defSizeSetsSA}, and hence we can choose some collection $\{ \setB_\ell \}$ satisfying that for every $\vecs \in \setT^{(k)}_{P^{(b)}_S}$ each of the $L^{(k)}_{P^{(b)}_{X|S}}$ pairs in $\setT^{(k)}_{P^{(b)}_{X,S}}$ whose second component is $\vecs$ is contained in a different bin from the $L^{(k)}_{P^{(b)}_{X|S}}$ bins $\{ \setB_\ell \}$.) To transmit the triple~$\bigl( m,s^{bk},\vecs^{(B + 1)} \bigr)$, the encoder picks from the bin that is associated with the subset of $\bm { \mathcal I}_{b-1}$ containing $\bigl( m,s^{(b-1) k},\vecs^{(B + 1)} \bigr)$ the pair $(\vecx^\prime,\vecs^\prime)$ satisfying $\vecs^\prime = \vecs^{(b)}$ \eqref{eq:coveringCondSA} and chooses as the Block-$b$ channel-inputs $\vecx^{(b)}$ the $k$-tuple $\vecx^\prime$.

Based on the Block-$b$ outputs $\vecy^{(b)} \triangleq Y^{b k}_{(b-1) k + 1}$ and the empirical type $P_{\vecs^{(b)}}$, the encoder and decoder compute $\bm { \mathcal I}_b$ as follows. First, they identify all the pairs $(\tilde \vecx, \tilde \vecs) \in \setT^{ (k) }_{ P^{(b)}_{X,S} }$ that could have produced the observed Block-$b$ outputs $\vecy^{(b)}$. For each such pair $(\tilde \vecx, \tilde \vecs)$ they identify the unique bin that contains it, and they include in $\bm { \mathcal I}_b$ all the triples $(m^\prime,\hat \vecs,  \vecs^\prime) \in \setM \times \setS^{bk} \times \setS^{k^\prime}$ satisfying that $\hat s^{bk}_{(b-1)k + 1} = \tilde \vecs$ and that $(m^\prime,\hat s^{(b-1)k}, \vecs^\prime)$ is an element of the subset of $\bm { \mathcal I}_{b-1}$ with which this bin is associated.

Using arguments similar to those in the direct part of Theorem~\ref{th:capacity}, we next show that
\begin{subequations}\label{bl:noTriplesIbGivenOutputsSA}
\begin{IEEEeqnarray}{l}
| \bm { \mathcal I}_b | \leq \biggl( \max_{P_{Y|X,S} \in \mathscr P (W)} 2^{ - k ( I (X,S;Y) - H (S) - \beta_k ) } \biggr) |\bm { \mathcal I}_{b-1}|, \label{eq:noTriplesIbGivenOutputsSA}
\end{IEEEeqnarray}
whenever
\begin{IEEEeqnarray}{l}
|\bm { \mathcal I}_{b-1}| \geq L^{(k)}_{P^{(b)}_{X|S}}, \label{eq:noTriplesIbGivenOutputsSACondIBMin1}
\end{IEEEeqnarray}
and
\begin{IEEEeqnarray}{l}
| \bm { \mathcal I}_b | \leq 2^{ k ( H (X,S) + \beta_k ) } \label{eq:noTriplesIbGivenOutputsSAAlt}
\end{IEEEeqnarray}
\end{subequations}
otherwise, where the mutual information is computed w.r.t.\ the joint PMF $P_{X,S}^{(b)} \times P_{Y|X,S}$, and where $\beta_k$ is given by
\begin{IEEEeqnarray}{l}
\beta_k = \frac{\log (1 + k ) \, |\setX| \, |\setS| \, (1 + |\setY|) + 1}{k} \label{eq:defBetakSA}
\end{IEEEeqnarray}
and hence converges to zero as $k$ tends to infinity. To this end note that, with probability one, the empirical type of the tuple $\bigl( \vecx^{(b)}, \vecs^{(b)}, \vecy^{(b)} \bigr)$ satisfies
\begin{subequations}\label{bl:constTypeXSYSA}
\begin{IEEEeqnarray}{l}
P_{\vecx^{(b)},\vecs^{(b)}} = P^{(b)}_{X,S}, \\
\Bigl( \channel y {x,s} = 0 \Bigr) \implies \Bigl( P_{\vecx^{(b)},\vecs^{(b)},\vecy^{(b)}} (x,s,y) = 0 \Bigr).
\end{IEEEeqnarray}
\end{subequations}
This allows us to upper-bound the number of pairs in $\setT^{ (k) }_{ P^{(b)}_{X,S} }$ that could have produced the observed Block-$b$ outputs $\vecy^{(b)}$: For every fixed $k$-type $P_{X,S,Y}$ on $\setX \times \setS \times \setY$ the number of pairs $(\tilde \vecx, \tilde \vecs) \in \setT^{ (k) }_{ P^{(b)}_{X,S} }$ that satisfy $\bigl( \tilde \vecx, \tilde \vecs, \vecy^{(b)} \bigr) \in \setT^{(k)}_{P_{X,S,Y}}$ cannot exceed $2^{k H (X,S|Y)}$, where the conditional entropy is computed w.r.t.\ the joint PMF $P_{X,S,Y}$ \cite[Lemma~2.5]{csiszarkoerner11}. This, combined with \eqref{bl:constTypeXSYSA} and the fact that the number of $k$-types on $\setX \times \setS \times \setY$ cannot exceed $(1 + k)^{|\setX| \, |\setS| \, |\setY|}$, implies that the number of pairs in $\setT^{(k)}_{P^{(b)}_{X,S}}$ that could have produced the observed Block-$b$ outputs $\vecy^{(b)}$ is upper-bounded by
\begin{IEEEeqnarray}{l}
2^{\log (1 + k) \, |\setX| \, |\setS| \, |\setY| } \max_{P_{Y|X,S} \in \mathscr P (W)} 2^{k H (X,S|Y)}, \label{eq:noSeqXSSA}
\end{IEEEeqnarray}
where the conditional entropy is computed w.r.t.\ the joint PMF $P^{(b)}_{X,S} \times P_{Y|X,S}$. Since the bins are pairwise disjoint \eqref{eq:binsDisjointSA}, no pair is contained in more than one bin. Every bin is associated with a subset of $\bm { \mathcal I}_{b-1}$ whose size is at most $\Bigl\lceil |\bm { \mathcal I}_{b-1}| / L^{(k)}_{P^{(b)}_{X|S}} \Bigr\rceil$; and by \eqref{eq:sizeSetsSA}
\begin{IEEEeqnarray}{l}
\biggl\lceil |\bm { \mathcal I}_{b-1}| / L^{(k)}_{P^{(b)}_{X|S}} \biggr\rceil \leq 2^{- k H (X|S) + \log (1 + k) \, |\setX| \, |\setS| + 1} \, |\bm { \mathcal I}_{b-1}|, \label{eq:sizeSetIbUBSA}
\end{IEEEeqnarray}
whenever \eqref{eq:noTriplesIbGivenOutputsSACondIBMin1} holds, and
\begin{IEEEeqnarray}{l}
\biggl\lceil |\bm { \mathcal I}_{b-1}| / L^{(k)}_{P^{(b)}_{X|S}} \biggr\rceil = 1 \label{eq:sizeSetIbUBSAAlt}
\end{IEEEeqnarray}
otherwise. From \eqref{eq:noSeqXSSA}--\eqref{eq:sizeSetIbUBSAAlt}, the fact that
\begin{IEEEeqnarray}{rCl}
H (X|S) - H (X,S|Y) & = & I (X,S;Y) - H (S),
\end{IEEEeqnarray}
and the inequality
\begin{IEEEeqnarray}{rCl}
H (X,S|Y) & \leq & H (X,S),
\end{IEEEeqnarray}
which holds because conditioning cannot increase entropy, we obtain \eqref{bl:noTriplesIbGivenOutputsSA}.

We next use \eqref{bl:noTriplesIbGivenOutputsSA} to show that---for some choice of the $k$-type $P_{X,S}^{(b)}$ and some $\gamma_k = \gamma_k \bigl( |\setX|, |\setS|, |\setY| \bigr)$, which converges to zero as $k$ tends to infinity---we can guarantee that
\begin{subequations}\label{bl:noTriplesIbGivenOutputsSA2}
\begin{IEEEeqnarray}{l}
| \bm { \mathcal I}_b | \leq \biggl( \max_{P_S} \min_{P_{X|S}} \max_{P_{Y|X,S} \in \mathscr P (W)} 2^{ - k ( I (X,S;Y) - H (S) - \gamma_k ) } \biggr) |\bm { \mathcal I}_{b-1}|, \label{eq:noTriplesIbGivenOutputsSA2}
\end{IEEEeqnarray}
whenever
\begin{IEEEeqnarray}{l}
|\bm { \mathcal I}_{b-1}| \geq 2^{k \log |\setX|}, \label{eq:noTriplesIbGivenOutputsSACondIBMin12}
\end{IEEEeqnarray}
and
\begin{IEEEeqnarray}{l}
| \bm { \mathcal I}_b | \leq 2^{ k (\log |\setX| + \log |\setS| + \gamma_k ) } \label{eq:noTriplesIbGivenOutputsSAAlt2}
\end{IEEEeqnarray}
\end{subequations}
otherwise, where the mutual information and the entropy are computed w.r.t.\ the joint PMF $P_S \times P_{X|S} \times P_{Y|X,S}$. To this end we will first infer from \eqref{bl:noTriplesIbGivenOutputsSA} that
\begin{subequations}\label{bl:noTriplesIbGivenOutputsSA3}
\begin{IEEEeqnarray}{l}
| \bm { \mathcal I}_b | \leq \biggl( \max_{P_{Y|X,S} \in \mathscr P (W)} 2^{ - k ( I (X,S;Y) - H (S) - \beta_k ) } \biggr) |\bm { \mathcal I}_{b-1}|, \label{eq:noTriplesIbGivenOutputsSA3}
\end{IEEEeqnarray}
whenever
\begin{IEEEeqnarray}{l}
|\bm { \mathcal I}_{b-1}| \geq 2^{k \log |\setX|}, \label{eq:noTriplesIbGivenOutputsSACondIBMin13}
\end{IEEEeqnarray}
and
\begin{IEEEeqnarray}{l}
| \bm { \mathcal I}_b | \leq 2^{ k (\log |\setX| + \log |\setS| + \beta_k ) } \label{eq:noTriplesIbGivenOutputsSAAlt3}
\end{IEEEeqnarray}
\end{subequations}
otherwise, where the mutual information is computed w.r.t.\ the joint PMF $P_{X,S}^{(b)} \times P_{Y|X,S}$, and where $\beta_k$ is defined in \eqref{eq:defBetakSA}. The following three observations show that $\eqref{bl:noTriplesIbGivenOutputsSA} \implies \eqref{bl:noTriplesIbGivenOutputsSA3}:$
\begin{enumerate}
\item[1)] By \eqref{eq:defSizeSetsSA} $L^{(k)}_{P_{X|S}^{(b)}} \leq |\setX|^k$, so whenever Condition~\eqref{eq:noTriplesIbGivenOutputsSACondIBMin13} holds so does \eqref{eq:noTriplesIbGivenOutputsSACondIBMin1}. Consequently, we obtain from \eqref{bl:noTriplesIbGivenOutputsSA} that whenever Condition~\eqref{eq:noTriplesIbGivenOutputsSACondIBMin13} holds the inequality \eqref{eq:noTriplesIbGivenOutputsSA3} holds.
\item[2)] Since
\begin{IEEEeqnarray}{l}
\log |\setX| - \bigl( I (X,S;Y) - H (S) - \beta_k \bigr) \leq \log |\setX| + \log |\setS| + \beta_k,
\end{IEEEeqnarray}
it follows from \eqref{bl:noTriplesIbGivenOutputsSA} that the inequality \eqref{eq:noTriplesIbGivenOutputsSAAlt3} holds whenever
\begin{IEEEeqnarray}{l}
L^{(k)}_{P_{X|S}^{(b)}} \leq |\bm { \mathcal I}_{b-1}| < 2^{k \log |\setX|}.
\end{IEEEeqnarray}
\item[3)] Since $H (X,S) \leq \log |\setX| + \log |\setS|$, it follows from \eqref{bl:noTriplesIbGivenOutputsSA} that the inequality \eqref{eq:noTriplesIbGivenOutputsSAAlt3} holds whenever
\begin{IEEEeqnarray}{l}
|\bm { \mathcal I}_{b-1}| < L^{(k)}_{P_{X|S}^{(b)}}.
\end{IEEEeqnarray}
\end{enumerate}
Having established \eqref{bl:noTriplesIbGivenOutputsSA3}, we are now ready to prove \eqref{bl:noTriplesIbGivenOutputsSA2}. Since we can choose any $k$-type $P_{X,S}^{(b)}$ whose $\setS$-marginal $P_S^{(b)}$ is $P_{\vecs^{(b)}}$, we can choose $P_{X,S}^{(b)} = P_{\vecs^{(b)}} \times P^{(b)}_{X|S}$, where $P^{(b)}_{X|S}$ is the conditional $k$-type that---among all conditional $k$-types---maximizes
\begin{IEEEeqnarray}{l}
\min_{P_{Y|X,S} \in \mathscr P (W)} I (X,S;Y) - H (S),
\end{IEEEeqnarray}
where the mutual information and the entropy are computed w.r.t.\ the joint PMF $P_{X,S}^{(b)} \times P_{Y|X,S}$. Every conditional PMF can be approximated in the total variation distance by a conditional $k$-type when $k$ is sufficiently large; and, because entropy and mutual information are continuous in this distance \cite[Lemma~2.7]{csiszarkoerner11}, it follows that---for the above choice of the conditional $k$-type and some $\gamma_k = \gamma_k \bigl( |\setX|, |\setS|, |\setY| \bigr)$, which converges to zero as $k$ tends to infinity---\eqref{bl:noTriplesIbGivenOutputsSA3} implies \eqref{bl:noTriplesIbGivenOutputsSA2}.

Since we assume that the RHS of \eqref{eq:capacitySA} is positive, we can choose $B$ and $k$ sufficiently large so that
\begin{IEEEeqnarray}{l}
\!\! \left( \max_{P_S} \min_{P_{X|S}} \max_{P_{Y|X,S} \in \mathscr P (W)} 2^{ - B k ( I (X,S;Y) - H(S) - \gamma_k ) } \right) |\setM| \, |\setS|^{k^\prime} \nonumber \\
\quad \leq 2^{k ( \log |\setX| + \log |\setS| + \gamma_k )}; \label{eq:choiceBKForLastBlockSA}
\end{IEEEeqnarray}
and by \eqref{bl:noTriplesIbGivenOutputsSA2} this guarantees that, with probability one,
\begin{IEEEeqnarray}{l}
| \bm { \mathcal I}_B | \leq 2^{k (\log |\setX| + \log |\setS| + \gamma_k)}. \label{eq:UBIBSA}
\end{IEEEeqnarray}

We now deal with Block~$(B + 1)$. Because the decoder is incognizant of the empirical types $\bigl\{ P_{\vecs^{(b)}} \bigr\}_{b \in [1:B]}$, it cannot compute the post-Block-$B$ ambiguity-set $\bm { \mathcal I}_B$ comprising the pairs of message and length-$n$ state-sequence of positive posterior probability given the channel outputs $\bigl\{ \vecy^{(b)} \bigr\}_{b \in [1:B]}$ and the $k$-types $\bigl\{ P_{\vecs^{(b)}} \bigr\}_{b \in [1:B]}$. The uncertainty that needs to be addressed is about the message, the length-$n$ state-sequence, as well as the $B$ empirical types of $\vecs^{(1)}, \ldots, \vecs^{(B)}$. Let $\bm { \mathcal J}_B \subseteq \setM \times \setS^n$ denote the union of the post-Block-$B$ ambiguity-sets corresponding to all the different $B$-tuples of $k$-types on $\setS$, i.e., $\bm { \mathcal J}_B$ is the set of pairs of messages and state sequences that have a positive posterior probability given only the outputs $\bigl\{ \vecy^{(b)} \bigr\}_{b \in [1:B]}$ (and not the $k$-types $\bigl\{ P_{\vecs^{(b)}} \bigr\}_{b \in [1:B]}$). Because the post-Block-$B$ ambiguity-set corresponding to any given $B$-tuple of $k$-types on $\setS$ satisfies \eqref{eq:UBIBSA}, and because there are at most $(1 + k)^{B \, |\setS|}$ $B$-tuples of $k$-types on $\setS$,
\begin{IEEEeqnarray}{l}
|\bm { \mathcal J}_B| \leq 2^{k (\log |\setX| + \log |\setS| + \gamma_k) + B \log (1 + k) \, |\setS|}. \label{eq:UBJBSA}
\end{IEEEeqnarray}
In Block~$(B + 1)$ we resolve the set $\bm { \mathcal J}_B$. This will guarantee that the decoder can recover the transmitted message~$m$ and the length-$n$ state-sequence $\vecs$ error-free.

Block~$(B + 1)$ is similar to Phase~2 of the scheme we used to prove Remark~\ref{re:chUsesPerBit}: the encoder allocates to every pair $(m^\prime,\vecs^\prime) \in \bm { \mathcal J}_B$ a length-$k^\prime$ codeword $\vecx (m^\prime,\vecs^\prime)$, where the codewords are chosen so that
\begin{IEEEeqnarray}{l}
\Bigl( \forall \, (m^\prime,\vecs^\prime), \, (m^{\prime\prime},\vecs^{\prime\prime}) \in \bm { \mathcal J}_B \textnormal{ s.t.\ } m^\prime \neq m^{\prime\prime} \Bigr) \quad \exists \, i \in [1:k^\prime] \textnormal{ s.t.\ } \nonumber \\*[-0.6\normalbaselineskip]
  \label{eq:condLastBlockSA}
\\*[-0.6\normalbaselineskip]
\qquad \Bigl( \bigchannel y {x_i (m^\prime,\vecs^\prime), s_{B k + i}^\prime} \, \bigchannel y {x_i (m^{\prime\prime}, \vecs^{\prime\prime}), s_{B k + i}^{\prime\prime}} = 0 , \,\, \forall \, y \in \setY \Bigr). \nonumber
\end{IEEEeqnarray}
(We shall shortly use a random coding argument to show that this can be done.) To convey the message~$m$ and the state sequence $\vecs$, the encoder transmits in Block~$(B + 1)$ the codeword $\vecx ( m,\vecs )$. Condition~\eqref{eq:condLastBlockSA} implies that, upon observing the Block-$(B + 1)$ outputs $\vecy^{(B + 1)} \triangleq Y^{Bk + k^\prime}_{Bk + 1}$, the decoder, who knows $\bm { \mathcal J}_B$ and the codewords $\bigl\{ \vecx (m^\prime,\vecs^\prime) \bigr\}$, can determine the transmitted message~$m$ and the state sequence $\vecs$ error-free, because
\begin{IEEEeqnarray}{l}
\prod^{k^\prime}_{i = 1} W \Bigl( y_i^{(B + 1)} \Bigl| x_i \bigl(m, \vecs \bigr), s_{B k + i} \Bigr) > 0,
\end{IEEEeqnarray}
whereas \eqref{eq:condLastBlockSA} implies for every other pair $(m^\prime,\vecs^\prime) \in \bm { \mathcal J}_B$
\begin{IEEEeqnarray}{l}
\prod^{k^\prime}_{i = 1} W \Bigl( y_i^{(B + 1)} \Bigl| x_i (m^\prime, \vecs^\prime), s^\prime_{B k + i} \Bigr) = 0.
\end{IEEEeqnarray}
The decoder can thus calculate $\prod_i W \bigl( y_i^{(B + 1)} \bigl| x_i (\tilde m, \tilde \vecs), \tilde s_{B k + i} \bigr)$ for each $( \tilde m, \tilde \vecs ) \in \bm { \mathcal J}_B$ and produce the pair $(\tilde m, \tilde \vecs)$ for which this product is positive.

We next show that, for some choice of $k^\prime$, there exist codewords $\bigl\{ \vecx (m^\prime,\vecs^\prime) \bigr\}$ satisfying \eqref{eq:condLastBlockSA}. To this end we use a random coding argument. Draw the length-$k^\prime$ codewords $\bigl\{ \rndvecX (m^\prime, \vecs^\prime) \bigr\}$ independently, each uniformly over $\setX^{k^\prime}$. From \eqref{eq:positive} it then follows that for any fixed distinct $(m^\prime,\vecs^\prime), \, (m^{\prime\prime},\vecs^{\prime\prime}) \in \bm { \mathcal J}_B$
\begin{IEEEeqnarray}{l}
\Bigdistof {\forall \, i \in [1 : k^\prime] \,\, \exists \, y \in \setY \textnormal{ s.t.\ } W \bigl( y \, \bigl| X_i (m^\prime, \vecs^\prime), s^\prime_{B k + i} \bigr) \, W \bigl( y \, \bigl| X_i (m^{\prime\prime}, \vecs^{\prime\prime}), s^{\prime\prime}_{B k + i} \bigr) > 0} \nonumber \\
\quad \leq \biggl( 1 - \frac{1}{|\setX|^2} \biggr)^{\!\! k^{\prime}} \\
\quad = 2^{- k^{\prime} ( 2 \log |\setX| - \log ( |\setX|^2 - 1 ) ) }.
\end{IEEEeqnarray}
This, the Union-of-Events bound, and \eqref{eq:UBJBSA} imply that the probability that the randomly drawn length-$k^\prime$ codewords do not satisfy \eqref{eq:condLastBlockSA} is upper-bounded by
\begin{IEEEeqnarray}{l}
| \bm { \mathcal J}_B |^2 \, 2^{- k^{\prime} ( 2 \log |\setX| - \log ( |\setX|^2 - 1 ) ) } \nonumber \\
\quad \leq 2^{- k^{\prime} ( 2 \log |\setX| - \log ( |\setX|^2 - 1 ) ) + 2 ( k ( \log |\setX| + \log |\setS| + \gamma_k ) + B \log (1 + k) \, |\setS| )},
\end{IEEEeqnarray}
which is smaller than one whenever
\begin{IEEEeqnarray}{l}
k^{\prime} > \frac{k \bigl( \log |\setX| + \log |\setS| + \gamma_k \bigr) + B \log (1 + k) \, |\setS|}{\log |\setX| - \frac{1}{2} \log \bigl( |\setX|^2 - 1 \bigr)}. \label{eq:kPrimeLBSA}
\end{IEEEeqnarray}
Consequently, if we choose some $k^\prime$ that satisfies \eqref{eq:kPrimeLBSA}, then there exist length-$k^\prime$ codewords $\bigl\{ \vecx (m^\prime,\vecs^\prime) \bigr\}$ satisfying \eqref{eq:condLastBlockSA}.\\

We are now ready to join the dots and conclude that the coding scheme asymptotically achieves any rate smaller than the RHS of \eqref{eq:capacitySA}. More precisely, we will show that, for every rate $R$ smaller than the RHS of \eqref{eq:capacitySA} and every sufficiently-large blocklength~$n$, our coding scheme can convey $n R$ bits and the length-$n$ state-sequence error-free in $n$ channel uses.

It follows from \eqref{eq:choiceBKForLastBlockSA} and \eqref{eq:kPrimeLBSA} that if the positive integers $n, \, B, \, k, \, k^\prime$ are such that \eqref{eq:kPrimeLBSA} holds,
\begin{equation}
n = B k + k^\prime, \label{eq:chUsesBlocks1ThroughBPl1SA}
\end{equation}
and
\begin{IEEEeqnarray}{l}
n R + k^\prime \log |\setS| \leq B k \biggl( \min_{P_S} \max_{P_{X|S}} \min_{P_{Y|X,S} \in \mathscr P (W)} I (X,S;Y) - H(S) - \gamma_k \biggr), \label{eq:choiceBKForLastBlockSA2}
\end{IEEEeqnarray}
then our coding scheme can convey $n R$ bits and the length-$n$ state-sequence error-free in $n$ channel uses. It thus remains to exhibit positive integers $B, \, k, \, k^\prime$ such that for every sufficiently-large blocklength~$n$ \eqref{eq:kPrimeLBSA}--\eqref{eq:choiceBKForLastBlockSA2} hold. As we argue next, when $n$ is sufficiently large we can choose
\begin{subequations}\label{bl:defBkSA}
\begin{IEEEeqnarray}{rCl}
B & = & \bigl\lfloor \sqrt n \bigr\rfloor - \Biggl( \biggl\lfloor \frac{\log |\setX| + \log |\setS| + \gamma_k + \log (1 + \sqrt n) \, |\setS|}{\log |\setX| - \frac{1}{2} \log \bigl( |\setX|^2 - 1 \bigr)} \biggr\rfloor + 1 \Biggr), \\
k & = & \bigl\lfloor \sqrt n \bigr\rfloor, \label{eq:defBkkSA} \\
k^\prime & = & n - B k.
\end{IEEEeqnarray}
\end{subequations}
Note that, whenever $n$ is sufficiently large, $B$, $k$, and $k^\prime$ are positive, and \eqref{eq:kPrimeLBSA} and \eqref{eq:chUsesBlocks1ThroughBPl1SA} are satisfied. To see that \eqref{eq:choiceBKForLastBlockSA2} holds whenever $n$ is sufficiently large, we first observe from \eqref{eq:defBkkSA} that $k$ tends to infinity as $n$ tends to infinity. Because $\gamma_k = \gamma_k \bigl( |\setX|, |\setS|, |\setY| \bigr)$ converges to zero as $k$ tends to infinity, this implies that $\gamma_k$ converges to zero as $n$ tends to infinity. We next observe that \eqref{bl:defBkSA} implies that $B k / n$ converges to one as $n$ tends to infinity and consequently that $k^\prime / n = 1 - B k / n$ converges to zero as $n$ tends to infinity. This, combined with the facts that $\gamma_k$ converges to zero as $n$ tends to infinity and that $R$ is smaller than the RHS of \eqref{eq:capacitySA}, implies that \eqref{eq:choiceBKForLastBlockSA2} holds whenever $n$ is sufficiently large.
\end{proof}

We next establish the converse part of Theorem~\ref{th:stateAmplification}.

\begin{proof}[Converse Part]
That \eqref{eq:positive} is a necessary condition for $C^{\textnormal{m} + \textnormal{s}}_{\textnormal{f},0}$ to be positive follows from Theorem~\ref{th:positive}, because $C^{\textnormal{m} + \textnormal{s}}_{\textnormal{f},0}$ is upper-bounded by $C_{\textnormal{f},0}$. We next show that---irrespective of whether or not \eqref{eq:positive} holds---$C^{\textnormal{m} + \textnormal{s}}_{\textnormal{f},0}$ is upper-bounded by the RHS of \eqref{eq:capacitySA}. The proof is similar to the converse of Theorem~\ref{th:capacity}. Fix a finite set $\setM$, a blocklength~$n$, and an $(n,\setM)$ zero-error state-conveying code with $n$ encoding mappings
\begin{IEEEeqnarray}{l}
f_i \colon \setM \times \setS^n \times \setY^{i-1} \rightarrow \setX, \quad i \in [1:n] \label{eq:convStateAmpEnc}
\end{IEEEeqnarray}
and $|\setM| \, |\setS|^n$ disjoint decoding sets $\setD_{m, \vecs} \subseteq \setY^n, \,\, (m, \vecs) \in \setM \times \setS^n$. We will show that the rate $\frac{1}{n} \log |\setM|$ of the code is upper-bounded by the RHS of \eqref{eq:capacitySA}.

Draw $M$ uniformly over $\setM$, and denote its distribution $P_M$. Since the code is a zero-error state-conveying code,
\begin{equation}
\distof {Y^n \in \setD_{M, S^n}} = 1, \label{eq:convZeroDecErrSA}
\end{equation}
where $\dist$ is the distribution \eqref{eq:capacityConvDist} of $(M,S^n,X^n,Y^n)$ induced by $P_M$, the state distribution $Q$, the encoding mappings \eqref{eq:convStateAmpEnc}, and the channel law $\channel y {x,s}$. Similarly as in the converse of Theorem~\ref{th:capacity}, fix any PMF $\tilde P_S$ on $\setS$ and any collection of $n$ conditional PMFs $\bigl\{ \tilde P_{Y_i|X_i,S_i} \bigr\}_{i \in [1:n]}$ that satisfy
\begin{equation}
\tilde P_{Y_i|X_i,S_i} \in \mathscr P (W). \label{eq:convAbsContTrProbSA}
\end{equation}
These PMFs induce the PMF on $\setM \times \setS^n \times \setX^n \times \setY^n$
\begin{IEEEeqnarray}{l}
\tilde P_{M, S^n, X^n, Y^n} = P_M \times \tilde P_S^n \times \prod^n_{i = 1} \bigl( P_{X_i|M,S^n,Y^{i-1}} \times \tilde P_{Y_i|X_i,S_i} \bigr). \label{eq:convAbsContPMFSA}
\end{IEEEeqnarray}
It follows from \eqref{eq:assQAssignsPosProbs} and \eqref{eq:convAbsContTrProbSA} that $\tilde P_{M, S^n, X^n, Y^n} \ll \dist$ and consequently that \eqref{eq:convZeroDecErrSA} implies
\begin{IEEEeqnarray}{l}
\tilde P_{M, S^n, X^n, Y^n} [Y^n \in \setD_{M,S^n}] = 1. \label{eq:convZeroDecErrImpSA}
\end{IEEEeqnarray}

We upper-bound $\frac{1}{n} \log |\setM|$ by carrying out the following calculation under $\tilde P_{M, S^n, X^n, Y^n}$ of \eqref{eq:convZeroDecErrImpSA}:
\begin{IEEEeqnarray}{l}
\frac{1}{n} \log |\setM| \nonumber \\
\quad \stackrel{(a)}= \frac{1}{n} \Bigl[ H (M) + H ( S^n ) - H ( S^n ) \Bigr] \\
\quad \stackrel{(b)}= \frac{1}{n} \Bigl[ I ( S^n, M ; Y^n ) - H ( S^n ) \Bigr] \\
\quad \stackrel{(c)}= \frac{1}{n} \sum^n_{i = 1} \Bigl[ I ( S^n, M ; Y_i | Y^{i-1} ) - H ( S_i | S^{i-1} ) \Bigr] \\
\quad \stackrel{(d)}\leq \frac{1}{n} \sum^n_{i = 1} \Bigl[ I ( S^n, M, Y^{i-1} ; Y_i ) - H ( S_i ) \Bigr] \\
\quad \stackrel{(e)}\leq \frac{1}{n} \sum^n_{i = 1} \Bigl[ I ( X_i, S_i ; Y_i ) - H ( S_i ) \Bigr], \label{eq:convFixedDistsSA}
\end{IEEEeqnarray}
where $(a)$ holds because $M$ is uniform over $\setM$ under $\tilde P_{M, S^n, X^n, Y^n}$; $(b)$ holds by \eqref{eq:convZeroDecErrImpSA} and because $M$ is independent of $S^n$ under $\tilde P_{M, S^n, X^n, Y^n}$; $(c)$ follows from the chain rule; $(d)$ holds because conditioning cannot increase entropy and by the independence of $S_i$ and $S^{i-1}$ under $\tilde P_{M, S^n, X^n, Y^n}$; and $(e)$ holds because under $\tilde P_{M, S^n, X^n, Y^n}$ $( S^n, M, Y^{i-1} )$, $( X_i, S_i )$, and $Y_i$ form a Markov chain in that order.

We will conclude the proof by exhibiting a PMF $\tilde P_S$ and a collection of conditional PMFs $\bigl\{ \tilde P_{Y_i|X_i,S_i} \bigr\}_{i \in [1:n]}$ satisfying \eqref{eq:convAbsContTrProbSA} for which each summand on the RHS of \eqref{eq:convFixedDistsSA} is upper-bounded by the RHS of \eqref{eq:capacitySA}.

We begin with the choice of $\bigl\{ \tilde P_{Y_i|X_i,S_i} \bigr\}_{i \in [1:n]}$. We first choose $\tilde P_{Y_i|X_i,S_i}$ for $i = 1$, and we then repeatedly increment $i$ by one until it reaches $n$. Key to our choice is the observation, which will be justified shortly, that $\tilde P_{X_i,S_i}$ is determined by $\tilde P_S$ and $\bigl\{ \tilde P_{Y_j|X_j,S_j} \bigr\}_{j \in [i-1]}$. Our choice of $\tilde P_{Y_i|X_i,S_i}$ can thus depend not only on our choice of $\tilde P_S$ and our previous choices of $\bigl\{ \tilde P_{Y_j|X_j,S_j} \bigr\}_{j \in [1:i-1]}$ but also on $\tilde P_{X_i,S_i}$. This will allow us to choose $\tilde P_{Y_i|X_i,S_i}$ as one that---among all conditional PMFs satisfying \eqref{eq:convAbsContTrProbSA}---minimizes
\begin{IEEEeqnarray}{l}
I ( X_i, S_i ; Y_i ) - H ( S_i ),
\end{IEEEeqnarray}
where the mutual information and the entropy are computed w.r.t.\ the joint PMF $\tilde P_{X_i,S_i} \times \tilde P_{Y_i|X_i,S_i}$. Since \eqref{eq:convAbsContPMFSA} implies that
\begin{equation}
\tilde P_{S_i} = \tilde P_S, \quad i \in [1:n], \label{eq:confPMFPSiSA}
\end{equation}
we will then find that, for our choice of $\bigl\{ \tilde P_{Y_i|X_i,S_i} \bigr\}$,
\begin{IEEEeqnarray}{l}
I ( X_i, S_i ; Y_i ) - H ( S_i ) \nonumber \\
\quad \leq \max_{\tilde P_{X_i|S_i}} \min_{\tilde P_{Y_i|X_i,S_i} \in \mathscr P (W)} I ( X_i, S_i ; Y_i ) - H ( S_i ), \quad i \in [1:n], \label{eq:convFixedDistsSA2}
\end{IEEEeqnarray}
where the mutual information and the entropy are computed w.r.t.\ the joint PMF $\tilde P_{S_i} \times \tilde P_{X_i|S_i} \times \tilde P_{Y_i|X_i,S_i}$. The chosen conditional PMFs $\bigl\{ \tilde P_{Y_i|X_i,S_i} \bigr\}_{i \in [1:n]}$ satisfy \eqref{eq:convAbsContTrProbSA}, and hence \eqref{eq:convFixedDistsSA}, \eqref{eq:confPMFPSiSA}, and \eqref{eq:convFixedDistsSA2} will imply that
\begin{IEEEeqnarray}{l}
\frac{1}{n} \log |\setM| \nonumber \\
\quad \leq \max_{\tilde P_{X|S}} \min_{\tilde P_{Y|X,S} \in \mathscr P (W)} I ( X, S ; Y ) - H ( S ), \label{eq:convFixedDistsSA3}
\end{IEEEeqnarray}
where the mutual information and the entropy in the $i$-th summand are computed w.r.t.\ the joint PMF $\tilde P_S \times \tilde P_{X|S} \times \tilde P_{Y|X,S}$.

We now prove that indeed $\tilde P_{X_i,S_i}$ is determined by $\tilde P_S$ and $\bigl\{ \tilde P_{Y_j|X_j,S_j} \bigr\}_{j \in [1:i-1]}$. In fact, we will show that the latter two determine $\tilde P_{M,S^n,X^i,Y^{i-1}}$. The latter determines $\tilde P_{X_i,S_i}$, because the tuple $(X_i,S_i)$ is determined by $(M,S^n,X^i,Y^{i-1})$.

We use mathematical induction, but first we note that the PMF $\tilde P_{M,S^n,X^n,Y^n}$ is constructed inductively: by \eqref{eq:convAbsContPMFSA}
\begin{equation}
\tilde P_{M, S^n, X_1} = P_M \times \tilde P_S^n \times P_{X_1|M,S^n}, \label{eq:convDistMSnX1SA}
\end{equation}
and, for every $\ell \in [2:n]$, $\tilde P_{M, S^n, X^\ell, Y^{\ell-1}}$ is constructed from $\tilde P_{M, S^n, X^{\ell-1}, Y^{\ell-2}}$ by
\begin{IEEEeqnarray}{l}
\tilde P_{M, S^n, X^\ell, Y^{\ell-1}} = \tilde P_{M, S^n, X^{\ell - 1}, Y^{\ell - 2}} \times \tilde P_{Y_{\ell-1}|X_{\ell-1},S_{\ell-1}} \times P_{X_\ell|M,S^n,Y^{\ell-1}}. \label{eq:convAbsContPMFXYFromiToiPl1SA}
\end{IEEEeqnarray}
In describing the proof we shall make the dependence on $P_M$, our choice of $\tilde P_S$, and $\bigl\{ P_{X_j|M,S^n,Y^{j-1}} \bigr\}_{j \in [1:n]}$, whose components are determined by the encoding mappings \eqref{eq:convStateAmpEnc} via \eqref{eq:convCondPMFXDetEncMapp}, implicit.
\begin{enumerate}
\item Basis $\ell = 1$: It follows from \eqref{eq:convDistMSnX1SA} that $\tilde P_{M,S^n,X_1}$ is determined.
\item Inductive Step: Fix $\ell \in [2:i]$, and suppose that $\tilde P_{M,S^n,X^{\ell-1},Y^{\ell-2}}$ is determined by $\bigl\{ \tilde P_{Y_j|X_j,S_j} \bigr\}_{j \in [1:\ell - 2]}$. This implies that $\tilde P_{M,S^n,X^{\ell-1},Y^{\ell-2}}$ and $\tilde P_{Y_{\ell-1}|X_{\ell-1},S_{\ell-1}}$ are determined by $\bigl\{ \tilde P_{Y_j|X_j,S_j} \bigr\}_{j \in [1:\ell - 1]}$. Consequently, it follows from \eqref{eq:convAbsContPMFXYFromiToiPl1SA} that $\tilde P_{M,S^n,X^\ell,Y^{\ell-1}}$ is determined by $\bigl\{ \tilde P_{Y_j|X_j,S_j} \bigr\}_{j \in [1:\ell - 1]}$.
\end{enumerate}
This proves that, for every $i \in [1:n]$, $\tilde P_{M,S^n,X^i,Y^{i-1}}$ and consequently also $\tilde P_{X_i,S_i}$ are determined by $\tilde P_S$ and $\bigl\{ \tilde P_{Y_j|X_j,S_j} \bigr\}_{j \in [1:i-1]}$, and hence \eqref{eq:convFixedDistsSA3} holds.

Having established \eqref{eq:convFixedDistsSA3}, we are now ready to conclude the proof. Since we can choose any PMF $\tilde P_S$ on $\setS$, we can choose one that---among all PMFs on $\setS$---yields the tightest bound, i.e., minimizes
\begin{IEEEeqnarray}{l}
\max_{\tilde P_{X|S}} \min_{\tilde P_{Y|X,S} \in \mathscr P (W)} I ( X, S ; Y ) - H ( S ),
\end{IEEEeqnarray}
where the mutual information and the entropy are computed w.r.t.\ the joint PMF $P_S \times P_{X|S} \times P_{Y|X,S}$. For this choice of $\tilde P_S$ \eqref{eq:convFixedDistsSA3} implies that
\begin{IEEEeqnarray}{l}
\frac{1}{n} \log |\setM| \leq \min_{\tilde P_S} \max_{\tilde P_{X|S}} \min_{\tilde P_{Y|X,S} \in \mathscr P (W)} I ( X, S ; Y ) - H ( S ), \label{eq:convRateUBSA}
\end{IEEEeqnarray}
where the mutual information and the entropy are computed w.r.t.\ the joint PMF $P_S \times P_{X|S} \times P_{Y|X,S}$. If the RHS of \eqref{eq:convRateUBSA} is negative, then---irrespective of $|\setM| \geq 1$---\eqref{eq:convRateUBSA} is a contradiction and consequently \eqref{eq:convZeroDecErrImpSA} cannot hold. This implies that---even if $| \setM |$ is one---the state sequence cannot be conveyed error-free. Since we say that $C^{\textnormal{m} + \textnormal{s}}_{\textnormal{f},0} = 0$ if the state sequence cannot be conveyed error-free, \eqref{eq:convRateUBSA} implies that $C^{\textnormal{m} + \textnormal{s}}_{\textnormal{f},0}$ is upper-bounded by the RHS of \eqref{eq:capacitySA}.
\end{proof}

\section{A Proof of Theorem~\ref{th:ccInputs}}\label{sec:pfThCcInputs}

We already showed in Section~\ref{sec:constrainedInputs} using \eqref{eq:ccInpustConstFact} that, whenever $\Gamma > \Gamma_{\textnormal{min}}$, \eqref{eq:positive} is a necessary and sufficient condition for $C_{\textnormal{f},0} (\Gamma)$ to be positive, and we hence prove that if $C_{\textnormal{f},0} (\Gamma)$ is positive, then it is equal to \eqref{eq:ccCapacity}.

To that end we first show that restricting $X$ to be a function of $U$ and $S$, i.e., $P_{U,X|S}$ to have the form \eqref{eq:thCapCardUXFuncUS}, does not change the RHS of \eqref{eq:ccCapacity}, nor does restricting the cardinality of $\setU$ to \eqref{eq:cardU}:

\begin{lemma}\label{le:cardUCCInputs}
Given a channel $\channel y {x,s}$ and a PMF $P_S$ on $\setS$, consider
\begin{IEEEeqnarray}{l}
\max_{\substack{P_{U,X|S} \colon \\ \Ex {}{\gamma (X)} \leq \Gamma}} \min_{ \substack{ P_{Y|U,X,S} \colon \\ P_{Y|U = u,X,S} \in \mathscr P (W), \,\, \forall \, u \in \setU }} I (U;Y) - I (U;S), \label{eq:leCardUCapCCInputs}
\end{IEEEeqnarray}
where the maximization is over all chance variables $U$ of finite support, the expectation is computed w.r.t.\ the joint PMF $P_S \times P_{U,X|S}$, and the mutual informations are computed w.r.t.\ the joint PMF $P_S \times P_{U,X|S} \times P_{Y|U,X,S}$. Restricting $X$ to be a function of $U$ and $S$, i.e., $P_{U,X|S}$ to have the form
\begin{equation}
P_{U,X|S} (u,x|s) = P_{U|S} (u|s) \, \ind {x = g (u,s)}, \label{eq:leCardUXFuncUSCCInputs}
\end{equation}
does not change \eqref{eq:leCardUCapCCInputs}. Nor does requiring that $U$ take values in a set $\setU$  whose cardinality $| \setU |$ satisfies
\begin{equation}
|\setU| \leq |\setX|^{|\setS|}. \label{eq:cardULeCCInputs}
\end{equation}
\end{lemma}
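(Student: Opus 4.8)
The plan is to adapt the proof of Lemma~\ref{le:cardU} essentially verbatim, checking at each step that the newly imposed cost constraint $\Ex {}{\gamma (X)} \leq \Gamma$ is preserved by the manipulations used there. The key point is that the cost constraint only involves the $X$-marginal of the joint PMF, and all the transformations in the proof of Lemma~\ref{le:cardU}---invoking the Functional Representation Lemma to write $X$ as a function of $(U,V,S)$, and merging two values $u_1,u_2$ of the auxiliary variable whose induced maps $g(u_1,\cdot)$ and $g(u_2,\cdot)$ coincide---leave the joint PMF of $(X,S)$ unchanged. Hence any $P_{U,X|S}$ admissible for \eqref{eq:leCardUCapCCInputs} is replaced by one that is still admissible, and the optimization value does not change.

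Concretely, the first part of the proof proceeds as in Lemma~\ref{le:cardU}: for a given admissible $P_{U,X|S}$ apply Lemma~\ref{le:funcRep} to obtain $V$ of finite support and $h \colon \setU \times \setV \times \setS \rightarrow \setX$ with $P_{U,X|S}(u,x|s) = \sum_{v} P_{U|S}(u|s) \, P_V(v) \, \ind {x = h(u,v,s)}$, and pass to the auxiliary variable $\tilde U = (U,V)$, for which $X$ is a deterministic function of $(\tilde U, S)$. Because marginalizing $V$ out recovers the original $P_{U,X|S}$, the $X$-marginal---and therefore $\Ex {}{\gamma(X)}$---is unchanged, so $\tilde U$ is admissible iff $U$ was. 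The inequality \eqref{eq:leCardUMutWithVBigMutWithoutPMFs2} from Lemma~\ref{le:cardU} then shows that restricting to $X$ a function of $(U,S)$ does not decrease \eqref{eq:leCardUCapCCInputs}, and since such PMFs form a subset of all admissible ones it does not increase it either.

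For the cardinality bound, suppose $X = g(U,S)$ and there are distinct $u_1,u_2 \in \setU$ with $g(u_1,\cdot) = g(u_2,\cdot)$. Define $T$ by merging $u_2$ into $u_1$ as in \eqref{eq:definUPrime}. As observed in the proof of Lemma~\ref{le:cardU}, this leaves $P_{X,S}$ (indeed $P_{T,X,S}$ arises by relabelling) so that $\Ex {}{\gamma(X)}$ is again unchanged and $T$ remains admissible; moreover $X$ is still a deterministic function of $(T,S)$ via \eqref{eq:xFuncUPrimeS}. The chain of inequalities \eqref{eq:cardU1}--\eqref{eq:cardU4} applies unchanged and yields that replacing $U$ by $T$ does not decrease the payoff. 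Iterating, we reach a chance variable $\bar U$ whose distinct values induce distinct maps $\setS \to \setX$, of which there are $|\setX|^{|\setS|}$, giving \eqref{eq:cardULeCCInputs}.

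The main (and really the only) obstacle is bookkeeping: one must verify that every distribution constructed in the proof of Lemma~\ref{le:cardU} is still feasible for the \emph{constrained} maximization, i.e.\ that none of the auxiliary-variable manipulations alters $\Ex {}{\gamma(X)}$. This is immediate from the fact that they all preserve $P_{X}$ (equivalently $P_{X,S}$), but it should be stated explicitly at the two places where the auxiliary variable is replaced. With that remark inserted, the argument of Lemma~\ref{le:cardU} carries over word for word, so I would write the proof as: ``The proof is identical to that of Lemma~\ref{le:cardU}, noting in addition that all the constructions there---the passage to $\tilde U = (U,V)$ and the merging of $u_1,u_2$---leave the marginal $P_{X,S}$, and hence the quantity $\Ex {}{\gamma(X)}$, unchanged, so that admissibility under the cost constraint is preserved throughout.''
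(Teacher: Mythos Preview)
Your proposal is correct and follows essentially the same approach as the paper: both proofs invoke Lemma~\ref{le:cardU} verbatim while noting that the constructions there (passing to $\tilde U = (U,V)$ via the Functional Representation Lemma, and merging $u_1,u_2$ with identical maps $g(u_1,\cdot)=g(u_2,\cdot)$) preserve admissibility under the cost constraint. Your observation that these constructions leave the marginal $P_{X,S}$ unchanged is precisely the bookkeeping the paper performs implicitly when it restricts attention to tuples $(P_V, h, P_{U|S})$ satisfying \eqref{eq:pfCCInputsCC}.
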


\begin{proof}
The proof is essentially that of Lemma~\ref{le:cardU} in Appendix~\ref{sec:leCardU}. We first show that restricting $X$ to be a function of $U$ and $S$ does not change \eqref{eq:leCardUCapCCInputs}. In the proof of Lemma~\ref{le:cardU} it is shown that \eqref{eq:leCardUCap} is equal to \eqref{eq:leCardUCapEquivV}, and the same line of argument implies here that \eqref{eq:leCardUCapCCInputs} is equal to
\begin{IEEEeqnarray}{l}
\max_{ \substack{ P_V, h (\cdot), P_{U|S} \colon \\ \Ex {}{\gamma (X)} \leq \Gamma }} \min_{ \substack{ P_{Y|U,X,S} \colon \\ P_{Y|U = u,X,S} \in \mathscr P (W), \,\, \forall \, u \in \setU }} I (U;Y) - I (U;S), \label{eq:leCardUCapEquivVCCInputs}
\end{IEEEeqnarray}
where the maximization is over all chance variables $V$ of finite support $\setV$, functions $h \colon \setU \times \setV \times \setS \rightarrow \setX$, and conditional PMFs over a finite set $\setU$ for which
\begin{equation}
\bigEx {}{\gamma (X) \leq \Gamma}, \label{eq:pfCCInputsCC}
\end{equation}
where the expectation is computed w.r.t.\ the joint PMF $P_S \times P_V \times P_{U,X|V,S}$ and $P_{U,X|V,S}$ is defined in \eqref{eq:leCardUCondPMFUXGivVS}; and where the mutual informations are computed w.r.t.\ the joint PMF $P_S \times P_V \times P_{U,X|V,S} \times P_{Y|U,X,S}$. Unlike the proof of Lemma~\ref{le:cardU}, where we fix any PMF $P_V$ on $\setV$, any function $h \colon \setU \times \setV \times \setS \rightarrow \setX$, and any conditional PMF $P_{U|S}$, here we fix any $P_V$, any $h \colon \setU \times \setV \times \setS \rightarrow \setX$, and any $P_{U|S}$ for which \eqref{eq:pfCCInputsCC} holds w.r.t.\ $P_S \times P_V \times P_{U,X|V,S}$, where $P_{U,X|V,S}$ is defined in \eqref{eq:leCardUCondPMFUXGivVS}. The line of argument leading to \eqref{eq:leCardUMutWithVBigMutWithoutPMFs2} in the proof of Lemma~\ref{le:cardU} then implies that restricting $X$ to be a function of $U$ and $S$ does no change \eqref{eq:leCardUCapCCInputs}. To show that restricting the cardinality of $\setU$ to \eqref{eq:cardULeCCInputs} does not change \eqref{eq:leCardUCapCCInputs}, we fix any conditional PMF $P_{U,X|S}$ of the form \eqref{eq:leCardUXFuncUSCCInputs} for which \eqref{eq:pfCCInputsCC} holds w.r.t.\ $P_S \times P_{U,X|S}$. The line of argument leading to \eqref{eq:cardU5} in the proof of Lemma~\ref{le:cardU} then implies that restricting the cardinality of $\setU$ to \eqref{eq:cardULeCCInputs} does not change \eqref{eq:leCardUCapCCInputs}.
\end{proof}

\begin{proof}[Direct Part of Theorem~\ref{th:ccInputs}]

From Lemma~\ref{le:cardUCCInputs} it follows that it suffices to establish the direct part of Theorem~\ref{th:ccInputs} for the case where the cardinality of $\setU$ is restricted to \eqref{eq:cardU}. The direct part is essentially that of Theorem~\ref{th:capacity} but with the following two modifications: 1) During the first $B$ blocks we choose $k$-types $\bigl\{ P_{U,X,S}^{(b)} \bigr\}_{b \in [1:B]}$ w.r.t.\ which
\begin{equation}
\bigEx {}{\gamma (X)} \leq \Gamma. \label{eq:firstBBlocksTypeConstCCInputs}
\end{equation}
This will guarantee that
\begin{equation}
\frac{1}{B k} \sum^{B k}_{i = 1} \gamma (X_i) \leq \Gamma. \label{eq:firstBBlocksSatisfyCCInputs}
\end{equation}
2) We pad Block~$B + 1$ with as many symbols from the set $\setX^\prime$ as are needed to guarantee that
\begin{equation}
\frac{1}{k^\prime} \sum^{B k + k^\prime}_{B k + 1} \gamma (X_i) \leq \Gamma, \label{eq:lastBlocksSatisfiesCCInputs}
\end{equation}
where $k^\prime$ denotes the length of Block~$B + 1$.

By \eqref{eq:firstBBlocksSatisfyCCInputs} and \eqref{eq:lastBlocksSatisfiesCCInputs} the channel inputs' average cost satisfies the cost constraint \eqref{eq:ccInputs}. Padding Block~$(B + 1)$ to guarantee \eqref{eq:lastBlocksSatisfiesCCInputs} increases its length by a factor of at most
\begin{equation}
\tau \triangleq \biggl\lceil \frac{\gamma_{\textnormal {max}} - \gamma_{\textnormal {min}}}{\Gamma - \gamma_{\textnormal {min}}} \biggr\rceil. \label{eq:lastBlocksIncCCInputs}
\end{equation}
Consequently, also with the padding, the last block does not affect the rate of the code.

To show that the coding scheme asymptotically achieves any rate smaller than the RHS of \eqref{eq:ccCapacity}, we can argue essentially as in the proof of the direct part of Theorem~\ref{th:capacity}. We will show that, for every rate $R$ smaller than the RHS of \eqref{eq:ccCapacity} and every sufficiently-large blocklength~$n$, our coding scheme can convey $n R$ bits error-free in $n$ channel uses. It follows from \eqref{bl:choiceBKEpsForLastBlock}, \eqref{eq:chUsesLastBlock}, \eqref{eq:firstBBlocksTypeConstCCInputs}, and \eqref{eq:lastBlocksIncCCInputs} that if the positive integers $n, \, B, \, k$ and $\epsilon > 0$ are such that \eqref{eq:choiceBKEpsForLastBlock2K} holds and
\begin{IEEEeqnarray}{l}
n R \leq B k \left( \min_{P_S} \max_{\substack{P_{U,X|S} \colon \\ \Ex {}{\gamma (X)} \leq \Gamma}} \min_{ \substack{P_{Y|U,X,S} \colon \\ P_{Y|U = u, X,S} \in \mathscr P (W), \,\, \forall \, u \in \setU}} \!\!\!\!\! I (U;Y) - I (S;Y) - \delta (\epsilon, k ) \right) \!\!, \label{eq:choiceBKEpsForLastBlock2CCInputs}
\end{IEEEeqnarray}
then our coding scheme can convey $n R$ bits error-free in
\begin{IEEEeqnarray}{l}
B k + \tau \bigl\lceil k \log |\setU| + B \log (1 + k) \, |\setS| \bigr\rceil n_{\textnormal{bit}} \label{eq:chUsesBlocks1ThroughBPl1CCInputs}
\end{IEEEeqnarray}
channel uses. It thus remains to exhibit positive integers $B, \, k$ and some $\epsilon > 0$ such that, for every sufficiently-large blocklength~$n$, \eqref{eq:choiceBKEpsForLastBlock2K} and \eqref{eq:choiceBKEpsForLastBlock2CCInputs} hold and
\begin{IEEEeqnarray}{l}
B k + \tau \bigl\lceil k \log |\setU| + B \log (1 + k) \, |\setS| \bigr\rceil n_{\textnormal{bit}} \leq n. \label{eq:chUsesBlocks1ThroughBPl1SmNCCInputs}
\end{IEEEeqnarray}
As we argue next, when $n$ is sufficiently large we can choose
\begin{subequations}\label{bl:defBkCCInputs}
\begin{IEEEeqnarray}{rCl}
B & = & \lfloor \sqrt n \rfloor - \tau \bigl\lceil \log |\setU| + \log (1 + \sqrt n) \, |\setS| \bigr\rceil n_{\textnormal{bit}}, \\
k & = & \lfloor \sqrt n \rfloor, \label{eq:defBkkCCInputs}
\end{IEEEeqnarray}
\end{subequations}
and we can choose any $\epsilon > 0$ for which
\begin{IEEEeqnarray}{l}
R + \epsilon < \min_{P_S} \max_{\substack{P_{U,X|S} \colon \\ \Ex {}{\gamma (X)} \leq \Gamma}} \min_{ \substack{P_{Y|U,X,S} \colon \\ P_{Y|U = u, X,S} \in \mathscr P (W), \,\, \forall \, u \in \setU}} I (U;Y) - I (S;Y). \label{eq:defEpsCCInputs}
\end{IEEEeqnarray}
Note that, whenever $n$ is sufficiently large, $B$ is positive and \eqref{eq:chUsesBlocks1ThroughBPl1SmNCCInputs} is satisfied. To see that also \eqref{eq:choiceBKEpsForLastBlock2K} and \eqref{eq:choiceBKEpsForLastBlock2CCInputs} hold whenever $n$ is sufficiently large, we first observe from \eqref{eq:defBkkCCInputs} that $k$ tends to infinity as $n$ tends to infinity. This implies that \eqref{eq:choiceBKEpsForLastBlock2K} holds whenever $n$ is sufficiently large, and that $\delta (\epsilon, k)$ (which is defined in \eqref{eq:defDeltaOfEpsK}, where $\gamma_k = \gamma_k \bigl( |\setU|, |\setX|, |\setS|, |\setY| \bigr)$ converges to zero as $k$ tends to infinity) converges to $\epsilon$ as $n$ tends to infinity. We next observe that \eqref{bl:defBkCCInputs} implies that $B k / n$ converges to one as $n$ tends to infinity. This, combined with the fact that $\delta (\epsilon, k)$ converges to $\epsilon$ as $n$ tends to infinity and with \eqref{eq:defEpsCCInputs}, implies that \eqref{eq:choiceBKEpsForLastBlock2CCInputs} holds whenever $n$ is sufficiently large.
\end{proof}

\begin{proof}[Converse Part of Theorem~\ref{th:ccInputs}]
From Lemma~\ref{le:cardUCCInputs} it follows that it suffices to establish the converse part of Theorem~\ref{th:ccInputs} for the case where $\setU$ is any finite set. The converse is similar to that of Theorem~\ref{th:capacity}. Fix a finite set $\setM$, a blocklength~$n$, and an $(n,\setM)$ zero-error code with $n$ encoding mappings
\begin{IEEEeqnarray}{l}
f_i \colon \setM \times \setS^n \times \setY^{i-1} \rightarrow \setX, \quad i \in [1:n] \label{eq:convCCInputsEnc}
\end{IEEEeqnarray}
and $|\setM|$ disjoint decoding sets $\setD_m \subseteq \setY^n, \,\, m \in \setM$, where the code is chosen so that, with probability one, the channel inputs $X^n$ satisfy the cost constraint \eqref{eq:ccInputs}. We will show that, for some chance variable $U$ of finite support $\setU$, the rate $\frac{1}{n} \log |\setM|$ of the code is upper-bounded by the RHS of \eqref{eq:ccCapacity}.

Draw $M$ uniformly over $\setM$, and denote its distribution $P_M$. Since the code is a zero-error code, and since, with probability one, the channel inputs $X^n$ satisfy the cost constraint \eqref{eq:ccInputs}, the following two hold:
\begin{subequations}\label{bl:ccInputsConvZeroDecErrAndCost}
\begin{IEEEeqnarray}{rCl}
\distof {Y^n \in \setD_M} & = & 1, \label{eq:ccInputsConvZeroDecErr} \\
\bigdistof {\gamma^{(n)} (X^n) \leq \Gamma} & = & 1, \label{eq:ccInputsConv}
\end{IEEEeqnarray}
\end{subequations}
where $\dist$ is the distribution \eqref{eq:capacityConvDist} of $(M,S^n,X^n,Y^n)$ induced by $P_M$, the state distribution $Q$, the encoding mappings \eqref{eq:convCCInputsEnc}, and the channel law $\channel y {x,s}$. As in the converse of Theorem~\ref{th:capacity}, fix any PMF $\tilde P_S$ on $\setS$ and any collection of $n$ conditional PMFs $\bigl\{ \tilde P_{Y_i|M,Y^{i-1},S^n_{i + 1},X_i,S_i} \bigr\}_{i \in [1:n]}$ satisfying \eqref{eq:convAbsContTrProb}. These PMFs induce the PMF $\tilde P_{M,S^n,X^n,Y^n}$ of \eqref{eq:convAbsContPMF} on $\setM \times \setS^n \times \setX^n \times \setY^n$. Since this PMF satisfies $\tilde P_{M,S^n,X^n,Y^n} \ll \dist$, \eqref{bl:ccInputsConvZeroDecErrAndCost} implies
\begin{subequations}\label{bl:ccInputsConvZeroDecErrAndCostImp}
\begin{IEEEeqnarray}{rCl}
\tilde P_{M, S^n, X^n, Y^n} [Y^n \in \setD_M] & = & 1, \label{eq:ccInputsConvZeroDecErrImp} \\
\tilde P_{M, S^n, X^n, Y^n} \bigl[ \gamma^{(n)} (X^n) \leq \Gamma \bigr] & = & 1. \label{eq:ccInputsConvImp}
\end{IEEEeqnarray}
\end{subequations}
Note that the latter \eqref{eq:ccInputsConvImp} implies that
\begin{IEEEeqnarray}{rCl}
\frac{1}{n} \sum_{i = 1}^n \bigEx {}{\gamma (X_i)} & \leq & \Gamma, \label{eq:ccInputsConvImp2}
\end{IEEEeqnarray}
where the expectation in the $i$-th summand is computed w.r.t.\ the PMF $\tilde P_{X_i}$ induced by $\tilde P_{M,S^n,X^n,Y^n}$.

The line of argument leading to \eqref{eq:convFixedDists2} in the converse of Theorem~\ref{th:capacity} implies that every choice of $\tilde P_S$ and $\bigl\{ \tilde P_{Y_i|M,Y^{i-1},S^n_{i + 1},X_i,S_i} \bigr\}_{i \in [1:n]}$ gives rise to an upper bound
\begin{IEEEeqnarray}{l}
\frac{1}{n} \log |\setM| \leq \frac{1}{n} \sum^n_{i = 1} \Bigl[ I (U_i;Y_i) - I (U_i;S_i) \Bigr], \label{eq:convFixedDists2CCInputs}
\end{IEEEeqnarray}
where the chance variables $\{ U_i \}_{i \in [1:n]}$ are defined in \eqref{eq:convDefRVUi}, and the mutual informations in the $i$-th summand are computed w.r.t.\ the joint PMF $\tilde P_{U_i,X_i,S_i,Y_i}$ induced by $\tilde P_{M,S^n,X^n,Y^n}$. We next exhibit a PMF $\tilde P_S$ and a collection of conditional PMFs $\bigl\{ \tilde P_{Y_i|M,Y^{i-1},S^n_{i + 1},X_i,S_i} \bigr\}_{i \in [1:n]}$ satisfying \eqref{eq:convAbsContTrProb} for which each summand on the RHS of \eqref{eq:convFixedDists2CCInputs} is upper-bounded by the RHS of \eqref{eq:ccCapacity}.

We begin with the choice of $\bigl\{ \tilde P_{Y_i|M,Y^{i-1},S^n_{i + 1},X_i,S_i} \bigr\}_{i \in [1:n]}$. As in the converse of Theorem~\ref{th:capacity}, choosing a collection of conditional PMFs $\bigl\{ \tilde P_{Y_i|M,Y^{i-1},S^n_{i + 1},X_i,S_i} \bigr\}_{i \in [1:n]}$ that satisfy \eqref{eq:convAbsContTrProb} is tantamount to choosing a collection of conditional PMFs $\bigl\{ \tilde P_{Y_i|U_i,X_i,S_i} \bigr\}_{i \in [1:n]}$ that satisfy \eqref{eq:convDistYCondUXSAbsContTrProb}. We shall choose the latter collection, and we shall do so as in the converse of Theorem~\ref{th:capacity}. Consequently, our choice of $\tilde P_{Y_i|U_i,X_i,S_i}$ can depend not only on our choice of $\tilde P_S$ and our previous choices of $\bigl\{ \tilde P_{Y_j|U_j,X_j,S_j} \bigr\}_{j \in [1:i-1]}$ but also on $\tilde P_{U_i,X_i,S_i}$, and hence we can choose $\tilde P_{Y_i|U_i,X_i,S_i}$ as one that---among all conditional PMFs satisfying \eqref{eq:convDistYCondUXSAbsContTrProb}---minimizes
\begin{equation}
I ( U_i ; Y_i ) - I ( U_i ; S_i ),
\end{equation}
where the mutual informations are computed w.r.t.\ the joint PMF $\tilde P_{U_i,X_i,S_i} \times \tilde P_{Y_i|U_i,X_i,S_i}$. Because \eqref{eq:convAbsContPMF} implies that
\begin{IEEEeqnarray}{l}
\tilde P_{S_i} = \tilde P_{S}, \quad i \in [1:n] \label{eq:convDistSiCCInputs}
\end{IEEEeqnarray}
and by \eqref{eq:ccInputsConvImp2}, which holds because the chosen conditional PMFs $\bigl\{ \tilde P_{Y_i|U_i,X_i,S_i} \bigr\}_{i \in [1:n]}$ satisfy \eqref{eq:convDistYCondUXSAbsContTrProb}, we find that, for our choice of $\bigl\{ \tilde P_{Y_i|U_i,X_i,S_i} \bigr\}_{i \in [1:n]}$,
\begin{IEEEeqnarray}{l}
\frac{1}{n} \sum^n_{i = 1} \Bigl[ I ( U_i ; Y_i ) - I ( U_i ; S_i ) \Bigr] \nonumber \\
\quad \leq \max_{\substack{\{ \tilde P_{U_i,X_i|S_i} \}_{i \in [1:n]} \colon \\ \frac{1}{n} \sum_{i = 1}^n \Ex {}{\gamma (X_i)} \leq \Gamma}} \min_{\substack{ \{ \tilde P_{Y_i|U_i,X_i,S_i} \}_{i \in [1:n]} \colon \\ \tilde P_{Y_i|U_i = u_i,X_i,S_i} \in \mathscr P (W), \,\, \forall \, u_i \in \setU_i }} \frac{1}{n} \sum^n_{i = 1}  \Bigl[ I ( U_i ; Y_i ) - I ( U_i ; S_i ) \Bigr], \label{eq:convChoiceCondPMFsGuaranteeCCInputs}
\end{IEEEeqnarray}
where the expectation in the $i$-th summand is computed w.r.t.\ the joint PMF $\tilde P_{S_i} \times \tilde P_{U_i,X_i|S_i}$, and the mutual informations in the $i$-th summand are computed w.r.t.\ the joint PMF $\tilde P_{S_i} \times \tilde P_{U_i,X_i|S_i} \times \tilde P_{Y_i|U_i,X_i,S_i}$. The chosen conditional PMFs $\bigl\{ \tilde P_{Y_i|U_i,X_i,S_i} \bigr\}_{i \in [1:n]}$ satisfy \eqref{eq:convDistYCondUXSAbsContTrProb}, and hence \eqref{eq:convFixedDists2CCInputs} and \eqref{eq:convChoiceCondPMFsGuaranteeCCInputs} imply that
\begin{IEEEeqnarray}{l}
\frac{1}{n} \log |\setM| \nonumber \\
\quad \leq \max_{\substack{\{ \tilde P_{U_i,X_i|S_i} \}_{i \in [1:n]} \colon \\ \frac{1}{n} \sum_{i = 1}^n \Ex {}{\gamma (X_i)} \leq \Gamma}} \min_{\substack{ \{ \tilde P_{Y_i|U_i,X_i,S_i} \}_{i \in [1:n]} \colon \\ \tilde P_{Y_i|U_i = u_i,X_i,S_i} \in \mathscr P (W), \,\, \forall \, u_i \in \setU_i }} \frac{1}{n} \sum^n_{i = 1} \Bigl[ I ( U_i ; Y_i ) - I ( U_i ; S_i ) \Bigr], \label{eq:convFixedDists3CCInputs}
\end{IEEEeqnarray}
where the expectation in the $i$-th summand is computed w.r.t.\ the joint PMF $\tilde P_{S_i} \times \tilde P_{U_i,X_i|S_i}$, and the mutual informations in the $i$-th summand are computed w.r.t.\ the joint PMF $\tilde P_{S_i} \times \tilde P_{U_i,X_i|S_i} \times \tilde P_{Y_i|U_i,X_i,S_i}$.

By the definition of $U_i$ \eqref{eq:convDefRVUi} the cardinality of the support $\setU_i$ of $U_i$ satisfies \eqref{eq:convSuppRVUi}. Consequently, \eqref{eq:convDistSiCCInputs} and \eqref{eq:convFixedDists3CCInputs} imply that
\begin{IEEEeqnarray}{l}
\frac{1}{n} \log |\setM| \nonumber \\
\quad \leq \max_{\substack{ \tilde P_{U,X|V,S} \colon \\ \Ex {}{\gamma (X)} \leq \Gamma}} \min_{\substack{ \tilde P_{Y|V,U,X,S} \colon \\ \tilde P_{Y|(V,U) = (i,u),X,S} \in \mathscr P (W), \,\, \forall \, (i,u) \in [1:n] \times \setU }} I ( U ; Y | V ) - I ( U ; S |V ) \\
\quad \leq \max_{\substack{ \tilde P_{U,X|V,S} \colon \\ \Ex {}{\gamma (X)} \leq \Gamma}} \min_{\substack{ \tilde P_{Y|V,U,X,S} \colon \\ \tilde P_{Y|(V,U) = (i,u),X,S} \in \mathscr P (W), \,\, \forall \, (i,u) \in [1:n] \times \setU }} I ( V, U ; Y ) - I ( V, U ; S ), \label{eq:convFixedDists4CCInputs}
\end{IEEEeqnarray}
where $V$ is a time-sharing random-variable that is drawn uniformly over $[1:n]$ and $U$ an auxiliary chance variable taking values in a finite set $\setU$; where the mutual informations are computed w.r.t.\ the joint PMF $\tilde P_S \times P_V \times \tilde P_{U,X|V,S} \times \tilde P_{Y|V,U,X,S}$; and where the second inequality holds because conditioning cannot increase entropy, and because $S$ and $V$ are independent under $\tilde P_S \times P_V \times \tilde P_{U,X|V,S} \times \tilde P_{Y|V,U,X,S}$. By defining the auxiliary chance variable $\tilde U = (U,V)$, we obtain from \eqref{eq:convFixedDists4CCInputs} that every choice of $\tilde P_S$ gives rise to an upper bound
\begin{IEEEeqnarray}{l}
\frac{1}{n} \log |\setM| \nonumber \\
\quad \leq \max_{\substack{ \tilde P_{\tilde U,X|S} \colon \\ \Ex {}{\gamma (X)} \leq \Gamma}} \min_{\substack{ \tilde P_{Y|\tilde U,X,S} \colon \\ \tilde P_{Y|\tilde U = \tilde u,X,S} \in \mathscr P (W), \,\, \forall \, \tilde u \in \tilde \setU }} I ( \tilde U ; Y ) - I ( \tilde U ; S ), \label{eq:convFixedDists5CCInputs}
\end{IEEEeqnarray}
where $\tilde U$ is an auxiliary chance variable taking values in a finite set $\tilde \setU$, and the mutual informations are computed w.r.t.\ the joint PMF $\tilde P_S \times \tilde P_{\tilde U,X|S} \times \tilde P_{Y|\tilde U,X,S}$.

Having established \eqref{eq:convFixedDists5CCInputs}, we are now ready to conclude the proof of the converse. Since we can choose any PMF $\tilde P_S$ on $\setS$, we can choose one that---among all PMFs on $\setS$---yields the tightest bound, i.e., minimizes
\begin{IEEEeqnarray}{l}
\max_{\substack{ \tilde P_{\tilde U,X|S} \colon \\ \Ex {}{\gamma (X)} \leq \Gamma}} \min_{\substack{ \tilde P_{Y|\tilde U,X,S} \colon \\ \tilde P_{Y|\tilde U = \tilde u,X,S} \in \mathscr P (W), \,\, \forall \, \tilde u \in \tilde \setU }} I ( \tilde U ; Y ) - I ( \tilde U ; S ),
\end{IEEEeqnarray}
where $\tilde U$ is an auxiliary chance variable taking values in a finite set $\tilde \setU$, and the mutual informations are computed w.r.t.\ the joint PMF $\tilde P_S \times \tilde P_{\tilde U,X|S} \times \tilde P_{Y|\tilde U,X,S}$. For this choice of $\tilde P_S$ \eqref{eq:convFixedDists5CCInputs} implies that
\begin{IEEEeqnarray}{l}
\frac{1}{n} \log |\setM| \nonumber \\
\quad \leq \min_{\tilde P_S} \max_{\substack{ \tilde P_{\tilde U,X|S} \colon \\ \Ex {}{\gamma (X)} \leq \Gamma}} \min_{\substack{ \tilde P_{Y|\tilde U,X,S} \colon \\ \tilde P_{Y|\tilde U = \tilde u,X,S} \in \mathscr P (W), \,\, \forall \, \tilde u \in \tilde \setU }} I ( \tilde U ; Y ) - I ( \tilde U ; S ), \label{eq:convFixedDists6CCInputs}
\end{IEEEeqnarray}
where $\tilde U$ is an auxiliary chance variable taking values in a finite set $\tilde \setU$, and the mutual informations are computed w.r.t.\ the joint PMF $\tilde P_S \times \tilde P_{\tilde U,X|S} \times \tilde P_{Y|\tilde U,X,S}$.
\end{proof}

\section{A Proof of Remark~\ref{re:ccShannonSmallerAhlswede}}\label{sec:pfReCcShannonSmallerAhlswede}

\begin{proof}
Fix some PMF $P_X$ on $\setX$, and define the function
\begin{IEEEeqnarray}{rCl}
\rho \colon \mathscr P (W) & \rightarrow & \reals^+_0 \nonumber \\*[-0.5\normalbaselineskip]
  \label{eq:ccShannonSmallerAhlswedeDefR}
\\*[-0.5\normalbaselineskip]
V & \mapsto & I (P_X,V).\nonumber
\end{IEEEeqnarray}
To prove \eqref{eq:ccShannonSmallerAhlswede}, we will show that every $V$ that minimizes $\rho (\cdot)$ satisfies
\begin{IEEEeqnarray}{l}
\rho (V) \geq \min_{ y \in \setY } - \log \sum_{x \in \setX \colon \channel y x > 0 } P_X (x). \label{eq:ccShannonSmallerAhlswedePfIneq}
\end{IEEEeqnarray}
From this we will then obtain \eqref{eq:ccShannonSmallerAhlswede} by maximizing both sides over all choices of $P_X$ for which $\bigEx {}{\gamma (X)} \leq \Gamma$. This will conclude the proof of Remark~\ref{re:ccShannonSmallerAhlswede}, because Example~\ref{ex:ccShannonStrictlySmallerAhlswede} demonstrates that Inequality~\eqref{eq:ccShannonSmallerAhlswede} can be strict.

To show that every minimizer of $\rho (\cdot)$ satisfies \eqref{eq:ccShannonSmallerAhlswedePfIneq}, we first establish that $V  \in \mathscr P (W)$ minimizes $\rho (\cdot)$ only if
\begin{IEEEeqnarray}{l}
\!\!\!\!\!\!\!\! \frac{V (y|x)}{( P_X V ) (y)} = \frac{V (y^\prime|x)}{( P_X V ) (y^\prime)}, \,\, \Bigl( \forall \, (x,y,y^\prime) \in \setX \times \setY \times \setY \textnormal{ s.t.\ } V (y|x) \, V (y^\prime|x) > 0 \Bigr). \label{eq:partialVYXRMinimizerNec}
\end{IEEEeqnarray}
We prove the contrapositive: we show that if for some $V \in \mathscr P (W)$
\begin{IEEEeqnarray}{l}
\exists \, (x,y,y^\prime) \in \setX \times \setY \times \setY \textnormal{ s.t.\ } \nonumber \\*[-0.5\normalbaselineskip]
  \label{eq:partialVYXRMinimizerNecCP}
\\*[-0.5\normalbaselineskip]
\qquad V (y|x) \, V (y^\prime|x) > 0 \quad \textnormal{and} \quad \frac{V (y|x)}{( P_X V ) (y)} < \frac{V (y^\prime|x)}{( P_X V ) (y^\prime)}, \nonumber
\end{IEEEeqnarray}
then $V$ cannot be a minimizer of $\rho (\cdot)$. Our proof is by contradiction. To reach a contradiction, suppose that $V \in \mathscr P (W)$ minimizes $\rho (\cdot)$ and \eqref{eq:partialVYXRMinimizerNecCP} holds. Since
\begin{equation}
I (P_X,V) = \sum_{( x, y) \in \setX \times \setY} P_X (x) \, V (y|x) \log \frac{V (y|x)}{( P_X V ) (y)}, \label{eq:ccShannonSmallerAhlswedeRMuti}
\end{equation}
it follows that for every $(x,y) \in \setX \times \setY$
\begin{IEEEeqnarray}{l}
\frac{\partial \rho}{\partial V (y|x)} = P_X (x) \log \frac{V (y|x)}{( P_X V ) (y)}. \label{eq:partialVYXR}
\end{IEEEeqnarray}
This and \eqref{eq:partialVYXRMinimizerNecCP} imply that for all sufficiently-small $\delta$, and a fortiori for some $\delta$ satisfying
\begin{equation}
0 < \delta \leq \bigl( 1 - V (y|x) \bigr) \wedge V (y^\prime|x), \label{eq:ccShannonSmallerAhlswedeDefDelta}
\end{equation}
$\rho (\cdot)$ decreases when we replace $V (y|x)$ by $V (y|x) + \delta$ and $V (y^\prime|x)$ by $V (y^\prime|x) - \delta$. This contradicts our assumption that $V$ minimizes $\rho (\cdot)$, because replacing $V (y|x)$ by $V (y|x) + \delta$ and $V (y^\prime|x)$ by $V (y^\prime|x) - \delta$ yields some transition law $V^\prime$ in $\mathscr P (W)$. (The transition law $V^\prime$ is in $\mathscr P (W)$, because $V \in \mathscr P (W)$ and by \eqref{eq:ccShannonSmallerAhlswedeDefDelta}.) This contradiction proves that \eqref{eq:partialVYXRMinimizerNec} is a necessary condition for $V \in \mathscr P (W)$ to minimize $\rho (\cdot)$.

Having proved the necessity of \eqref{eq:partialVYXRMinimizerNec}, we are now ready to establish \eqref{eq:ccShannonSmallerAhlswedePfIneq}. To that end let
\begin{equation}
\gamma = \max_{y \in \setY} \sum_{x \in \setX \colon \channel y x > 0 } P_X (x), \label{eq:ccShannonSmallerAhlswedeDefH}
\end{equation}
and fix some transition law $V \in \mathscr P (W)$ that minimizes $\rho (\cdot)$ and for which \eqref{eq:partialVYXRMinimizerNec} hence holds. By \eqref{eq:partialVYXRMinimizerNec} there exist $\{ \alpha_x \}_{x \in \setX}$ satisfying that, whenever $V (y|x) > 0$,
\begin{IEEEeqnarray}{l}
\alpha_x = \frac{V (y|x)}{( P_X V ) (y)}. \label{eq:ccShannonSmallerAhlswedeDefAlphaX}
\end{IEEEeqnarray}
Consequently,
\begin{IEEEeqnarray}{rCl}
\rho (V) & = & \sum_{x \in \setX} P_X (x) \sum_{y \in \setY \colon V (y|x) > 0} (P_X V) (y) \, \alpha_x \log \alpha_x \\
& = & \sum_{y \in \setY} (P_X V) (y) \sum_{x \in \setX \colon V (y|x) > 0} P_X (x) \, \alpha_x \log \alpha_x \\
& \stackrel{(a)}\geq & \sum_{y \in \setY} (P_X V) (y) \Biggl( \sum_{x^{\prime \prime} \in \setX \colon V (y|x^{\prime \prime}) > 0} P_X (x^{\prime \prime}) \Biggr) \nonumber \\
& & \times \frac{\sum_{x \in \setX \colon V (y|x) > 0} P_X (x) \, \alpha_x}{\sum_{x^{\prime \prime} \in \setX \colon V (y|x^{\prime \prime}) > 0} P_X (x^{\prime \prime})} \log \frac{\sum_{x \in \setX \colon V (y|x) > 0} P_X (x) \, \alpha_x}{\sum_{x^{\prime \prime} \in \setX \colon V (y|x^{\prime \prime}) > 0} P_X (x^{\prime \prime})} \\
& \stackrel{(b)}= & \sum_{y \in \setY} (P_X V) (y) \log \frac{1}{\sum_{x^{\prime \prime} \in \setX \colon V (y|x^{\prime \prime}) > 0} P_X (x^{\prime \prime})} \\
& \stackrel{(c)}\geq & - \sum_{y \in \setY} (P_X V) (y) \log \gamma \\
& = & - \log \gamma \\
& \stackrel{(d)} = & \min_{ y \in \setY } - \log \sum_{x \in \setX \colon \channel y x > 0 } P_X (x), \label{eq:ccShannnonSmallerAhlswedeFinalIneq}
\end{IEEEeqnarray}
where $(a)$ holds because the function $$\xi \mapsto \xi \log \xi, \quad \xi \in \reals^+$$ is convex; $(b)$ holds because \eqref{eq:ccShannonSmallerAhlswedeDefAlphaX} holds whenever $V (y|x) > 0$; $(c)$ holds because $V \in \mathscr P (W)$ and \eqref{eq:ccShannonSmallerAhlswedeDefH} combine to imply that
\begin{IEEEeqnarray}{l}
\sum_{x \in \setX \colon V (y|x) > 0} P_X (x) \leq \sum_{x \in \setX \colon \channel y x > 0} P_X (x) \leq \gamma, \quad y \in \setY;
\end{IEEEeqnarray}
and $(d)$ holds by \eqref{eq:ccShannonSmallerAhlswedeDefH}. Inequality~\eqref{eq:ccShannnonSmallerAhlswedeFinalIneq} concludes the proof of \eqref{eq:ccShannonSmallerAhlswedePfIneq}.
\end{proof}

\section{Analysis of Example~\ref{ex:stuckAtOne}}\label{sec:exStuckAtOne}

For the SD-DMC $\channel y {x,s}$ of Example~\ref{ex:stuckAtOne} we show that, subject to the cost constraint \eqref{eq:ccStates1} with $\Gamma > 0$ satisfying \eqref{eq:exStuckAtOneLambda}, the zero-error capacity with acausal SI is positive. Given some blocklength~$n$, some message set $\setM$, and some encoding mapping
\begin{equation}
f \colon \setM \times \setS^n \rightarrow \setX^n, \label{eq:encExStuckAtOne}
\end{equation}
let $\vecy (m, \vecs)$ denote the output sequence that is produced when the transmitter uses the encoding mapping \eqref{eq:encExStuckAtOne} to convey Message~$m$ and the channel-state sequence is $\vecs$. Let $\setS^n (\Lambda)$ denote the set of $n$-length state-sequences of highest allowed cost
\begin{IEEEeqnarray}{l}
\setS^n (\Lambda) = \bigl\{ \vecs \in \setS^n \colon n \lambda^{(n)} (\vecs) = \left\lfloor \Lambda n \right\rfloor \bigr\}.
\end{IEEEeqnarray}

We begin with the following two observations: 1) From Table~\ref{tb:exStuckAtOne} we see that, if
\begin{equation}
\bigl\{ \tilde \vecy (m, \vecs) \bigr\}_{(m,\vecs) \in \setM \times \setS^n} \subseteq \setY^n \label{eq:exStuckAtOneNTuplesAllS}
\end{equation}
is such that for every $\vecs \in \setS^n$
\begin{IEEEeqnarray}{l}
\biggl( \Bigl( s_i = 1 \Bigr) \implies \Bigl( \tilde y_i (m,\vecs) = 1 \Bigr) \biggr), \,\, \forall \, (m,i) \in \setM \times [1:n], \label{eq:exStuckAtOneSiOneYiOne}
\end{IEEEeqnarray}
then there exists an encoding mapping $f$ of the form \eqref{eq:encExStuckAtOne} for which
\begin{equation}
\vecy (m, \vecs) = \tilde \vecy (m, \vecs), \,\, \forall \, (m,\vecs) \in \setM \times \setS^n.
\end{equation}
2) From the definition of $\setS^n (\Lambda)$ it follows that, if $\vecs \in \setS^n$ is such that $n \lambda^{(n)} (\vecs) < \left\lfloor \Lambda n \right\rfloor$, then there exists some $\vecs^\prime \in \setS^n (\Lambda)$ satisfying
\begin{IEEEeqnarray}{l}
\biggl( \Bigl( s_i = 1 \Bigr) \implies \Bigl( s_i^\prime = 1 \Bigr) \biggr), \,\, \forall \, i \in [1:n]. \label{eq:exStuckAtOneSiPrimeOneSiOne}
\end{IEEEeqnarray}
For such $\vecs^\prime$, any binary $n$-tuple $\tilde \vecy (m, \vecs^\prime)$ satisfying
\begin{equation}
\biggl( ( s_i^\prime = 1 ) \implies \Bigl( \tilde y_i (m,\vecs^\prime) = 1 \Bigr) \biggr), \,\, \forall \, i \in [1:n] \label{eq:exStuckAtOneSiOneYiOneTildeSY}
\end{equation}
also satisfies
\begin{IEEEeqnarray}{l}
\biggl( \Bigl( s_i = 1 \Bigr) \implies \Bigl( \tilde y_i (m,\vecs^\prime) = 1 \Bigr) \biggr), \,\, \forall \, i \in [1:n]. \label{eq:exStuckAtOneSiOneYiOneTildeY}
\end{IEEEeqnarray}
These two observations imply that to every collection
\begin{equation}
\bigl\{ \tilde \vecy (m, \vecs) \bigr\}_{(m, \vecs) \in \setM \times \setS^n (\Lambda)} \subseteq \setY^n \label{eq:exStuckAtOneNTuples}
\end{equation}
that satisfies \eqref{eq:exStuckAtOneSiOneYiOne} for every $\vecs \in \setS^n (\Lambda)$ there corresponds an encoding mapping of the form \eqref{eq:encExStuckAtOne} for which: 1) for every $\vecs \in \setS^n (\Lambda)$
\begin{subequations}\label{bl:exStuckAtOneYSi}
\begin{equation}
\vecy (m, \vecs) = \tilde \vecy (m, \vecs), \,\, \forall \, m \in \setM; \label{eq:exStuckAtOneYSiLargeCost}
\end{equation}
and 2) for every $\vecs \in \setS^n$ for which $n \lambda^{(n)} (\vecs) < \left\lfloor \Lambda n \right\rfloor$
\begin{IEEEeqnarray}{l}
\exists \, \vecs^\prime \in \setS^n (\Lambda) \textnormal{ s.t.\ } \Bigl( \vecy (m, \vecs) = \tilde \vecy (m , \vecs^\prime), \,\, \forall \, m \in \setM \Bigr). \label{eq:exStuckAtOneYSiPrimeSi}
\end{IEEEeqnarray}
\end{subequations}
The state sequence $\vecs \in \setS^n$ satisfies the cost constraint \eqref{eq:ccStates1} if $n \lambda^{(n)} (\vecs) \leq \left\lfloor \Lambda n \right\rfloor$. Consequently, if the collection in \eqref{eq:exStuckAtOneNTuples}---in addition to satisfying \eqref{eq:exStuckAtOneSiOneYiOne} for every $\vecs \in \setS^n (\Lambda)$---also satisfies that
\begin{IEEEeqnarray}{l}
\biggl( \Bigl( (m,\vecs) \neq (m^\prime,\vecs^\prime) \Bigr) \implies \Bigl( \tilde \vecy (m, \vecs) \neq \tilde \vecy (m^\prime, \vecs^\prime) \Bigr) \biggr), \nonumber \\
\quad \,\, \forall \, (m,\vecs), \, (m^\prime, \vecs^\prime) \in \setM \times \setS^n (\Lambda),
\end{IEEEeqnarray}
then we obtain from \eqref{bl:exStuckAtOneYSi} that the encoding mapping $f$ corresponding to the collection and the decoding sets
\begin{IEEEeqnarray}{l}
\setD_m = \bigcup_{\vecs \in \setS^n (\Lambda)} \bigl\{ \tilde \vecy (m, \vecs) \bigr\}, \quad m \in \setM
\end{IEEEeqnarray}
constitute an $(n, \setM)$ zero-error code for our channel under the cost constraint \eqref{eq:ccStates1}.

To show that under the cost constraint \eqref{eq:ccStates1} the zero-error capacity with acausal SI is positive, it thus suffices to exhibit some positive rate $R > 0$ for which for every sufficiently-large $n$ there exists some finite set $\setM$ of cardinality $|\setM| \geq 2^{n R}$ and some collection of $|\setM| \, |\setS^n (\Lambda)|$ distinct binary $n$-tuples \eqref{eq:exStuckAtOneNTuples} that satisfies \eqref{eq:exStuckAtOneSiOneYiOne} for every $\vecs \in \setS^n (\Lambda)$.

To that end we first note that the cardinality of $\setS^n (\Lambda)$ is upper-bounded by
\begin{IEEEeqnarray}{l} 
\bigl| \setS^n (\Lambda) \bigr| = {n \choose \left\lfloor \Lambda n \right\rfloor} \leq 2^{n h_{\textnormal b} (\Lambda)}, \label{eq:exStuckAtOneNoStateSequences}
\end{IEEEeqnarray}
where we used the inequality $h_{\textnormal b} \bigl( \left\lfloor \Lambda n \right\rfloor / n \bigr) \leq h_{\textnormal b} (\Lambda)$ (which holds because $\Lambda < 1/2$). We also note that for every state sequence $\vecs \in \setS^n (\Lambda)$ there exist $2^{\lceil n (1 - \Lambda) \rceil}$ binary $n$-tuples $\tilde \vecy$ that satisfy
\begin{IEEEeqnarray}{l}
\biggl( \Bigl( s_i = 1 \Bigr) \implies \Bigl( \tilde y_i = 1 \Bigr) \biggr), \,\, \forall \, i \in [1:n]. \label{eq:exStuckAtOneSiOneYiOneNormalY}
\end{IEEEeqnarray}
We now construct a collection of $|\setM| \, |\setS^n (\Lambda)|$ distinct binary $n$-tuples \eqref{eq:exStuckAtOneNTuples} that satisfies \eqref{eq:exStuckAtOneSiOneYiOne} for every $\vecs \in \setS^n (\Lambda)$ as follows. We sequentially allocate to each pair $(m,\vecs) \in \setM \times \setS^n (\Lambda)$ some $\tilde \vecy (m,\vecs)$ from the binary $n$-tuples $\tilde \vecy$ that satisfy \eqref{eq:exStuckAtOneSiOneYiOneNormalY} and that have not yet been allocated to some other pair $(m^\prime,\vecs^\prime)$. There are $| \setM | \, \bigl| \setS^n (\Lambda) \bigr| - 1$ such other pairs $(m^\prime,\vecs^\prime)$ to which we may or may not have allocated some $\tilde \vecy (m^\prime,\vecs^\prime)$ yet, and there are at least $2^{\lceil n (1 - \Lambda) \rceil}$ binary $n$-tuples $\tilde \vecy$ that satisfy \eqref{eq:exStuckAtOneSiOneYiOneNormalY}. Consequently, at most $| \setM | \, \bigl| \setS^n (\Lambda) \bigr| - 1$ binary $n$-tuples could have already been allocated, and if
\begin{equation}
| \setM | \, \bigl| \setS^n (\Lambda) \bigr| - 1 < 2^{\lceil n (1 - \Lambda) \rceil}, \label{eq:exStuckAtOneOutputSequenceExists}
\end{equation}
then there is at least one binary $n$-tuples $\tilde \vecy$ that satisfies \eqref{eq:exStuckAtOneSiOneYiOneNormalY} and that has not been allocated yet. Hence, if \eqref{eq:exStuckAtOneOutputSequenceExists} holds, then our construction produces a collection of $|\setM| \, |\setS^n (\Lambda)|$ distinct binary $n$-tuples \eqref{eq:exStuckAtOneNTuples} that satisfies \eqref{eq:exStuckAtOneSiOneYiOne} for every $\vecs \in \setS^n (\Lambda)$. From \eqref{eq:exStuckAtOneNoStateSequences} we obtain that \eqref{eq:exStuckAtOneOutputSequenceExists} holds whenever
\begin{equation}
| \setM | \leq 2^{n (1 - \Lambda - h_{\textnormal b} (\Lambda))},
\end{equation}
and hence every positive rate $R > 0$ satisfying
\begin{equation}
R \leq 1 - \Lambda - h_{\textnormal b} (\Lambda)
\end{equation}
is achievable. This, combined with \eqref{eq:exStuckAtOneLambda}, implies that under the cost constraint \eqref{eq:ccStates1} the zero-error capacity with acausal SI is positive.

\section{A Proof of Theorem~\ref{th:ccStates2Capacity}}\label{sec:pfThCcStates2Capacity}

Lemma~\ref{le:cardU} in Appendix~\ref{sec:leCardU} implies that restricting $X$ to be a function of $U$ and $S$, i.e., $P_{U,X|S}$ to have the form \eqref{eq:thCapCardUXFuncUS}, does not change the RHS of \eqref{eq:capacityCCStates}, nor does restricting the cardinality of $\setU$ to \eqref{eq:cardU}. To prove Theorem~\ref{th:ccStates2Capacity} it thus suffices to establish a direct part for the case where $\setU$ is restricted to \eqref{eq:cardU} and a converse part for the case where $\setU$ is any finite set. We first establish the direct part.

\begin{proof}[Direct Part]
We assume that \eqref{eq:positiveCCStates} holds and show that the RHS of \eqref{eq:capacityCCStates} is achievable. The necessity of \eqref{eq:positiveCCStates} is part of the converse. If the RHS of \eqref{eq:capacityCCStates} is zero, then there is nothing to prove, so we assume that it is positive. To prove that the RHS of \eqref{eq:capacityCCStates} is achievable, we shall show that for every $l \in \naturals$ the RHS of \eqref{eq:capacityCCStates} is a lower bound for $C_{\textnormal{f},0}^{(2)} (\Lambda,l)$. To that end fix any $l \in \naturals$. The proof builds on the proofs of Remark~\ref{re:chUsesPerBit} and the direct part of Theorem~\ref{th:capacity}, adapting both to the state constraint \eqref{eq:ccStates2}. We partition the blocklength-$n$ transmission into $B + 2$ blocks, with each of the first $B$ blocks being of length~$k$, where $k$ is a multiple of $l$; with Block~$(B + 1)$ being of length $k^\prime$; and with Block~$(B + 2)$ being of length~$n - B k - k^\prime$. The only purpose of Block~$(B + 2)$ is to allow $B k + k^\prime$ to be smaller than $n$: in this block the encoder can thus transmit arbitrary inputs with the decoder ignoring the corresponding outputs. The choice we shall later make for $k$ and $k^\prime$ will be such that the last two blocks be of negligible length compared to $B k$ and therefore not affect the code's asymptotic rate.

Before the transmission begins, the encoder is revealed the realization $\vecs^{(b)} \triangleq S^{b k}_{(b-1) k + 1}$ of the Block-$b$ state-sequence for every $b \in [1:B]$ and the realization $\vecs^{(B + 1)} \triangleq S^{B k + k^\prime}_{B k + 1}$ of the Block-$(B + 1)$ state-sequence. In the first $B$ blocks our scheme draws on the scheme we used in the direct part of Theorem~\ref{th:capacity}. But instead of reducing the set of messages of positive posterior probability given the channel outputs, in the present setting we consider pairs of messages and possible Block-$(B + 1)$ state-sequences, and each of the blocks~1 through $B$ reduces the set of such pairs that have a positive posterior probability given the channel outputs. For every $b \in [1:B]$ we thus adapt the Block~$b$ transmission as follows. Because $k$ is a multiple of $l$, the cost constraint \eqref{eq:ccStates2} implies that in the first $B$ blocks
\begin{subequations}\label{bl:ccStatesConstStatesBlocks}
\begin{IEEEeqnarray}{l}
\sum_{s \in \setS} P_{\vecs^{(b)}} (s) \, \lambda (S) \leq \Lambda, \,\, \forall \, b \in [1:B], \label{eq:ccStatesConstStatesBlockb}
\end{IEEEeqnarray}
and in Block~$(B + 1)$
\begin{IEEEeqnarray}{l}
\frac{1}{l} \sum^{j l}_{i = (j - 1) l + 1} \lambda \Bigl( s^{(B + 1)}_i \Bigr) \leq \Lambda, \,\, \Bigl( \forall \, j \in \naturals \textnormal{ s.t.\ } j l \leq k^\prime \Bigr). \label{eq:ccStatesConstStatesBlockBPl1}
\end{IEEEeqnarray}
\end{subequations}
Assume for now that the decoder---while incognizant of $\vecs^{(1)}, \ldots, \vecs^{(B)}$---knows the empirical types $P_{\vecs^{(1)}}, \ldots, P_{\vecs^{(B)}}$: Block~$(B + 1)$ will ensure that the scheme works even though the decoder is incognizant of these types. Let $\bm { \mathcal I}_0 \subseteq \setM \times \setS^{k^\prime}$ be the set of all possible pairs of message $m^\prime \in \setM$ and Block-$(B + 1)$ state-sequence $\vecs^\prime \in \setS^{k^\prime}$ satisfying \eqref{eq:ccStatesConstStatesBlockBPl1}, i.e.,
\begin{equation}
\frac{1}{l} \sum^{j l}_{i = (j - 1) l + 1} \lambda ( s^\prime_i ) \leq \Lambda, \,\, \Bigl( \forall \, j \in \naturals \textnormal{ s.t.\ } j l \leq k^\prime \Bigr), \label{eq:ccStatesConstStatesSetJB}
\end{equation}
and let $\bm { \mathcal I}_b$ be the post-Block-$b$ ambiguity-set, i.e., the (random) subset of $\bm { \mathcal I}_{b-1}$ comprising the elements in $\bm { \mathcal I}_{b-1}$ of positive posterior probability given the Block-$b$ outputs $\vecy^{(b)} \triangleq Y^{bk}_{(b-1)k + 1}$ and the empirical type $P_{\vecs^{(b)}}$. Choose some $k$-type $P_{U,X,S}^{(b)}$ whose $\setS$-marginal $P_S^{(b)}$ equals $P_{\vecs^{(b)}}$, fix some $\epsilon > 0$, and define $\Theta$ as in \eqref{eq:defTheta}. In the following, unless otherwise specified, all entropies and mutual informations are computed w.r.t.\ the joint PMF $P_{U,X,S}^{(b)}$. Unlike the scheme we used in the direct part of Theorem~\ref{th:capacity}, where it was the survivor set $\bm { \mathcal M}_{b-1}$ that was partitioned into $\Theta$ subsets, here it is the ambiguity set $\bm { \mathcal I}_{b-1}$ that is partitioned into $\Theta$ subsets. The arguments leading to \eqref{bl:noMessMbGivenOutputs2} in the direct part of Theorem~\ref{th:capacity} then imply that we can find a positive integer $\eta_0 = \eta_0 \bigl( |\setX|, |\setS|, \epsilon \bigr)$ that guarantees that, for every $k \geq \eta_0$,
\begin{subequations}\label{bl:ccStatesNoMessMbGivenOutputs}
\begin{IEEEeqnarray}{l}
| \bm { \mathcal I}_b | \leq \left( \max_{\substack{P_{Y|U,X,S} \colon \\ P_{Y|U = u, X,S} \in \mathscr P (W), \,\, \forall \, u \in \setU}} 2^{ - k ( I (U;Y) - I (U;S) - ( \epsilon + \beta_k ) ) } \right) \! |\bm { \mathcal I}_{b-1}|, \label{eq:ccStatesNoMessMbGivenOutputs}
\end{IEEEeqnarray}
whenever
\begin{IEEEeqnarray}{l}
|\bm { \mathcal I}_{b-1}| \geq 2^{k \log |\setU|}, \label{eq:ccStatesNoMessMbCondMbMin1}
\end{IEEEeqnarray}
\end{subequations}
where the mutual informations are computed w.r.t.\ the joint PMF $P^{(b)}_{U,X,S} \times P_{Y|U,X,S}$, and where $\beta_k$ is defined in \eqref{eq:gammaKPfCapacity} and hence converges to zero as $k$ tends to infinity.

Since we can choose any $k$-type $P_{U,X,S}^{(b)}$ whose $\setS$-marginal $P_S^{(b)}$ is $P_{\vecs^{(b)}}$, we can choose $P_{U,X,S}^{(b)} = P_{\vecs^{(b)}} \times P_{U,X|S}^{(b)}$, where $P_{U,X|S}^{(b)}$ is the conditional $k$-type that---among all conditional $k$-types---maximizes
\begin{IEEEeqnarray}{l}
\min_{\substack{P_{Y|U,X,S} \colon \\ P_{Y|U = u, X,S} \in \mathscr P (W), \,\, \forall \, u \in \setU}} I (U;Y) - I (U;S),
\end{IEEEeqnarray}
where the mutual informations are computed w.r.t.\ the joint PMF $P_{\vecs^{(b)}} \times P_{U,X|S}^{(b)} \times P_{Y|U,X,S}$. Every conditional PMF can be approximated in the total variation distance by a conditional $k$-type when $k$ is sufficiently large; and, because entropy and mutual information are continuous in this distance \cite[Lemma~2.7]{csiszarkoerner11}, it follows that---for the above choice of the conditional $k$-type and some $\gamma_k = \gamma_k (|\setU|,|\setX|,|\setS|,|\setY|)$, which converges to zero as $k$ tends to infinity---\eqref{eq:ccStatesConstStatesBlockb} and \eqref{bl:ccStatesNoMessMbGivenOutputs} imply that when $|\bm { \mathcal I}_{b-1}| \geq 2^{k \log |\setU|}$
\begin{IEEEeqnarray}{l}
| \bm { \mathcal I}_b | \leq \left( \max_{\substack{P_S \colon \\ \Ex {}{\lambda(S)} \leq \Lambda}} \min_{P_{U,X|S}} \max_{\substack{P_{Y|U,X,S} \colon \\ P_{Y|U = u, X,S} \in \mathscr P (W), \,\, \forall \, u \in \setU}} \!\!\!\!\! 2^{ - k ( I (U;Y) - I (U;S) - \epsilon + \gamma_k ) } \right) \! |\bm { \mathcal I}_{b-1}|, \label{eq:ccStatesNoMessMbGivenOutputs2}
\end{IEEEeqnarray}
where the mutual informations are computed w.r.t.\ the joint PMF $P_S \times P_{U,X|S} \times P_{Y|U,X,S}$. Because our scheme works for any $\epsilon > 0$, it follows that for every $\epsilon > 0$ and positive integer $k \geq \eta_0 (|\setX|,|\setS|,\epsilon)$ each of Blocks~1 through~$B$ is guaranteed to reduce the ambiguity set by a factor of at least
\begin{IEEEeqnarray}{l}
\max_{\substack{P_S \colon \\ \Ex {}{\lambda(S)} \leq \Lambda}} \min_{P_{U,X|S}} \max_{\substack{P_{Y|U,X,S} \colon \\ P_{Y|U = u, X,S} \in \mathscr P (W), \,\, \forall \, u \in \setU}} \!\!\!\!\! 2^{ - k ( I (U;Y) - I (U;S) - \delta (\epsilon,k) ) }, \label{eq:ccStatesNoMessMbGivenOutputs3}
\end{IEEEeqnarray}
until $|\bm { \mathcal I}_B|$ is smaller than $2^{k \log |\setU|}$. Here the mutual informations are computed w.r.t.\ the joint PMF $P_S \times P_{U,X|S} \times P_{Y|U,X,S}$, and $\delta (\epsilon, k)$ is defined in \eqref{eq:defDeltaOfEpsK} and hence converges to zero as $\epsilon$ tends to zero and $k$ to infinity.

Since we assume that the RHS of \eqref{eq:capacityCCStates} is positive; and, because $\delta (\epsilon, k)$ converges to zero as $\epsilon \downarrow 0$ and $k \rightarrow \infty$, it follows that we can choose $\epsilon$ sufficiently small and $B$ and $k$ sufficiently large so that
\begin{subequations}\label{bl:choiceBKEpsForLastBlockCCStates}
\begin{IEEEeqnarray}{l}
k \geq \eta_0 \bigl( |\setX|, |\setS|, \epsilon \bigr) \label{eq:choiceBKEpsForLastBlockKCCStates}
\end{IEEEeqnarray}
and
\begin{IEEEeqnarray}{l}
\!\! \left( \max_{\substack{P_S \colon \\ \Ex {}{\lambda (S)} \leq \Lambda}} \min_{P_{U,X|S}} \max_{ \substack{ P_{Y|U,X,S} \colon \\ P_{Y|U = u, X,S} \in \mathscr P (W) , \,\, \forall \, u \in \setU}} 2^{ - B k ( I (U;Y) - I (U;S) - \delta (\epsilon, k) ) } \right) \! |\setM| \, |\setS|^{k^\prime} \nonumber \\
\quad \leq 2^{k \log |\setU|}. \label{eq:choiceBKEpsForLastBlockCCStates}
\end{IEEEeqnarray}
\end{subequations}
This guarantees that
\begin{equation}
|\bm { \mathcal I}_B| \leq 2^{k \log |\setU|}, \label{eq:setIBSmUPowKCCStates}
\end{equation}
because each block reduces the ambiguity set by the factor in \eqref{eq:ccStatesNoMessMbGivenOutputs3} until $|\bm { \mathcal I}_B|$ is smaller than $2^{k \log |\setU|}$.

We now deal with Block~$(B + 1)$. Because the decoder is incognizant of the empirical types $\{ P_{\vecs^{(b)}} \}_{b \in [1:B]}$, it cannot compute the post-Block-$B$ ambiguity-set $\bm { \mathcal I}_B$. The uncertainty that needs to be addressed is about the message, the Block-$(B + 1)$ state-sequence, as well as the $B$ empirical types of $\vecs^{(1)}, \ldots, \vecs^{(B)}$. Let $\bm { \mathcal J}_B \subseteq \setM \times \setS^{k^\prime}$ denote the union of the post-Block-$B$ ambiguity-sets corresponding to all the different $B$-tuples of $k$-types on $\setS$, i.e., $\bm { \mathcal J}_B$ is the set of pairs of messages and possible Block-$(B + 1)$ state-sequences that have a positive posterior probability given only the outputs $\bigl\{ \vecy^{(b)} \bigr\}_{b \in [1:B]}$ (and not the $k$-types $\bigl\{ P_{\vecs^{(b)}} \bigr\}_{b \in [1:B]}$). Because the post-Block-$B$ ambiguity-set corresponding to any given $B$-tuple of $k$-types on $\setS$ satisfies \eqref{eq:setIBSmUPowKCCStates}, and because there are at most $(1 + k)^{B \, |\setS|}$ $B$-tuples of $k$-types on $\setS$
\begin{IEEEeqnarray}{l}
|\bm { \mathcal J}_B| \leq 2^{k \log |\setU| + B \log (1 + k) \, |\setS|}. \label{eq:ccStatesSizeSetJB}
\end{IEEEeqnarray}
In Block~$(B + 1)$ we resolve the set $\bm { \mathcal J}_B$. This will guarantee that the decoder can recover the transmitted message~$m$ error-free.

Block~$(B + 1)$ is similar to Phase~2 of the scheme we used to prove Remark~\ref{re:chUsesPerBit}: the encoder allocates to every pair $(m^\prime,\vecs^\prime) \in \bm { \mathcal J}_B$ a length-$k^\prime$ codeword $\vecx (m^\prime,\vecs^\prime)$, where the codewords are chosen so that
\begin{IEEEeqnarray}{l}
\Bigl( \forall \, (m^\prime,\vecs^\prime), \, (m^{\prime\prime},\vecs^{\prime\prime}) \in \bm { \mathcal J}_B \textnormal{ s.t.\ } m^\prime \neq m^{\prime\prime} \Bigr) \quad \exists \, i \in [1:k^\prime] \textnormal { s.t.\ } \nonumber \\*[-0.6\normalbaselineskip]
  \label{eq:condLastBlockCCStates}
\\*[-0.6\normalbaselineskip]
\qquad \Bigl( \bigchannel y {x_i (m^\prime,\vecs^\prime), s_i^\prime} \, \bigchannel y {x_i (m^{\prime\prime}, \vecs^{\prime\prime}), s_i^{\prime\prime}} = 0 , \,\, \forall \, y \in \setY \Bigr). \nonumber
\end{IEEEeqnarray}
(We shall shortly use a random coding argument to show that this can be done.) To convey the message~$m$, the encoder transmits in Block~$(B + 1)$ the codeword $\vecx \bigl( m,\vecs^{(B + 1)} \bigr)$. Condition~\eqref{eq:condLastBlockCCStates} implies that, upon observing the Block-$(B + 1)$ outputs $\vecy^{(B + 1)} \triangleq Y^{Bk + k^\prime}_{Bk + 1}$, the decoder, who knows $\bm { \mathcal J}_B$ and the codewords $\bigl\{ \vecx (m^\prime,\vecs^\prime) \bigr\}$, can determine the transmitted message~$m$ error-free, because, for the true realization $\vecs^{(B + 1)}$ of the Block-$(B + 1)$ state-sequence,
\begin{IEEEeqnarray}{l}
\prod^{k^\prime}_{i = 1} W \Bigl( y_i^{(B + 1)} \Bigl| x_i \bigl(m, \vecs^{(B + 1)} \bigr), s^{(B + 1)}_i \Bigr) > 0,
\end{IEEEeqnarray}
whereas \eqref{eq:condLastBlockCCStates} implies for $m^\prime \neq m$
\begin{IEEEeqnarray}{l}
\prod^{k^\prime}_{i = 1} W \Bigl( y_i^{(B + 1)} \Bigl| x_i (m^\prime, \tilde \vecs), \tilde s_i \Bigr) = 0, \,\, \Bigl( \forall \, \tilde \vecs \textnormal{ s.t.\ } (m^\prime,\tilde \vecs) \in \bm { \mathcal J}_B \Bigr).
\end{IEEEeqnarray}
The decoder can thus calculate $\prod_i W \bigl( y_i^{(B + 1)} \bigl| x_i (\tilde m, \tilde \vecs), \tilde s_i \bigr)$ for each $( \tilde m, \tilde \vecs ) \in \bm { \mathcal J}_B$ and produce the message~$\tilde m$ for which this product is positive for some $\tilde \vecs$ for which $(\tilde m, \tilde \vecs) \in \bm { \mathcal J}_B$.

We next show that, for some choice of $k^\prime$, there exist codewords $\bigl\{ \vecx (m^\prime,\vecs^\prime) \bigr\}$ satisfying \eqref{eq:condLastBlockCCStates}. To this end we use a random coding argument. Draw the length-$k^\prime$ codewords $\bigl\{ \rndvecX (m^\prime, \vecs^\prime) \bigr\}$ independently, each uniformly over $\setX^{k^\prime}$, and let
\begin{subequations}\label{bl:defKPrimePrimeAndMoreCCStates}
\begin{IEEEeqnarray}{rCl}
q & = & \Bigl\lfloor \frac{k^\prime}{l} \Bigr\rfloor, \label{eq:defQCCStates} \\
\lambda^\star & = & \min_{s, \, s^\prime \in \setS \colon \lambda (s) + \lambda (s^\prime) > 2 \Lambda} \label{eq:defLambdaStarCCStates} \frac{\lambda (s) + \lambda (s^\prime)}{2}, \\
\alpha & = & \frac{\lambda^\star - \Lambda}{\lambda^\star - \lambda_{\textnormal{min}}}, \label{eq:defAlphaCCStates} \\
k^{\prime\prime} & = & \left\lceil \alpha q l \right\rceil. \label{eq:defKPrimePrimeCCStates}
\end{IEEEeqnarray}
\end{subequations}
From the cost constraint \eqref{eq:ccStatesConstStatesSetJB} and the definition of $q$ \eqref{eq:defQCCStates} it follows that every pair of (not necessarily distinct) state sequences $\vecs^\prime, \, \vecs^{\prime\prime} \in \setS^{k^\prime}$ for which
\begin{IEEEeqnarray}{l}
\exists \, m^\prime, \, m^{\prime\prime} \in \setM \textnormal{ s.t.\ } (m^\prime, \vecs^\prime), \, (m^{\prime\prime}, \vecs^{\prime\prime}) \in \bm { \mathcal J}_B \label{eq:stateSequencesInGameCCStates}
\end{IEEEeqnarray}
satisfies
\begin{IEEEeqnarray}{l}
\frac{1}{q l} \sum^{q l}_{i = 1} \frac{\lambda (s^{\prime}_i) + \lambda (s^{\prime\prime}_i)}{2} \leq \Lambda. \label{eq:ccStatesOnQl}
\end{IEEEeqnarray}
As we argue next, \eqref{eq:ccStatesOnQl} can hold only if for all such $\vecs^\prime, \, \vecs^{\prime\prime}$ there exist at least $k^{\prime\prime}$ distinct epochs $\setL (\vecs^\prime, \vecs^{\prime\prime}) \subset [1:q l]$ for which
\begin{equation}
\frac{\lambda (s^{\prime}_\ell) + \lambda (s^{\prime\prime}_\ell)}{2} \leq \Lambda, \,\, \forall \, \ell \in \setL (\vecs^\prime, \vecs^{\prime\prime}). \label{eq:ccStatesSatisfiedAtJ}
\end{equation}
To simplify the typography, we shall refer to $\setL (\vecs^\prime, \vecs^{\prime\prime})$ as $\setL$. The claim can then be stated equivalently as
\begin{IEEEeqnarray}{l}
\exists \, \setL \subset [1:ql] \textnormal{ s.t.\ } \Bigl( |\setL| = k^{\prime\prime} \Bigr) \wedge \Biggl( \frac{\lambda \bigl( s^{\prime}_\ell \bigr) + \lambda \bigl( s^{\prime\prime}_\ell \bigr)}{2} \leq \Lambda, \,\, \forall \, \ell \in \setL \Biggr). \label{eq:ccStatesSatisfiedAtJMath}
\end{IEEEeqnarray}
To prove \eqref{eq:ccStatesSatisfiedAtJMath}, note that, by the definition of $\lambda^\star$ \eqref{eq:defLambdaStarCCStates},
\begin{equation}
\Biggl( \biggl( \frac{\lambda (s^{\prime}_i) + \lambda (s^{\prime\prime}_i)}{2} > \Lambda \biggr) \implies \biggl( \frac{\lambda (s^{\prime}_i) + \lambda (s^{\prime\prime}_i)}{2} \geq \lambda^\star \biggr) \Biggr), \,\, \forall \, i \in [1:k^\prime], \label{eq:ccStatesSatisfiedAtJPf1}
\end{equation}
and, because $\lambda (s) \geq \lambda_{\textnormal {min}}, \,\, s \in \setS$,
\begin{equation}
\frac{\lambda (s^{\prime}_i) + \lambda (s^{\prime\prime}_i)}{2} \geq \lambda_{\textnormal {min}}, \,\, \forall \, i \in [1:k^\prime]. \label{eq:ccStatesSatisfiedAtJPf2}
\end{equation}
The definitions of $\alpha$ \eqref{eq:defAlphaCCStates} and $k^{\prime\prime}$ \eqref{eq:defKPrimePrimeCCStates} combine with \eqref{eq:ccStatesSatisfiedAtJPf1} and \eqref{eq:ccStatesSatisfiedAtJPf2} to prove our claim that \eqref{eq:ccStatesSatisfiedAtJMath} holds for all $\vecs^\prime, \, \vecs^{\prime\prime} \in \setS^{k^\prime}$ satisfying \eqref{eq:stateSequencesInGameCCStates}. An immediate consequence of \eqref{eq:ccStatesSatisfiedAtJMath} and the assumption \eqref{eq:positiveCCStates} is that for all $\vecs^\prime, \, \vecs^{\prime\prime}$ satisfying \eqref{eq:stateSequencesInGameCCStates}.
\begin{IEEEeqnarray}{l}
\exists \, \setL \subset [1:ql] \textnormal{ s.t.\ } \Bigl( |\setL| = k^{\prime\prime} \Bigr) \wedge \biggl( \forall \, \ell \in \setL \quad \exists \, x^{\prime}, x^{\prime\prime} \in \setX \textnormal{ s.t.\ } \nonumber \\*[-0.55\normalbaselineskip]
  \label{eq:pfCCStatesCondKPrimePrime}
\\*[-0.55\normalbaselineskip]
\qquad \Bigl( \bigchannel y {x^{\prime},s^{\prime}_\ell} \, \bigchannel y {x^{\prime\prime},s^{\prime\prime}_\ell} = 0, \,\, \forall \, y \in \setY \Bigr) \biggr). \nonumber
\end{IEEEeqnarray}

Having established \eqref{eq:pfCCStatesCondKPrimePrime}, we are now ready to show that---for some choice of $k^\prime$---the probability that the random codewords $\bigl\{ \rndvecX (m^\prime,\vecs^\prime) \bigr\}$ satisfy \eqref{eq:condLastBlockCCStates} is positive. For every distinct $(m^\prime,\vecs^\prime), \, (m^{\prime\prime},\vecs^{\prime\prime}) \in \bm { \mathcal J}_B$
\begin{IEEEeqnarray}{l}
\Bigdistof {\forall \, i \in [1 : k^\prime] \,\, \exists \, y \in \setY \textnormal{ s.t.\ } W \bigl( y \bigl| X_i (m^\prime, \vecs^\prime), s^\prime_i \bigr) \, W \bigl( y \bigl| X_i (m^{\prime\prime}, \vecs^{\prime\prime}), s^{\prime\prime}_i \bigr) > 0} \nonumber \\
\quad \leq \biggl( 1 - \frac{1}{|\setX|^2} \biggr)^{\!\! k^{\prime\prime}} \\
\quad = 2^{- k^{\prime\prime} ( 2 \log |\setX| - \log ( |\setX|^2 - 1 ) ) },
\end{IEEEeqnarray}
where we used \eqref{eq:pfCCStatesCondKPrimePrime} and that $\rndvecX (m^\prime, \vecs^\prime)$ and $\rndvecX (m^{\prime\prime}, \vecs^{\prime\prime})$ are independent and uniform over $\setX^{k^\prime}$. This, the Union-of-Events bound, and \eqref{eq:ccStatesSizeSetJB} imply that the probability that the randomly drawn length-$k^\prime$ codewords do not satisfy \eqref{eq:condLastBlockCCStates} is upper-bounded by
\begin{IEEEeqnarray}{l}
| \bm { \mathcal J}_B |^2 \, 2^{- k^{\prime\prime} ( 2 \log |\setX| - \log ( |\setX|^2 - 1 ) ) } \nonumber \\
\quad \leq 2^{- k^{\prime\prime} ( 2 \log |\setX| - \log ( |\setX|^2 - 1 ) ) + 2 ( k \log |\setU| + B \log (1 + k) \, |\setS| )},
\end{IEEEeqnarray}
which is smaller than one whenever
\begin{IEEEeqnarray}{l}
k^{\prime\prime} > \frac{k \log |\setU| + B \log (1 + k) \, |\setS|}{\log |\setX| - \frac{1}{2} \log \bigl( |\setX|^2 - 1 \bigr)}. \label{eq:kPrimePrimeLBCCStates}
\end{IEEEeqnarray}
Consequently, \eqref{eq:defQCCStates} and \eqref{eq:defKPrimePrimeCCStates} imply that, if we choose
\begin{IEEEeqnarray}{l}
k^\prime = \Biggl( \biggl\lfloor \frac{k \log |\setU| + B \log (1 + k) \, |\setS|}{\alpha l \bigl( \log |\setX| - \frac{1}{2} \log ( |\setX|^2 - 1 ) \bigr)} \biggr\rfloor + 1 \Biggr) l, \label{eq:kPrimeLBCCStates}
\end{IEEEeqnarray}
then there exist length-$k^\prime$ codewords $\bigl\{ \vecx (m^\prime,\vecs^\prime) \bigr\}$ satisfying \eqref{eq:condLastBlockCCStates}.\\

We are now ready to join the dots and conclude that the coding scheme asymptotically achieves any rate smaller than the RHS of \eqref{eq:capacityCCStates}. More precisely, we will show that, for every rate $R$ smaller than the RHS of \eqref{eq:capacityCCStates} and every sufficiently-large blocklength~$n$, our coding scheme can convey $n R$ bits error-free in $n$ channel uses. It follows from \eqref{bl:choiceBKEpsForLastBlockCCStates} and \eqref{eq:kPrimeLBCCStates} that if the positive integers $n, \, B, \, k$ and $\epsilon > 0$ are such that $k$ is a multiple of $l$,
\begin{subequations}\label{bl:choiceBKEpsForLastBlockCCStates2}
\begin{IEEEeqnarray}{l}
k \geq \eta_0 \bigl( |\setX|, |\setS|, \epsilon \bigr), \label{eq:choiceBKEpsForLastBlockKCCStates2}
\end{IEEEeqnarray}
and
\begin{IEEEeqnarray}{l}
\!\!\!\!\!\!\!\! n R + \Biggl( \biggl\lfloor \frac{k \log |\setU| + B \log (1 + k) \, |\setS|}{\alpha l \bigl( \log |\setX| - \frac{1}{2} \log ( |\setX|^2 - 1 ) \bigr)} \biggr\rfloor + 1 \Biggr) l \log |\setS| \nonumber \\
\!\!\!\!\!\!\!\! \quad \leq B k \left( \min_{\substack{P_S \colon \\ \Ex {}{\lambda (S)} \leq \Lambda}} \max_{P_{U,X|S}} \min_{ \substack{P_{Y|U,X,S} \colon \\ P_{Y|U = u, X,S} \in \mathscr P (W), \,\, \forall \, u \in \setU}} \!\!\!\!\! I (U;Y) - I (S;Y) - \delta (\epsilon, k ) \right) \!\!, \label{eq:choiceBKEpsForLastBlockCCStates2}
\end{IEEEeqnarray}
\end{subequations}
then the first $B + 1$ blocks of our coding scheme can convey $n R$ bits error-free in
\begin{IEEEeqnarray}{l}
B k + \Biggl( \biggl\lfloor \frac{k \log |\setU| + B \log (1 + k) \, |\setS|}{\alpha l \bigl( \log |\setX| - \frac{1}{2} \log ( |\setX|^2 - 1 ) \bigr)} \biggr\rfloor + 1 \Biggr) l \label{eq:chUsesBlocks1ThroughBPl1CCStates}
\end{IEEEeqnarray}
channel uses. It thus remains to exhibit positive integers $B, \, k$, where $k$ is a multiple of $l$, and some $\epsilon > 0$ such that for every sufficiently-large blocklength~$n$ \eqref{bl:choiceBKEpsForLastBlockCCStates2} holds and
\begin{IEEEeqnarray}{l}
B k + \Biggl( \biggl\lfloor \frac{k \log |\setU| + B \log (1 + k) \, |\setS|}{\alpha l \bigl( \log |\setX| - \frac{1}{2} \log ( |\setX|^2 - 1 ) \bigr)} \biggr\rfloor + 1 \Biggr) l \leq n. \label{eq:chUsesBlocks1ThroughBPl1SmNCCStates}
\end{IEEEeqnarray}
(When the inequality in \eqref{eq:chUsesBlocks1ThroughBPl1SmNCCStates} is strict, then Block~$(B + 2)$ deals with all the superfluous epochs: recall that in this block the encoder can transmit arbitrary inputs with the decoder ignoring the corresponding outputs.) As we argue next, when $n$ is sufficiently large we can choose
\begin{subequations}\label{bl:defBkCCStates}
\begin{IEEEeqnarray}{rCl}
B & = & \lfloor \sqrt n \rfloor - \Biggl( \biggl\lfloor \frac{\log |\setU| + \log (1 + \sqrt n) \, |\setS|}{\alpha l \bigl( \log |\setX| - \frac{1}{2} \log ( |\setX|^2 - 1 ) \bigr)} \biggr\rfloor + 1 \Biggr) l, \\
k & = & \biggl\lfloor \frac{\sqrt n}{l} \biggr\rfloor l, \label{eq:defBkkCCStates}
\end{IEEEeqnarray}
\end{subequations}
and we can choose $\epsilon > 0$ for which
\begin{IEEEeqnarray}{l}
R + \epsilon < \min_{\substack{P_S \colon \\ \Ex {}{\lambda (S)} \leq \Lambda}} \max_{P_{U,X|S}} \min_{ \substack{P_{Y|U,X,S} \colon \\ P_{Y|U = u, X,S} \in \mathscr P (W), \,\, \forall \, u \in \setU}} I (U;Y) - I (S;Y). \label{eq:defEpsCCStates}
\end{IEEEeqnarray}
Note that, whenever $n$ is sufficiently large, $B$ and $k$ are positive, $k$ is a multiple of $l$, and \eqref{eq:chUsesBlocks1ThroughBPl1SmNCCStates} is satisfied. To see that also \eqref{bl:choiceBKEpsForLastBlockCCStates2} holds whenever $n$ is sufficiently large, we first observe from \eqref{eq:defBkkCCStates} that $k$ tends to infinity as $n$ tends to infinity. This implies that \eqref{eq:choiceBKEpsForLastBlockKCCStates2} holds whenever $n$ is sufficiently large, and that $\delta (\epsilon, k)$ (which is defined in \eqref{eq:defDeltaOfEpsK}, where $\gamma_k = \gamma_k \bigl( |\setU|, |\setX|, |\setS|, |\setY| \bigr)$ converges to zero as $k$ tends to infinity) converges to $\epsilon$ as $n$ tends to infinity. We next observe that \eqref{bl:defBkCCStates} implies that $B k / n$ converges to one as $n$ tends to infinity, and that
\begin{equation}
\frac{1}{n} \Biggl( \biggl\lfloor \frac{k \log |\setU| + B \log (1 + k) \, |\setS|}{\alpha l \bigl( \log |\setX| - \frac{1}{2} \log ( |\setX|^2 - 1 ) \bigr)} \biggr\rfloor + 1 \Biggr) l \log |\setS|
\end{equation}
converges to zero as $n$ tends to infinity. This, combined with the fact that $\delta (\epsilon, k)$ converges to $\epsilon$ as $n$ tends to infinity and with \eqref{eq:defEpsCCStates}, implies that \eqref{eq:choiceBKEpsForLastBlockCCStates2} holds whenever $n$ is sufficiently large.
\end{proof}

We next prove the converse part of Theorem~\ref{th:ccStates2Capacity}.

\begin{proof}[Converse Part]
We first show that \eqref{eq:positiveCCStates} is necessary for $C_{\textnormal{f},0}^{(2)} (\Lambda)$ to be positive. To this end suppose that \eqref{eq:positiveCCStates} does not hold, i.e., that there exists a pair of states $s, \, s^\prime \in \setS$ satisfying
\begin{equation}
\frac{\lambda (s) + \lambda (s^\prime)}{2} \leq \Lambda \label{eq:positiveCCStates1}
\end{equation}
for which 
\begin{IEEEeqnarray}{l}
\forall \, x, \, x^\prime \in \setX \quad \exists \, y \in \setY \textnormal{ s.t.\ } \channel y {x,s} \, \channel y {x^\prime,s^\prime} > 0.
\end{IEEEeqnarray}
We will show that in this case it is impossible to transmit a single bit error-free whenever $l$ is even. This will imply that $C_{\textnormal{f},0}^{(2)} (\Lambda,l)$ is zero whenever $l$ is even and consequently that $C_{\textnormal{f},0}^{(2)} (\Lambda)$ is zero, because, by definition,
\begin{IEEEeqnarray}{l}
C_{\textnormal{f},0}^{(2)} (\Lambda) = \liminf_{l \rightarrow \infty} C_{\textnormal{f},0}^{(2)} (\Lambda,l).
\end{IEEEeqnarray}

Fix some even $l$ and $s, \, s^\prime$ as above. The proof is similar to that of the converse of Theorem~\ref{th:positive}. Let the bit take values in the set $\setM = \{ 0,1 \}$, and fix a blocklength~$n$ and $n$ encoding mappings $$f_i \colon \setM \times \setS^n \times \setY^{i-1} \rightarrow \setX, \quad i \in [1:n].$$ Denote by $\hat \vecs, \, \check \vecs \in \setS^n$ the state sequences that at odd times are $s$ and $s^\prime$, respectively, and at even times $s^\prime$ and $s$, respectively:
\begin{subequations}
\begin{IEEEeqnarray}{rClCrrClC}
( \hat s_{2i - 1}, \check s_{2i - 1} ) & = & ( s, s^\prime ), \quad &i \in \bigl[ 1 : \left\lceil n / 2 \right\rceil \bigr], \\
( \hat s_{2i}, \check s_{2i} ) & = & ( s^\prime, s ), \quad &i \in \bigl[ 1 : \left\lfloor n / 2 \right\rfloor \bigr].
\end{IEEEeqnarray}
\end{subequations}
Note that, by \eqref{eq:positiveCCStates1} and because $l$ is even, they meet the cost constraint \eqref{eq:ccStates2}. The line of argument leading to \eqref{eq:badOutputs} in the converse of Theorem~\ref{th:positive} implies that there exists an output sequence $\vecy \in \setY^n$ for which
\begin{IEEEeqnarray}{l}
W \bigl( y_i \bigl| f_i (0,\hat \vecs,y^{i-1}), \hat s_i \bigr) \, W \bigl( y_i \bigl| f_i (1,\check \vecs,y^{i-1}), \check s_i \bigr) > 0, \,\, \forall \, i \in [1:n].
\end{IEEEeqnarray}
This rules out error-free transmission, because if the state sequence is either $\hat \vecs$ or $\check \vecs$, then the decoder, not knowing which, cannot recover the bit.\\

We next show that---irrespective of whether or not \eqref{eq:positiveCCStates} holds---$C_{\textnormal{f},0}^{(2)} (\Lambda)$ is upper-bounded by the RHS of \eqref{eq:capacityCCStates}. The proof is similar to the converse of Theorem~\ref{th:capacity}, but in the current setting we cannot fix some PMF $\tilde P_S$ on $\setS$ and assume that the state sequence $S^n$ is drawn IID $\tilde P_S$, because this might violate the cost constraint \eqref{eq:ccStates2}. In fact, \eqref{eq:ccStates2} need not hold even if $\bigEx {}{\lambda (S)} \leq \Lambda$ under $\tilde P_S$.

Fix any $l \in \naturals$, and assume that $n = J l$ for some $J \in \naturals$. We can make this assumption w.l.g., because $$\lim_{n \rightarrow \infty} \frac{ \lfloor n / l \rfloor l}{n} = 1.$$ To satisfy \eqref{eq:ccStates2}, we fix some $l$-type $\tilde P_S$ on $\setS$ w.r.t.\ which
\begin{equation}
\bigEx {}{\lambda (S)} \leq \Lambda, \label{eq:ccStatesAdmissiblePS}
\end{equation}
and we set $\tilde P_{S^n}$ to be the uniform distribution over $\bigl( \setT^{(l)}_{\tilde P_S} \bigr)^J$. Let the PMF $\tilde P_{M,S^n,X^n,Y^n}$ be as in \eqref{eq:convAbsContPMF} but with $\tilde P_S^n$ replaced by $\tilde P_{S^n}$. We can now upper-bound $\frac{1}{n} \log |\setM|$ essentially along the line of argument leading to \eqref{eq:convFixedDists} in the converse of Theorem~\ref{th:capacity}. The main difference is that under $\tilde P_{M,S^n,X^n,Y^n}$ of the current setting $S_i$ and $S^n_{i + 1}$ need not be independent and consequently $$\frac{1}{n} \sum^n_{i = 1} I (S^n_{i + 1};S_i)$$ need not be zero. However, it does tend to zero as $l$ tends to infinity, because
\begin{IEEEeqnarray}{l}
\frac{1}{n} \sum^n_{i = 1} I (S^n_{i + 1};S_i) \nonumber \\
\quad \stackrel{(a)}= \frac{1}{n} \sum^n_{i = 1} \Bigl[ H (S_i) - H (S_i|S^n_{i + 1}) \Bigr] \\
\quad \stackrel{(b)}= H (\tilde P_S) - \frac{1}{n} H (S^n) \\
\quad \stackrel{(c)}= H (\tilde P_S) - \frac{1}{n} \log \, \bigl| \setT^{(l)}_{\tilde P_S} \bigr|^J \\
\quad \stackrel{(d)}\leq H (\tilde P_S) - \frac{J}{n} \Bigl( l H (\tilde P_S) - \log ( 1 + l ) \, |\setS| \Bigr) \\
\quad \stackrel{(e)}= \frac{\log (1 + l) \, |\setS|}{l} \label{eq:mutiFutureCurrentStatesCCStatesLastBefore} \\
\quad \rightarrow 0 \, (l \rightarrow \infty), \label{eq:mutiFutureCurrentStatesCCStates}
\end{IEEEeqnarray}
where $(a)$ holds by the definition of mutual information; $(b)$ follows from the chain rule and the fact that $S_i \sim \tilde P_S$ under $\tilde P_{M,S^n,X^n,Y^n}$; $(c)$ holds because $S^n$ is uniform over $\bigl( \setT^{(l)}_{\tilde P_S} \bigr)^J$ under $\tilde P_{M,S^n,X^n,Y^n}$; $(d)$ follows from the inequality $\bigl| \setT^{(l)}_{\tilde P_S} \bigr| \geq (1 + l)^{-|\setS|} \, 2^{l H (\tilde P_S)}$, where the entropy is computed w.r.t.\ $\tilde P_S$ \cite[Lemma~2.3]{csiszarkoerner11}; and $(e)$ holds because $n = J l$.

Having established \eqref{eq:mutiFutureCurrentStatesCCStates}, we are now ready to conclude the proof. The arguments leading to \eqref{eq:convFixedDists4} in the converse of Theorem~\ref{th:capacity} and \eqref{eq:mutiFutureCurrentStatesCCStatesLastBefore} imply that
\begin{IEEEeqnarray}{rCl}
\frac{1}{n} \log |\setM| & \leq & \frac{\log (1 + l) \, |\setS|}{l} \nonumber \\
& & + \max_{\tilde P_{U,X|S}} \min_{\substack{\tilde P_{Y|U,X,S} \colon \\ \tilde P_{Y|U=u,X,S} \in \mathscr P (W), \,\, \forall \, u \in \setU}} I (U;Y) - I (U;S), \label{eq:convFixedDists4CCStates}
\end{IEEEeqnarray}
where $U$ is an auxiliary chance variable taking values in a finite set $\setU$, and the mutual informations are computed w.r.t.\ the joint PMF $\tilde P_S \times \tilde P_{U,X|S} \times \tilde P_{Y|U,X,S}$. Moreover, it is enough to consider the second term on the RHS of \eqref{eq:convFixedDists4CCStates}, because the first converges to zero as $l$ tends to infinity \eqref{eq:mutiFutureCurrentStatesCCStates} and, by definition, $$C_{\textnormal{f},0}^{(2)} (\Lambda) = \liminf_{l \rightarrow \infty} C_{\textnormal{f},0}^{(2)} (\Lambda,l).$$ To conclude that $C_{\textnormal{f},0}^{(2)} (\Lambda)$ is upper-bounded by the RHS of \eqref{eq:capacityCCStates}, we would have liked to choose some PMF $\tilde P_S$ that---among all PMFs on $\setS$ w.r.t.\ which \eqref{eq:ccStatesAdmissiblePS} holds--- yields the tightest bound, i.e., minimizes
\begin{equation}
\max_{\tilde P_{U,X|S}} \min_{\substack{\tilde P_{Y|U,X,S} \colon \\ \tilde P_{Y|U=u,X,S} \in \mathscr P (W), \,\, \forall \, u \in \setU}} I (U;Y) - I (U;S). \label{eq:ccStatesToMinimizeChoicePS}
\end{equation}
But this is not possible, because $\tilde P_S$ must be an $l$-type. We can, however, choose $\tilde P_S$ as one that---among all $l$-types on $\setS$ w.r.t.\ which \eqref{eq:ccStatesAdmissiblePS} holds---minimizes \eqref{eq:ccStatesToMinimizeChoicePS}. For this choice \eqref{eq:convFixedDists4CCStates} implies that
\begin{IEEEeqnarray}{rCl}
C_{\textnormal{f},0}^{(2)} (\Lambda) & \leq & \liminf_{l \rightarrow \infty} \min_{\substack{\tilde P_S \in \Gamma^{(l)} \colon \\ \Ex {}{\lambda (s)} \leq \Lambda}} \max_{\tilde P_{U,X|S}} \min_{\substack{\tilde P_{Y|U,X,S} \colon \\ \tilde P_{Y|U=u,X,S} \in \mathscr P (W), \,\, \forall \, u \in \setU}} I (U;Y) - I (U;S), \label{eq:convFixedDists5CCStates}
\end{IEEEeqnarray}
where $\Gamma^{(l)}$ denotes the set of $l$-types on $\setS$. To conclude, note that the RHS of \eqref{eq:convFixedDists5CCStates} is equal to that of \eqref{eq:capacityCCStates}: every PMF $\tilde P_S$ on $\setS$ w.r.t.\ which \eqref{eq:ccStatesAdmissiblePS} holds can be approximated in the total variation distance by an $l$-type on $\setS$ w.r.t.\ which \eqref{eq:ccStatesAdmissiblePS} holds when $l$ is sufficiently large; and (conditional) entropy is continuous in this distance \cite[Lemma~2.7]{csiszarkoerner11}.
\end{proof}

%
\end{appendix}

%
\lhead[\fancyplain{\scshape Appendix}
{\scshape Appendix}]
{\fancyplain{\scshape \leftmark}
  {\scshape \leftmark}}
\rhead[\fancyplain{\scshape \leftmark}
{\scshape \leftmark}]
{\fancyplain{\scshape Appendix}
  {\scshape Appendix}}
%
%
%
%
%

%
%
%
%
\end{document}
%